\begin{document}

\newcommand{\x}{\bm{x}}
\newcommand{\g}{\bm{g}}
\newcommand{\G}{\bm{G}}
\newcommand{\X}{{\bm{X}}}
\newcommand{\s}{\bm{s}}
\renewcommand{\S}{\mathcal{S}}
\newcommand{\vv}{\bm{v}}
\renewcommand{\c}{c}

\newcommand{\e}{\bm{e}}
\newcommand{\y}{\bm{y}}
\newcommand{\w}{\bm{w}}
\newcommand{\U}{{\bm{U}}}
\newcommand{\R}{{\bm{R}}}
\newcommand{\M}{\bm{M}}
\renewcommand{\H}{\bm{H}}

\newcommand{\I}{\bm{I}}
\newcommand{\Y}{\bm{Y}}
\newcommand{\uhat}{{\bm{\u}}}
\newcommand{\Zhat}{{\bm{\hat{Z}}}}
\newcommand{\Z}{{\bm{{Z}}}}
\newcommand{\D}{{\bm{D}}}
\newcommand{\Q}{{\bm{Q}}}
\newcommand{\F}{{\bm{F}}}
\newcommand{\iidsim}{\stackrel{\mathrm{iid}}{\thicksim }}
\newcommand{\n}{{\cal{N}}}
\newcommand{\indepsim}{\stackrel{\mathrm{indep.}}{\thicksim }}
\newcommand{\SE}{\mathrm{SD}}  
\newcommand{\dist}{\mathrm{dist}}
\newcommand{\tta}{\ta^{\text{trunc}}}
\newcommand{\A}{\bm{A}}
\newcommand{\full}{{\mathrm{full}}}

\newcommand{\Span}{\mathrm{span}}
\newcommand{\rank}{\mathrm{rank}}
\newcommand{\evdeq}{\overset{\mathrm{EVD}}=} 
\newcommand{\svdeq}{\overset{\mathrm{SVD}}=} 
\newcommand{\qreq}{\overset{\mathrm{QR}}=} 
\newcommand{\bi}{\begin{itemize}} \newcommand{\ei}{\end{itemize}}
\newcommand{\ben}{\begin{enumerate}} \newcommand{\een}{\end{enumerate}}
\newcommand{\vs}{\vspace{0.1in}}
\newcommand{\vsl}{\vspace{0.05in}}
\newcommand{\vsm}{\vspace{-0.1in}}

\renewcommand{\implies}{\Rightarrow}


\newcommand{\cblue}{\color{blue}}
\newcommand{\cbl}{\color{black}}
\newcommand{\cred}{}
\newcommand{\skipit}{ }

\newcommand{\trace}{\mathrm{trace}}

\newcommand{\qfull}{q_\full}
\newcommand{\sub}{{\mathrm{sub}}}

\newcommand{\bbf}{\mathbb{F}}
\newcommand{\dsW}{\mathds{W}}
\newcommand{\dsZ}{\mathds{Z}}

\newcommand{\mtx}[1]{\mathbf{#1}}
\newcommand{\vct}[1]{\mathbf{#1}}
\newcommand{\abs}[1]{\left|#1\right|}
\newcommand{\p}{\bm{p}}
\renewcommand{\j}{\bm{j}}
\newcommand{\uu}{{\bm{u}}}
\newcommand{\tot}{\mathrm{tot}}
\renewcommand{\a}{\bm{a}}
\newcommand{\ta}{\bm{\tilde{a}}}
\newcommand{\h}{\bm{h}}
\renewcommand{\b}{\bm{b}}
\newcommand{\B}{\bm{B}}
\newcommand{\V}{{\bm{V}}}
\newcommand{\W}{{\bm{W}}}

\renewcommand{\aa}{\bm{a}}

\newcommand{\bP}{{\bm{P}}}

\newcommand{\z}{\bm{z}}
\newcommand{\zstar}{\z^*}
\newcommand{\indic}{\mathbbm{1}}
\newcommand{\one}{\mathbm{1}}

\newcommand{\C}{\bm{C}}
\newcommand{\Chat}{\bm{\hat{C}}}
\newcommand{\cb}{\bm{c}}

\newcommand{\bz}{\boldsymbol{z}}

\newcommand{\tc}{\tilde{c}}
\newcommand{\tC}{\tilde{C}}

\setlength{\arraycolsep}{0.01cm}

\newcommand{\Bstar}{{\V^*}}   
\newcommand{\bstar}{\v^*}             
\newcommand{\tB}{\B^*}     
\newcommand{\tb}{\b^*}  

\newcommand{\Bcheck}{\check{\V}}
\newcommand{\bcheck}{\check{\v}}

\newcommand{\xhat}{\hat\x}
\newcommand{\Bhat}{\hat\B}
\newcommand{\Xhat}{\hat\X}

\newcommand{\bhat}{\hat\b}
\newcommand{\Uhat}{\hat\U}

\newcommand{\td}{\tilde{\bm{d}}^*}
\newcommand{\init}{{\mathrm{init}}}

\newcommand{\Ustar}{\U^*{}}
\newcommand{\Xstar}{\X^*}
\newcommand{\xstar}{\x^*}
\newcommand{\sstar}{\s^*}
\newcommand{\Sstar}{\S^*}
\newcommand{\Vstar}{\V^*{}}

\newcommand{\deltinit}{\delta_\init}
\newcommand{\deltapt}{\delta_{t}}
\newcommand{\deltaptplus}{\delta_{t+1}}

\newcommand{\bSigma}{{\bm\Sigma^*}}
\newcommand{\tSigma}{\bm{E}_{det}}
\newcommand{\sigmin}{{\sigma_{\min}^*}}
\newcommand{\sigmax}{{\sigma_{\max}^*}}

\newcommand{\HH}{{\bm{D}}}
\newcommand{\GG}{{\bm{M}}}
\newcommand{\SSS}{{\bm{S}}}
\newcommand{\ik}{{ki}}
\newcommand{\J}{\mathcal{J}}

\renewcommand{\P}{\bm{P}}
\newcommand{\proj}{\mathcal{P}}
\newcommand{\E}{\mathbb{E}}
\newcommand{\norm}[1]{\left\|#1\right\|}

\renewcommand{\P}{\bm{U}}
\newcommand{\Phat}{\hat\P} 
\newcommand{\Lam}{\bm\Lambda} 
\renewcommand{\L}{\bm{L}}
\renewcommand{\V}{\bm{V}}

\renewcommand{\l}{\bm{\ell}}
\renewcommand{\v}{\bm{v}}
\newcommand{\tty}{\tilde\y}

\newcommand{\lhat}{\hat\l}

\newcommand{\at}{\a_t}
\newcommand{\yt}{\y_t}
\newcommand{\lt}{\l_t}
\newcommand{\xt}{\x_t}
\newcommand{\vt}{\v_t}
\newcommand{\et}{\e_t}

\newcommand{\T}{\mathcal{T}}
\newcommand{\Tmiss}{{\T_{miss}}}
\newcommand{\Tspar}{{\T_{spar}}}
\newcommand{\outfrac}{\text{\scriptsize{max-outlier-frac}}}
\newcommand{\outfracrow}{\text{\scriptsize{max-outlier-frac-row}}}
\newcommand{\outfraccol}{\text{\scriptsize{max-outlier-frac-col}}}
\newcommand{\missfracrow}{\text{\scriptsize{max-miss-frac-row}}}
\newcommand{\missfraccol}{\text{\scriptsize{max-miss-frac-col}}}

\newcommand{\smin}{s_{min}}
\newcommand{\smax}{s_{max}}

\newcommand{\bea}{\begin{eqnarray}}
\newcommand{\eea}{\end{eqnarray}}

\newcommand{\nn}{\nonumber}
\newcommand{\ds}{\displaystyle}

\newtheorem{theorem}{Theorem}[section]
\newtheorem{prop}[theorem]{Proposition}
\newtheorem{lemma}[theorem]{Lemma}
\newtheorem{claim}[theorem]{Claim}
\newtheorem{corollary}[theorem]{Corollary}
\newtheorem{fact}[theorem]{Fact}
\newtheorem{definition}[theorem]{Definition}
\newtheorem{remark}[theorem]{Remark}
\newtheorem{example}[theorem]{Example}
\newtheorem{sigmodel}[theorem]{Model}
\newtheorem{assu}[theorem]{Assumption}
\renewcommand\thetheorem{\arabic{section}.\arabic{theorem}}

\newcommand{\snr}{\text{SNR}}

\newcommand{\tk}{\tilde{k}}
\newcommand{\tl}{{\ell}}
\newcommand{\totl}{L}

\newcommand{\bfpara}[1]{ {\bf #1. }} 
\newcommand{\Item}{\item} 

\newcommand{\SEF}{\SE}
\renewcommand{\P}{\bm{P}}
	\newcommand{\kron}{\otimes}
\newcommand{\Uvec}{{\U_{vec}}}
\newcommand{\Zvec}{{\Z_{vec}}}
\newcommand{\ym}{{\y_{(mag)}}}
\renewcommand{\forall}{\text{ for all }}

\newcommand{\eps}{\epsilon}

\newcommand{\Section}[1]{\section{#1}} 
\newcommand{\Subsection}[1]{\subsection{#1}} 

\renewcommand{\bhat}{\b}  \renewcommand{\Bhat}{\B}
\renewcommand{\xhat}{\x}  \renewcommand{\Xhat}{\X}








\title{Fast Low Rank column-wise Compressive Sensing}
\title{Fast and Sample-Efficient
\\ Federated Low Rank Matrix Recovery
\\ from column-wise Linear and Quadratic Projections}
\author{Seyedehsara (Sara) Nayer and Namrata Vaswani \\ Dept. of Electrical and Computer Engineering, Iowa State University, USA. \\ Email: namrata@iastate.edu}

\maketitle

\begin{abstract} 
We study the following lesser-known low rank (LR) recovery problem: recover an $n \times q$ rank-$r$ matrix, $\Xstar  =[\xstar_1 , \xstar_2, ..., \xstar_q]$, with $r \ll \min(n,q)$, from $m$ independent linear projections of each of its $q$ columns, i.e., from $\y_k := \A_k \xstar_k ,  k  \in [q]$, when $\y_k$ is an $m$-length vector with $m < n$. The matrices $\A_k$ are known and mutually independent for different $k$.  We introduce a novel gradient descent (GD) based solution called AltGD-Min. We show that, if the $\A_k$s are i.i.d. with i.i.d. Gaussian entries, and if the right singular vectors of $\Xstar$ satisfy the incoherence assumption, then $\epsilon$-accurate recovery of $\Xstar$ is possible with order $(n+q) r^2 \log(1/\epsilon)$ total samples and order $ mq nr \log (1/\epsilon)$ time. Compared with existing work, this is the fastest solution. For $\epsilon < r^{1/4}$, it also has the best sample complexity. A simple extension of AltGD-Min also provably solves LR Phase Retrieval, which is a magnitude-only generalization of the above problem.

AltGD-Min factorizes the unknown $\X$ as $\X = \U\B$ where $\U$ and $\B$ are matrices with $r$ columns and rows respectively. It alternates between a (projected) GD step for updating $\U$, and a minimization step for updating $\B$. Its each iteration is as fast as that of regular projected GD because the minimization over $\B$ decouples column-wise. At the same time, we can prove exponential error decay for it, which we are unable to for projected GD. Finally, it can also be efficiently federated  with a communication cost of only $nr$ per node, instead of $nq$ for projected GD. 
\end{abstract}

%



\section{Introduction}
This work develops a sample-efficient, fast, and communication-efficient gradient descent (GD) solution, called AltGD-Min, for provably recovering a low-rank (LR) matrix from a set of mutually independent linear projections of each of its columns. The communication-efficiency considers a federated setting. 
This problem, which we henceforth refer to as ``Low Rank column-wise Compressive Sensing (LRcCS)'', is precisely defined below.
%
Unlike the other well-studied LR problems -- multivariate regression (MVR) \cite{wainwright_linear_columnwise}, LR matrix sensing \cite{lowrank_altmin} and LR matrix completion (LRMC) \cite{matcomp_candes,lowrank_altmin} -- LRcCS has received little attention so far in terms of approaches with provable guarantees. There are only two existing provably correct solutions.
(1) Its generalization {\em LR phase retrieval (LRPR)}, was studied in our recent work \cite{lrpr_icml,lrpr_it,lrpr_best} where we developed a provably correct alternating minimization (AltMin) solution. Since LRPR is a generalization, the algorithm also solves LRcCS. (2) In parallel work,  \cite{lee2019neurips} developed and analyzed a convex relaxation (mixed-norm minimization) for LRcCS. Both solutions are much slower than GD-based methods, and, in most practical settings, also have worse sample complexity.

LRcCS occurs in accelerated LR dynamic MRI \cite{st_imaging,dyn_mri1,dyn_mri2}, and in distributed/federated sketching \cite{hughes_icml_2014,aarti_singh_subs_learn,lee2019neurips}. We explain these in Sec. \ref{apps}. 
We show the speed and performance advantage of AltGD-Min for dynamic MRI in \cite{lrpr_gdmin_mri}.

\subsection{Problem Setting, Notation, and Assumption}\label{probdef}

\bfpara{Problem definition}
The goal is to recover an $n \times q$ rank-$r$ matrix $\Xstar =[\xstar_1, \xstar_2, \dots, \xstar_q]$ from  $m$ linear projections (sketches) of each of its $q$ columns, i.e. from
\bea
\y_k := \A_k \xstar_k, \ k  \in [q]
\label{ykvec}
\eea
where each $\y_k$ is an $m$-length vector,  $[q]:=\{1,2,\dots, q\}$, and the measurement/sketching matrices $\A_k$ are mutually independent and known. The setting of interest is  low-rank (LR), $r \ll \min(n,q)$, and undersampled measurements, $m < n$. Our guarantees assume that each $\A_k$ is random-Gaussian: each entry of it is independent and identically distributed (i.i.d.) standard Gaussian.

%

We also study the magnitude-only measurements' setting, LRPR \cite{lrpr_icml,lrpr_it,lrpr_best}. This involves recovering $\Xstar$ from
$$\ym_k := |\A_k \xstar_k|, \ k  \in [q].$$
Here $|\z|$ takes the entry-wise absolute value of entries of the vector $\z$.
%


\bfpara{Notation}
Everywhere, $\|.\|_F$ denotes the Frobenius norm,  $\|.\|$ without a subscript denotes the (induced) $l_2$ norm (often called the operator norm or spectral norm),  $\|\M\|_{\max}$ is the maximum magnitude entry of the matrix $\M$, $^\top$ denotes matrix or vector transpose, and $|\z|$ for a vector $\z$ denotes element-wise absolute values. $\I_n$ (or sometimes just $\I$) denotes the $n \times n$ identity matrix. We use $\e_k$ to denote the $k$-th canonical basis vector, i.e., the $k$-th column of $\I$.
For any matrix $\Z$, $\z_k$ denotes its $k$-th column.

We say $\U$ is a {\em basis matrix} if it contains orthonormal columns. For basis matrices $\U_1, \U_2$, we use $$\SEF(\U_1, \U_2): = \|(\I - \U_1 \U_1{}^\top)\U_2\|_F$$ as the Subspace Distance (SD) measure. For two $r$-dimensional subspaces, this is the $l_2$ norm of the sines of the $r$ principal angles between $\Span(\U_1)$ and $\Span(\U_2)$. $\SEF(\U_1,\U_2)$  is symmetric when $\U_1,\U_2$ are both $n \times r$ basis matrices. Notice here we are using the Frobenius SD, unlike many recent works including our older work \cite{lrpr_it} that use the induced 2-norm one. This is done because it enables us to prove the desired guarantees easily.
We reuse the letters $c,C$ to denote different numerical constants in each use with the convention that $c < 1$ and $C \ge 1$.
The notation  $a \in \Omega(b)$ means $a \ge C b$ while $a \in O(b)$ means $a \le Cb$. We use $\indic_{\text{statement}}$ to denote an indicator function that takes the value 1 if $\text{statement}$ is true and zero otherwise.
%

For a vector $\w$, we sometimes use $\w(k)$ to denote the $k$-th entry of $\w$. For a vector $\w$ and a scalar $\alpha$, $\indic(\w \le \alpha)$ returns a vector of 1s and 0s of the same length as $\w$, with 1s where $(\w(k) \le \alpha)$ and zero everywhere else. We use $\circ$ to denote the Hadamard product. Thus $\z:=\w \circ \indic(\w \le \alpha)$ zeroes out entries of $\w$ larger than $\alpha$, while keeping the smaller ones as is.

For $\Xstar$ which is a rank-$r$ matrix, we let
\[
\Xstar \svdeq \Ustar \underbrace{\bSigma \Bstar{}}_{\tB} := \Ustar \tB
\]
denote its reduced (rank $r$) SVD, i.e., $\Ustar$ and $\Bstar^\top$ are matrices with orthonormal columns {\em (basis matrices)}, $\Ustar$ is  $n \times r$ and $\Bstar$ is $r \times q$, and $\bSigma$ is an $r \times r$ diagonal matrix with non-negative entries. We use $\kappa:= \sigmax/\sigmin$ to denote the condition number of $\bSigma$. This is not the condition number of $\Xstar$ (whose minimum singular value is zero).
%
We let  $\tB:= \bSigma \Vstar{}$ and we use $\tb_k$ to denote its $k$-th column.

We use the phrase {\em $\epsilon$-accurate recovery} to refer to $\SE(\U, \Ustar) \le \epsilon$ or $\|\Xhat - \Xstar\|_F  \le \epsilon \|\Xstar\|_F$ or both.

\bfpara{Assumption}
Another way to understand \eqref{ykvec} is as follows: each scalar measurement $\y_\ik$ ($i$-th entry of $\y_k$) satisfies
\[
\y_\ik : =  \langle \a_\ik, \xstar_k \rangle ,   \ i \in [m], \ k \in [q]
\]
with $\a_\ik{}^\top$ being the $i$-th row of $\A_k$. Observe that the measurements are not global, i.e., no $\y_\ik$ is a function of the entire matrix $\Xstar$. They are global for each column  ($\y_\ik$ is a function of column $\xstar_k$) but not across the different columns. We thus need an assumption that enables correct interpolation across the different columns. The following assumption, which is a slightly weaker version of incoherence (w.r.t. the canonical basis) of right singular vectors suffices for this purpose.
%

\begin{assu}[(Weakened) Right Singular Vectors' Incoherence]
Assume that
\[
\max_k \|\tb_k\| \le \sigmax \mu \sqrt{r/q}. 
\]
for a constant $\mu \ge 1 $ ($\mu$ does not grow with  $n,q,r$).
Since $\|\xstar_k\| =\|\tb_k\|$, this implies that $\max_k \|\xstar_k\| \le \sigmax \mu \sqrt{r/q}$. Also, since $\sigmin \sqrt{r} \le \|\Xstar\|_F$,  this also implies that $\max_k \|\xstar_k\| \le \kappa \mu {\|\Xstar\|_F}/{\sqrt{q}}$.
\label{right_incoh}
\end{assu}
Right singular vectors incoherence is the assumption $\max_k \|\bstar_k\| \le \mu \sqrt{r/q}$. Since $\tb_k = \bSigma \bstar_k$, this implies that the above holds. Incoherence of both left and right singular vectors was introduced for guaranteeing correct ``interpolation'' for the LRMC problem \cite{matcomp_candes,lowrank_altmin}.

\subsection{Existing Work}\label{relwork}

\bfpara{Existing solutions for LRcCS and LRPR}
Since it is always possible to obtain magnitude-only measurements $\ym_k$ from linear ones $\y_k$ as $\ym_k = |\y_k|$, a solution to LRPR also automatically solves LRcCS under the same assumptions. Hence the AltMin algorithm for LRPR from \cite{lrpr_icml,lrpr_it} is the first provably correct solution for LRcCS. Of course, since LRcCS is an easier problem than LRPR, we expect a direct solution to LRcCS to need weaker assumptions. As we show in this paper, this is indeed true. A more recent work \cite{lee2019neurips} studied the noisy version of LRcCS and developed a convex relaxation (mixed norm minimization) to provably solve it. Its time complexity is not discussed in the paper, however, it is well known that solvers for convex programs are much slower when compared to direct iterative algorithms: they either require number of iterations proportional to $1/\sqrt{\epsilon}$ or the per-iteration cost has cubic dependence on the problem size (here $((n+q)r)^3$) \cite{lowrank_altmin}. Thus, if $q \le n$, its time complexity $O(mqnr \cdot \min(1/\sqrt{\eps}, n^3 r^3))$.
In \cite{lrpr_best}, we provided the best possible guarantee for the AltMin algorithm for solving LRPR, and hence LRcCS.
%
%
We discuss these results in detail in Sec. \ref{detailed_compare} and summarize them in Table \ref{compare_lrccs}. 

\bfpara{Other well-studied LR recovery problems} The multivariate regression (MVR) problem, studied in \cite{wainwright_linear_columnwise}, is our problem with $\A_k = \A$. However this is a very different setting than ours because, with $\A_k = \A$, the different $\y_k$'s are no longer mutually independent. As a result,  one cannot exploit law of large numbers' arguments over all $mq$ scalar measurements $\y_\ik$. Consequently, the required value of $m$ can never be less than $n$. The result of  \cite{wainwright_linear_columnwise} shows that $m$ of order $(n+q)r$ is both necessary and sufficient.
%
%
LRMS involves recovering $\Xstar$ from $\y_i = \langle \A_i, \Xstar \rangle, \ i=1,2,\dots,mq$ with $\A_i$ being dense matrices, typically i.i.d. Gaussian \cite{lowrank_altmin}. Thus  all measurements are i.i.d. and {\em global}: each contains information about the entire quantity-of-interest, here $\Xstar$. Because of this, for LRMS, one can prove a LR Restricted Isometry Property (RIP) that simplifies the rest of the analysis. This is what makes it very different from, and easier than, our problem.
%

{\em LRMC}, which involves recovering $\Xstar$ from a subset of its observed entries, {\em is the most closely related problem to ours} since it also involves recovery from non-global measurements. The typical model assumed is that each matrix entry is observed with probability $p$ independent of others \cite{matcomp_candes,lowrank_altmin}. Setting unobserved entries to zero, this can be written as $\y_{jk} = \delta_{jk} \Xstar_{jk}$ with $\delta_{jk} \iidsim Bernoulli(p)$. LRMC measurements are both row-wise and column-wise local. To allow correct interpolation across both rows and columns, it needs the incoherence assumption on both its left and right singular vectors. 
For our problem, the measurements are global for each column, but not across the different columns. For this reason, only right singular vectors' incoherence is needed. In fact, because of the nature of our measurements, even if left incoherence were assumed, it would not help.
%
%
This {\em asymmetry in our measurement model and the fact that our measurements are unbounded (each $\y_\ik$ is a Gaussian r.v) are two key differences} between LRMC and LRcCS {\em that prevent us from borrowing LRMC proof techniques for our work}. Here {\em symmetric} means: if we replace $\Xstar$ by its transpose, the probability distribution of the set of measurements does not change. {\em Bounded} means that the measurements' magnitude has a {\em uniform} bound. This bound is $\|\Xstar\|_{\max}$ for LRMC measurements. 



Non-convex (iterative, not convex relaxation based) LRMC algorithms with the best sample complexity are GD-based. There are two common approaches for designing GD algorithms in the LR recovery literature, and in particular for LRMC. The first is to use standard projected GD on $\X$ ({\em projGD-X}), also referred to as Iterative Hard Thresholding: at each iteration, perform one step of GD for minimizing the squared loss cost function, $\tilde{f}(\X)$, w.r.t. $\X$, followed by projecting the resulting matrix onto the space of rank $r$ matrices (by SVD). This was studied in \cite{fastmc,rmc_gd} for solving LRMC.
 This is shown to converge geometrically with a constant GD step size, while needing only $ \Omega ( (n+q)r^2 \log^2 n \log^2 (1/\epsilon))$ samples on average.

The second is to let $\X = \U \B$ where $\U$ is $n \times r$ and $\B$ is $r \times q$ and perform alternating GD  for the cost function $f(\U,\B) := \tilde{f}(\U\B)$,  i.e.,  update $\B$ with one step of GD for minimizing $f(\U,\B)$ while keeping $\U$ fixed at its previous value, and then do the same for $\U$ with $\B$ fixed, and repeat. Since the $\X = \U \B$ factorization is not unique, i.e., $\X = \U \R^{-1} \R \B$ for any invertible $r \times r$ matrix $\R$, this approach can result in the norm of one of $\U$ or $\B$ growing in an unbounded fashion, while that of the other decreases at the same rate, causing numerical problems. A typical approach to resolve this issue, and one that was used for LRMC \cite{rpca_gd,lafferty_lrmc}, is to  change the cost function to minimize to $f(\U, \B) + \lambda f_2(\U,\B)$ where $f_2(\U,\B):=\|\U^\top \U - \B \B^\top\|_F$ is the ``norm-balancing term'' (helps ensure that norms of $\U$ and $\B$ remain similar). We henceforth refer to this approach as {\em altGDnormbal}.
The sample complexity bound for this approach is similar to that for projGD-X. But, it needs a GD step size of order $1/r$ or smaller \cite{rpca_gd,lafferty_lrmc}; making it $r$-times slower than projGD-X.


\begin{table*}
\begin{center}
\resizebox{0.95\linewidth}{!}
{
\begin{tabular}{llllll} \toprule
				 &  Sample Comp. & Time Comp.  & Communic. Comp.  & Holds for & Column-wise \\ 
&   $mq   \gtrsim$ &   & per node (predicted)   &   all $\Xstar$? &  error bound?  \\ 
\hline \midrule
Convex \cite{lee2019neurips} &   $n r  \frac{1}{\epsilon^4}$   &  $\text{\scriptsize{linear-time}} \cdot  \min\left( \frac{1}{\sqrt\epsilon}, n^3 r^3\right)$ & not clear &  yes  & no \\ 
 &&&&& \\
AltMin \cite{lrpr_icml,lrpr_it} &   $n r^4  \log(\frac{1}{\epsilon}) $   & $\text{\scriptsize{linear-time}}   \cdot r\log^2(\frac{1}{\epsilon})$ & $nr\log(\frac{1}{\epsilon})  \cdot r\log^2(\frac{1}{\epsilon})$ & no  \\ 
 &&&&& \\
AltMin \cite{lrpr_best} &   $n r^2 (r+\log(\frac{1}{\epsilon}) )$   & $\text{\scriptsize{linear-time}}   \cdot  r\log^2(\frac{1}{\epsilon})$  & $nr\log(\frac{1}{\epsilon}) \cdot  r\log^2(\frac{1}{\epsilon})$ & no  & yes  \\ 
 &&&&& \\
{\cblue altGD-Min}  & \cblue $\mathbf{n r^2 \log(\frac{1}{\epsilon})}$  & \cblue $\mathbf{\text{\scriptsize{linear-time}} \cdot r\log(\frac{1}{\epsilon})}$  & \cblue  $\mathbf{nr \cdot r\log(\frac{1}{\epsilon})}$ & \cblue  no & \cblue  yes  \\ 
{\cblue  (proposed)} && &&& \\ 
 &&&&& \\
\bottomrule
\multicolumn{6}{l}{Best sample LRMC algorithms among those that do not solve a convex relaxation} \\
\toprule
ProjGD-X & $\max(n,q) r^2 \log^2 n \log^2(\frac{1}{\epsilon})$   & $\bf{\text{\scriptsize{linear-time}} \cdot r \log(\frac{1}{\epsilon})}$  & $nq$ **  && \\ 
\cite{rmc_gd} &&&&& \\  
 &&&&& \\
AltGDnormbal & $\mathbf{\max(n,q) r^2 \log n} $ &  $\text{\scriptsize{linear-time}}  \cdot  r^2 \log(\frac{1}{\epsilon})$  & $\mathbf{\max(n,q)r}$ & &  \\ 
\cite{rpca_gd}  &&&&& \\ 
 &&&&& \\
 \bottomrule
\end{tabular}
}
\end{center}
**The communication complexity of ProjGD-X would be $nq$ because the gradient w.r.t. $\X$ computed at each node will need to be transmitted by the nodes to the center. The gradient w.r.t. $\X$ is not low rank (LR), and hence one cannot transmit just its rank $r$ SVD.
%
%
\vspace{-0.05in}
\caption{\small{Existing work versus our work.
For brevity, this table assumes $q \le n$ and treats $\kappa,\mu$ as numerical constants.
All approaches also need $m \ge \max(r,\log q, \log n)$.
Column-wise error bound exists means $\max_k \|\xstar_k - \xhat_k\|/\|\xstar_k\| \le \eps$ holds in addition to a similar bound on matrix Frobenius norm error.
Linear-time is the time needed to read all algorithm inputs. For LRcCS, this is $\y_k,\A_k$ for all $k \in [q]$ and thus linear-time is order $mnq$. For LRMC, this is the set of observed entries and their locations and thus linear-time is order $mq$.
\cred
None of the other algorithms have been studied in the federated context and hence the communication complexity (Comm. Comp.) listed in the fourth column is based on our understanding of how one would federate the algorithm.
\cbl
Notice that AltGD-min has the best time and communication complexities; and for  $\epsilon^4 < r$, it also has the best sample complexity.
}}
\label{compare_lrccs}
\vspace{-0.1in}
\end{table*}

\subsection{Contributions and Novelty}\label{contrib}
\noindent \bfpara{Contribution to solving LRcCS and LRPR}
(1) This work develops a novel GD-based solution to LRcCS, called AltGD-Min, that is fast and communication-efficient. We show that, with high probability (w.h.p.), AltGD-Min obtains an $\epsilon$-accurate estimate in order $\kappa^2 \log(1/\epsilon)$ iterations, as long as Assumption \ref{right_incoh} holds, the matrices $\A_k$ are i.i.d., with each containing i.i.d. standard Gaussian entries,  $mq \in \Omega ( \kappa^6 \mu^2 (n + q)  r^2 \log(1/\epsilon) )$, and $m \in \Omega(\max(\log q, \log n)\log(1/\epsilon))$.  Its time complexity is $O( mqnr \cdot \kappa^2 \log(1/\epsilon))$ and its communication complexity per node is $O(nr\cdot \kappa^2 \log(1/\epsilon))$.
We provide a comparison of our guarantee with those of other works in Table \ref{compare_lrccs}. 
This table also summarizes the guarantees for the two most sample-efficient LRMC solutions: projGD-X and altGDnormbal. The former is also the fastest LRMC solution, while the latter is the most communication-efficient.
As mentioned earlier, LRMC is the most similar problem to ours that has been extensively studied. Notice that, our sample complexity matches that of the best results for LRMC algorithms that do solve a convex relaxation. 
(2) We show that a simple extension of AltGD-Min also provides the fastest provable solution to LRPR, as long as the above assumptions hold and $mq \in \Omega(\kappa^6 \mu^2 nr^2(r + \log(1/\epsilon))$. Its time complexity is the same too.

\noindent \bfpara{Contributions / Novelty of algorithm design and proof techniques}\label{novelty}
As explained earlier, there are three commonly used provably correct iterative algorithms for LR recovery problems -- altMin, projGD-X, and altGD (altGDnormbal to be precise). AltMin is slower than GD-based methods because, for updating both $\U$ and $\B$, it requires solving a minimization problem keeping the other variable fixed. For our specific asymmetric problem, the min step for $\U$ is the slow one. ProjGD-X and altGDnormbal are faster, but it is not clear how to analyze them for LRcCS under the desired sample complexity\footnote{\cred
In order to show that a GD-based algorithm converges, one needs to be able to bound the norm of the gradient and show that it goes to zero with iterations.  When studying both projGD-X and altGDnormbal, for different reasons, the estimates of the different columns are coupled. Consequently, it is not possible to get a tight enough bound on $\max_k\|\xstar_k - \xhat_k\|$. But, due to the form of the LRcCS measurement model, such a bound is needed to get a tight enough bound on the 2-norm of the gradient of the cost function, and show that it decreases sufficiently at each iteration, under the desired sample complexity. Moreover, in case of projGD-X, even if one could somehow get the desired bound, it would not suffice because the summands will still be too heavy tailed. This point is explained in detail in Appendix \ref{algo_understand}.}. Our novel altGD-min approach however resolves both issues: it is fast as projGD-X and it can be analyzed.
Moreover, its communication complexity for a federated implementation (and its memory complexity) is only $nr$ per node per iteration, instead of $nq$ for projGD-X.
As can be seen from Table \ref{compare_lrccs}, treating $\kappa,\mu$ as numerical constants, it has the best sample-, time-, and communication/memory- complexity among all approaches for LRcCS and all fast (iterative) approaches for LRMC as well. Because of this, an AltGD-Min type algorithm may also be of interest for solving LRMC in a fast, sample-efficient and communication-efficient fashion.  In fact, it can be also be useful for other bilinear inverse problems such as blind deconvolution.



\bfpara{AltGDmin algorithm} The main idea is as follows. Express $\X$ as  $\X = \U \B$ and alternatively update $\U$ and $\B$ as follows: (a) keeping $\B$ fixed at its previous value, update $\U$ by a GD step for it for the cost function $f(\U,\B)$ followed by projecting the output onto the space of matrices with orthonormal columns; and (b) keeping $\U$ fixed at its previous value, update $\B$ by minimizing $f(\U, \B)$ over it. Because of the column-wise decoupled form of our measurement model, step (b) is as fast as the GD step and thus the per-iteration time complexity of AltGD-Min is equal to that of any other GD method such as projGD-X or altGDnormbal. This decoupling (which means that, given $\U$, $\b_k$ only depends on $\xstar_k$, and not on the other columns of $\Xstar$) also allows us to get the desired tight-enough bound on $\max_k\|\bhat_k- \U^\top \xstar_k\|$ and hence on $\max_k\|\xhat_k - \xstar_k\|$. This, and the fact that we use the gradient w.r.t. $\U$ in our algorithm, means that the summands in the gradient, and in other error bound terms, are {\em nice-enough  sub-exponential random variables (r.v.s)}: sub-exponential r.v.s whose maximum sub-exponential norm is small enough (is proportional to $(r/q)$), so that the summation can be bounded w.h.p. under the desired sample complexity.

\bfpara{AltGDmin analysis}  When we analyzed the AltMin approach for LRPR \cite{lrpr_it,lrpr_best}, we could directly modify proof techniques from AltMin for LRMC \cite{lowrank_altmin} for getting a bound on $\SE(\U,\Ustar)$ in terms of the bound on this distance from the previous iteration.
We cannot do this for AltGD-Min because the algorithm itself is different from the two GD approaches studied for solving LRMC. We instead analyze AltGD-Min by a novel use of the fundamental theorem of calculus \cite{lan93} that, along with other linear algebra tricks, helps us get a bound on $\SE(\U,\Ustar)$ which has the desired property: the terms in it are sums of {\em nice-enough sub-exponentials}. See Lemma \ref{algebra} and its proof.
\cred
The use of this result is motivated by its use in \cite{pr_mc_reuse_meas}, and many earlier works, where it is used in a standard way: to bound the Euclidean distance, $\|\x - \xstar\|$, for standard GD to solve the PR problem for recovering a single vector $\xstar$. Thus, at the true solution $\x=\xstar$, the gradient of the cost function was zero. In our case, there are two differences: (i) we need to bound the subspace distance error, and (ii) our algorithm is not standard GD, and this means that $\nabla_U f(\Ustar \Ustar^\top \U, \B) \neq 0$. We explain our approach in Sec. \ref{outline_iters}.
\cbl


\bfpara{AltGDmin initialization}  The standard LR spectral initialization approach cannot be used because its summands are sub-exponential r.v.s that are not {\em nice-enough}. 
We give a detailed explanation in Appendix \ref{algo_understand}. 
We address this issue by borrowing the truncation idea from the PR literature \cite{twf,rwf,lrpr_it}. But, in our case, truncation is applied to a non-symmetric matrix. Thus the sandwiching arguments developed for symmetric matrices in \cite{twf}, and modified in \cite{rwf,lrpr_it}, cannot be borrowed. We need a different argument which is used for proving Lemma \ref{lem:init_denom_term} and is briefly explained in Sec. \ref{outline_init}. 
\subsection{Applications}\label{apps}
The LRcCS and LRPR problems occur in projection imaging applications involving sets of images, e.g., dynamic MRI \cite{st_imaging,dyn_mri1,dyn_mri2}, federated LR sketching \cite{hughes_icml_2014,lee2019neurips}, and  dynamic Fourier ptychography (LRPR) \cite{TCIgauri}.
In MRI, Fourier projections of the region of interest, e.g., a cross-section of the brain or the heart, are acquired one coefficient at a time, making the scanning (data acquisition) quite slow. Hence, reduced sample complexity enables accelerated scanning. 
Since medical image sequences are usually slow changing, the LR model is a valid assumption for a time sequence \cite{st_imaging,dyn_mri1,dyn_mri2}.  
In our notation, $\xstar_k$ is the vectorized version of the $k$-th image of the sequence and there are a total of $q$ images. The matrices $\A_k$ are random Fourier, i.e., $\A_k = \H_k \F$ where $\F$ is the $n \times n$ matrix that models computation of the 2D discrete Fourier transform as a matrix-vector operation, and $\H_k$ is an $m \times n$ random sampling ``mask'' matrix that models the frequency selection.
In \cite{lrpr_gdmin_mri}, we have shown the power of AltGD-Min for fast undersampled dynamic MRI of medical image sequences. It is both much faster, and in most cases, also provides better reconstructions,  than many existing solutions from the MRI literature.

Large scale usage of smartphones results in large amounts of geographically distributed data, e.g., images. There is a need to compress/sketch this data before storing it. Sketch refers to a compression approach where the compression end is low complexity, usually simple linear projections \cite{hughes_icml_2014,lee2019neurips}. Consider the setting where different subsets of columns  of $\Xstar$ (each column corresponds to one vectorized image) are available at each of the $\rho \le q$ nodes. The goal is to sketch them so that they can be correctly recovered using a federated algorithm.
We can store the sketches $\y_k := \A_k \xstar_k$ with $\A_k$'s being i.i.d. Gaussian. This way we store a total of only $mq$ scalars, with $mq$ of order roughly just $(n+q)r^2$. 
Traditional LR sketching approaches, e.g., \cite{cov_sketch}, are designed for centralized settings and will not be efficient in a distributed setting. 

\subsection{Organization}
In Sec. \ref{algo_thm}, we develop AltGD-Min, give its guarantee for solving LRcCS, and compare it with existing results. We state and prove the two theorems that help prove our main result in Sec. \ref{proving_mainres}. This section also contains brief proof outlines before the actual proofs. The lemmas used in these proofs are proved in Sec. \ref{proof_lemmas}. The extension for solving LRPR is developed, and its guarantee is stated and proved, in Sec. \ref{algo_thm_proof_lrpr}.  We discuss the limitations of our results in Sec. \ref{limitations}. Simulation experiments are provided in Sec. \ref{sims}. We conclude in Sec. \ref{conclude}.%

\section{The Proposed AltGD-Min Algorithm and Guarantee}
\label{algo_thm}

\subsection{The AltGD-Min algorithm} \label{algo_explain}
We would like to design a fast GD algorithm to find the matrix $\X$ that minimizes the squared-loss cost function
$
\tilde{f}(\X): = \sum_{k=1}^q \|\y_k - \A_k \xhat_k\|^2.
$
For reasons described earlier, we decompose $\X = \U \B$ and develop an alternating GD-min (AltGD-Min) approach for the squared loss function,
\[
f(\U,\B) := \tilde{f}(\U\B) = \sum_k \|\y_k - \A_k \U \b_k\|^2.
\]
Starting with a careful initialization for $\U$ explained below,  AltGD-Min proceeds as follows. At each new iteration,
\bi
\item {\em Min-B: } update $\B$ by solving $\B \leftarrow \arg\min_{\tilde\B} f(\U,\tilde\B)$. Since $\b_k$ only occurs in the $k$-th summand of $f(\U,\B)$, this decouples to a much simpler column-wise least squares (LS) problem: $\b_k \leftarrow \arg\min_{\tilde\b_k} \|\y_k - \A_k \U \tilde\b_k\|^2$. This is solved in closed form as $\bhat_k = (\A_k \U)^\dag \y_k$ for each $k$; here $\M^\dag:=(\M^\top \M)^{-1}\M^\top$.

\item {\em ProjGD-U: } update $\U$ by one GD step for it, $\Uhat^+ \leftarrow \U - \eta \nabla_U f(\U,\B)$, followed by  projecting $\Uhat^+$ onto the space of matrices with orthonormal columns to get the updated $\U^+$.
 We get $\U^+$ by QR decomposition: $\Uhat^+ \qreq \U^+ \R^+$.
\ei
Notice that, because of the decoupling for $\B$, the min step only involves solving $q$ $r$-dimensional Least Squares (LS) problems, in addition to also first computing the matrices, $\A_k \U$. 
Computing the matrices needs time of order $mnr$, and solving one LS problem needs time of order $mr^2$. Thus, the LS step needs time $O(q\max(mnr, m r^2)) = O(mqnr)$ since $r \le n$. This is equal to the time needed to compute the gradient w.r.t. $\U$; and thus, the per-iteration cost of AltGD-Min is only $O(mqnr)$. The QR decomposition of an $n \times r$ matrix takes time only $nr^2$.


Since $f(\U,\B)$ is not a convex function of the unknowns $\{\U, \B\}$, a careful initialization is needed.
Borrowing the spectral initialization idea from LRMC and LRMS solutions, we should initialize $\U_0$ by computing the top $r$ singular vectors of
\[
\X_{0,full} = \frac{1}{m} [ (\A_1^\top \y_1), (\A_2^\top \y_2), \dots, (\A_k^\top \y_k), \dots (\A_q^\top \y_q)]
\]
Clearly the expected value of the $k$-th column of this matrix equals $\xstar_k$ and thus $\E[\X_{0,full}] = \Xstar$. But, as we explain next, it is not clear how to prove that this matrix concentrates around $\Xstar$.
Observe that it can also be written as
\begin{align*}
\X_{0,full} :=    \frac{1}{m}\sum_{k=1}^q \sum_{i=1}^m  \a_\ik \y_\ik \e_k{}^\top
\end{align*}
Its summands are independent sub-exponential r.v.s with maximum sub-exponential norm $\max_k \|\xstar_k\| \le \mu \sqrt{r/q} \sigmax$. This is too large and does not allow us to bound  $\|\X_{0,full} - \Xstar\|$ under the desired sample complexity; see Appendix \ref{algo_understand}. 
To resolve this issue, we borrow the truncation idea from earlier work on PR \cite{twf,lrpr_it} and initialize $\U_0$ as the top $r$ left singular vectors of
\bea
\Xhat_{0} & := & \frac{1}{m}\sum_{k=1}^q \sum_{i=1}^m  \a_\ik \y_\ik \e_k{}^\top \indic_{\left\{ \y_\ik^2 \le  \alpha \right\} } \nonumber \\
 & = & \frac{1}{m}\sum_{k=1}^q \A_k^\top \y_{k,trunc}(\alpha) \e_k^\top
\label{newinit}
\eea
where $\alpha := \tC \frac{\sum_\ik  (\y_\ik)^2}{mq}$ and $\y_{k,trunc}(\alpha) := \y_k \circ \indic(|\y_k| \le  \sqrt\alpha )$. We set $\tC$ in our main result. 
Observe that we are summing over only those $i,k$ for which $\y_\ik^2$ is not too large (is not much larger than its empirically computed average value). This {truncation} filters out the too large (outlier-like) measurements and sums over the rest. Theoretically, this converts the summands into sub-Gaussian r.v.s which have lighter tails than the un-truncated ones. This allows us to prove the desired concentration bound.
Different from the above setting, in  \cite{twf,lrpr_it}, truncation was applied to symmetric positive definite matrices and was used to convert summands that were heavier-tailed than sub-exponential to sub-exponential.%

We summarize the complete algorithm in Algorithm \ref{gdmin}.  This uses sample-splitting which is a commonly used approach in the LR recovery literature \cite{lowrank_altmin,fastmc,rmc_gd} as well as in other compressive sensing settings. It helps ensure that the measurement matrices in each iteration for updating $\U$ and $\B$ are independent of all previous iterates. This allows one to use concentration bounds for sums of independent r.v.s. We provide a detailed discussion in Sec. \ref{samplesplit}.


\subsubsection{Practical algorithm and setting algorithm parameters}
First, when we implement the algorithm, we use Algorithm \ref{gdmin} with using the full set of measurements for all the steps (no sample-splitting).
The algorithm has 4 parameters: $\eta$, $T$, $\tC$ and the rank $r$. According to the theorem below, we should set $\eta = c / \sigmax^2$ with $c<0.5$. But $\sigmax$ is not known. The initialization matrix $\Xhat_0$ provides an approximation to $\Xstar$ and hence we can set $\eta = c/\|\Xhat_0\|^2$.
Consider $\tC$. The theorem requires setting $\tC = 9 \kappa^2 \mu^2$, however $\kappa,\mu$ are functions of $\Xstar$ which is unknown. Using the definition of $\mu$ from Assumption \ref{right_incoh}, we can replace $\kappa^2 \mu^2$ by an estimate of its lower bound: $q \cdot \max_k \widehat{\|\xstar_k\|^2} / \widehat{\|\Xstar\|_F^2}$ with $ \widehat{\|\xstar_k\|^2} = (1/m) \sum_i \y_\ik^2$ and $\widehat{\|\Xstar\|_F^2} = (1/m)  \sum_k \sum_i \y_\ik^2$.
To set the total number of algorithm iterations $T$, we can use a large maximum value along with breaking the loop if a stopping criterion is satisfied. A common stopping criterion for GD is to stop when the iterates do not change much. One way to do this is to stop when $\SE(\U_t, \U_{t-1}) \le 0.01 \sqrt{r}$ for last few iterations.

As explained in \cite{lrpr_gdmin_mri}, we can use the following constraints to set the rank. We need our choice of rank, $\hat{r}$, to be sufficiently small compared to $\min(n,q)$ for the algorithm to take advantage of the LR assumption. Moreover, for the LS step for updating $\bhat_k$'s (which are $r$-length vectors) to work well (for its error to be small), we also need it to also be small compared with $m$.
One approach that is used often is to use the ``$b\%$ energy threshold'' on singular values.
Thus, one good heuristic that respects the above constraints is to compute the  ``$b\%$ energy threshold'' of the first $\min(n,q,m)/10$ singular values, i.e. compute $\hat{r}$ as the smallest value of $r$ for which
\[
\sum_{j=1}^{{r}} \sigma_j(\X_0)^2 \ge (b/100) \cdot \sum_{j=1}^{\min(n,q,m)/10} \sigma_j(\X_0)^2
\]
for a $b \le 100$. In our MRI experiments in \cite{lrpr_gdmin_mri}, we used $b=85$. We also realized from the experiments that the algorithm is not very sensitive to this value as long as $\hat{r} \ll \min(n,q,m)$.

\subsubsection{Federating the algorithm}
Suppose that our sketches $\y_k$ are geographically distributed across a set of $\totl$ nodes. Each node $\tl$ stores a subset, denoted $\mathcal{S}_\tl$, of the $\y_k$s with $|\mathcal{S}_\tl|=q_\tl$. These subsets are mutually disjoint so that $\sum_\tl q_\tl = q$. 
Typically $\totl \ll q$.
%
Privacy constraints dictate that we cannot share the $\y_k$s with the central server; although summaries computed using the $\y_k$s can be shared at each algorithm iteration. 
This will be done as follows. Consider the GDmin steps of Algorithm \ref{gdmin} first. Line 13 (Update $\b_k$s, $\x_k$s) is done locally at the node that stores the corresponding $\y_k$. For line 14 (Gradient w.r.t $\U$), the partial sums over $k \in \mathcal{S}_\tl$ are computed at node $\tl$ and transmitted to the center which adds all the partial sums to obtain $\nabla_\U f(\U,\B)$. Line 15 (GD step) and line 16 (projection via QR) are done at the center. The updated $\U$ is then broadcast to all the nodes for use in the next iteration.
The per node time complexity of this algorithm is thus $mnr q_\tl$ at each iteration. The center only performs additions and a QR decomposition (an order $nr^2$ operation) in each iteration. Thus, the time complexity of the federated solution is only $mnr(\max_\tl q_\tl)T$ per node. 
%

The initialization step can be federated by using the Power Method (PM) \cite{golub89,npm_hardt} to compute the top $r$ eigenvectors of $\Xhat_0 \Xhat_0{}^\top$. Any PM guarantee helps ensure that its output is close in subspace distance to the span of the top $r$ eigenvectors of $\Xhat_0 \Xhat_0{}^\top$ after a sufficient number of iterations.
%
The communication complexity of the federated implementation is thus just $nr$ per node per iteration (need to share the partial gradient sums).
Observe also that the information shared with the center is not sufficient to recover $\Xstar$ centrally. It is only sufficient to recover $\Span(\Ustar)$.
The recovery of the columns of $\B$, $\tb_k$, is entirely done locally at the node where the corresponding $\y_k$ is stored, thus ensuring privacy. 

\begin{algorithm}[t]
\caption{\small{The AltGD-Min algorithm.   Let $\M^\dagger:= (\M^\top\M)^{-1} \M^\top$.}} 
\label{gdmin}
\begin{algorithmic}[1]
   \State {\bfseries Input:} $\y_k, \A_k, k \in [q]$
 \State {\bfseries Parameters:} Multiplier in specifying $\alpha$ for init step, $\tC$; GD step size, $\eta$; Number of iterations, $T$

\State {\bfseries Sample-split:} Partition the measurements and measurement matrices into $2T+1$ equal-sized disjoint sets: one set for initialization and $2T$ sets for the iterations. Denote these by $\y_k^{(\tau)}, \A_k^{(\tau)}, \tau=0,1,\dots 2T$.

  \State {\bfseries Initialization:}
\State Using $\y_k \equiv \y_k^{(0)}, \A_k \equiv \A_k^{(0)}$,
set
\\ $\alpha = \tC \frac{1}{mq}\sum_\ik\big|\y_\ik\big|^2$,
\\
$\y_{k,trunc}(\alpha):= \y_k \circ \indic\{|\y_k| \le \sqrt{\alpha} \}$ 
\\
$\ds  \Xhat_{0}:= (1/m) \sum_{k \in [q]} \A_k^\top \y_{k,trunc}(\alpha) \e_k^\top$

%
\State   Set $\U_0 \leftarrow $ top-$r$-singular-vectors of $\Xhat_0$
\State {\bfseries GDmin iterations:}

   \For{$t=1$ {\bfseries to} $T$}

   \State  Let $\U \leftarrow \U_{t-1}$.
\State {\bfseries Update $\bhat_k, \xhat_k$: } For each $k \in [q]$, set $(\bhat_k)_{t}  \leftarrow  (\A_k^{(t)} \U)^\dagger \y_k^{(t)}$ and set $(\xhat_k)_{t}    \leftarrow \U (\bhat_k)_{t}$

\State {\bfseries Gradient w.r.t. $\U$: } With $\y_k \equiv \y_k^{(T+t)}, \A_k \equiv  \A_k^{(T+t)}$, compute $\nabla_\U f(\U, \B_t) =   \sum_k \A_k^\top (\A_k \U (\bhat_k)_t - \y_k) (\bhat_k)_t^\top$
%
\State  {\bfseries GD step: } Set $\ds \Uhat^+   \leftarrow \U - (\eta/m) \nabla_\U f(\U, \B_t)$.

 \State {\bfseries Projection step: }  Compute $\Uhat^+ \qreq \U^+ \R^+$.
 \State Set $\U_t \leftarrow \U^+$.
     \EndFor
\end{algorithmic}
\end{algorithm}


\subsection{Main Result}\label{main_res}
We can prove the following result.

\begin{theorem}
Consider Algorithm \ref{gdmin}. Let $m_t$ denote the number of samples used in iteration $t$.
 Set $\tC = 9\kappa^2 \mu^2$, $\eta = c / \sigmax^2$ with a $c \le 0.5$, and $T = C \kappa^2 \log(1/\epsilon)$. Assume that Assumption \ref{right_incoh} holds and that the $\A_k$s are i.i.d. and each contains i.i.d. standard Gaussian entries.
If 
\[
m_0 q \ge C \kappa^6 \mu^2  (n + q) r^2,
\]
and $m_t$ for $t \ge 1$ satisfies
\[
m_t q \ge C \kappa^4 \mu^2  (n + q) r^2 \log \kappa \text{ and } m_t \ge C \max(r,\log q, \log n)
\]
then, with probability (w.p.) at least $1- t n^{-10}$, for all $t \ge 0$,
\[
\SEF(\U_t, \Ustar) \le \left(1 - \frac{(\eta \sigmax^2) 0.4}{\kappa^2} \right)^t \delta_0 
\]
with $\delta_0 = 0.09/\kappa^2.$ 
Thus, with $T= C \kappa^2 \log(1/\epsilon)$ and $\eta = 0.5/\sigmax^2$, w.p. at least $1 - (T+1) n^{-10}$,
\begin{align*}
&	\SEF(\U_T,\Ustar) \le \epsilon,   \  {\|(\xhat_{k})_T-\xstar_{k}\|} \le \epsilon {\|\xstar_k\|}, \text{  for all $k \in [q]$, } \\
& \|\Xhat_T - \Xstar\|_F \le 1.4  \epsilon \|\Xstar\|
\end{align*}
\label{gdmin_thm}
\end{theorem}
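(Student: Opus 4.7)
The plan is to split the theorem into two essentially independent pieces and then stitch them together by induction. The first piece is an initialization bound: with probability at least $1 - n^{-10}$, the spectral method of Line 5--6 produces $\U_0$ with $\SEF(\U_0, \Ustar) \le \delta_0 = 0.09/\kappa^2$, provided $m_0 q \gtrsim \kappa^6 \mu^2 (n+q) r^2$. The second piece is a per-iteration contraction: conditioning on a fixed $\U_t$ with $\SEF(\U_t, \Ustar) \le \delta_0$ and using the fresh independent measurements $\A_k^{(t)}, \A_k^{(T+t)}$, with probability at least $1 - n^{-10}$ we have $\SEF(\U_{t+1}, \Ustar) \le (1 - 0.4\eta\sigmax^2/\kappa^2)\SEF(\U_t, \Ustar)$. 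Sample-splitting makes the measurements at iteration $t$ independent of $\U_t$, so the induction closes and a union bound over $t \le T$ produces the stated $tn^{-10}$ failure probability. The final column-wise and Frobenius error bounds follow from $\xhat_k - \xstar_k = \U_T(\bhat_k - \U_T^\top\xstar_k) - (\I - \U_T\U_T^\top)\xstar_k$ together with Assumption \ref{right_incoh}.

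For the initialization, I would first show that the data-driven truncation level $\alpha$ concentrates around a deterministic target $\bar\alpha \asymp \kappa^2 \mu^2 \|\Xstar\|_F^2/q$; this is the content of the ``denominator'' Lemma \ref{lem:init_denom_term} and is where the non-symmetric truncation argument sketched in Sec.~\ref{outline_init} is needed. Replacing $\alpha$ by $\bar\alpha$ in $\Xhat_0$ via a sandwiching argument, the summands $\a_\ik \y_\ik \e_k^\top \indic_{\{|\y_\ik|^2 \le \bar\alpha\}}$ become sub-Gaussian with per-column sub-Gaussian norm of order $\sigmax \mu \sqrt{r/q}$. A matrix Bernstein / $\epsilon$-net argument then bounds $\|\Xhat_0 - \E[\Xhat_0]\|$ and a direct calculation bounds the truncation bias $\|\E[\Xhat_0] - \Xstar\|$. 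Wedin's $\sin\Theta$ theorem converts this spectral-norm bound, divided by $\sigmin = \sigmax/\kappa$, into the claimed $\SEF$ bound; the extra $\kappa^2$-factor in the spectral gap explains the $\kappa^6$ in the sample complexity.

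For the per-iteration contraction I would first analyze the LS step: writing $\bhat_k = (\A_k\U)^\dagger \A_k\U\U^\top\xstar_k + (\A_k\U)^\dagger \A_k(\I - \U\U^\top)\xstar_k$ and using standard Gaussian concentration for the $m\times r$ matrix $\A_k\U$ (needing $m \gtrsim r$) together with independence across $k$, I would establish a column-wise bound of the form $\|\bhat_k - \U^\top \xstar_k\| \lesssim \SEF(\U,\Ustar)\|\xstar_k\|$ uniformly in $k$. This is the step that relies critically on the column-wise decoupling, and is what makes all summands in the gradient ``nice-enough sub-exponentials'' with maximum norm of order $r/q$. Next, for the projected GD step on $\U$, I would use the fundamental-theorem-of-calculus expansion promised in Lemma \ref{algebra}: decompose $\nabla_\U f(\U,\B) = \nabla_\U f(\Ustar\Ustar^\top\U,\B) + \bigl(\int_0^1 \nabla^2_\U f(\cdot)\,d\tau\bigr)(\U - \Ustar\Ustar^\top\U)$, bound the integrated Hessian in expectation by $\sigmax^2 \I$ plus small deviation via another matrix Bernstein bound, and control the ``bias'' term $\nabla_\U f(\Ustar\Ustar^\top\U,\B)$ using the column-wise error bound above. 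Substituting this into $\SEF(\U^+, \Ustar) = \|(\I - \Ustar\Ustar^\top)\Uhat^+ (\R^+)^{-1}\|_F$ (after absorbing the small $\R^+$ via $\sigma_{\min}(\R^+) \ge 1/2$) yields the contraction factor $1 - 0.4\eta\sigmax^2/\kappa^2$ for $\eta \le 0.5/\sigmax^2$.

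The main obstacle will be the bias term $\nabla_\U f(\Ustar\Ustar^\top\U,\B)$. Unlike standard GD for phase retrieval where the true solution is a stationary point of the cost, here AltGD-Min is not gradient flow on $\U$ alone (because $\B$ is obtained by minimization, not GD), so this term is genuinely nonzero. The key is to exhibit the bias as being of order $\SEF(\U_t,\Ustar)\cdot\sigmax^2$, so that after multiplying by $\eta \asymp 1/\sigmax^2$ it contributes only a $\SEF(\U_t,\Ustar)/\kappa^2$ term that is absorbed into the contraction. This in turn requires the proportional-to-$\SEF$ column-wise bound on $\bhat_k - \U^\top\xstar_k$, which is the technical core of the argument and the place where neither projGD-X nor altGDnormbal would work. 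Once the per-iteration bound is established, iterating it $T = C\kappa^2\log(1/\eps)$ times gives $\SEF(\U_T,\Ustar) \le \eps$, and the column-wise and Frobenius bounds follow by a one-line triangle-inequality argument together with Assumption \ref{right_incoh} and $\sum_k \|\xstar_k\|^2 = \|\Xstar\|_F^2 \le r\sigmax^2$, yielding the $1.4\eps\|\Xstar\|$ constant.
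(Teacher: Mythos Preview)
Your overall architecture matches the paper's: split into an initialization theorem and a per-iteration contraction theorem, use sample-splitting to make the induction go through, and read off the column-wise and Frobenius bounds from the least-squares analysis of $\bhat_k$. Your sketch of the initialization step (truncation, concentration of $\alpha$, Wedin) and of the LS step for $\bhat_k$ is essentially what the paper does.

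However, there is a genuine gap in your handling of the ``bias'' term $\nabla_\U f(\Ustar\Ustar^\top\U,\B)$, and it is exactly the crux of the contraction argument. You write that this term is of order $\SEF(\U_t,\Ustar)\cdot\sigmax^2$ and that after multiplying by $\eta\asymp 1/\sigmax^2$ it contributes $\SEF(\U_t,\Ustar)/\kappa^2$. The arithmetic does not work: $\delta_t\sigmax^2\cdot(1/\sigmax^2)=\delta_t$, not $\delta_t/\kappa^2$. Since the Hessian term only gives a contraction factor $1-c\eta\sigmin^2=1-c/\kappa^2$, a bias contribution of order $\delta_t$ (rather than $o(\delta_t/\kappa^2)$) would destroy the contraction for large $\kappa$. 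And indeed, the column-wise bound $\|\Ustar^\top\U\bhat_k-\tb_k\|\lesssim\delta_t\|\tb_k\|$ only gives $\|\E[\nabla_\U f(\Ustar\Ustar^\top\U,\B)]\|\lesssim m\delta_t\sigmax^2$, which is too large.

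The point you are missing is what happens \emph{after} you project onto $\Ustar^\perp$. The relevant quantity is not $\nabla_\U f(\Ustar\Ustar^\top\U,\B)$ but $\mathrm{Term2}:=(\I-\Ustar\Ustar^\top)\nabla_\U f(\Ustar\Ustar^\top\U,\B)$, and the key observation is that $\E[\mathrm{Term2}]=0$. This is because $\E[\nabla_\U f(\Ustar\Ustar^\top\U,\B)]=m\sum_k(\Ustar\Ustar^\top\U\bhat_k-\xstar_k)\bhat_k^\top=m\,\Ustar(\Ustar^\top\U\B-\tB)\B^\top$ lies entirely in $\mathrm{range}(\Ustar)$, so the projection kills it. With zero mean, sub-exponential Bernstein lets you bound $\|\mathrm{Term2}\|_F\le m\epsilon_2\delta_t\sigmax^2$ for \emph{any} constant $\epsilon_2>0$ (at the cost of $mq\gtrsim (n+q)r^2/\epsilon_2^2$), and then choosing $\epsilon_2\asymp 1/\kappa^2$ makes the bias contribution $(\eta/m)\|\mathrm{Term2}\|_F\lesssim\delta_t/\kappa^2$, small enough to preserve the $1-c/\kappa^2$ contraction. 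This is precisely why the paper emphasizes that $\E[\mathrm{Term2}]=0$ (and why sample-splitting is essential: the expectation is conditional on $\U,\B$, which requires the $\A_k$ used in the gradient to be independent of them).
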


{\em Sample complexity}
The sample complexity (total number of samples needed to achieve $\epsilon$-accurate recovery) is  $m_\tot = \sum_{\tau=0}^T m_\tau \ge m_0 + T \min_{t\ge 1} m_t$. From the above result, this needs to satisfy $m_{tot} q \ge C \kappa^6 \mu^2 (n+q) r^2 \log(1/\epsilon) \log (\kappa)$ and $m_{tot} > C \kappa^2 \max(r,\log q, \log n ) \log(1/\epsilon)$.

{\em Time complexity}
Let $m \equiv m_t$. The initialization step needs time $mq n $ for computing $\Xhat_0$; and  time of order $nqr$ times the number of iterations used in the $r$-SVD step. Since we only need a $\delta_0$-accurate initial estimate of $\Span(\Ustar)$, with $\delta_0 = c/\kappa^2$, order $\log(\kappa)$ number of iterations suffice for this SVD step. Thus the complexity is $O(nq(m+r) \cdot \log \kappa) = O(mqn \cdot \log \kappa)$ since $m \ge r$. One gradient computation needs time $O(mq nr)$. The QR decomposition needs time of order $nr^2$. The update of columns of $\B$ by LS also needs time $O(mqnr)$ (explained earlier).
 As we prove above, we need to repeat these steps $T = O(\kappa^2 \log (1/\epsilon))$ times.
 Thus the total time complexity is $O(mqn\log\kappa + \max(mqnr, nr^2, mqnr) \cdot T ) = O( \kappa^2 mqnr \log(1/\epsilon) \log\kappa)$.

{\em Communication complexity} The communication complexity per node per iteration for a federated implementation is just order $nr$. Thus, the total is  $O( nr \cdot \kappa^2 \log(1/\epsilon))$.

Thus, we have the following corollary.
\begin{corollary}[AltGD-Min]
In the setting of Theorem \ref{gdmin_thm}, if Assumption \ref{right_incoh} holds, and if
\[
m_{tot} q \ge C \kappa^6 \mu^2 (n+q) r^2 \log(1/\epsilon) \log (\kappa)
\]
and $m_{tot} > C \kappa^2 \max(r,\log q, \log n ) \log(1/\epsilon)$,
then, w.p. at least $1 - (C\kappa^2\log(1/\epsilon)) n^{-10}$,
$ \|\Xhat - \Xstar\|_F \le 1.4 \eps \|\Xstar\|$ and ${\|\xhat_{k}-\xstar_{k}\|} \le \eps { \|\xstar_k\|}$ for all $k \in [q]$.
The time complexity is $C \kappa^2 mqnr \log (1/\epsilon) \log\kappa$ and the communication complexity is $O( nr \cdot \kappa^2 \log(1/\epsilon))$.
\end{corollary}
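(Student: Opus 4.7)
The plan is to prove the theorem by induction on $t$, establishing two separate claims: (i) a base case showing the truncated spectral initialization satisfies $\SEF(\U_0, \Ustar) \le \delta_0 = 0.09/\kappa^2$, and (ii) an inductive step showing that, conditioned on $\SEF(\U_{t-1}, \Ustar) \le \delta_{t-1}$ with $\delta_{t-1} \le 0.09/\kappa^2$, one AltGD-Min step gives $\SEF(\U_t, \Ustar) \le (1 - 0.4 \eta \sigmax^2/\kappa^2) \delta_{t-1}$. Sample-splitting makes the measurements used in iteration $t$ independent of $\U_{t-1}$, so all concentration bounds condition on the previous iterate and then take a union bound over $t \in [T]$. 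The column-wise and Frobenius error bounds at the end follow from $\SEF(\U_T,\Ustar) \le \epsilon$ plus the min-B error analysis applied one last time.

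\textbf{Initialization.} For step (i), write $\Xhat_0 = \frac{1}{m}\sum_{k,i} \a_{ik} \y_{ik} \e_k^\top \indic_{\y_{ik}^2 \le \alpha}$ and decompose it as $\E[\Xhat_0] + (\Xhat_0 - \E[\Xhat_0])$, where the expectation is taken with $\alpha$ replaced by a deterministic surrogate $\alpha^* \asymp \|\Xstar\|_F^2/q$. I would first control the deviation $|\alpha - \alpha^*|$ by a scalar Bernstein/Hoeffding argument on $\sum_{ik}\y_{ik}^2$. Next, the signal term $\E[\Xhat_0]$ should be shown to be close to $\Ustar \bSigma_{\text{eff}} \Vstar{}^\top$ for a diagonal $\bSigma_{\text{eff}}$ whose singular values are bounded above and below by constants times those of $\bSigma$; this is what Lemma \ref{lem:init_denom_term} is used for, and because truncation is applied to a non-symmetric matrix, the usual symmetric sandwiching of \cite{twf,lrpr_it} must be replaced by separately bounding the diagonal effect on each column. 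For the noise term, truncation makes each summand sub-Gaussian with variance proxy of order $r/q$ (in Frobenius-like norm), so a matrix Bernstein bound yields $\|\Xhat_0 - \E[\Xhat_0]\| \lesssim \sigmax \sqrt{(n+q)/mq}$ under the stated $m_0 q \gtrsim \kappa^6 \mu^2 (n+q) r^2$. A Wedin/Davis-Kahan inequality then converts this spectral-norm bound into the desired subspace-distance bound $\SEF(\U_0,\Ustar) \le 0.09/\kappa^2$; the factor of $\sqrt{r}$ needed to pass from operator norm to Frobenius distance is what produces the $r^2$ in the sample complexity.

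\textbf{Per-iteration contraction.} For step (ii), I would first analyze the min-B step. Fix $\U = \U_{t-1}$. For each $k$, $\bhat_k = (\A_k \U)^\dag \A_k \xstar_k = \U^\top \xstar_k + (\A_k \U)^\dag \A_k (\I - \U \U^\top) \xstar_k$. Using the standard Gaussian LS bound, with $m \gtrsim r + \log q$, $\|(\A_k \U)^\dag \A_k (\I - \U\U^\top)\| \lesssim 1$ uniformly in $k$ w.h.p.; combined with right incoherence this gives both $\|\bhat_k - \U^\top \xstar_k\| \lesssim \SEF(\U,\Ustar)\cdot \mu\sigmax\sqrt{r/q}$ and a similar max-$k$ bound on $\|\xhat_k - \xstar_k\|$. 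Crucially, this column-wise bound is what makes each summand in $\nabla_\U f$ a \emph{nice-enough} sub-exponential with sub-exponential norm $\lesssim r/q$. Now for the projected GD step, use the fundamental-theorem-of-calculus identity (Lemma \ref{algebra}) to write
\begin{equation*}
\nabla_\U f(\U,\B) = \nabla_\U f(\Ustar \Ustar^\top \U, \B) + \Big(\int_0^1 \nabla^2_\U f(\U(s), \B)\, ds\Big)[(\I - \Ustar \Ustar^\top)\U],
\end{equation*}
where $\U(s)$ interpolates between $\Ustar \Ustar^\top \U$ and $\U$. Because AltGD-Min is not standard GD, the first term is not zero; however, using the min-B identity $\A_k\U\bhat_k = \A_k \U \U^\top \xstar_k + (\text{residual})$ and Bernstein/covering bounds, one shows it has Frobenius norm $\lesssim \sigmax^2 \SEF(\U,\Ustar)/\kappa^2$ (a small multiple of the ``signal'' in the second term). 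The Hessian integrand is shown to be close to $\sigma_i^{*2}\I$ on each coordinate via a standard $\eps$-net argument over $(\U, \B)$ pairs, so the second term, after the GD step and projection, contracts $(\I - \Ustar \Ustar^\top)\U$ by a factor $1 - 0.4\eta\sigmax^2/\kappa^2$. A QR-based argument then upgrades the contraction on $(\I - \Ustar \Ustar^\top)\Uhat^+$ to the same contraction on $\SEF(\U^+, \Ustar)$, using $\|\R^+\| \ge \sigma_r(\R^+) \ge 1 - o(1)$.

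\textbf{Main obstacle.} The hard part is getting the $\sigmax^2/\kappa^2$ factor, rather than merely $\sigmin^2$, in front of $\SEF(\U,\Ustar)$ in the contraction, and simultaneously ensuring the ``bias'' coming from $\nabla_\U f(\Ustar \Ustar^\top \U, \B)$ (which would vanish for standard GD solving a well-posed problem) is dominated by the contracting Hessian term. This requires the column-wise bound $\max_k \|\bhat_k - \U^\top \xstar_k\|$ to be tight enough that, after summing over $k$, the bias term scales with $\sigmax^2 \SEF(\U,\Ustar)/\kappa^2$ and not with $\sigmax^2 \SEF(\U,\Ustar)$. The decoupling of the min-B step across $k$, together with right-singular-vector incoherence, is exactly what makes this possible; without both, the summands would not be nice-enough sub-exponentials and the required $m q \gtrsim (n+q)r^2$ sample complexity could not be achieved. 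Once the per-iteration inequality is established, the final error bounds in the theorem follow by unrolling the recursion, applying the min-B bound once more at the last iterate, and a union bound over all iterations.
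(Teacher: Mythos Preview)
Your overall structure---initialization via a perturbation/Wedin argument, then per-iteration contraction via the fundamental theorem of calculus applied to $\nabla_\U f$---matches the paper's approach (Theorems \ref{init_thm} and \ref{iters_thm}, Lemmas \ref{B_lemma}, \ref{algebra}, \ref{terms_bnds}). Note, though, that the stated Corollary is meant to follow in one line from Theorem \ref{gdmin_thm} by summing the per-iteration sample requirements over $T = C\kappa^2\log(1/\epsilon)$ iterations and reading off the time/communication costs; you are essentially re-proving the Theorem.

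There is a genuine gap in your treatment of the ``bias'' term $\nabla_\U f(\Ustar\Ustar^\top\U,\B)$. You claim its Frobenius norm is $\lesssim \sigmax^2 \SEF(\U,\Ustar)/\kappa^2$ and attribute this to the column-wise min-$\B$ bound. But the column-wise bound only controls the sub-exponential norm of the summands; it does \emph{not} give you the extra $1/\kappa^2$. In fact, $\E[\nabla_\U f(\Ustar\Ustar^\top\U,\B)] = m\,\Ustar(\Ustar^\top\U\B - \tB)\B^\top$, whose norm is of order $m\delta_t\sigmax^2$ (using Lemma \ref{B_lemma} part 2(d)), which is too large by a factor of $\kappa^2$ to be absorbed by the Hessian contraction. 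The paper's key observation (see the proof of Lemma \ref{terms_bnds}, Term2) is that after projecting orthogonal to $\Ustar$---which is exactly what enters $\SEF(\U^+,\Ustar)$---this expected value vanishes: $\E[(\I-\Ustar\Ustar^\top)\nabla_\U f(\Ustar\Ustar^\top\U,\B)] = (\I-\Ustar\Ustar^\top)\Ustar(\cdots) = \bm{0}$. Only because Term2 is mean-zero can one then use Bernstein to push it down to $m\epsilon_2\delta_t\sigmax^2$ for any chosen $\epsilon_2$ (at the cost of $mq \gtrsim \kappa^4\mu^2(n+q)r^2$ when $\epsilon_2 \sim 1/\kappa^2$). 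Your proposal never invokes this zero-mean property, and without it the contraction argument does not close.

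A smaller point: your ``Main obstacle'' paragraph contrasts ``the $\sigmax^2/\kappa^2$ factor'' with ``merely $\sigmin^2$,'' but $\sigmax^2/\kappa^2 = \sigmin^2$, so that sentence is tautological. The real obstacle is the one above: ensuring the projected bias term has zero mean so that concentration alone (not a deterministic bound) makes it small enough.
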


Observe that the above results show that after $T=C \kappa^2 \log(1/\epsilon)$ iterations,  $\SEF(\U_T,\Ustar) \le \epsilon$, ${\|\xhat_{k}-\xstar_{k}\|} \le \eps { \|\xstar_k\|}$, and $\|\Xhat_T - \Xstar\|_F \le 1.4 \epsilon \|\Xstar\|$. The RHS in the third bound does indeed contain $\|\Xstar\|$ (the induced 2-norm). This is correct because, $\SE(.,.)$ is a Frobenius norm subspace distance. We explain this in Sec. \ref{outline_iters}.

\subsection{Discussion and comparison with the best LRMC results}
An algorithm is called linear time if its time complexity is the same order as the time needed to load all input data. In our case, this is $O(mqn)$. Treating $\kappa$ as a constant, the AltGD-Min complexity is worse than linear-time by a factor of only $r \log(1/\epsilon)$. As can be seen from Table \ref{compare_lrccs}, the same is also true for the fastest LRMC solution, projGD-X \cite{rmc_gd}. For LRMC, linear time is $O(mq)$.
To our best knowledge, this is the case for the fastest algorithms for all LR problems.%
Consider the sample complexity.
The degrees of freedom (number of unknowns) of a rank-$r$ $n \times q$ matrix are $(n+q)r$. A sample complexity of $\Omega ( (n+q) r)$ samples (or, sometimes this times log factors) is called ``optimal''.
Thus, ignoring the log factors, our sample complexity of $m_{tot} q \gtrsim (n + q) r^2$ is sub-optimal only by a factor of $r$.
As can also be seen from Table \ref{compare_lrccs}, this suboptimality matches that of the best results for LRMC solutions that are not convex relaxation based \cite{rmc_gd,rpca_gd,lafferty_lrmc}.
The need for exploiting incoherence while obtaining the high probability bounds on the recovery error terms is what introduces the extra factor of $r$ for both LRMC and LRcCS. LRMC has been extensively studied for over a decade and there does not seem to be a way to obtain an (order-) optimal sample complexity guarantee for it except when studying convex relaxation solutions (which are much slower).  

%

In addition, we also need $m \gtrsim \max(r,\log q, \log n)$. This is redundant except for very large $q,n$. This is needed because, the recovery of each column of $\tB$ is a decoupled $r$-dimensional LS problem. We analyze this step in Lemma \ref{B_lemma}; notice that the bound on the recovery error of column $k$ holds w.p. at least $1 - \exp(r - cm)$. By union bound, it holds for all $q$ columns w.p. at least at least $1 - q \exp(r - cm) = 1 - \exp(\log q  + r - c m)$. This probability is at least $1 - n^{-10} = 1 - \exp(-10 \log n)$ if $m \gtrsim \max(r,\log q, \log n)$.

\subsection{Detailed comparison with existing LRcCS results}\label{detailed_compare}
There are two existing solutions for LRcCS -- AltMin  \cite{lrpr_icml,lrpr_it,lrpr_best} and the convex relaxation (mixed norm minimization) \cite{lee2019neurips}.
Mixed norm is defined as $\|\X\|_{mixed}:= \inf_{ \{\U, \V: \U \V = \X\} } \|\U\|_F \max_{k \in [q]} \|\v_k\|$, where $\U$ is $n \times r$ and $\V:=[\v_1, \v_2, \dots \v_q]$ is an $r \times q$ matrix.
In our notation, for the noise-free case ($\sigma=0$), their main result states the following.%
\begin{prop}[Convex relaxation (mixed norm min) in the $\sigma=0$ (noise-free) setting \cite{lee2019neurips}]\label{convex_sol}
Consider a matrix $\Xstar \in \{ \Xstar: \max_k \|\xstar_k\|^2 \le \alpha^2, \|\Xstar\|_{mixed} \le R \le \alpha \sqrt{r} \}$. Then, w.p. $ 1- \exp( - c_2 n R^2 / \alpha^2)$,
$
\frac{\|\Xhat - \Xstar\|_F^2}{\|\Xstar\|_F^2} \le c_1 \frac{\alpha^2}{\|\Xstar\|_F^2/ q }  \sqrt{ \frac{(n+q) r \log^6 n}{m_{tot}  q} }
$
Under our Assumption \ref{right_incoh}, $\max_k \|\xstar_k\|^2 \le \mu^2 (r/q) \sigmax^2 = (\mu^2 \kappa^2)(r/q) \sigmin^2 \le (\kappa^2 \mu^2 ) \|\Xstar\|_F^2/q$, i.e. $\frac{\alpha^2}{\|\Xstar\|_F^2/ q } =  (\kappa^2 \mu^2) $.
Thus, the above result can also be stated as:

For all matrices $\Xstar$ that satisfy Assumption \ref{right_incoh} and for which  $\|\Xstar\|_{mixed} \le  \sqrt{r} \cdot \kappa \mu \|\Xstar\|_F/\sqrt{q} $, if   
$$m_{tot} q \ge C_1 \kappa^4  \mu^4 (n+q) r \log^6 n \cdot \frac{1}{\eps^4},$$ then,  w.p. at least $1 - \exp(-c_2 n)$, ${\|\Xhat - \Xstar\|_F} \le \eps {\|\Xstar\|_F}$.
The time complexity is $C mq nr  \min( \frac{1}{\sqrt\epsilon}, n^3 r^3)$ (explained earlier in Sec. \ref{relwork}).
\end{prop}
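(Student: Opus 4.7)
The statement is essentially a translation of the main noise-free guarantee of \cite{lee2019neurips} into the notation and assumption framework of this paper, followed by inverting the rate to obtain a sample complexity as a function of $\epsilon$. Accordingly, the plan is not to redo the concentration-of-measure / Rademacher-complexity arguments underlying the convex relaxation analysis, but to (i) invoke the Lee--Tian--Zhang theorem as a black-box oracle inequality, (ii) verify that our Assumption \ref{right_incoh} furnishes valid values for the problem-dependent parameters $\alpha$ and $R$, and (iii) do a short algebraic inversion to express the result as a sample complexity bound.

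\textbf{Step 1: invoke the black-box oracle inequality.} I would first quote the noise-free ($\sigma=0$) version of the main theorem of \cite{lee2019neurips}: for any $\Xstar$ lying in the set $\{\Xstar : \max_k\|\xstar_k\|^2 \le \alpha^2,\ \|\Xstar\|_{mixed}\le R\le \alpha\sqrt{r}\}$, the mixed-norm minimizer $\Xhat$ satisfies, w.p.\ at least $1-\exp(-c_2 nR^2/\alpha^2)$,
\[
\frac{\|\Xhat-\Xstar\|_F^2}{\|\Xstar\|_F^2}
\le c_1\,\frac{\alpha^2}{\|\Xstar\|_F^2/q}\sqrt{\frac{(n+q)r\log^6 n}{m_{tot}\,q}}.
\]
The feasibility of $\Xstar$ for the convex program and the validity of the above constants are inherited directly from their proof; I would not redo them.

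\textbf{Step 2: bound the problem-dependent constants via Assumption \ref{right_incoh}.} Under Assumption \ref{right_incoh}, $\max_k \|\xstar_k\|=\max_k\|\tb_k\|\le \mu\sqrt{r/q}\,\sigmax$, so the choice $\alpha^2 := \mu^2 (r/q)\sigmax^2$ is admissible. Combined with the standard lower bound $\|\Xstar\|_F^2 \ge r\sigmin^2$ (since $\Xstar$ has rank $r$), we get
\[
\frac{\alpha^2}{\|\Xstar\|_F^2/q}\;\le\;\frac{\mu^2 r\sigmax^2}{r\sigmin^2}\;=\;\kappa^2\mu^2.
\]
The assumption $R\le \alpha\sqrt{r}$ on the mixed norm reduces, after plugging in $\alpha$, to $\|\Xstar\|_{mixed} \le \sqrt{r}\cdot\kappa\mu\,\|\Xstar\|_F/\sqrt{q}$, which is the auxiliary hypothesis displayed in the proposition. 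The high-probability event's probability $1-\exp(-c_2 n R^2/\alpha^2)$ can be lower bounded by $1-\exp(-c_2 n)$ once $R^2/\alpha^2 \ge 1$ is available (and otherwise the hypothesis on $R$ lets us pick a nontrivial $R$).

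\textbf{Step 3: invert to get sample complexity.} Substituting the bound from Step 2 into the oracle inequality, the relative squared error obeys
\[
\frac{\|\Xhat-\Xstar\|_F^2}{\|\Xstar\|_F^2}
\le c_1\,\kappa^2\mu^2\sqrt{\frac{(n+q)r\log^6 n}{m_{tot}\,q}}.
\]
Demanding the right side be at most $\epsilon^2$ and solving for $m_{tot} q$ yields the stated threshold $m_{tot}q \ge C_1\kappa^4\mu^4(n+q) r\log^6 n/\epsilon^4$. The time complexity statement follows from the generic fact, quoted in Sec.\ \ref{relwork}, that convex/SDP relaxations can be solved either with $O(1/\sqrt\epsilon)$ iterations each of cost $O(mqnr)$ (accelerated first-order methods) or with interior-point iterations whose per-step cost scales as $((n+q)r)^3$; the $\min$ in the displayed bound simply records whichever is cheaper.

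\textbf{Main obstacle.} There is no real technical obstacle inside this paper: the whole content is the translation in Step 2 from $(\alpha,\|\Xstar\|_F)$ to $(\kappa,\mu)$, which is a few lines. The only thing to be careful about is that the hypothesis $R\le\alpha\sqrt{r}$ in \cite{lee2019neurips} is consistent with (and implied by) our choice of $\alpha$, so that Assumption \ref{right_incoh} does not force $\Xstar$ outside the feasible class for which the oracle inequality applies; the bound $\|\Xstar\|_{mixed}\le \sqrt{r}\cdot\kappa\mu\|\Xstar\|_F/\sqrt{q}$ explicitly recorded in the proposition is exactly what makes this consistency hold.
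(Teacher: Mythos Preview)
Your proposal is correct and matches the paper's approach: the proposition is a direct restatement of the \cite{lee2019neurips} result, and the paper does not give a separate proof---the translation you spell out in Step~2 (using $\max_k\|\xstar_k\|^2 \le \mu^2(r/q)\sigmax^2$ from Assumption~\ref{right_incoh} together with $\|\Xstar\|_F^2 \ge r\sigmin^2$) is embedded verbatim in the proposition statement itself, and the inversion in Step~3 is immediate. Your write-up is simply a more explicit version of what the paper does inline.
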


Notice that both the sample and the time complexity of the convex solution depend on powers of $1/\sqrt{\eps}$: the sample complexity grows as $1/\epsilon^4$ while the time complexity grows as $1/\sqrt{\eps}$. However, its sample complexity has an order-optimal dependence on $r$.  For AltGD-Min, both sample and time complexities depend only logarithmically on $\eps$ only as $\log(1/\eps)$. But its sample complexity depends sub-optimally on $r$, it grows as $r^2$. In summary, the time complexity of the convex solution is always much worse, its sample complexity is worse when a solution with accuracy level $\eps < 1/{r}^{1/4}$ is needed.
A second point to mention is that our result for AltGD-Min  provides a column-wise error bound (bounds $\|\xstar_k - \xhat_k\|/\xstar_k\|$). The convex result only provides a bound on the Frobenius norm of the entire matrix. Thus it is possible that some columns have much larger recovery error than others. This can be problematic in applications such as dynamic MRI where each column corresponds to one signal/image of a time sequence and where the goal is to ensure accurate-enough recovery of all columns. 
On the other hand, the advantage of the convex guarantee is that it holds w.h.p. for all matrices $\Xstar$ in the specified set, where as our result only holds w.h.p. for {\em a} matrix $\Xstar$ satisfying Assumption \ref{right_incoh}.
The reason for these last two points and the reason that we cannot avoid using sample-splitting is the same: the update of $\B$ is a column-wise LS problem. We explain the reasoning carefully in Sec. \ref{samplesplit} where we discuss the limitations of our approach. 
A second advantage of the convex result is that it directly studies the noisy version of the LRcCS problem. This should be possible for AltGD-Min too, we postpone it to future work.

The best result for AltMin is from \cite{lrpr_best}, it states the following.%
\begin{prop}[AltMin \cite{lrpr_best}]
Under Assumption \ref{right_incoh}, if $$m_{tot} q \ge C \kappa^8 \mu^2 n r^2 (r + \log(1/\epsilon)) \text{ and }  m_{tot} > \max(r,\log q, \log n ),$$ then,  w.p. at least $1 - (\log(1/\eps)) n^{-10}$, $ \|\Xhat - \Xstar\| \le \eps  \|\Xstar\|$ and ${\|\xhat_{k}-\xstar_{k}\|} \le \eps { \|\xstar_k\|}$ for all $k \in [q]$.
The time complexity is $C mqnr \log^2 (1/\epsilon)$.
\label{altmin_best}
\end{prop}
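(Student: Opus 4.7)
My plan is to follow the standard AltMin analysis template, adapted to the column-wise Gaussian measurement model of LRcCS. The algorithm alternates between the same column-decoupled B-step used in Algorithm \ref{gdmin} (namely $\bhat_k \leftarrow (\A_k \U)^{\dagger}\y_k$) and a full joint minimization over $\U$, $\U^+ \leftarrow \arg\min_U \sum_k \|\y_k - \A_k U \bhat_k\|^2$, which is the point of departure from AltGD-Min. I would first establish a good initialization, then show geometric contraction of $\SEF(\U_t, \Ustar)$ across iterations, and finally translate the subspace-distance bound into the stated column-wise and matrix-norm error bounds.

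For initialization, I would reuse the truncated spectral scheme \eqref{newinit}: bound $\|\Xhat_0 - \Xstar\| \le (c/\kappa^3)\sigmin$ using the truncation-based sub-Gaussian concentration sketched for AltGD-Min, and then invoke Wedin's $\sin\Theta$ theorem to conclude $\SEF(\U_0, \Ustar) \le c/\kappa^2$. The sample budget here will be $m_0 q \gtrsim \kappa^8 \mu^2 n r^3$, the extra $r$ over AltGD-Min's init arising because AltMin needs a tighter initial distance to bootstrap the U-step covering argument below. Next, for the B-step, given $\U$ with $\SEF(\U,\Ustar) \le \delta_t \le 0.1/\kappa$, rotation invariance makes $\A_k \U$ behave as a standard $m\times r$ Gaussian matrix, and a standard LS perturbation bound yields $\|\bhat_k - \U^{\top}\Ustar \tb_k\| \lesssim \delta_t \|\tb_k\|$ with probability at least $1 - \exp(r - cm)$; a union bound over $k \in [q]$ forces $m \gtrsim \max(r,\log q,\log n)$.

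The hard part will be the U-step: because $\U$ couples all $q$ measurements, $\U^+$ solves an $nr$-dimensional linear system whose normal equations read $\sum_k (\bhat_k \bhat_k^{\top}) \otimes (\A_k^{\top}\A_k) \cdot \mathrm{vec}(\U^+) = \sum_k (\bhat_k \otimes \A_k^{\top})\y_k$. I would split this into its expectation (which, to leading order, acts as $m \sum_k \tb_k \tb_k^{\top} \otimes \I_n$ and recovers $\Ustar$) plus a residual, and control the residual by combining three ingredients: (i) the uniform B-step bound $\max_k \|\bhat_k - \U^{\top}\Ustar \tb_k\| \lesssim \delta_t \mu\sqrt{r/q}\sigmax$ delivered by Assumption \ref{right_incoh}; (ii) operator-norm concentration of $\sum_k \bhat_k \bhat_k^{\top} \otimes (\A_k^{\top}\A_k)$ on the $nr$-dimensional domain via an $\epsilon$-net on the $r$-dimensional Grassmannian, which is precisely where the extra factor of $r$ beyond $\log(1/\eps)$ enters the sample complexity; and (iii) sub-exponential concentration of the cross-term $\sum_k \bhat_k \otimes (\A_k^{\top}\A_k - m\I)\Ustar \tb_k$, where the incoherence bound $\max_k \|\tb_k\| \le \mu\sqrt{r/q}\sigmax$ keeps the per-summand sub-exponential norm small enough for the desired concentration. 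Together these would yield $\SEF(\U^+, \Ustar) \le 0.7 \delta_t$ with probability at least $1 - n^{-10}$, provided $m_t q \gtrsim \kappa^8 \mu^2 n r^2$.

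To finish, iterating the contraction $T = C \log(1/\eps)$ times gives $\SEF(\U_T, \Ustar) \le (0.7)^T \delta_0 \le \eps$; summing the per-iteration sample budgets with the initialization budget yields $m_{tot} q \gtrsim \kappa^8 \mu^2 n r^2 (r + \log(1/\eps))$, with the $r^3$ term originating in the initialization and the $r^2\log(1/\eps)$ term accumulating across the iterative phase. The per-column error bound $\|\xhat_k - \xstar_k\| \le \eps \|\xstar_k\|$ follows by writing $\xhat_k - \xstar_k = \U_T \bhat_k - \Ustar \tb_k$, projecting onto $\Span(\U_T)$ and its orthogonal complement, and invoking the final B-step bound together with $\SEF(\U_T,\Ustar) \le \eps$; the Frobenius bound follows by summing squares. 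The stated time complexity $mqnr \log^2(1/\eps)$ reflects $T = \log(1/\eps)$ outer iterations, with each U-step solved via conjugate gradient on the $nr$-dimensional system using roughly $\log(1/\eps)$ inner iterations to reach the accuracy required at that stage.
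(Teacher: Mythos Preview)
This proposition is not proved in the paper at all: it is a restatement of the main result of the cited reference \cite{lrpr_best}, included in Section~\ref{detailed_compare} purely for comparison with the authors' AltGD-Min guarantee. There is no ``paper's own proof'' here to compare your proposal against; the paper simply quotes the result.

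That said, a few remarks on your sketch relative to what \cite{lrpr_best} actually does. First, you propose to initialize with the truncated spectral matrix $\Xhat_0$ of \eqref{newinit}, which is the \emph{new} LRcCS initialization developed in the present paper. The AltMin guarantee being quoted predates this and uses the LRPR-style initialization (top-$r$ eigenvectors of $\Y_U = \frac{1}{mq}\sum_\ik (\ym_\ik)^2 \a_\ik\a_\ik^\top \indic\{\cdot\}$). As the paper explains just after Theorem~\ref{gdmin_lrpr_thm}, \emph{that} is the source of the $nr^3$ term in the sample complexity: the LRPR initialization intrinsically needs $mq \gtrsim nr^3$, not because AltMin requires a tighter $\delta_0$. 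Your stated reason for the extra factor of $r$ (``AltMin needs a tighter initial distance to bootstrap the U-step covering argument'') is not the actual mechanism. In fact, if one were to combine AltMin iterations with the $\Xhat_0$ initialization of this paper, the $nr^3$ term would presumably drop to $nr^2$; but that is not the result being quoted. Second, your attribution of the $\log^2(1/\epsilon)$ time complexity to ``conjugate gradient inner iterations'' is not how the AltMin cost arises in \cite{lrpr_best}: the extra $\log(1/\epsilon)$ factor comes from sample-splitting inflating $m$ by $T=C\log(1/\epsilon)$, which then multiplies the per-iteration cost $mqnr$ over $T$ iterations.
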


Treating $\kappa$ as a numerical constant, compared with the above result for AltMin, the sample complexity of AltGD-Min is either better by a factor of $ r$ or is as good. It is better when $r > \log(1/\epsilon)$. Also, the time complexity  is always better by a factor $\log(1/\epsilon)$. As a function of $\kappa$, the AltGD-Min sample complexity is better by a factor of $\kappa^2$, but its time is worse by a factor of $\kappa^2$ compared to that of AltMin. The reason is that its error decays as $(1-c/\kappa^2)^t$. For AltMin the error decays as $c^t$.
Experimentally, GD is usually much faster than AltMin because the constants in its time complexity are also lower.

\section{Proving Theorem \ref{gdmin_thm}} \label{proving_mainres}

\subsection{Two key results for proving Theorem \ref{gdmin_thm} and its proof}\label{gdmin_thm_proof}
Theorem \ref{gdmin_thm} is an almost immediate consequence of the following two results. 

\begin{theorem}[Initialization]
\label{init_thm}
Pick a $\delta_0 < 0.1$.
If $mq \ge C  \kappa^4 \mu^2 (n + q)r^2 / \delta_0^2$ , then w.p. at least $1 - \exp(-c (n+q))$,
\[
\SEF(\Ustar, \U_0) \le \delta_0.
\]
\end{theorem}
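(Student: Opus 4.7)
My strategy is to reduce the bound on $\SEF(\Ustar,\U_0)$ to a spectral perturbation estimate via the Frobenius form of Wedin's sin-$\Theta$ theorem. Since $\U_0$ holds the top-$r$ left singular vectors of $\Xhat_0$, it suffices to produce a reference matrix $\tilde\X$ with $\Span(\tilde\X)=\Span(\Xstar)$ and $\sigma_r(\tilde\X)\gtrsim \sigmin$; Wedin then gives $\SEF(\Ustar,\U_0)\lesssim \sqrt{r}\,\|\Xhat_0-\tilde\X\|/\sigma_r(\tilde\X)$ (the $\sqrt r$ comes from $\SEF\le\sqrt r\,\SE_2$, since the sine-$\Theta$ matrix has at most $r$ nonzero entries). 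The natural reference is $\tilde\X:=\E[\Xhat_0\mid \alpha]$, with $\alpha$ restricted to a high-probability event. Because $\y_\ik = \a_\ik^\top\xstar_k$ and $\a_\ik$ is rotationally symmetric about $\xstar_k/\|\xstar_k\|$, a one-dimensional Gaussian calculation gives $\E[\a_\ik\y_\ik\indic_{|\y_\ik|\le\sqrt\alpha}\mid\alpha] = \beta_k(\alpha)\,\xstar_k$ for a scalar $\beta_k(\alpha)\in(0,1]$. Hence $\tilde\X = \Xstar D_\alpha$ with $D_\alpha$ diagonal, which gives $\Span(\tilde\X) = \Span(\Xstar)$ automatically.

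The first sub-task is control of the random threshold $\alpha = \tC(mq)^{-1}\sum_\ik \y_\ik^2$, whose expectation is $\bar\alpha := \tC\|\Xstar\|_F^2/q$. A scalar sub-exponential Bernstein bound yields $\alpha\in[\bar\alpha/2,\,2\bar\alpha]$ w.p.\ at least $1-\exp(-c mq)$. With $\tC = 9\kappa^2\mu^2$, Assumption \ref{right_incoh} forces $\bar\alpha \ge 9\|\xstar_k\|^2$ for every $k$, so the truncation threshold dominates the standard deviation of every Gaussian $\y_\ik$. A one-dimensional Gaussian tail computation then shows $\beta_k(\alpha)\ge 1-\gamma$ with some $\gamma<0.1$ on this event, so $\sigma_r(\tilde\X)\ge (1-\gamma)\sigmin$.

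The main obstacle, and where the paper signals that a new argument is needed (Lemma \ref{lem:init_denom_term}), is bounding $\|\Xhat_0-\tilde\X\|$ despite the data-dependent threshold $\alpha$ (which couples the summands) and the rectangular, non-symmetric form of $\Xhat_0$ (which prevents use of the PSD-sandwich device of \cite{twf,lrpr_it}). My plan is to introduce the auxiliary matrix $\Xhat_0^{\bar\alpha}$ defined with the \emph{deterministic} threshold $\bar\alpha$, and split $\|\Xhat_0-\tilde\X\|\le\|\Xhat_0^{\bar\alpha}-\E\Xhat_0^{\bar\alpha}\|+\|\Xhat_0-\Xhat_0^{\bar\alpha}\|+\|\E\Xhat_0^{\bar\alpha}-\tilde\X\|$. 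The first piece has independent rank-one summands with $|\tilde\y_\ik|\le\sqrt{\bar\alpha}$ and $\a_\ik\sim\n(0,\I_n)$, so a rectangular sub-Gaussian matrix Bernstein bound gives $\|\Xhat_0^{\bar\alpha}-\E\Xhat_0^{\bar\alpha}\|\lesssim \sqrt{\bar\alpha(n+q)\log n/m}$, with the $(n+q)$ dependence arising from separately tracking the row-variance and column-variance parameters. The second piece only collects indices $(i,k)$ at which the truncation status disagrees between thresholds $\bar\alpha$ and $\alpha$; on the good event these indices sit in a thin Gaussian shell around $\bar\alpha$ of small probability, and the term is controlled by a counting / Gaussian concentration argument --- this is the content of Lemma \ref{lem:init_denom_term} and is the step whose asymmetric version the paper flags as new. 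The third piece is a deterministic one-dimensional comparison between $\beta_k(\bar\alpha)$ and $\beta_k(\alpha)$, both close to $1$. Plugging these pieces together with $\sigma_r(\tilde\X)\gtrsim\sigmin$ and $\bar\alpha \asymp \kappa^2\mu^2\sigmax^2 r/q$ into Wedin yields $\SEF(\Ustar,\U_0) \lesssim \sqrt{\kappa^4\mu^2(n+q)r^2/(mq)}$, which is at most $\delta_0$ under $mq \ge C\kappa^4\mu^2(n+q)r^2/\delta_0^2$.
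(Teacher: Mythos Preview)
Your overall strategy is sound and lands at the stated sample complexity, but it diverges from the paper's proof in two ways.

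First, the paper does not pass through $\SEF\le\sqrt r\,\SE_2$; it applies the Frobenius-norm Wedin bound (Theorem~\ref{Wedin_sintheta}) directly, whose numerator is $\max\big(\|(\Xhat_0-\tilde\X)^\top\Ustar\|_F,\,\|(\Xhat_0-\tilde\X)\Bcheck^\top\|_F\big)$ rather than $\sqrt r\,\|\Xhat_0-\tilde\X\|$. This obliges the paper to prove three separate concentration bounds (Lemma~\ref{init_terms_bnd}: the operator norm for the denominator plus two projected Frobenius norms for the numerator; six lemmas in the no-splitting version of Appendix~\ref{init_reuse_proof}), whereas you need only the single operator-norm bound. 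Both routes arrive at the same $r^2$: you collect one $\sqrt r$ explicitly and another from $\bar\alpha\propto\tC\, r\,\sigmax^2/q$; the paper collects one $\sqrt r$ from $\|\Xstar\|_F\le\sqrt r\,\sigmax$ in the numerator and a second from the $nr$- and $qr$-dimensional epsilon nets that the projected Frobenius deviations require. So your shortcut costs nothing here and is genuinely more economical. (The paper's scalar sub-Gaussian Hoeffding plus epsilon-net also avoids your stray $\log n$.)

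Second, your treatment of the random threshold conflates the paper's two proofs. The identity $\E[\a_\ik\y_\ik\indic_{|\y_\ik|\le\sqrt\alpha}\mid\alpha]=\beta_k(\alpha)\,\xstar_k$ holds only when $\alpha$ is independent of the $\a_\ik$'s --- this is exactly the sample-splitting assumption of Sec.~\ref{init_proof}, under which the summands of $\Xhat_0$ are conditionally independent given $\alpha$ and one bounds $\|\Xhat_0-\E[\Xhat_0\mid\alpha]\|$ directly on the event $\{\alpha\approx\bar\alpha\}$, with no need for your auxiliary $\Xhat_0^{\bar\alpha}$. If instead you want the argument without sample-splitting (Appendix~\ref{init_reuse_proof}), the Wedin reference should be the \emph{deterministic} $\E[\Xhat_0^{\bar\alpha}]=\Xstar D_{\bar\alpha}$, not the conditional expectation; your third term then vanishes and the split reduces to the paper's two-term split $\|\Xhat_0-\Xhat_0^{\bar\alpha}\|+\|\Xhat_0^{\bar\alpha}-\E\Xhat_0^{\bar\alpha}\|$. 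Either fix is easy and the rest of your computations go through unchanged.
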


\begin{proof} See Sec. \ref{init_proof} (simpler proof with sample-splitting for $\alpha$) or Appendix \ref{init_reuse_proof} (proof without sample-splitting). Proof outline is given in Sec. \ref{outline_init}. \end{proof}

\begin{theorem}[GD Descent]
\label{iters_thm}
If, at each iteration $t$, $m q \ge C \kappa^4 \mu^2 (n+q)r^2  \log \kappa$ and $m > C \max(\log q, \log n)$;
if $\SEF(\Ustar, \U_0) \le \delta_0 = c/\kappa^2$ for a $c \le 0.1/1.1$;
and if $\eta \le 0.5 /  \sigmax^2$,
then w.p. at least $1 - (t+1) n^{-10}$,
$$
\SEF(\Ustar, \U_{t+1}) \le \delta_{t+1}:= \left(1 - (\eta \sigmax^2) \tfrac{0.4}{\kappa^2}\right)^{t+1} \delta_0. 
$$
If $\eta = 0.5 \sigmax^2$, this simplifies to $\SEF(\Ustar, \U_{t+1}) \le (1 -0.2/\kappa^2)^{t+1} \delta_0$.

Also, with the above probability,
\[
\|(1/m)\nabla_U f(\U_t,\B_{t+1})\| \le 1.6 \delta_t \sigmax^2.
\]
with $\delta_t$ defined in the $\SEF(\Ustar, \U_{t+1})$ bound above.
\end{theorem}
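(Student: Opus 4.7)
The plan is to establish the recursion by induction on $t$, in three stages: (a) analyze the column-wise LS step to show that $\bhat_k$ is close to $\U^\top \xstar_k$; (b) decompose $(1/m)\nabla_U f$ into a dominant ``signal'' piece plus a concentration error; and (c) use the QR step together with an operator-norm bound on $\I - \eta\B\B^\top$ to obtain the desired contraction in $\SEF$. Sample splitting makes $\A_k^{(t)}, \A_k^{(T+t)}$ independent of $\U = \U_{t-1}$, so each step can be analyzed conditionally on $\U$ assuming the inductive hypothesis $\SEF(\Ustar,\U) \le \delta_t$.

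\textbf{Step 1 (LS for $\B$).} Since $\U$ is orthonormal and independent of $\A_k$, the matrices $\A_k \U$ and $\A_k(\I - \U\U^\top)$ are independent Gaussian with i.i.d.\ entries. Writing $\bhat_k = \U^\top \xstar_k + (\A_k \U)^\dag \A_k(\I - \U\U^\top)\xstar_k$, standard extreme-singular-value bounds for $m \times r$ Gaussian matrices (requiring $m \gtrsim r$) together with a union bound over $k \in [q]$ (requiring $m \gtrsim \log q$) give, with probability $1-n^{-10}$,
\begin{align*}
\max_k \|\bhat_k - \U^\top \xstar_k\| \;\lesssim\; \|(\I - \U\U^\top)\Ustar\| \cdot \max_k \|\tb_k\| \;\le\; \delta_t \cdot \mu\sqrt{r/q}\,\sigmax.
\end{align*}
This column-wise decoupled bound -- unavailable for projGD-X -- is the source of the ``nice-enough'' sub-exponential summands used later.

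\textbf{Step 2 (Gradient decomposition).} Using $\y_k = \A_k \xstar_k$,
\begin{align*}
\tfrac{1}{m}\nabla_U f(\U,\B) \;=\; (\U\B - \Xstar)\B^\top + \bm{E}, \qquad \bm{E} \;:=\; \tfrac{1}{m}\sum_k (\A_k^\top \A_k - m\I)\,(\U\bhat_k - \xstar_k)\,\bhat_k^\top.
\end{align*}
The plan is to bound $\|\bm{E}\|$ by an $\eps$-net argument on $\mathcal{S}^{n-1}\times\mathcal{S}^{r-1}$ combined with sub-exponential Bernstein for the sum over $k$, exploiting the fact that Step 1 plus incoherence forces each summand to have sub-exponential norm $O((r/q)\,\sigmax^2 \delta_t)$ (because $\|\U\bhat_k - \xstar_k\| \lesssim \delta_t \|\tb_k\|$ and $\|\bhat_k\| \le \|\tb_k\|(1+o(1))$). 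Under $mq \gtrsim (n+q)r^2 \log\kappa$ this gives $\|\bm{E}\| \lesssim \sigmax^2 \delta_t/\kappa^2$. The second conclusion of the theorem, $\|(1/m)\nabla_U f\| \le 1.6\,\delta_t \sigmax^2$, then follows immediately from $\|(\U\B - \Xstar)\B^\top\| \le \|\U\B - \Xstar\|\cdot \|\B\| \le (\delta_t \sigmax)(1+o(1))\cdot \sigmax(1+o(1))$ plus the bound on $\|\bm{E}\|$.

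\textbf{Step 3 (Subspace contraction via QR).} Let $\P := \I - \Ustar \Ustar^\top$, so $\P \Xstar = 0$ and, by symmetry of the Frobenius SD, $\|\P\U\|_F = \SEF(\U,\Ustar) = \delta_t$. Substituting the decomposition from Step 2,
\begin{align*}
\P\Uhat^+ \;=\; \P\U - \eta\,\P(\U\B - \Xstar)\B^\top - \eta\,\P\bm{E} \;=\; \P\U\,(\I - \eta\B\B^\top) - \eta\,\P\bm{E}.
\end{align*}
From Step 1, $\B\B^\top = \bSigma^{*2} + O(\delta_t \sigmax^2)$, so for $\eta \le 0.5/\sigmax^2$ and $\delta_t \le c/\kappa^2$ the eigenvalues of $\I - \eta\B\B^\top$ lie in $[0.4,\, 1-0.9\,\eta\sigmin^2]$, giving $\|\P\U(\I - \eta\B\B^\top)\|_F \le (1-0.9\,\eta\sigmin^2)\,\delta_t$. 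Absorbing the $\eta\|\P\bm{E}\|_F$ term into this contraction yields $\|\P\Uhat^+\|_F \le (1-0.5\,\eta\sigmin^2)\,\delta_t$. Finally, $\U^+ = \Uhat^+(\R^+)^{-1}$ with $\sigma_{\min}(\Uhat^+) \ge 1 - O(\eta \delta_t \sigmax^2) \ge 1 - 0.1/\kappa^2$ (since the GD correction is small), so
\begin{align*}
\SEF(\U^+,\Ustar) \;=\; \|\P\U^+\|_F \;\le\; \frac{\|\P\Uhat^+\|_F}{\sigma_{\min}(\Uhat^+)} \;\le\; \Bigl(1 - \tfrac{0.4\,\eta\sigmax^2}{\kappa^2}\Bigr)\delta_t = \delta_{t+1}.
\end{align*}
Union-bounding over the $T$ iterations completes the induction.

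\textbf{Expected main obstacle.} The hardest step is proving $\|\bm{E}\| \lesssim \sigmax^2 \delta_t/\kappa^2$ under the target sample complexity $mq \gtrsim (n+q)r^2 \log\kappa$. This hinges crucially on the uniform bound $\max_k\|\bhat_k - \U^\top \xstar_k\| \lesssim \delta_t \mu\sqrt{r/q}\,\sigmax$ from Step 1, which in turn relies on the column-wise decoupling of the $\B$-update; combined with incoherence, it is what forces the per-column sub-exponential norm to scale like $r/q$ rather than like a constant. Without this decoupling (as for projGD-X or altGDnormbal), the summands in the analogue of $\bm{E}$ are too heavy-tailed to be controlled at the stated sample complexity -- which is exactly why the paper argues AltGD-Min is analyzable while the alternative GD schemes are not.
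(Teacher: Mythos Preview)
Your proposal is correct and takes a genuinely different route from the paper's own proof. The paper (Lemmas~3.4--3.5) invokes the fundamental theorem of calculus to write $\nabla_U f(\U,\B) - \nabla_U f(\Ustar\Ustar^\top\U,\B) = \mathrm{Hess}\cdot(\P\U)_{vec}$, then separately bounds the $nr\times nr$ Hessian's eigenvalues (via sub-exponential Bernstein) and the residual $\mathrm{Term2}:=\P\,\nabla_U f(\Ustar\Ustar^\top\U,\B)$ (which is zero-mean because $\P\Ustar=0$). You instead split the gradient directly into its \emph{conditional mean} $(\U\B-\Xstar)\B^\top$ plus a zero-mean error $\bm E$, then use $\P\Xstar=0$ so that the contraction is driven by the deterministic $r\times r$ matrix $\I-\eta\B\B^\top$ rather than by the random Hessian. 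Both arguments rely identically on the column-wise LS analysis (your Step~1 $\equiv$ Lemma~3.3), on the $(r/q)$ scaling of the per-summand sub-exponential norms coming from incoherence, and on the QR normalization via $\sigma_{\min}(\Uhat^+)$. Your route is more elementary---it exploits the quadratic structure to bypass the FTC machinery and the Hessian eigenvalue concentration (the latter is where the paper picks up its $\log\kappa$ factor, via Proposition~4.7, so your argument may even shave that off); the paper's route is more general in that it would extend to non-quadratic losses where the gradient's mean is not explicit.

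Two small points to tighten. First, in Step~2 you bound the \emph{operator} norm $\|\bm E\|$ with a net on $\mathcal S^{n-1}\times\mathcal S^{r-1}$, but Step~3 uses the \emph{Frobenius} norm $\|\P\bm E\|_F$; the cleanest fix is to run the same Bernstein argument directly over $\W\in\mathcal S_{nr}$ to bound $\|\bm E\|_F$ (this is exactly what the paper does for $\mathrm{Term2}$, and the sample complexity is unchanged). Second, in Step~3 do not collapse $\sigma_{\min}(\Uhat^+)\ge 1-O(\eta\delta_t\sigmax^2)$ to $1-0.1/\kappa^2$; you need to retain the $\eta$-dependence (i.e.\ write $\ge 1-c\eta\sigmax^2/\kappa^2$ using $\delta_t\le c_0/\kappa^2$) so that the expansion $1/\sigma_{\min}(\Uhat^+)\le 1+O(\eta\sigmax^2/\kappa^2)$ scales with $\eta$ and can be absorbed into the contraction uniformly over all $\eta\le 0.5/\sigmax^2$. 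The paper does exactly this in its final chain of inequalities.
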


Since $\delta_t$ decays exponentially with $t$, the same is also true for the gradient norm at iteration $t$, $\|(1/m)\nabla_U f(\U_t,\B_{t+1})\|$.


\begin{proof} See Sec. \ref{iters_proof}. Proof outline is given in Sec. \ref{outline_iters}.
\end{proof}



\begin{proof}[Proof of Theorem \ref{gdmin_thm}]
The $\SE(.)$ bound is an immediate consequence of Theorems \ref{init_thm} and \ref{iters_thm}. To apply Theorem \ref{iters_thm}, we need $\delta_0 = c / \kappa^2$. By Theorem \ref{init_thm}, if $mq \ge C \kappa^6 \mu^2 (n+q)r^2$, then, w.p. at least $1-n^{-10}$, $\SEF(\Ustar, \U_0) \le \delta_0 = c / \kappa^2$.
With this,  if, at each iteration, $mq \ge C \kappa^4 \mu^2 (n+q)r^2  \log \kappa$ and $m \ge C \max(\log q, \log n)$, then by Theorem \ref{iters_thm}, w.p. at least $1-(t+1) n^{-10}$, the stated bound on  $\SEF(\Ustar, \U_{t+1})$ holds.
By setting $T =  C \kappa^2 \log(1/\epsilon)$ in this, we can guarantee $\left(1 - \tfrac{c_1}{\kappa^2}\right)^T \le \epsilon$. This proves the $\SE(\U_T, \Ustar)$ bound. 
The bounds on $\|\xhat_{k}-\xstar_{k}\|$ and $\|\Xhat - \Xstar\|_F$ follow by Lemma \ref{B_lemma} given in Sec. \ref{iters_proof}.%
\end{proof}

\subsection{Proof outline (and novelty) for Theorem \ref{iters_thm}}\label{outline_iters}
For proving exponential error decay, we need to show this: at iteration $t$, if $\SE(\U,\Ustar) \le \delta_t$ with $\delta_t < \delta_0 = c/\kappa^2$. Then, $\SE(\U^+,\Ustar) \le c \delta_t$ for a $c<1$. We explain how to do this next.
Suppose that, at iteration $t$, $\SE(\U,\Ustar) \le \delta_t  < \delta_0 = 0.1/\kappa^2$.

\bfpara{Analyzing the minimization step for updating $\B$ (Lemma \ref{B_lemma})}
Recall from Algorithm \ref{gdmin} that $\b_k = (\A_k \U)^\dag \y_k$, $\x_k = \U \b_k$, and $\xstar_k = \Ustar \tb_k$. Using standard results from \cite{versh_book}, we can show that the estimates $\bhat_k$ satisfy $\|\bhat_k - \U^\top \xstar_k\| \le 0.4 \|(\I - \U \U^\top) \Ustar \tb_k\|$. This then implies that (i) $\bhat_k$'s are incoherent, i.e., $\|\bhat_k\|  \le 1.1\mu \sigmax \sqrt{r/q}$; and (ii) $\|\xhat_k - \xstar_k\| \le  1.4 \|(\I - \U \U^\top) \Ustar \tb_k\| \le 1.4\delta_t \max_k \|\xstar_k\|$, i.e., we can get the desired column-wise error bound.
 Also (iii) $\|\Xhat - \Xstar\|_F \le 1.4 \delta_t \sigmax$ (notice this bound does not contain $r$). We get this as follows:
\begin{align*}
\|\Xhat - \Xstar\|_F 
& = \sqrt{\sum_k \|\xhat_k - \xstar_k\|^2} \\
& \le \sqrt{1.4^2\sum_k \|(\I - \U \U^\top) \Ustar \tb_k\|^2} \\
& = 1.4 \|(\I - \U \U^\top) \Ustar \tB \|_F \\
& \le  1.4\|(\I - \U \U^\top) \Ustar \|_F \sigmax 
\end{align*}
 Similarly, $\|\Bhat - \U^\top \Xstar\|_F \le 0.4 \delta_t \sigmax.$
(iv) Using Weyl's inequality and $\delta_t <0.1/\kappa^2$, this then implies that $\sigma_{\max}(\B) \le 1.1 \sigmax$ and $\sigma_{\min}(\B) \ge 0.9 \sigmin$.

\bfpara{Bounding $\SE(\U^+,\Ustar)$ by a novel use of fundamental theorem of calculus (Lemma \ref{algebra})}
%
Recall from Algorithm \ref{gdmin} that $\Uhat^+ = \Uhat - (\eta/m) \nabla_U f(\U,\B)$ and $\Uhat^+ \qreq \U^+ \R^+$.
We bound $\SE(\U^+, \Ustar)$  using the fundamental theorem of calculus \cite[Chapter XIII, Theorem 4.2]{lan93},\cite{pr_mc_reuse_meas}, summarized in Theorem \ref{funda_calc}. The use of this result is motivated by its use in \cite{pr_mc_reuse_meas}, and many earlier works, where it is used in a standard way: to bound the Euclidean norm error $\|\x - \xstar\|$ for standard GD to solve the PR problem for recovering a single vector $\xstar$. Thus, at the true solution $\x=\xstar$, the gradient of the cost function was zero. In our case, there are two differences: (i) we need to bound the subspace distance error, and (ii) our algorithm is not standard GD; in particular, this means that $\nabla_U f(\Ustar \Ustar^\top \U, \B) \neq 0$.

To deal with (i) and (ii), we proceed as follows.
 We first bound $\|(\I - \Ustar \Ustar^\top) \Uhat^+\|_F$. To do this, we apply Theorem \ref{funda_calc} on vectorized $\nabla_U f(\U,\B)$ with the pivot being vectorized $\nabla_U f(\Ustar \Ustar^\top \U, \B)$, and use this in the equation for $\Uhat^+$. Next, we project both sides of this expression orthogonal to $\Ustar$ followed by some careful linear algebra. Notice here that  $\nabla_U f(\Ustar \Ustar^\top \U, \B) \neq 0$, because $\B \neq \tB$. Because of this, we get an extra term, $\mathrm{Term2}:=(\I - \Ustar \Ustar^\top) \nabla_U f(\Ustar \Ustar^\top \U, \B)$, in our bound other than the usual term containing the Hessian. We are able to bound it by $\eps \delta_t \sigmax^2$ for any constant small enough $\eps$, by realizing that $\E[\mathrm{Term2}]=0$ (conditioned on past measurements), and that its summands are {\em nice-enough} subexponentials. 
%
Next, we bound $\SE(\Ustar, \U^+)$ by using
\begin{align*}
\SE(\Ustar, \U^+) 
& \le \|(\I - \Ustar \Ustar^\top) \Uhat^+\|_F \|(\R^+)^{-1}\| \\
& = \frac{\|(\I - \Ustar \Ustar^\top) \Uhat^+\|_F}{ \sigma_{\min}(\Uhat^+) } 
\end{align*}
and
$\sigma_{\min}(\Uhat^+)= \sigma_{\min}(\U - (\eta/m) \nabla_\U f(\U,\B)) \ge 1 - (\eta/m) \|\nabla_\U f(\U,\B)\|$.


\bfpara{Bounding the terms in the $\SE(\Ustar,\U^+)$ bound (Lemma \ref{terms_bnds})}
Consider $\|\nabla_\U f(\U,\B)\|$.
Using Lemma \ref{B_lemma}, it can be shown that, for unit vectors $\w,\z$, the maximum sub-exponential norm of any summand of $\w^\top \nabla_\U f(\U,\B) \z$
is bounded by  $\|\xhat_k - \xstar_k\| \cdot \|\b_k\| \le 1.1 \mu^2 \sigmax^2 \delta_t  (r/q)$. Observe that we get this (sufficiently small) bound because of the extra $\b_k^\top$ term in the summands of $\nabla_\U f(\U,\B)$ compared to those in $\nabla_\X \tilde{f}(\X)$. 
This, along with using the sub-exponential Bernstein inequality \cite{versh_book} followed by a standard epsilon-net argument, and  bounding  $\|\E[\nabla_U f]\|$ using $\|\E[\nabla_U f]\|= \|m(\Xhat - \Xstar) \B^\top\| \le m \delta_t \sigmax^2$ (by Lemma \ref{B_lemma}), helps guarantee that $\|\nabla_U f\| \lesssim 2 m \delta_t \sigmax^2$ w.h.p. as long as $mq \gtrsim (n+q)r^2$.
We bound $\|\mathrm{Term2}\|_F$ using similar ideas and the key fact that $\E[\mathrm{Term2}]=0$. This is true because of sample-splitting. We upper and lower bound the eigenvalues of the Hessian, $\mathrm{Hess}$, using similar ideas and the following: for a unit vector $\w$ of length $nr$ and its rearranged unit Frobenius norm matrix $\W$ of size $n \times r$, $\E[\w^\top \mathrm{Hess} \ \w ] = \E[ \sum_\ik ( \a_\ik{}^\top \W \b_k)^2] = m \|\W \B\|_F^2$. Using the bounds on $\sigma_i(\B)$ from Lemma \ref{B_lemma}, this can be upper and lower bounded. 

\subsection{Lemmas for proving GD descent Theorem \ref{iters_thm} and its proof} \label{iters_proof}

Let $\U \equiv \U_{t}$, $\Bhat \equiv \Bhat_{t+1}$. The proof follows using the following 3 lemmas.


\begin{lemma}[Error bound on $\B$ and its implications]
Let $U \equiv \U_t$, $\B \equiv \B_{t+1}$, and
$$\g_k: = \U^\top \xstar_k.$$
Assume that $\SEF(\Ustar, \U_{t}) \le \delta_t$ with $\delta_t < \delta_0 = c/\kappa^2$ (this bound on $\delta_t$ is needed for the second part of this lemma).
Then, w.p. $\ge 1 - q \exp(r - c m)$,
\ben
\item
\begin{align}
\|\g_k - \bhat_k \|   & \leq 0.4 \|\left(\I_n-\U\U^\top \right)\Ustar\tb_k\|
\label{bhatk_bnd}
\end{align}

\item This in turn implies all of the following.
\ben
\item $ \|\xhat_{k}-\xstar_{k}\| \leq 1.4 \|\left(\I-\U\U^\top \right)\Ustar\tb_k\|$

\item $\|\G-\Bhat\|_F   \leq   0.4 \delta_t \sigmax$ and  $\|\Xstar-\Xhat\|_F	\le \sqrt{1.16} \delta_t \sigmax$,

\item  $\|\g_k - \bhat_k \| \leq 0.4 \delta_t \|\tb_k\|$ and $ \|\xhat_{k}-\xstar_{k}\| \le  1.4 \delta_t \|\xstar_k\|$,

\item  $\| \Ustar{}^\top\U \bhat_{k} - \tb_k \|  \le  2.4 \delta_t \|\tb_k\| $, 

\item $\|\bhat_k\| \le   1.1 \mu \sigmax \sqrt{r/q}$.

\item  $\sigma_{\min}(\Bhat) \ge 0.9 \sigmin$ and  $\sigma_{\max}(\Bhat) \le 1.1 \sigmax$,

\een

\een
\label{B_lemma}
\end{lemma}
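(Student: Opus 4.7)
\textbf{Proof plan for Lemma \ref{B_lemma}.}

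The starting point is the closed-form expression $\bhat_k = (\A_k \U)^\dagger \A_k \xstar_k$ from the LS step. Writing $\xstar_k = \U\U^\top \xstar_k + (\I - \U\U^\top)\xstar_k$ and using $(\A_k\U)^\dagger (\A_k\U) = \I_r$, one obtains the key identity
\[
\bhat_k - \g_k \;=\; (\A_k \U)^\dagger \A_k \vv_k, \qquad \vv_k := (\I - \U\U^\top)\Ustar \tb_k,
\]
where the displayed vector $\vv_k$ lies in the orthogonal complement of $\Span(\U)$ by construction. Since $\A_k$ has i.i.d.\ standard Gaussian entries and $\U^\top \vv_k = 0$, the matrix $\A_k\U$ (whose entries are themselves i.i.d.\ standard Gaussian, because $\U$ has orthonormal columns) is independent of the vector $\A_k\vv_k \sim \n(0, \|\vv_k\|^2 \I_m)$. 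This decoupling is the main tool.

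To bound $\|\bhat_k - \g_k\|$ I would apply the elementary estimate $\|(\A_k\U)^\dagger \A_k \vv_k\| \le \|(\A_k\U)^\top \A_k \vv_k\| / \sigma_{\min}^2(\A_k\U)$. The denominator is controlled by the standard Gaussian singular-value bound $\sigma_{\min}(\A_k\U) \ge (1-c_1)\sqrt{m}$ w.p.\ $\ge 1 - 2\exp(-cm)$, valid when $m \gtrsim r$. The numerator is an $r$-dimensional vector whose coordinates are sums of $m$ i.i.d.\ mean-zero products of independent Gaussians (because $\U^\top \vv_k = 0$ kills the mean); a $\chi^2$-type concentration plus a simple net argument over the $r$ coordinates gives $\|(\A_k\U)^\top \A_k \vv_k\| \le c_2 \sqrt{mr}\,\|\vv_k\|$ w.p.\ $\ge 1 - \exp(r - cm)$. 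Combining yields $\|\bhat_k - \g_k\| \le c_3 \sqrt{r/m}\,\|\vv_k\| \le 0.4 \|\vv_k\|$ for $m \ge Cr$, and a union bound over $k \in [q]$ produces the stated failure probability.

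The six implications are then elementary linear algebra, carried out in order. For (1), write $\xhat_k - \xstar_k = \U(\bhat_k - \g_k) - (\I - \U\U^\top)\xstar_k$ and apply the triangle inequality, noting $\|(\I - \U\U^\top)\xstar_k\| = \|\vv_k\|$. For (2), square and sum over $k$: $\|\G - \Bhat\|_F^2 \le 0.16 \sum_k \|\vv_k\|^2 = 0.16\,\|(\I - \U\U^\top)\Ustar\tB\|_F^2$, then use $\|(\I - \U\U^\top)\Ustar\tB\|_F \le \|(\I - \U\U^\top)\Ustar\|_F\,\|\tB\| \le \delta_t \sigmax$; the $\|\Xstar - \Xhat\|_F$ bound follows identically. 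For (3), combine the bound from part (1) with $\|\vv_k\| \le \|(\I - \U\U^\top)\Ustar\|\,\|\tb_k\| \le \delta_t \|\tb_k\|$ and $\|\tb_k\| = \|\xstar_k\|$.

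For (4), expand
\[
\Ustar{}^\top \U \bhat_k - \tb_k \;=\; \Ustar{}^\top \U(\bhat_k - \g_k) + (\Ustar{}^\top \U\U^\top \Ustar - \I)\,\tb_k,
\]
and use $\|\Ustar{}^\top \U\U^\top \Ustar - \I\| \le \delta_t^2$, which follows by expressing the eigenvalues as $\sin^2$ of principal angles. For (5), $\|\bhat_k\| \le \|\g_k\| + \|\bhat_k - \g_k\| \le (1 + 0.4\delta_t)\|\tb_k\| \le 1.1\mu \sigmax \sqrt{r/q}$ using Assumption \ref{right_incoh}. For (6), Weyl's inequality gives $\sigma_{\min}(\Bhat) \ge \sigma_{\min}(\G) - \|\G - \Bhat\|$; since $\G = (\U^\top \Ustar)\tB$ and $\sigma_{\min}(\U^\top \Ustar) \ge \sqrt{1 - \delta_t^2}$, we get $\sigma_{\min}(\Bhat) \ge \sqrt{1-\delta_t^2}\,\sigmin - 0.4\delta_t \sigmax = \sigmin(\sqrt{1-\delta_t^2} - 0.4 \delta_t \kappa) \ge 0.9\sigmin$ because the hypothesis $\delta_t \le c/\kappa^2$ makes $\delta_t \kappa$ arbitrarily small; $\sigma_{\max}(\Bhat) \le 1.1\sigmax$ is symmetric. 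The only genuinely probabilistic step is part (1); the rest is deterministic once that bound holds, and the main obstacle there is verifying that the tight constant $0.4$ can be pushed through the Gaussian matrix concentration with only $m \gtrsim r$, which is exactly where the independence between $\A_k\U$ and $\A_k\vv_k$ does the work.
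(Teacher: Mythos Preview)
Your approach is essentially the paper's: same decomposition $\bhat_k-\g_k=(\A_k\U)^\dagger\A_k\vv_k$, same two-factor bound via $\sigma_{\min}(\A_k\U)$ and $\|(\A_k\U)^\top\A_k\vv_k\|$, and the deterministic implications (2a)--(2f) are handled just as in the paper. Your explicit use of the independence of $\A_k\U$ and $\A_k\vv_k$ (from $\U^\top\vv_k=0$) is a clean observation the paper does not spell out, and your route to (2d) via $\|\Ustar^\top\U\U^\top\Ustar-\I\|\le\delta_t^2$ is actually tighter than the paper's triangle-inequality detour through $\|\xhat_k-\xstar_k\|$.

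There is one quantitative slip in part (1). The claim $\|(\A_k\U)^\top\A_k\vv_k\|\le c_2\sqrt{mr}\,\|\vv_k\|$ with failure probability $\exp(r-cm)$ does not hold as stated. Conditioning on $\A_k\vv_k$ gives $\|(\A_k\U)^\top\A_k\vv_k\|^2/\|\A_k\vv_k\|^2\sim\chi^2_r$, and a $\chi^2_r\le Cr$ bound only has failure probability $\exp(-cr)$; after the union bound over $k\in[q]$ this becomes $q\exp(-cr)$, which need not match $q\exp(r-cm)$ (and can be large when $r$ is small relative to $\log q$). To obtain failure probability $\exp(r-cm)$ you must push the $\chi^2_r$ tail out to order $m$, which gives only $\|(\A_k\U)^\top\A_k\vv_k\|\le \epsilon_0 m\,\|\vv_k\|$ for a small constant $\epsilon_0$. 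Equivalently, this is exactly what the paper does: sub-exponential Bernstein with $t=\epsilon_0 m\|\vv_k\|$ for fixed $\w\in\S_r$, then an $\epsilon$-net over $\S_r$, yielding the bound w.p.\ $1-\exp(Cr-c\epsilon_0^2 m)$. The final ratio is then $\epsilon_0 m/(0.7m)$, a fixed constant that can be made $\le 0.4$, without the $\sqrt{r/m}$ factor. This does not change your conclusion, only the intermediate arithmetic.
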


\begin{proof} See Sec. \ref{B_lemma_proof}. \end{proof}

\begin{lemma}\label{algebra}  
Let $\U \equiv \U_{t}$, $\Bhat \equiv \Bhat_{t+1}$.  Let $\kron$ denote the Kronecker product. We have
\begin{align*}
&\SEF(\U_{t+1},\Ustar) \\
&\qquad \le   \dfrac{ \|\I_{nr} - (\eta/m) \mathrm{Hess}\| \cdot \SEF(\Ustar,\U) + (\eta/m) \|\mathrm{Term2}\|_F }{ 1  - (\eta/m)  \|\mathrm{GradU}\|},  
\end{align*}
where,
\begin{align*}
\mathrm{GradU} & :=  \nabla_\U f(\U,\Bhat)  = \sum_\ik (\y_\ik - \a_\ik{}^\top \U \bhat_k) \a_\ik \bhat_k{}^\top
\\
\mathrm{Term2} & :=  (\I - \Ustar \Ustar^\top) \nabla_\U f((\Ustar\Ustar^\top \U),\Bhat)  \\
& = (\I - \Ustar \Ustar^\top) \sum_\ik (\y_\ik - \a_\ik{}^\top \Ustar\Ustar^\top\U \bhat_k) \a_\ik \bhat_k{}^\top
\\
\mathrm{Hess} & :=  \sum_\ik (\a_\ik \kron \bhat_k) (\a_\ik \kron \bhat_k)^\top  
\end{align*}
\end{lemma}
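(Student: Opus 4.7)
\medskip

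The plan is to split the bound into a denominator coming from the QR factorization and a numerator coming from the gradient expansion around a carefully chosen pivot point. First I would write $\U_{t+1} = \U^+ = \Uhat^+ (\R^+)^{-1}$ and note that since $\U^+$ has orthonormal columns, $\sigma_{\min}(\R^+) = \sigma_{\min}(\Uhat^+)$ and hence $\|(\R^+)^{-1}\| = 1/\sigma_{\min}(\Uhat^+)$. Therefore
\[
\SEF(\U^+,\Ustar) = \|(\I - \Ustar\Ustar^\top)\Uhat^+ (\R^+)^{-1}\|_F \le \frac{\|(\I-\Ustar\Ustar^\top)\Uhat^+\|_F}{\sigma_{\min}(\Uhat^+)}.
\]
For the denominator, since $\Uhat^+ = \U - (\eta/m)\,\mathrm{GradU}$ with $\U$ having orthonormal columns, Weyl's inequality gives $\sigma_{\min}(\Uhat^+) \ge 1 - (\eta/m)\|\mathrm{GradU}\|$. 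This is exactly the denominator in the claim.

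For the numerator, I would first write
\[
(\I-\Ustar\Ustar^\top)\Uhat^+ \;=\; (\I-\Ustar\Ustar^\top)\U \;-\; (\eta/m)(\I-\Ustar\Ustar^\top)\nabla_\U f(\U,\Bhat),
\]
and then apply the fundamental theorem of calculus to expand $\nabla_\U f(\U,\Bhat)$ around the pivot $\Ustar\Ustar^\top \U$. The key observation is that, with $\Bhat$ fixed, $f(\cdot,\Bhat)$ is a quadratic function of $\U$, so its Hessian (in vectorized form) is the constant matrix $\mathrm{Hess} = \sum_{i,k}(\a_{ki}\kron \bhat_k)(\a_{ki}\kron \bhat_k)^\top$. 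Hence
\[
\mathrm{vec}(\nabla_\U f(\U,\Bhat)) - \mathrm{vec}(\nabla_\U f(\Ustar\Ustar^\top \U,\Bhat)) \;=\; \mathrm{Hess}\cdot \mathrm{vec}\bigl((\I-\Ustar\Ustar^\top)\U\bigr).
\]
Substituting, and letting $\tilde{\P}:=\I_r\kron(\I-\Ustar\Ustar^\top)$ so that $\mathrm{vec}((\I-\Ustar\Ustar^\top)\M) = \tilde{\P}\,\mathrm{vec}(\M)$, I get
\[
\mathrm{vec}\bigl((\I-\Ustar\Ustar^\top)\Uhat^+\bigr) \;=\; \tilde{\P}\bigl[\I_{nr} - (\eta/m)\mathrm{Hess}\bigr]\tilde{\P}\,\mathrm{vec}(\U) \;-\; (\eta/m)\,\mathrm{vec}(\mathrm{Term2}),
\]
using idempotency $\tilde{\P}^2=\tilde{\P}$ and the definition $\mathrm{Term2}=(\I-\Ustar\Ustar^\top)\nabla_\U f(\Ustar\Ustar^\top\U,\Bhat)$. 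Taking Euclidean norms (which equals Frobenius on the unvectorized side), using $\|\tilde{\P}\|\le 1$, and noting $\|\tilde{\P}\,\mathrm{vec}(\U)\| = \SEF(\Ustar,\U)$ yields
\[
\|(\I-\Ustar\Ustar^\top)\Uhat^+\|_F \;\le\; \|\I_{nr}-(\eta/m)\mathrm{Hess}\|\cdot \SEF(\Ustar,\U) + (\eta/m)\|\mathrm{Term2}\|_F,
\]
which combined with the denominator bound proves the lemma.

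The main obstacle, which is really just a conceptual trap rather than a technical difficulty, is choosing the right pivot for the FTC expansion. In standard GD analyses for problems like phase retrieval one pivots at the true minimizer, so the analogue of $\mathrm{Term2}$ vanishes identically. Here the natural pivot $\Ustar\Ustar^\top \U$ is \emph{not} a stationary point of $f(\cdot,\Bhat)$ because $\Bhat\neq \tB$, so $\mathrm{Term2}$ is nonzero and must be carried as a separate additive piece. The saving grace, exploited later in Lemma~\ref{terms_bnds}, is that sample-splitting makes $\Bhat$ independent of the measurements used in this iteration's gradient, so $\E[\mathrm{Term2}]=0$ and the term concentrates around zero. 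The only other care point is handling the Kronecker/vectorization bookkeeping so that the projector $\tilde{\P}$ sandwiches $\mathrm{Hess}$ on both sides before one invokes $\|\tilde{\P}\|\le 1$.
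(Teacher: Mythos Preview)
Your proof is correct and follows essentially the same route as the paper: the paper also pivots the fundamental theorem of calculus at $\Ustar\Ustar^\top\U$, exploits that the Hessian is constant because $f(\cdot,\Bhat)$ is quadratic, sandwiches $\I_{nr}-(\eta/m)\mathrm{Hess}$ between the block-diagonal projector $\mathrm{big}(\P)=\I_r\kron(\I-\Ustar\Ustar^\top)$ (your $\tilde\P$) via idempotency, and bounds the denominator through $\sigma_{\min}(\Uhat^+)\ge 1-(\eta/m)\|\mathrm{GradU}\|$. Your closing remarks about why $\mathrm{Term2}\neq 0$ but $\E[\mathrm{Term2}]=0$ also mirror the paper's discussion exactly.
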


\begin{proof} See Sec. \ref{algebra_proof} \end{proof}


\begin{lemma}
\label{terms_bnds}
Assume $\SEF(\Ustar, \U) \le \delta_t < \delta_0 =c/\kappa^2$.  Then,%
\ben
\item w.p. at least $1 - \exp( (n+r) - c mq \epsilon_1^2 / r \mu^2 ) - \exp(\log q + r - c m)$,
\[
\|\mathrm{GradU}\| \le   1.5(1.1+ \epsilon_1)  m  \delta_t  \sigmax^2 ;
\]
\item w.p. at least $1 - \exp( nr  - c mq \epsilon_2^2 / r  \mu^2 ) - \exp(\log q + r - c m)$,
\[
\|\mathrm{Term2}\|_F \le    1.1 m \epsilon_2 \delta_t \sigmax^2 ;
\]
\item w.p. at least $1 - \exp( nr \log \kappa  - c mq \epsilon_3^2 / r \kappa^4  \mu^2 ) - \exp(\log q + r - c m)$,%
{\small
\begin{align*}
m (0.65- 1.2 \epsilon_3) \sigmin^2 &\le \lambda_{\min}( \mathrm{Hess} ) \\
&\le \lambda_{\max}( \mathrm{Hess} ) \le   m  (1.1 + \epsilon_3) \sigmax^2.
\end{align*}
}
\een
\end{lemma}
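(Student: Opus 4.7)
The plan is to prove all three bounds via a common scheme: for each of $\mathrm{GradU}$, $\mathrm{Term2}$, $\mathrm{Hess}$ I (i) compute the conditional expectation in closed form using $\E[\a_\ik\a_\ik^\top]=\I_n$ (sample splitting makes $\bhat_k$ and $\U$ independent of the fresh $\a_\ik$'s, so these may be treated as deterministic), (ii) bound that expectation in operator/Frobenius norm using Lemma \ref{B_lemma}, and (iii) control the deviation by a sub-exponential Bernstein inequality combined with an $\epsilon$-net over the relevant unit sphere, finally union-bounding against the $\exp(\log q + r - cm)$ failure event from Lemma \ref{B_lemma}.

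For $\|\mathrm{GradU}\|$, substituting $\y_\ik = \a_\ik^\top\xstar_k$ and taking expectations gives $\E[\mathrm{GradU}] = m(\Xstar-\Xhat)\B^\top$, so by Lemma \ref{B_lemma}, $\|\E[\mathrm{GradU}]\| \le m\|\Xstar-\Xhat\|_F\|\B\| \le 1.1\sqrt{1.16}\,m\delta_t\sigmax^2$. For the deviation, I fix unit vectors $\w$ of length $n$ and $\z$ of length $r$ and write $\w^\top(\mathrm{GradU}-\E[\mathrm{GradU}])\z$ as a sum of centered products of two Gaussians, weighted by $(\bhat_k^\top\z)$. By Lemma \ref{B_lemma} each summand has sub-exponential norm at most $\|\xstar_k-\xhat_k\|\cdot|\bhat_k^\top\z| \le 1.54\,\delta_t\mu^2\sigmax^2\,r/q$. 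Sub-exponential Bernstein at deviation level $\epsilon_1 m\delta_t\sigmax^2$ plus a net of cardinality $e^{C(n+r)}$ over the product of unit spheres in dimensions $n$ and $r$ delivers the stated tail.

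For $\|\mathrm{Term2}\|_F$, the structural observation $\xstar_k=\Ustar\tb_k$ yields $\xstar_k-\Ustar\Ustar^\top\U\bhat_k = \Ustar\Ustar^\top(\xstar_k-\xhat_k)\in\mathrm{range}(\Ustar)$, which $\I-\Ustar\Ustar^\top$ annihilates, so $\E[\mathrm{Term2}]=0$. Concentration then mirrors the previous part, but one treats $\mathrm{Term2}$ as a single vector in an $nr$-dimensional space and uses a single net of size $e^{Cnr}$ for the Frobenius norm, explaining the $nr$ in the tail exponent; the summand sub-exponential norm is again at most $\|\Ustar\Ustar^\top(\xstar_k-\xhat_k)\|\cdot\|\bhat_k\|\lesssim\delta_t\mu^2\sigmax^2\,r/q$. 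For the Hessian, with $\w=\mathrm{vec}(\W)$ and $\|\W\|_F=1$ one computes $\w^\top\mathrm{Hess}\,\w = \sum_{i,k}(\a_\ik^\top\W\bhat_k)^2$ with $\E = m\|\W\B\|_F^2$; Lemma \ref{B_lemma}'s bounds $0.9\sigmin \le \sigma_{\min}(\B)\le\sigma_{\max}(\B)\le 1.1\sigmax$ sandwich the expectation as $0.81\,m\sigmin^2\le\E\le 1.21\,m\sigmax^2$. Each summand is a squared Gaussian of sub-exponential norm $\|\W\bhat_k\|^2\le 1.21\mu^2\sigmax^2\,r/q$; sub-exponential Bernstein plus a net over $\mathcal{S}^{nr-1}$ gives the concentration, the $\kappa^4$ and $\log\kappa$ factors appearing because one must set the deviation level to a $1/\kappa^2$ fraction of the expectation in order to keep the lower eigenvalue bound strictly positive.

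The main obstacle, common to all three parts, is securing the critical $r/q$ factor in the sub-exponential norm of every summand; it is exactly this factor that brings the per-iteration sample requirement down from the useless $\Omega(nq)$ to $\Omega((n+q)r^2)$. It is not automatic: it combines the incoherence $\|\bhat_k\|\le 1.1\mu\sigmax\sqrt{r/q}$ with the column-wise error bound $\|\xstar_k-\xhat_k\|\le 1.4\delta_t\mu\sigmax\sqrt{r/q}$, both supplied by Lemma \ref{B_lemma} and themselves made possible only by the minimization (rather than gradient) update of $\B$, which decouples column-wise.
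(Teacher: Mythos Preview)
Your proposal is correct and follows essentially the same approach as the paper: conditional expectation via $\E[\a_\ik\a_\ik^\top]=\I$, bounds from Lemma~\ref{B_lemma}, sub-exponential Bernstein, and epsilon-nets of the appropriate dimensions $(n+r)$, $nr$, and $nr$ respectively. One refinement you should be aware of: to obtain the exact exponents stated in the lemma (e.g.\ $cmq\epsilon_1^2/(r\mu^2)$ rather than $cmq\epsilon_1^2/(r^2\mu^4)$), do not bound each $K_\ik$ uniformly by the maximum before summing; instead keep $K_\ik \le C\|\xstar_k-\xhat_k\|\,\|\bhat_k\|$ (or $\|\W\bhat_k\|$) and exploit the \emph{Frobenius} bounds $\sum_k\|\xstar_k-\xhat_k\|^2 = \|\Xhat-\Xstar\|_F^2 \le 1.16\,\delta_t^2\sigmax^2$ and $\sum_k\|\W\bhat_k\|^2 = \|\W\Bhat\|_F^2 \le 1.21\,\sigmax^2$ from Lemma~\ref{B_lemma} when computing $\sum_\ik K_\ik^2$---this is what saves the extra factor of $r\mu^2$.
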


\begin{proof} See Sec. \ref{terms_bnds_proof}. 
\end{proof}


\begin{proof}[Proof of Theorem \ref{iters_thm}]
The proof follows by induction. Base case for $t=0$ is true by assumption. Induction assumption: Assume that, w.p. at least $1-t n^{-10}$,  $\SEF(\Ustar, \U_{t}) \le \delta_t$ with $\delta_t \le \delta_0 =c_0/\kappa^2$.

Set  $ \epsilon_1 = 0.1$, $ \epsilon_3 = 0.01$,  $\epsilon_2 = 0.01 / 1.1 \kappa^2$ and, $c_0 =0.1 / 1.5(1.1+0.1) $.

The upper bound on $\lambda_{\max}(\mathrm{Hess})$ and using $\eta \le 0.5 /\sigmax^2$ implies that
$
\lambda_{\min}(\I_{nr} - (\eta/m) \mathrm{Hess}) = 1 - (\eta/m) \lambda_{\max}(\mathrm{Hess}) \ge 1 - \frac{0.5 (1.1 + 0.01) m\sigmax^2}{m \sigmax^2} > 1-0.555 > 0
$
i.e. $\I_{nr} - (\eta/m) \mathrm{Hess}$ is positive definite.
Thus,
$
\| \I_{nr} - (\eta/m) \mathrm{Hess} \| = \lambda_{\max}(\I_{nr} - (\eta/m) \mathrm{Hess}) =  1 - (\eta/m) \lambda_{\min}(\mathrm{Hess}) \le 1 - (\eta/m) m (0.65- 1.2 \epsilon_3) \sigmin^2  \le 1 - (\eta \sigmax^2) 0.63 / \kappa^2  . 
$

By Lemma \ref{algebra}, Lemma \ref{terms_bnds}, and the above,   
w.p. at least $1 -t n^{-10} - \exp( (n+q) - c mq  / r \mu^2 ) - \exp( nr  - c mq / r \kappa^4 \mu^2 )   - \exp( nr \log \kappa  - c mq  / r  \kappa^4 \mu^2) - \exp(\log q + r - c m)$,
\begin{align*}
& \SEF(\Ustar, \U_{t+1}) \\
& \le \dfrac{ (1 - (\eta \sigmax^2) 0.63 / \kappa^2)\cdot \delta_t  +  (\eta/m)  1.1 m \epsilon_2  \sigmax^2 \delta_t  }{1 - (\eta/m)  1.5(1.1+\epsilon_1)  m    \delta_t  \sigmax^2}  \\
& \le \left( \frac{ 1 - (\eta \sigmax^2)  0.63/\kappa^2  + (\eta \sigmax^2) 0.01/\kappa^2   }{1 - (\eta \sigmax^2) 0.1/\kappa^2} \right)  \delta_t \\
& \le \left( 1 - (\eta \sigmax^2) \frac{0.42}{ \kappa^2} \right) \delta_t
\end{align*}
The second inequality substituted the values of $\epsilon_j$'s and used $\delta_t < \delta_0= 0.1 / ( 1.5(1.1+0.1) \kappa^2)$ for its denominator term. The third inequality used $(1 - (\eta \sigmax^2) 0.1/\kappa^2)^{-1} \le (1 + (\eta \sigmax^2) 0.2/\kappa^2)$ (for $0 < x < 1$, $1/(1-x) \le 1+2x$).

By plugging in the epsilon values in the probability, the above holds w.p.
$\ge 1 - tn^{-10} - 0.2\exp( (n+q) - c mq  / r \mu^2 ) - 0.2\exp( nr  - c mq / r  \mu^2 \kappa^4 )   - 0.2\exp( nr \log \kappa  - c mq / r \mu^2 \kappa^4 ) - \exp(\log q + r - c m)$ .
If $mq \ge C \kappa^4 (n+q)r^2 \log \kappa$ and $m \ge C\max(r, \log q, \log n)$ for a $C$ large enough, then, this probability is $\ge 1- tn^{-10} - 0.2\exp( - c (n+q)) - 0.4 \exp( - c n r)  - n^{-10} >  1 -  (t+1) n^{-10}$.
\end{proof}


\subsection{Proof outline (and novelty) for Initialization Theorem \ref{init_thm}}\label{outline_init}
Recall that we compute $\U_0$ as the top $r$ left singular vectors of $\Xhat_0$ defined in \eqref{newinit} and that this is a truncated version of $\Xhat_{0,full}$.  As noted there, we cannot use $\X_{0,full}$ because its summands are not {\em nice-enough sub-exponentials}. Truncation converts the summands into sub-Gaussian r.v.s. 
For these, we can use the sub-Gaussian Hoeffding inequality \cite[Chap 2]{versh_book} which needs a small enough bound on only the squared sum of the sub-Gaussian norms of the $mq$ summands, and not on their maximum value (as needed by the sub-exponential Bernstein inequality). This is an easier requirement that gets satisfied for our problem.
Of course, truncation also means that the summands of $\Xhat_0$ are not mutually independent (each summand depends on the truncation threshold $\alpha$ which is computed using all measurements $\y_\ik$) and that $\E[\Xhat_0] \neq \Xstar$.
There are two ways to resolve this issue. The first and simpler approach, but one that assumes more sample-splitting is given below in Sec \ref{init_proof}. This assumes that $\alpha$ is a computed using a different independent set of measurements than those used to define the rest of $\Xhat_0$. With this, $\E[\Xhat_0|\alpha] = \Xstar \D(\alpha)$, where $\D$ is a diagonal matrix defined below in Lemma \ref{Wedinlemma} and the summands are independent conditioned on $\alpha$. Thus, we can apply Wedin's $\sin \Theta$ theorem \cite{wedin,spectral_init_review} (given in Proposition \ref{Wedin_sintheta}) on $\Xhat_0$ and $\E[\Xhat_0|\alpha]$ to bound $\SE(\U_0, \Ustar)$, followed by  subGaussian Hoeffding and a standard epsilon-net argument, to bound the terms in this bound.


To avoid sample-splitting for $\alpha$, we need to significantly modify the sandwiching arguments from \cite{twf,lrpr_it} for our setting. This is done in Appendix \ref{init_reuse_proof}. In the previous works, sandwiching was used for a symmetric positive definite (p.d.) matrix. Here we need such an argument for a non-symmetric matrix. Briefly, we do this as follows. We define a matrix $\X_+$ that is such that the span of top $r$ left singular vectors of its expected value equals that of $\Ustar$ and that can be shown to be close to $\Xhat_0$. $\X_+$ is $\Xhat_0$ with $\alpha$ replaced by $\tC(1+\eps) \|\Xstar\|_F^2/q$.
We bound $\|\Xhat_0 - \E[\X_+]\|$ by bounding $\|\X_+ - \Xhat_0\|$ and $\|\X_+ - \E[\X_+]\|$. Bounding the latter is simple.  
Bounding $\|\X_+ - \Xhat_0\|$ requires bounding $\w^\top (\X_+ - \Xhat_0) \z$ for unit vectors $\w, \z$ and this is not straightforward because its summands are not mutually independent. To deal with this, we first bound each summand by its absolute value, and then bound the indicator function term to get a new one that is non-random so that the summands of this new term are mutually independent.
But, its summands are no longer zero mean (because of taking the absolute values), and hence more work is needed to get the desired small enough bound on the expected value of this term.

\subsection{Simpler proof of Theorem \ref{init_thm} that assumes independent measurements used for computing $\alpha$}\label{init_proof}

For the simpler proof given here, assume that we use a different independent set of measurements for computing $\alpha$ than those used for the rest of $\Xhat_{0}$, i.e., let
\[
\alpha = \tC \frac{\sum_\ik  (\y_\ik^{nrmX})^2}{mq}
\]
with $\y_\ik^{nrmX}$ independent of $\{\A_k^{(0)}, \y_k^{(0)} \}$.
With this change, it is possible to compute $\E[\Xhat_0|\alpha]$ easily. But, it does not affect the sample complexity order and so it does not change our theorem statement.
The proof follows by combining the two lemmas and facts given next.

\begin{lemma} \label{Wedinlemma}
Conditioned on $\alpha$, we have the following conclusions.
\ben
\item Let $\zeta$  be a scalar standard Gaussian r.v.. Define
\[
\beta_{k}(\alpha) := \E[\zeta^2 \indic_{\{\|\xstar_{k}\|^2\zeta^2 \leq  \alpha\}}].
\]
Then,
\begin{align}
&\E[\Xhat_0|\alpha] = \Xstar \D(\alpha), \nonumber\\
&\text{ where }  \D(\alpha):=diagonal(\beta_k(\alpha),k \in [q])
\label{X0}
\end{align}
i.e. $\D(\alpha)$ is a diagonal matrix of size $q\times q$ with diagonal entries $\beta_{k}$ defined above.

\item Let $\E[\Xhat_{0}|\alpha] = \Xstar \D(\alpha) \svdeq \Ustar \check\bSigma \Bcheck$ be its $r$-SVD. Then,
\begin{align}\label{Wedin_main}
& \SEF(\U_0,\Ustar) \le \nonumber \\
&   \dfrac{\sqrt{2} \max\left( \| (\Xhat_0 - \E[\Xhat_{0}|\alpha] )^\top \Ustar \|_F , \| (\Xhat_0 - \E[\Xhat_{0}|\alpha] )  \Bcheck{}^\top \|_F \right)}{\sigmin \min_k \beta_k(\alpha) - \|\Xhat_0 - \E[\Xhat_{0}|\alpha] \|}
\end{align}
as long as the denominator is non-negative.
\een
\end{lemma}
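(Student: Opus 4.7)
For part 1, fix $\alpha$ (independent of the measurement indices forming $\Xhat_0$ under this subsection's sample-splitting assumption) and compute $\E[\Xhat_0 \mid \alpha]$ column by column. The $k$-th column is $(1/m)\sum_{i=1}^m \E[\a_\ik y_\ik \indic_{\{y_\ik^2 \le \alpha\}} \mid \alpha]$. I would exploit rotational invariance of $\a_\ik \sim \n(0,\I_n)$ by writing $\a_\ik = \zeta_\ik\,(\xstar_k/\|\xstar_k\|) + \a_\ik^\perp$, where $\zeta_\ik := \a_\ik^\top \xstar_k/\|\xstar_k\|$ is standard Gaussian and $\a_\ik^\perp$ is the component orthogonal to $\xstar_k$, independent of $\zeta_\ik$ and mean zero. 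Since $y_\ik = \zeta_\ik\|\xstar_k\|$, the truncation indicator depends only on $\zeta_\ik$; the $\a_\ik^\perp$ cross-term vanishes in expectation, while the parallel part contributes $\xstar_k \cdot \E[\zeta^2 \indic_{\{\|\xstar_k\|^2\zeta^2 \le \alpha\}}] = \beta_k(\alpha)\,\xstar_k$. Averaging over $i$ yields column $k$ of $\Xstar\D(\alpha)$, proving \eqref{X0}.

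For part 2, I would apply Wedin's $\sin\Theta$ theorem (Proposition \ref{Wedin_sintheta}) with $M := \E[\Xhat_0 \mid \alpha] = \Xstar\D(\alpha)$ and perturbation $E := \Xhat_0 - M$. Because $\D(\alpha)$ is diagonal with strictly positive entries $\beta_k(\alpha)$, the column space of $\Xstar\D(\alpha)$ equals that of $\Xstar$, namely $\Span(\Ustar)$; so in the factorization $M \svdeq \Ustar \check\bSigma \Bcheck$, the orthonormal left factor $\Ustar$ spans the top-$r$ left singular subspace of $M$ (any orthogonal rotation is absorbed into $\check\bSigma$), which is exactly what the Frobenius Wedin bound requires. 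Wedin then gives
\[
\SEF(\U_0,\Ustar) \le \frac{\sqrt{2}\,\max(\|E^\top \Ustar\|_F,\ \|E\Bcheck^\top\|_F)}{\sigma_r(M) - \|E\|}.
\]
Finally, $\sigma_r(M) = \sigma_r(\Xstar\D(\alpha)) \ge \sigma_r(\Xstar)\,\sigma_{\min}(\D(\alpha)) = \sigmin\min_k \beta_k(\alpha)$ by $\sigma_{\min}(AB) \ge \sigma_{\min}(A)\sigma_{\min}(B)$ for the invertible diagonal $\D(\alpha)$, yielding \eqref{Wedin_main}.

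Within the lemma itself, both parts are essentially routine -- a Gaussian moment calculation plus an off-the-shelf perturbation inequality -- and the harder work lies downstream: bounding $\beta_k(\alpha)$ uniformly away from zero (using concentration of $\sum_\ik y_\ik^2$ together with Assumption \ref{right_incoh}) and controlling the numerator terms $\|E^\top \Ustar\|_F$, $\|E\Bcheck^\top\|_F$ via sub-Gaussian Hoeffding (enabled by the truncation) combined with an $\eps$-net argument over the fixed $r$-dimensional ranges of $\Ustar$ and $\Bcheck^\top$, both of which are independent of the $\A_k$'s in $\Xhat_0$ by sample-splitting. The only subtlety inside the lemma is the notational convention that $\Ustar$ may be treated as the orthonormal left factor in the SVD of $\Xstar\D(\alpha)$ because it spans the correct subspace.
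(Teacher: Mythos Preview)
Your proposal is correct and essentially matches the paper's own proof. For part~1 the paper also uses rotational invariance (via the unitary change of variables $\tilde\a_\ik:=\Q_k^\top\a_\ik$ with $\Q_k\e_1=\xstar_k/\|\xstar_k\|$), which is equivalent to your orthogonal decomposition; for part~2 the paper likewise applies Wedin with $\M^*\equiv\E[\Xhat_0\mid\alpha]$ and lower bounds $\sigma_r(\Xstar\D)$ by a product of minimum singular values, exactly as you do.
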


\begin{proof} See Sec. \ref{Wedinlemma_proof} \end{proof}

\newcommand{\ev}{\mathcal{E}}

\cred

Define the set $\ev$ as follows
\bea
\ev:= \left\{ \tC(1 - \epsilon_1) \frac{\|\Xstar\|_F^2}{q} \le \alpha \le \tC (1 + \epsilon_1) \frac{\|\Xstar\|_F^2}{q}  \right\}.
\label{def_ev}
\eea

The following fact is an immediate consequence of sub-exponential Bernstein inequality for bounding $|\alpha - \|\Xstar\|_F^2/q|$.

\begin{fact}\label{sumyik_bnd}
$\Pr(\alpha \in \ev) \ge 1 - \exp(- \tc mq \epsilon_1^2):= 1 - p_\alpha$.  Here  $\tc = c/\tC = c / \kappa^2 \mu^2.$
\end{fact}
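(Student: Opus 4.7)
The plan is to view $\alpha$ as a sample average of independent squared Gaussians and apply the sub-exponential Bernstein inequality, using Assumption \ref{right_incoh} to translate column-norm bounds into a bound depending only on $\kappa^2\mu^2$.

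First I would unpack $\alpha$. Since $y_{ik}^{nrmX} = \langle \a_\ik, \xstar_k\rangle$ with $\a_\ik \sim \n(\bm{0},\I)$ independent across $(i,k)$, we have $y_{ik}^{nrmX} \sim \n(0,\|\xstar_k\|^2)$ and the random variables $\{ (y_{ik}^{nrmX})^2 : i\in[m], k\in[q]\}$ are mutually independent. Each summand is sub-exponential with $\|(y_{ik}^{nrmX})^2\|_{\psi_1}\le C\|\xstar_k\|^2$ and $\E[(y_{ik}^{nrmX})^2]=\|\xstar_k\|^2$, so $\E[\alpha] = (\tC/q)\sum_k \|\xstar_k\|^2 = \tC\|\Xstar\|_F^2/q$. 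The event $\ev$ is therefore exactly $\{|\alpha - \E[\alpha]| \le \epsilon_1\,\tC\|\Xstar\|_F^2/q\}$, equivalently $\bigl|\sum_{ik}(y_{ik}^{nrmX})^2 - m\|\Xstar\|_F^2\bigr| \le m\epsilon_1\|\Xstar\|_F^2$.

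Next I would apply Bernstein's inequality for independent centered sub-exponential variables \cite{versh_book}, with deviation parameter $t=m\epsilon_1\|\Xstar\|_F^2$. The two quantities controlling the bound are $V:=\sum_{ik}\|\xstar_k\|^4 \le m\,(\max_k\|\xstar_k\|^2)\,\|\Xstar\|_F^2$ and $K:=\max_k\|\xstar_k\|^2$. Here is where Assumption \ref{right_incoh} enters: it gives $\max_k\|\xstar_k\|^2 \le \mu^2 r\sigmax^2/q$, and since $r\sigmax^2 \le \kappa^2 \|\Xstar\|_F^2$ (because $r\sigmin^2 \le \|\Xstar\|_F^2$), we obtain $K \le \kappa^2\mu^2 \|\Xstar\|_F^2/q$. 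Substituting yields $t^2/V \ge mq\epsilon_1^2/(C\kappa^2\mu^2)$ and $t/K \ge mq\epsilon_1/(C\kappa^2\mu^2)$. For the small regime of interest, $\epsilon_1 \le 1$, the $t^2/V$ term dominates the Bernstein minimum, so
\[
\Pr(\alpha \notin \ev) \le 2\exp\!\left(-\,\frac{c\, mq\,\epsilon_1^2}{\kappa^2\mu^2}\right),
\]
which matches the claimed bound with $\tc = c/(\kappa^2\mu^2) = c/\tC$.

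There is essentially no hard step: the whole argument is a single application of Bernstein once the incoherence assumption has been used to control $\max_k\|\xstar_k\|^2$. The only thing to watch out for is making sure both the variance proxy $V$ and the envelope $K$ get the same $\kappa^2\mu^2$ scaling when normalized by $\|\Xstar\|_F^2/q$, since this is what produces the $\tc = c/(\kappa^2\mu^2)$ in the exponent rather than a worse dependence. No $r$ dependence appears because $V$ is controlled by $\|\Xstar\|_F^2\cdot\max_k\|\xstar_k\|^2$ rather than by $\sum_k\|\xstar_k\|^2 \cdot\max_k\|\xstar_k\|^2$ naively bounded via $r$; the mild $\kappa^2$ factor is the price paid for writing $\max_k\|\xstar_k\|^2 \le \kappa^2\mu^2\|\Xstar\|_F^2/q$ using the trivial lower bound $r\sigmin^2 \le \|\Xstar\|_F^2$.
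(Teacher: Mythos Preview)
Your proposal is correct and is exactly the approach the paper intends: its proof of Fact~\ref{sumyik_bnd} consists of the single sentence ``Apply sub-exponential Bernstein.'' You have filled in precisely the right details, including the use of Assumption~\ref{right_incoh} to get $\max_k\|\xstar_k\|^2\le\kappa^2\mu^2\|\Xstar\|_F^2/q$, which is what produces the $\tc=c/\tC$ factor in the exponent.
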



\cbl

The next lemma bounds the terms of Lemma \ref{Wedinlemma}.

\begin{lemma} \label{init_terms_bnd}
Fix $0 < \epsilon_1 < 1$. Then, 
\ben
\item \label{Xhat0_1}
w.p. at least $1-\exp\left[(n+q)-c\epsilon_1^2mq/\mu^2\kappa^2\right]$, conditioned on $\alpha$, for an $\alpha \in \ev$,
		\[
		\|\Xhat_{0} -\E[\Xhat_{0}|\alpha]\| \leq 1.1 \epsilon_1 \|\Xstar\|_F
		\]
%

\item
\label{Xhat0_Ustar_1}
	w.p. at least $1-\exp\left[ qr - c \epsilon_1^2 mq / \mu^2\kappa^2\right]$, conditioned on $\alpha$, for an $\alpha \in \ev$,
		\[
		\|\left(\Xhat_{0} - \E[\Xhat_{0}|\alpha]\right){}^\top\Ustar\|_F \leq 1.1 \epsilon_1\|\Xstar\|_F
		\]

\item
		\label{lem:init_nom_B_term2}
\label{Xhat0_Bstar_1}
w.p. at least $1-\exp\left[nr - c \epsilon_1^2mq/\mu^2\kappa^2\right]$, conditioned on $\alpha$, for an $\alpha \in \ev$,
		\[
		\|\left(\Xhat_{0} - \E[\Xhat_{0}|\alpha]\right)\Bcheck{}^\top\|_F \leq 1.1  \epsilon_1\|\Xstar\|_F.
		\]


\een
	\end{lemma}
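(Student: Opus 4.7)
The plan is to combine (i) the fact that, conditional on $\alpha$, the $mq$ summands defining $\Xhat_0$ are mutually independent with conditional mean $\Xstar\D(\alpha)$ given by Lemma~\ref{Wedinlemma}; (ii) a scalar sub-Gaussian Hoeffding bound for $\w^\top(\Xhat_0 - \E[\Xhat_0|\alpha])\z$ with fixed unit-norm test vectors; and (iii) a standard $1/4$-net argument whose net cardinality differs across the three parts, producing the three distinct exponents $n+q$, $qr$, and $nr$. Throughout I would fix an arbitrary $\alpha\in\ev$ and work in the conditional probability space; by the sample-splitting adopted for this simpler proof, the measurements used to form $\Xhat_0$ are independent of those used to form $\alpha$, so the summands of $\Xhat_0$ remain mutually independent conditional on $\alpha$.

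For fixed unit vectors $\w\in\mathbb{R}^n$ and $\z\in\mathbb{R}^q$, I would write
\[
\w^\top(\Xhat_0 - \E[\Xhat_0|\alpha])\z \;=\; \tfrac{1}{m}\sum_{i,k}\bigl(X_\ik - \E[X_\ik|\alpha]\bigr), \quad X_\ik := z_k(\w^\top\a_\ik)(\y_\ik\indic_{\{\y_\ik^2\le\alpha\}}),
\]
where $z_k$ is the $k$-th entry of $\z$. The structural observation that drives everything is that the truncation bounds the third factor almost surely by $\sqrt{\alpha}$, while $\w^\top\a_\ik$ is standard Gaussian since $\|\w\|=1$. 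Therefore $|X_\ik|\le|z_k|\sqrt{\alpha}\,|\w^\top\a_\ik|$, which by the almost-sure-domination property of the $\psi_2$ norm yields $\|X_\ik\|_{\psi_2}\le C|z_k|\sqrt{\alpha}$, and hence $\sum_{i,k}\|X_\ik - \E[X_\ik|\alpha]\|_{\psi_2}^2\le Cm\alpha\|\z\|^2 = Cm\alpha$. Sub-Gaussian Hoeffding \cite{versh_book}, combined with the bound $\alpha\le\tC(1+\eps_1)\|\Xstar\|_F^2/q$ available on $\ev$ together with $\tC = 9\kappa^2\mu^2$, then gives, for each fixed $(\w,\z)$,
\[
\Pr\!\left(\,\bigl|\w^\top(\Xhat_0 - \E[\Xhat_0|\alpha])\z\bigr|>\eps_1\|\Xstar\|_F\,\Big|\,\alpha\right)\le 2\exp\!\left(-\tfrac{c\,mq\,\eps_1^2}{\mu^2\kappa^2}\right).
\]

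For part~(1) I would take a $1/4$-net of $S^{n-1}\times S^{q-1}$ of cardinality at most $9^{n+q}$; combined with the standard inequality $\|M\|\le 2\sup_{(\w,\z)\in\mathrm{net}}\w^\top M\z$ and a union bound, this promotes the per-pair Hoeffding bound to $\|\Xhat_0 - \E[\Xhat_0|\alpha]\|\le 1.1\eps_1\|\Xstar\|_F$ with probability at least $1-\exp[(n+q)-c\eps_1^2 mq/(\mu^2\kappa^2)]$. For part~(2) I would dualize the Frobenius norm as $\|(\Xhat_0 - \E[\Xhat_0|\alpha])^\top\Ustar\|_F = \sup_{Z\in\mathbb{R}^{q\times r},\,\|Z\|_F=1}\langle\Xhat_0-\E[\Xhat_0|\alpha],\Ustar Z^\top\rangle$ and set $W:=\Ustar Z^\top\in\mathbb{R}^{n\times q}$; orthonormality of $\Ustar$ gives $\|W\|_F=\|Z\|_F=1$, and the same sub-Gaussian calculation applies to $X'_\ik:=(\w_k^\top\a_\ik)(\y_\ik\indic_{\{\y_\ik^2\le\alpha\}})$ with $\w_k$ the $k$-th column of $W$, yielding $\sum_{i,k}\|X'_\ik\|_{\psi_2}^2\le Cm\alpha\sum_k\|\w_k\|^2 = Cm\alpha\|W\|_F^2 = Cm\alpha$; a Frobenius $1/4$-net of $\{Z\in\mathbb{R}^{q\times r}:\|Z\|_F=1\}$ of size at most $9^{qr}$ then yields the $qr$ exponent. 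Part~(3) is handled analogously by dualizing against $Y\in\mathbb{R}^{n\times r}$ and setting $W:=Y\Bcheck$, which still satisfies $\|W\|_F=1$ because $\Bcheck\Bcheck^\top=\I_r$; a Frobenius net of size at most $9^{nr}$ produces the $nr$ exponent.

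The main obstacle is the sub-Gaussian norm bound on $X_\ik$, because its two random factors $\w^\top\a_\ik$ and $\y_\ik\indic$ are correlated (both depend on $\a_\ik$) and so cannot be treated as an independent Gaussian-times-bounded product. Using the truncation via the almost-sure bound $|X_\ik|\le|z_k|\sqrt{\alpha}|\w^\top\a_\ik|$ is precisely where the truncation mechanism pays off: without it, $X_\ik$ is a product of two unbounded Gaussians, hence sub-exponential with maximal norm of order $\|\xstar_k\|\sim\mu\sigmax\sqrt{r/q}$ which, as discussed in Appendix~\ref{algo_understand}, is too large to close the argument at the target sample complexity; with truncation, $\|X_\ik\|_{\psi_2}$ drops to order $\sqrt{\alpha}|z_k|\sim\kappa\mu\|\Xstar\|_F|z_k|/\sqrt{q}$, and the squared-sum of these norms gains the crucial factor $q$ that matches the $mq$ denominator in the exponent. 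Once the conditional $\psi_2$ bound is in hand, the remainder is the standard Hoeffding-plus-net template and the differences among the three claims are exactly the ambient dimensions of the test-vector nets.
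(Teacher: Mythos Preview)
Your proposal is correct and follows essentially the same route as the paper: conditional on $\alpha$, use the truncation $|\y_\ik\indic|\le\sqrt\alpha$ to dominate each summand by $|z_k|\sqrt\alpha\,|\w^\top\a_\ik|$ (sub-Gaussian with norm $C|z_k|\sqrt\alpha$), apply sub-Gaussian Hoeffding at deviation level $\eps_1\|\Xstar\|_F$ using $\alpha\le\tC(1+\eps_1)\|\Xstar\|_F^2/q$ on $\ev$, and then epsilon-net over $\S_n\times\S_q$, $\S_{qr}$, and $\S_{nr}$ respectively. The paper's proof in Sec.~\ref{init_terms_bnd_proof} does exactly this, with the only cosmetic difference that for parts~(2)--(3) it dualizes directly against $\W\in\S_{qr}$ (resp.\ $\S_{nr}$) rather than first writing $W=\Ustar Z^\top$ (resp.\ $W=Y\Bcheck$); the two formulations are equivalent since $\langle M,\Ustar Z^\top\rangle=\langle Z,M^\top\Ustar\rangle$.
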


\begin{proof} See Sec. \ref{init_terms_bnd_proof} \end{proof}

We also need to the following fact. 

\begin{fact}
\label{betak_bnd}
		For any $\epsilon_1 \leq 0.1$, 
		$
 \min_k  \E\left[\zeta^2 \indic_{ \left\{ |\zeta| \leq \tC \frac{ \sqrt{1- \epsilon_1} \|\Xstar\|_F }{ \sqrt{q}\|\xstar_{k}\| } \right\} } \right] \geq 0.92.
	$
	\end{fact}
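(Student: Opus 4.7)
\textbf{Proof plan for Fact \ref{betak_bnd}.}

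The plan is to reduce the claim to a one-dimensional Gaussian tail computation. First, I would unwind what the indicator actually says: from the definition of $\beta_k(\alpha)$ in Lemma \ref{Wedinlemma}, the event is $|\zeta| \le \sqrt{\alpha}/\|\xstar_k\|$, and on the event $\ev$ of Fact \ref{sumyik_bnd} we have $\alpha \ge \tC(1-\epsilon_1)\|\Xstar\|_F^2/q$. So it suffices to show that for every $k$,
\begin{equation*}
\E\!\left[\zeta^2 \indic\bigl\{|\zeta| \le \tau_k\bigr\}\right] \ge 0.92,
\qquad \tau_k := \sqrt{\tC(1-\epsilon_1)}\,\frac{\|\Xstar\|_F}{\sqrt{q}\,\|\xstar_k\|}.
\end{equation*}
(I will read the indicator in the statement as this quantity; a stray square-root appears to have dropped in typesetting.)

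Second, I would produce a uniform lower bound on $\tau_k$. By Assumption \ref{right_incoh}, $\|\xstar_k\| \le \sigmax \mu \sqrt{r/q}$, while $\|\Xstar\|_F \ge \sigmin \sqrt{r}$ since $\Xstar$ has rank $r$. Hence
\[
\frac{\|\Xstar\|_F}{\sqrt{q}\,\|\xstar_k\|} \;\ge\; \frac{\sigmin\sqrt{r}/\sqrt{q}}{\sigmax\mu\sqrt{r/q}} \;=\; \frac{1}{\kappa\mu}.
\]
Plugging in $\tC = 9\kappa^2\mu^2$ gives $\tau_k \ge 3\sqrt{1-\epsilon_1} \ge 3\sqrt{0.9} > 2.84$ for all $k$ and for every $\epsilon_1 \le 0.1$.

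Third, I would convert the claim into a Gaussian tail estimate by writing
\[
\E\!\left[\zeta^2 \indic\{|\zeta|\le \tau\}\right] \;=\; 1 - \E\!\left[\zeta^2 \indic\{|\zeta|>\tau\}\right],
\]
so it suffices to prove $\E[\zeta^2 \indic\{|\zeta|>\tau\}] \le 0.08$ for $\tau \ge 3\sqrt{0.9}$. An integration by parts with $u=z$, $dv = z e^{-z^2/2}dz$ gives
\[
\int_\tau^\infty z^2 e^{-z^2/2}\,dz \;=\; \tau e^{-\tau^2/2} + \int_\tau^\infty e^{-z^2/2}\,dz \;\le\; \Bigl(\tau + \tfrac{1}{\tau}\Bigr) e^{-\tau^2/2},
\]
using the Mills-ratio bound $\int_\tau^\infty e^{-z^2/2}dz \le \tau^{-1} e^{-\tau^2/2}$. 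Therefore
\[
\E[\zeta^2 \indic\{|\zeta|>\tau\}] \;\le\; \sqrt{\tfrac{2}{\pi}}\Bigl(\tau + \tfrac{1}{\tau}\Bigr) e^{-\tau^2/2}.
\]
At $\tau = 3\sqrt{0.9}$ this is at most roughly $0.798 \cdot 3.20 \cdot e^{-4.05} \approx 0.045 < 0.08$, and the right-hand side is monotone decreasing in $\tau$ for $\tau \ge 1$. This gives the required $\ge 0.92$ lower bound uniformly in $k$ and in $\epsilon_1 \in (0, 0.1]$.

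The argument is essentially routine; the only thing to be careful about is the interplay between $\tC$ and the incoherence bound, which is what makes the threshold $\tau_k$ large enough (at least $3\sqrt{0.9}$) for the Gaussian tail to contribute less than $0.08$ of the second moment. There is no substantive obstacle beyond making sure the numerical constants line up.
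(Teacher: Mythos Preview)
Your proposal is correct and follows essentially the same approach as the paper: use the incoherence assumption together with $\tC = 9\kappa^2\mu^2$ to lower bound the truncation threshold by roughly $3$, then reduce the claim to a one-dimensional Gaussian tail bound on $\E[\zeta^2 \indic\{|\zeta|>\tau\}]$. The only cosmetic differences are that you carry the $\sqrt{1-\epsilon_1}$ factor explicitly (getting $\tau_k \ge 3\sqrt{0.9}$, which is more careful than the paper's stated $\gamma_k \ge 3$) and that you bound the tail integral via integration by parts plus the Mills ratio, whereas the paper uses the cruder pointwise bound $z e^{-z^2/4} \le \sqrt{2/e}$; both are routine and yield the claim.
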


\begin{proof}[Proof of Theorem \ref{init_thm}]
%
Set $\epsilon_1 = 0.4 \delta_0 / \sqrt{r} \kappa $. Define
$
p_0 = 2\exp( (n+q)- c mq \delta_0^2 / r \kappa^2 ) + 2\exp( n r - c mq \delta_0^2 / r \kappa^2 ) + 2\exp( q r - c mq \delta_0^2 / r \kappa^2 ).
$
Recall that  $\Pr(\alpha \in \ev) \ge 1 - p_\alpha$ with
$
p_\alpha = \exp(- \tc mq \epsilon_1^2) = \exp(- c mq \delta_0^2 /r \mu^2 \kappa^2 ).
$

Using Lemma \ref{init_terms_bnd}, conditioned on $\alpha$, for an $\alpha \in \ev$,
\bi
\item  w.p. at least $ 1-p_0$,
$
\|\Xhat_0 - \E[\Xhat_{0}|\alpha]\| \le 1.1 \epsilon_1  \|\Xstar\|_F  =  0.44 \delta_0 \sigmin,  
$ and %
{\small
$
\max\left( \| (\Xhat_0 - \E[\Xhat_{0}|\alpha])^\top \Ustar \|_F , \| (\Xhat_0 - \E[\Xhat_{0}|\alpha])  \Bcheck^\top \|_F \right) \le 0.44 \delta_0 \sigmin
$
}
\item 
$\min_k \beta_k(\alpha) \ge \min_k  \E\left[\zeta^2 \indic_{\{ |\zeta| \leq \tC \frac{ \sqrt{1- \epsilon_1} \|\Xstar\|_F }{ \sqrt{q}\|\xstar_{k}\| } \} } \right]  \ge 0.9
$
The first inequality is an immediate consequence of $\alpha \in \ev$ and the second follows by Fact \ref{betak_bnd}.
\ei


Plugging the above bounds into \eqref{Wedin_main} of Lemma \ref{Wedinlemma}, conditioned on $\alpha$, for any $\alpha \in \ev$, w.p. at least $ 1 - p_0$,
$\SEF(\U_0,\Ustar)  \le  \frac{0.44 \delta_0}{0.9 - 0.44 \delta_0} < \delta_0$ since $\delta_0 < 0.1$. In other words,
\begin{align}
& \Pr\left( \SEF(\U_0,\Ustar)  \ge  \delta_0 | \alpha \right) \le p_0 \ \text{for any $\alpha \in \ev$}.
\label{SD_given_alpha_bnd}
\end{align}
Since (i)
$
\Pr( \SEF(\U_0,\Ustar)  \ge \delta_0 )
\le \Pr( \SEF(\U_0,\Ustar)  \ge  \delta_0 \text{ and }  \alpha \in \ev) +  \Pr (\alpha \notin \ev),
$ and
(ii)
$
\Pr( \SEF(\U_0,\Ustar)  \ge \delta_0 \text{ and } \alpha \in \ev ) \le \Pr(\alpha \in \ev) \max_{\alpha \in \ev}\Pr( \SEF(\U_0,\Ustar)  \le  \delta_0 |\alpha ),
$
thus, using Fact \ref{sumyik_bnd} and \eqref{SD_given_alpha_bnd}, we can conclude that
\[
\Pr \left( \SEF(\U_0,\Ustar)  \ge \delta_0 \right) \le p_0 (1 - p_\alpha)  + p_\alpha \le p_0 + p_\alpha
\]

Thus, for a $\delta_0 < 0.1$, $\SEF(\U_0,\Ustar) < \delta_0$ w.p. at least $ 1- p_0 - p_\alpha = 1 - 2\exp( (n+q)- c mq \delta_0^2 / r \kappa^2 ) - 2\exp( n r - c mq \delta_0^2 / r \kappa^2 ) - 2\exp( q r - c mq \delta_0^2 / r \kappa^2 ) -  \exp(- c mq \delta_0^2 /r \mu^2 \kappa^4 )$.
This is $\ge 1 - 5 \exp(-c(n+q))$ if $mq > C \kappa^2 \mu^2 (n+q)r^2 / \delta_0^2$. This finishes our proof.
\end{proof}

\section{Proofs of all the lemmas}
\label{proof_lemmas}

\subsection{Basic tools used}
Our proofs use the following results and definitions:

\begin{theorem}[Wedin $\sin \Theta$ theorem for Frobenius norm subspace distance \cite{wedin,spectral_init_review}[Theorem 2.3.1]]
For two $n_1 \times n_2$ matrices $\M^*$, $\M$, let $\Ustar, \U$ denote the matrices containing their top $r$ singular vectors and let $\Vstar^\top, \V^\top$ be the matrices of their right singular vectors (recall from problem definition that we defined SVD with the right matrix transposed). Let $\sigma^*_r, \sigma^*_{r+1}$ denote the $r$-th and $(r+1)$-th singular values of $\M^*$. 
If $\|\M - \M^*\| \le \sigma^*_r - \sigma^*_{r+1}$, then
\begin{align*}
&\SE(\U, \Ustar) \\
&\qquad \le \frac{\sqrt{2} \max(\|(\M - \M^*)^\top \Ustar\|_F, \|(\M - \M^*)^\top \Vstar^\top\|_F    )}{\sigma^*_r - \sigma^*_{r+1} - \|\M - \M^*\|}
\end{align*}
\label{Wedin_sintheta}
\end{theorem}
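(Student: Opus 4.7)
The plan is to prove this via the Hermitian dilation trick, reducing to the Davis--Kahan $\sin\Theta$ theorem for symmetric matrices. Let $\Delta := \M - \M^*$ and consider the dilations
\[
\tilde{\M}^* := \begin{pmatrix} 0 & \M^* \\ \M^{*\top} & 0 \end{pmatrix}, \qquad \tilde{\M} := \begin{pmatrix} 0 & \M \\ \M^\top & 0 \end{pmatrix}.
\]
The eigenvalues of $\tilde{\M}^*$ are $\pm\sigma_i(\M^*)$; its top-$r$ eigenvectors assemble into $\tilde{\Ustar} := \tfrac{1}{\sqrt{2}}\begin{pmatrix}\Ustar \\ \Vstar^\top\end{pmatrix}$, and analogously $\tilde{\U}$ for $\tilde{\M}$. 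A direct calculation gives $\|\tilde{\M} - \tilde{\M}^*\| = \|\Delta\|$, and the spectral gap separating the top-$r$ positive eigenspace of $\tilde{\M}^*$ from the rest of its spectrum equals $\sigma_r^* - \sigma_{r+1}^*$.

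Applying the classical Davis--Kahan $\sin\Theta$ theorem in Frobenius one-sided form (essentially the symmetric analogue of \cite[Theorem 2.3.1]{spectral_init_review}) to the pair $\tilde{\M}^*, \tilde{\M}$ yields
\[
\bigl\|(\I - \tilde{\Ustar}\tilde{\Ustar}^\top)\,\tilde{\U}\bigr\|_F \;\le\; \frac{\|(\tilde{\M} - \tilde{\M}^*)\,\tilde{\Ustar}\|_F}{\sigma_r^* - \sigma_{r+1}^* - \|\Delta\|}.
\]
Using the block structures, the numerator simplifies to $\|(\tilde{\M}-\tilde{\M}^*)\tilde{\Ustar}\|_F^2 = \tfrac{1}{2}\bigl(\|\Delta\Vstar^\top\|_F^2 + \|\Delta^\top \Ustar\|_F^2\bigr)$. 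For the left side, writing $A := \Ustar^\top\U$ and $B := \Vstar\V^\top$ and expanding,
\[
\|(\I - \tilde{\Ustar}\tilde{\Ustar}^\top)\tilde{\U}\|_F^2 \;=\; r - \tfrac{1}{4}\|A + B\|_F^2 \;\ge\; \tfrac{1}{2}\bigl(\SE(\U,\Ustar)^2 + \SE_R^2\bigr) \;\ge\; \tfrac{1}{2}\SE(\U,\Ustar)^2,
\]
where I used $\|A+B\|_F^2 \le 2(\|A\|_F^2 + \|B\|_F^2)$ together with the Pythagorean identities $\|A\|_F^2 = r - \SE(\U,\Ustar)^2$ and $\|B\|_F^2 = r - \SE_R^2$ (with $\SE_R$ denoting the Frobenius subspace distance between the right singular subspaces of $\M$ and $\M^*$).

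Combining these three pieces produces $\SE(\U,\Ustar) \le \sqrt{\|\Delta\Vstar^\top\|_F^2 + \|\Delta^\top\Ustar\|_F^2}/(\sigma_r^* - \sigma_{r+1}^* - \|\Delta\|)$, and the final application of $\sqrt{a^2+b^2}\le\sqrt{2}\,\max(a,b)$ delivers the stated bound; the denominator is positive by the standing hypothesis $\|\Delta\| \le \sigma_r^* - \sigma_{r+1}^*$. The main obstacle I anticipate is the bookkeeping for the two block-structure identities — particularly verifying that the relevant Davis--Kahan eigen-gap in $\tilde{\M}^*$ is $\sigma_r^* - \sigma_{r+1}^*$ rather than something smaller forced by the negative-eigenvalue part of the spectrum — and tracking how the $\tfrac{1}{\sqrt{2}}$ normalization of the dilation eigenvectors combines with the $\sqrt{2}$ from the $\max$-bound so that exactly one $\sqrt{2}$ survives in the final numerator.
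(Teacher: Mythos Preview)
The paper does not prove this theorem; it is stated in the ``Basic tools used'' subsection as a known result, attributed to Wedin and to Theorem~2.3.1 of the cited spectral-perturbation review, and then simply invoked later (e.g.\ in Lemma~\ref{Wedinlemma}). There is therefore no paper proof to compare against.

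Your Hermitian-dilation reduction to Davis--Kahan is the standard route to Wedin's theorem and the sketch is sound. The block computations check out: the numerator identity $\|(\tilde{\M}-\tilde{\M}^*)\tilde{\Ustar}\|_F^2 = \tfrac{1}{2}(\|\Delta\Vstar^\top\|_F^2 + \|\Delta^\top\Ustar\|_F^2)$, the expansion $\|(\I-\tilde{\Ustar}\tilde{\Ustar}^\top)\tilde{\U}\|_F^2 = r - \tfrac{1}{4}\|A+B\|_F^2$, and the chain of inequalities are all correct, and the $\tfrac{1}{\sqrt2}$ normalizations do cancel against the $\sqrt2$ from the max-bound as you anticipate. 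Your eigengap worry is unfounded: every negative eigenvalue of $\tilde{\M}^*$ is $\le 0 \le \sigma_{r+1}^*$, so the complementary eigenvalue closest to $\sigma_r^*$ is indeed $\sigma_{r+1}^*$. The one place to be explicit is which Davis--Kahan variant you invoke: you need the one-sided Frobenius form whose denominator is the separation between the top-$r$ eigenvalues of $\tilde{\M}^*$ and the complementary eigenvalues of $\tilde{\M}$, then use Weyl's inequality ($\sigma_{r+1}(\M) \le \sigma_{r+1}^* + \|\Delta\|$) to convert that separation into $\sigma_r^* - \sigma_{r+1}^* - \|\Delta\|$.
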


\begin{theorem}[Fundamental theorem of calculus \cite{lan93}[Chapter XIII, Theorem 4.2], \cite{pr_mc_reuse_meas}]
For two vectors $\z_0, \zstar \in \Re^d$, and a differentiable vector function $g(\z) \in \Re^{d_2}$,
\[
g(\z_0) - g(\zstar) = \left( \int_{\tau=0}^1 \nabla g (\z(\tau) ) d\tau  \right) (\z_0 - \zstar),  
\]
\text{ where }
\[
 \z(\tau) = \zstar + \tau (\z_0 - \zstar).
\]
Observe that $\nabla_{\z} g(\z)$ is a $d_2 \times d$ matrix.
\label{funda_calc}
\end{theorem}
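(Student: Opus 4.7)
The plan is to reduce this multivariable, vector-valued identity to the one-variable fundamental theorem of calculus by restricting $g$ to the straight line segment from $\zstar$ to $\z_0$. First, I would define the auxiliary curve $h: [0,1] \to \Re^{d_2}$ by $h(\tau) := g(\z(\tau))$, where $\z(\tau) = \zstar + \tau(\z_0 - \zstar)$. By construction $h(0) = g(\zstar)$ and $h(1) = g(\z_0)$, so the left-hand side of the claimed identity becomes simply $h(1) - h(0)$. This reduces the problem to computing a one-dimensional increment of a vector-valued function of one real variable $\tau$.

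Next, using the differentiability hypothesis on $g$ and the multivariable chain rule, I would compute $h'(\tau) = \nabla g(\z(\tau))\,\z'(\tau)$. Since $\z(\tau)$ is affine in $\tau$, the derivative $\z'(\tau) = \z_0 - \zstar$ is independent of $\tau$, so $h'(\tau) = \nabla g(\z(\tau))\,(\z_0 - \zstar)$. Here $\nabla g(\z(\tau))$ is the $d_2 \times d$ Jacobian, applied to the fixed $d$-vector $\z_0 - \zstar$, producing a $d_2$-vector as required. Then I would apply the classical scalar fundamental theorem of calculus componentwise to each of the $d_2$ entries of the continuously differentiable map $h$, which gives $h(1) - h(0) = \int_{0}^{1} h'(\tau)\, d\tau$. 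Substituting the expression for $h'(\tau)$ yields
\[
g(\z_0) - g(\zstar) = \int_{0}^{1} \nabla g(\z(\tau))\,(\z_0 - \zstar)\, d\tau.
\]
Finally, because $(\z_0 - \zstar)$ does not depend on the integration variable $\tau$, linearity of the (entrywise) integral lets me pull it outside of the integral sign, producing the stated form in which the integrated Jacobian $\int_0^1 \nabla g(\z(\tau))\, d\tau$ plays the role of the ``averaged gradient'' matrix factor.

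The difficulty here is essentially notational rather than analytical: one must interpret the matrix-valued integral $\int_0^1 \nabla g(\z(\tau))\, d\tau$ entrywise and verify that all dimensions match so that the subsequent matrix-vector product with $(\z_0 - \zstar)$ is well defined. A mild technical point is that the scalar FTC requires each component of $h'$ to be integrable on $[0,1]$, which is guaranteed as soon as $\nabla g$ is continuous on the compact segment $\{\z(\tau) : \tau \in [0,1]\}$; this is implied by the differentiability assumption on $g$ in every setting where the result is invoked in the paper. No probabilistic, high-dimensional, or spectral argument is needed, which is why this identity is used as a purely algebraic tool to relate $\nabla_U f$ at the current iterate to $\nabla_U f$ at the pivot point via an averaged Hessian.
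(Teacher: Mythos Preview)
Your argument is correct and is the standard proof of this identity: restrict $g$ to the segment $\z(\tau)$, apply the chain rule to get $h'(\tau) = \nabla g(\z(\tau))(\z_0 - \zstar)$, integrate componentwise using the scalar fundamental theorem of calculus, and pull the constant vector $(\z_0 - \zstar)$ outside the integral by linearity.

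Note, however, that the paper does not actually supply a proof of this statement; it is quoted as a known tool from the cited references \cite{lan93} and \cite{pr_mc_reuse_meas} and then applied in the proof of Lemma~\ref{algebra}. So there is no ``paper's own proof'' to compare against, but what you wrote is exactly the argument one finds in those references and is fully adequate here.
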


\begin{definition}
For any $n\times r$ matrix $\Z$, let $\Zvec$ denote the $nr$ length vector formed by arranging all $r$ columns of $\Z$ one below the other.
Thus, for  $n$-length and $r$-length vectors $\a$ and $\b$,
\bi
\item $(\a \b^\top)_{vec} = \a \kron \b$ with $\kron$ being the Kronecker product;
\item $\a^\top \U \b = \trace(\a^\top \U \b) =  \trace(\b \a^\top \U) = \langle (\a \b^\top), \U \rangle =\langle \a \kron \b, \U_{vec} \rangle $;
\ei
 $f(\Uvec, \B) = \sum_\ik ((\a_\ik \kron \b_k)^\top \Uvec - \y_\ik)^2$ and
\bea
(\nabla_{\U}f(\U, \B) )_{vec} = \nabla_{\U_{vec}}f(\U_{vec}, \B)
\label{gradU_vecmat}
\eea
\end{definition}

\begin{definition}
At various places, $\nabla f(\U, \Bhat)$ is short for $\nabla_\U f(\U, \Bhat) =  \sum_\ik \a_\ik \bhat_k{}^\top(\a_\ik{}^\top \U \bhat_k - \y_\ik)$ and similarly $\nabla f(\Uvec,\Bhat)$ is short for $\nabla_\Uvec f(\Uvec,\Bhat) =  \sum_\ik  (\a_\ik \kron \b_k)  ((\a_\ik \kron \b_k)^\top \Uvec - \y_\ik)$.
\end{definition}

\begin{definition}
For any vector $\w$, we use $\w(k)$ to denote its $k$-th entry.
\end{definition}

\begin{definition}
Everywhere we use $\S_{nr}$ to denote both the set of matrices  $\{\W \in \Re^{n \times r}: \|\W\|_F = 1 \}$ and the set of these matrices vectorized $\{\w \in \Re^{nr}: \|\w\|=1 \}$. We also switch between the two sometimes. In the entire writing below, $\w = \W_{vec}$.
\end{definition}

All the high probability bounds for initialization use subGaussian Hoeffding inequality, while those for GD lemmas use the sub-exponential Bernstein inequality, both are from \cite{versh_book}.
In addition, these lemmas also use the following results to ``epsilon-net'' extend a bound holding for a fixed unit norm $\W$ (or $\w$) to all unit norm $\W$s (or $\w$s)

\begin{prop}[Epsilon-netting for bounding  $\max_{\w \in \S_n, \z \in \S_r} |\w^\top \M \z |$]
For an $n \times r$ matrix $\M$ and fixed vectors $\w, \z$ with, $\w \in \S_n$ and $\z \in \S_r$, suppose that $|\w^\top \M \z | \le b_0$ w.p. at least $1-p_0$.
Consider an $\eps_{net}$ net covering $\S_n $ and $\S_r$, $\bar\S_n$, $\bar\S_r$
Then w.p. at least $1 - (1+2/\eps_{net})^{n+r} p_0$,
\bi
\item $\max_{\w \in \bar\S_n, \z \in \bar\S_r} |\w^\top \M \z | \le b_0$ and
\item $\max_{\w \in \S_n, \z \in \S_r} |\w^\top \M \z | \le \frac{1}{1 - 2\eps_{net} - \eps_{net}^2} b_0$.
\ei
Using $\eps_{net}=1/8$, this implies the following simpler conclusion: \\
W.p. at least $1 - 17^{n+r} p_0= 1- \exp( (\log 17) (n+r) ) \cdot p_0$, $\max_{\w \in \S_n, \z \in \S_r} |\w^\top \M \z | \le 1.4 b_0$.
\label{epsnet_Mwz}
\end{prop}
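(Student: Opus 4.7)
\textbf{Proof plan for Proposition \ref{epsnet_Mwz}.}

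The plan is to use the standard covering-number bound for the unit sphere combined with a union bound for the first conclusion, and then a four-term discretization identity to pass from the net to all of $\S_n \times \S_r$ for the second conclusion.

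First, I would recall (e.g., from \cite{versh_book}) that for any $\eps_{net} \in (0,1)$, there exist $\eps_{net}$-nets $\bar\S_n \subset \S_n$ and $\bar\S_r \subset \S_r$ (in Euclidean distance) of cardinalities at most $(1 + 2/\eps_{net})^n$ and $(1 + 2/\eps_{net})^r$ respectively. The first conclusion is then immediate: since $|\w^\top \M \z| \le b_0$ fails for any fixed pair with probability at most $p_0$, a union bound over the at most $(1+2/\eps_{net})^{n+r}$ pairs $(\bar\w, \bar\z) \in \bar\S_n \times \bar\S_r$ gives the stated failure probability, and on the complementary event $\max_{\w \in \bar\S_n, \z \in \bar\S_r} |\w^\top \M \z| \le b_0$.

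Second, on this same high-probability event, I would extend the bound from the net to all of $\S_n \times \S_r$. Fix $\w \in \S_n, \z \in \S_r$ and choose $\bar\w \in \bar\S_n, \bar\z \in \bar\S_r$ with $\|\w - \bar\w\| \le \eps_{net}$ and $\|\z - \bar\z\| \le \eps_{net}$. Writing
\begin{align*}
\w^\top \M \z
&= \bar\w^\top \M \bar\z + (\w - \bar\w)^\top \M \bar\z \\
&\quad + \bar\w^\top \M (\z - \bar\z) + (\w - \bar\w)^\top \M (\z - \bar\z),
\end{align*}
and letting $L := \max_{\w \in \S_n, \z \in \S_r} |\w^\top \M \z|$, each of the last three terms can be bounded by pulling out the norms of $\w - \bar\w$ and/or $\z - \bar\z$ and noting that the remaining inner product is bounded by $L$. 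This yields $|\w^\top \M \z| \le b_0 + (2\eps_{net} + \eps_{net}^2) L$. Taking the supremum of the left-hand side over $\S_n \times \S_r$ gives $L \le b_0 + (2\eps_{net} + \eps_{net}^2) L$, and rearranging gives $L \le b_0 / (1 - 2\eps_{net} - \eps_{net}^2)$, the second stated bound.

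Finally, specializing to $\eps_{net} = 1/8$, the net cardinality becomes $(1 + 16)^{n+r} = 17^{n+r} = \exp((\log 17)(n+r))$, and the constant $1/(1 - 2\cdot \tfrac{1}{8} - \tfrac{1}{64}) = 64/47 < 1.4$, giving the simpler final conclusion. There is no real obstacle here — the only mild subtlety is remembering the four-term (rather than three-term) split so that one avoids needing a quantity like $\max_{\z \in \S_r, \bar\w \in \bar\S_n} |\bar\w^\top \M \z|$ which is neither the net-only quantity nor the full $L$; the four-term split cleanly expresses the error in terms of $L$ itself and closes up by rearrangement.
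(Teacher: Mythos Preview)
Your proposal is correct and follows exactly the approach the paper points to: the paper's proof simply says ``The proof follows that of Lemma 4.4.1 of \cite{versh_book},'' and your covering-number bound plus union bound plus four-term bilinear split is precisely that argument. The arithmetic for $\eps_{net}=1/8$ is also right.
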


\begin{proof} The proof follows that of Lemma 4.4.1 of \cite{versh_book} \end{proof}

\begin{prop}[Epsilon-netting for bounding  $\max_{\W \in \S_{nr}}  \langle \M, \W$]
For an $n \times r$ matrix $\M$ and a fixed $n \times r$ matrix  $\W \in \S_{nr}$ (unit Frobenius norm matrix), suppose that $\langle \M, \W \rangle \le b_0$ w.p. at least $1-p_0$.
Consider an $\eps_{net}$ net  covering $\S_{nr}$, $\bar\S_{nr}$.
Then w.p. at least $1 - (1+2/\eps_{net})^{nr} p_0$,
\bi
\item $\max_{\W \in \bar\S_{nr}} \langle \M, \W \rangle \le b_0$ and
\item $\max_{\W \in \S_{nr}}  \langle \M, \W \rangle \le \frac{1}{1 - \eps_{net}} b_0$.
\ei
Using $\eps_{net}=1/8$, this implies the following simpler conclusion: \\
w.p. at least $1 - 17^{nr} p_0 = 1- \exp( (\log 17) (nr) ) \cdot p_0$, $\max_{\W \in \S_{nr}}  \langle \M, \W \rangle \le 1.2 b_0$.
\label{epsnet_MW}
\end{prop}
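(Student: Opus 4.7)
The plan is to use a standard covering-and-approximation argument, essentially identical in spirit to the proof of the operator-norm version in Proposition~\ref{epsnet_Mwz}, but simpler because we only need to net one sphere. Under vectorization, the set $\S_{nr}$ is just the unit Euclidean sphere in $\mathbb{R}^{nr}$, and the trace inner product $\langle \M,\W\rangle$ is the standard Euclidean inner product on the vectorizations $\M_{vec},\W_{vec}$. So the whole thing reduces to the classical exercise of bounding a linear functional on a Euclidean unit sphere via an $\eps_{net}$-net.

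The first step is to invoke the standard covering-number estimate for the Euclidean unit sphere: there exists a net $\bar\S_{nr}\subset\S_{nr}$ with $|\bar\S_{nr}|\le (1+2/\eps_{net})^{nr}$ such that every $\W\in\S_{nr}$ admits a point $\W_0\in\bar\S_{nr}$ with $\|\W-\W_0\|_F\le\eps_{net}$. I would then apply the hypothesis at each fixed net point and take a union bound: the event that $\langle \M,\W_0\rangle\le b_0$ for every $\W_0\in\bar\S_{nr}$ holds with probability at least $1-(1+2/\eps_{net})^{nr}p_0$, which establishes the first bulleted conclusion.

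Conditional on this event, I would pass from the net to the whole sphere by a self-improving approximation step. Let $L:=\sup_{\W\in\S_{nr}}\langle \M,\W\rangle$, which equals $\|\M\|_F$ and is therefore almost surely finite. For any $\W\in\S_{nr}$, pick the nearest net point $\W_0\in\bar\S_{nr}$ and write
\[
\langle \M,\W\rangle=\langle \M,\W_0\rangle+\langle \M,\W-\W_0\rangle\le b_0+\|\M\|_F\,\eps_{net}\le b_0+L\,\eps_{net}.
\]
Taking the supremum over $\W\in\S_{nr}$ on the left gives $L\le b_0+L\,\eps_{net}$, and rearranging yields $L\le b_0/(1-\eps_{net})$, which is the second bulleted conclusion. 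Finally, specialising to $\eps_{net}=1/8$ gives $(1+2/\eps_{net})^{nr}=17^{nr}$ and $1/(1-\eps_{net})=8/7<1.2$, which matches the simpler stated form.

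The only real ``obstacle'' is the self-improvement trick in the last step: the naive bound $\langle\M,\W\rangle\le b_0+L\,\eps_{net}$ contains $L$ itself on the right, and one must resist the temptation to replace $\|\M\|_F$ with an a priori deterministic bound. Instead, taking the supremum on both sides turns this into the linear inequality $L\le b_0+L\,\eps_{net}$, which self-corrects to $L\le b_0/(1-\eps_{net})$ and costs only the single scalar factor $1/(1-\eps_{net})$. Everything else is bookkeeping, and no new probabilistic input beyond the given bound at a fixed $\W$ is required.
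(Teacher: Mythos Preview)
Your proposal is correct and is exactly the argument the paper has in mind: the paper's own proof simply says ``follows exactly as that of Exercise 4.4.3 of \cite{versh_book},'' and what you wrote is precisely that exercise's solution—net the unit sphere in $\mathbb{R}^{nr}$, union bound over the $(1+2/\eps_{net})^{nr}$ net points, then use the self-improving inequality $L\le b_0+L\,\eps_{net}$ to pass from the net to the whole sphere.
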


\begin{proof} The proof follows exactly as that of Exercise 4.4.3 of \cite{versh_book} \end{proof}

\begin{prop}[Epsilon-netting for upper and lower bounding $\sum_\ik \langle \M_\ik, \W \rangle^2$ over all $\W \in \S_{nr}$]
For an $n \times r$ matrices $\M_\ik$ and a fixed $\W \in \S_{nr}$, suppose that, w.p. at least $1-p_0$,
\[
b_1 \le \sum_\ik \langle \M_\ik, \W \rangle^2 \le b_2
\]
Consider an $\eps_{net}$ net  covering $\S_{nr}$, $\bar\S_{nr}$.
Then, w.p. at least $1 - (1+2/\eps_{net})^{nr} p_0$, 
\[
 \max_{\W \in \S_{nr}}  \sum_\ik \langle \M_\ik, \W \rangle^2 \le \frac{1}{1 - \eps_{net}^2 - 2 \eps_{net}} b_2  
\]
and
\[
 \min_{\W \in \S_{nr}}  \sum_\ik \langle \M_\ik, \W \rangle^2  \ge b_1 - 2 \eps_{net} \cdot \frac{1}{1 - \eps_{net}^2 - 2 \eps_{net}} b_2
\]
Picking $\eps_{net} = b_1/ (8 b_2)$ guarantees that the above lower bound is non-negative.
In particular, it implies the following:
\\
w.p. at least $1 - (24b_2/b_1)^{nr} p_0 = 1 - \exp(C nr \log(b_2/b_1) ) \cdot p_0$,
$
0.8 b_1 \le \min_{\W \in \S_{nr}}  \sum_\ik \langle \M_\ik, \W \rangle^2  \le \max_{\W \in \S_{nr}}  \sum_\ik \langle \M_\ik, \W \rangle^2 \le 1.4 b_2  
$
\label{epsnet_MWsquared}
\end{prop}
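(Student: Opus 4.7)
The plan is the usual two-step epsilon-net argument, with the mild wrinkle that both the upper and lower bounds need to be propagated simultaneously from the net to the full sphere. First I will vectorize: writing $\w = \W_{vec}$ and $H := \sum_\ik (\M_\ik)_{vec}(\M_\ik)_{vec}^\top$, one has $\sum_\ik \langle \M_\ik, \W\rangle^2 = \w^\top H \w$, which is a PSD quadratic form. In particular, $\max_{\W\in\S_{nr}} \w^\top H\w = \lambda_{\max}(H) =: M$, and this maximum is what the cross terms in the perturbation argument will have to be controlled by.

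Second, fix an $\eps_{net}$-net $\bar\S_{nr}$ of the unit sphere in $\R^{nr}$ (identified with $\S_{nr}$ under vectorization). A standard volumetric bound gives $|\bar\S_{nr}| \le (1+2/\eps_{net})^{nr}$. Applying a union bound to the hypothesis $\Pr[b_1 \le \w^\top H \w \le b_2] \ge 1-p_0$ over all $\w' \in \bar\S_{nr}$, we get: with probability at least $1 - (1+2/\eps_{net})^{nr} p_0$,
\[
b_1 \le \w'^\top H \w' \le b_2 \quad \text{for all } \w' \in \bar\S_{nr}.
\]

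Third, I will transfer these bounds to arbitrary $\w \in \S_{nr}$. Given such $\w$, pick $\w' \in \bar\S_{nr}$ with $\|\w - \w'\| \le \eps_{net}$, and write $\w = \w' + \delta$ with $\|\delta\| \le \eps_{net}$. Expanding the quadratic form,
\[
\w^\top H \w = \w'^\top H \w' + 2\w'^\top H \delta + \delta^\top H \delta.
\]
Since $H$ is PSD, $\delta^\top H \delta \le M \eps_{net}^2$ and $|2\w'^\top H \delta| \le 2 M \eps_{net}$ (by $|\w'^\top H \delta| \le \|H\| \|\w'\| \|\delta\|$). Hence $\w^\top H \w \le b_2 + (2\eps_{net} + \eps_{net}^2) M$; taking the supremum over $\w \in \S_{nr}$ of the left side yields $M \le b_2 + (2\eps_{net} + \eps_{net}^2) M$, which rearranges to the stated upper bound $M \le b_2/(1 - 2\eps_{net} - \eps_{net}^2)$. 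For the lower bound, drop the nonnegative $\delta^\top H \delta$ term and use the just-derived upper bound on $M$ to get $\w^\top H \w \ge b_1 - 2\eps_{net} M \ge b_1 - 2\eps_{net} \cdot b_2/(1 - 2\eps_{net} - \eps_{net}^2)$, which is exactly the claimed inequality.

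Finally, the simpler consequence follows by plugging in $\eps_{net} = b_1/(8 b_2)$: then $2\eps_{net} + \eps_{net}^2 \le 1/4 + 1/64$ so the denominator exceeds $0.7$, giving the upper bound $M \le 1.4 b_2$; for the lower bound, $2\eps_{net} \cdot (b_2/(1 - 2\eps_{net} - \eps_{net}^2)) \le 2(b_1/8b_2)(1.4 b_2) = 0.35 b_1$, which is comfortably less than $0.2 b_1$ after a slight tightening of the choice of $\eps_{net}$ (e.g. $\eps_{net} = b_1/(16 b_2)$ if one insists on the $0.8 b_1$ constant, inflating the covering factor only to order $(b_2/b_1)^{nr}$). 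The resulting covering factor $(1+2/\eps_{net})^{nr}$ is absorbed into the bound $(24 b_2/b_1)^{nr} = \exp(C nr \log(b_2/b_1))$. The only ``obstacle'' is the self-referential flavor of the upper-bound step ($M$ appears on both sides), but it is resolved by the standard trick of taking the sup on the left and rearranging; there is no real difficulty here.
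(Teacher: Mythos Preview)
Your proof is correct and follows essentially the same route as the paper's: union-bound over the net, expand the quadratic at an arbitrary point around its nearest net point, bound the cross and square terms by the global maximum $M$, then solve the resulting self-referential inequality $M \le b_2 + (2\eps_{net}+\eps_{net}^2)M$ and feed the upper bound back into the lower-bound expansion. Your explicit vectorization $\w^\top H\w$ with $H$ PSD is a clean packaging of the same computation the paper does directly with $\sum_\ik\langle\M_\ik,\W\rangle^2$, and you are right to flag that $\eps_{net}=b_1/(8b_2)$ yields only about $0.65\,b_1$ rather than $0.8\,b_1$; the paper's stated constant is loose here, and your fix (shrinking $\eps_{net}$ by a constant factor) is the correct remedy at no cost to the order of the covering bound.
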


\begin{proof}
By union bound, for all $\bar\W \in \bar\S_{nr}$,
$b_1 \le \sum_\ik \langle \M_\ik, \bar\W \rangle^2 \le b_2$ holds w.p. at least $1 - (1+2/\eps_{net})^{nr} p_0$.

Proof for the upper bound:
Let $\gamma^* =  \max_{\W \in \S_{nr}}  \sum_\ik \langle \M_\ik, \W \rangle^2$.
Writing $\W = \bar\W + (\W - \bar\W)$ where $\bar\W$ is the closest point to $\W$ on $\bar\S_{nr}$, we have
$
\sum_\ik \langle \M_\ik, \W \rangle^2 = \sum_\ik \langle \M_\ik, \bar\W \rangle^2 + \sum_\ik \langle \M_\ik, (\W - \bar\W) \rangle^2 + 2 \sum_\ik \langle \M_\ik, \bar\W \rangle \cdot \sum_\ik \langle \M_\ik, (\W - \bar\W) \rangle$ and
$\|(\W - \bar\W)\|_F \le \eps_{net}$.

Rewriting $(\W - \bar\W) = (\W - \bar\W) \cdot (\W - \bar\W) / \|(\W - \bar\W)\|_F$ and using the fact that  $(\W - \bar\W)/\|(\W - \bar\W)\|_F \in \S_{nr}$ and $\|(\W - \bar\W)\|_F \le \eps_{net}$ and using Cauchy-Schwarz for the third term in the above expression, we have
\[
\gamma^* \le b_2 + \eps_{net}^2 \gamma^* + 2 \sqrt{\gamma^* \cdot \eps_{net}^2 \gamma^*} = b_2 + \eps_{net}^2 \gamma^* + 2\eps_{net} \gamma^*
\]
Thus, $\gamma^* \le 1/(1 - \eps_{net}^2 - 2 \eps_{net}) \cdot b_2$.

Proof for the lower bound:
Let $\beta^* =  \min_{\W \in \S_{nr}}  \sum_\ik \langle \M_\ik, \W \rangle^2$. Proceeding as above, we have
\[
\beta^* \ge b_1 - 2 \sqrt{\gamma^* \cdot \eps_{net}^2 \gamma^*} = b_1 - 2\eps_{net} \gamma^*
\]
\end{proof}

\subsection{Proving GD iterations' lemmas:  Proof of Lemma \ref{algebra} (algebra lemma)}
	\label{algebra_proof}


Recall that $\Uvec$ denotes the vectorized $\U$.
 We use this so that we can apply the simple vector version of the fundamental theorem of calculus \cite[Chapter XIII, Theorem 4.2]{lan93},\cite[Lemma 2 proof]{pr_mc_reuse_meas} (given in Theorem \ref{funda_calc}) on the $nr$ length vector $\nabla f(\Uvec,\Bhat)$, and so that the Hessian can be expressed as an $nr \times nr$ matrix.


We apply Theorem \ref{funda_calc} with $\z_0 \equiv \Uvec$, $\zstar \equiv (\Ustar \Ustar^\top \U)_{vec}$, and $g(\z) = \nabla f(\z,\Bhat)$. Thus $d = d_2 = nr$ and $\nabla g(\z)$ is the Hessian of $f(\z,\Bhat)$ computed at $\z$.
Let $\U(\tau) := \Ustar \Ustar{}^\top \U + \tau (\U - \Ustar \Ustar{}^\top \U)$.
Applying the theorem,
\begin{align}
&\nabla f(\U_{vec},\Bhat) - \nabla f((\Ustar\Ustar{}^\top\U)_{vec},\Bhat) \nonumber\\
&= ( \int_{\tau=0}^1 \nabla_{\Uvec}^2 f(\U(\tau)_{vec}, \Bhat) d\tau ) ( \U_{vec} - (\Ustar\Ustar{}^\top \U)_{vec})
\label{funda_thm_app_0}
\end{align}
where
\begin{align}\label{Hess_compute}
\nabla_{\Uvec}^2 f(\U(\tau)_{vec}, \Bhat)  = \sum_\ik (\a_\ik \kron \bhat_k) (\a_\ik \kron \bhat_k)^\top : =  \ \mathrm{Hess} \
\end{align}
This is an $nr \times nr$ matrix. Because the cost function is quadratic, the Hessian is  constant w.r.t. $\tau$. Henceforth, we refer to it as $ \ \mathrm{Hess} \ $. With this, the above simplifies to
\begin{align}
&\nabla f(\U_{vec},\Bhat) - \nabla f((\Ustar\Ustar{}^\top\U)_{vec},\Bhat) \nonumber\\
&=  \ \mathrm{Hess} \  ( \U_{vec} - (\Ustar\Ustar{}^\top \U)_{vec}) =  \mathrm{Hess} \ (\P \U)_{vec}
\label{funda_thm_app}
\end{align}
with
\[
\P : = \I  - \Ustar \Ustar^\top
\]
denoting the $n \times n$ projection matrix to project orthogonal to $\Ustar$.
%
This proof is motivated by a similar approach used in \cite[Lemma 2 proof]{pr_mc_reuse_meas} to analyze GD for standard PR. However, there the application was much simpler because $f(.)$ was a function of one variable and at the true solution the gradient was zero, i.e., $\nabla f(\xstar) =  \bm{0}$. In our case $\nabla f(\Ustar \Ustar^\top \U, \Bhat) \neq  \bm{0}$ because $\Bhat \neq \tB$. But we can show that $\E[ (\I - \Ustar \Ustar^\top) \nabla f(\Ustar \Ustar^\top \U, \Bhat)] = \bm{0}$ and this helps us get the final desired result.

From Algorithm \ref{gdmin}, recall that $\Uhat^+ = \U - (\eta/m) \nabla f(\U,\Bhat)$. Vectorizing this equation, 
and using \eqref{funda_thm_app}, we get
		\begin{align}
		(\Uhat^+)_{vec}
		& = \U_{vec}  - (\eta/m) \nabla f(\U_{vec}, \Bhat) \nonumber  \\
		& = \U_{vec} - (\eta/m)  \ \mathrm{Hess} \     ( \P \U)_{vec}  \nonumber\\
		&\qquad- (\eta/m) \nabla f((\Ustar\Ustar{}^\top\U)_{vec},\Bhat) ) 
\label{Uplusvec_eq}
		\end{align}
We can prove our final result by using \eqref{gradU_vecmat} and the following simple facts:
\ben
\item  For an $n \times n$ matrix $\M$, let 
$
\mathrm{big}(\M) :=  \I_r \kron \M.
$
be an $nr \times nr$ block diagonal matrix with $\M$ in the diagonal blocks.
%
For any $n \times r$ matrix $\Z$, 
		\begin{align}
\label{MZ}
& \mathrm{big}(\M) \Zvec =		(\M \Z)_{vec} 
\end{align}

\item Since $\P$ is idempotent, $\P = \P^2$. Also, because of its block diagonal structure, $\mathrm{big}(\M^2)= (\mathrm{big}(\M))^2$.  Thus,
		\begin{align}
		\mathrm{big}(\P)& = \mathrm{big}(\P^2) = (\mathrm{big}(\P))^2  = \mathrm{big}(\P)) \I_{nr}  (\mathrm{big}(\P)
\label{P_props}
		\end{align}
\een

Left multiplying both sides of \eqref{Uplusvec_eq} by $\mathrm{big}(\P)$, and using \eqref{MZ}, \eqref{P_props}, and \eqref{gradU_vecmat}, 
		\begin{align*}
		& \mathrm{big}(\P) (\Uhat^+)_{vec}
		 = \mathrm{big}(\P) \U_{vec} -  (\eta/m) \mathrm{big}(\P)   \ \mathrm{Hess} \   (\P \U)_{vec}  \\
		&\qquad- (\eta/m)  \mathrm{big}(\P) \nabla f((\Ustar\Ustar{}^\top\U)_{vec},\Bhat) \\
		& = \mathrm{big}(\P)\I_{nr} \mathrm{big}(\P) \U_{vec} -  (\eta/m) \mathrm{big}(\P)   \ \mathrm{Hess} \     \mathrm{big}(\P) \U_{vec} \\
		&\qquad  - (\eta/m) \mathrm{big}(\P)  \nabla f((\Ustar\Ustar{}^\top\U)_{vec},\Bhat)  \\
		& = \mathrm{big}(\P) (\I_{nr}  - (\eta/m)   \ \mathrm{Hess})  \mathrm{big}(\P) \U_{vec}  \\
		&\qquad -  (\eta/m) \mathrm{big}(\P) \nabla f((\Ustar\Ustar{}^\top\U)_{vec},\Bhat).
		\end{align*}
		Thus, using $\|\mathrm{big}(\P)\| = \|\P\| = 1$, \eqref{MZ}, and \eqref{gradU_vecmat},
		\begin{align}
 \|  ( \P \Uhat^+)_{vec}\|
		& \le \|\I_{nr}  - (\eta/m)   \ \mathrm{Hess} \   \|  \ \|  (\P \U)_{vec}  \|  \nonumber\\
		&+ (\eta/m) \|  (  \nabla f((\Ustar\Ustar{}^\top\U),\Bhat))_{vec} \|
		\end{align}
Converting the vectors to matrices, using $||\M_{vec}|| = ||\M||_F$, and substituting for $\P$,
		\begin{align*}
		&\|(\I - \Ustar \Ustar{}^\top)\Uhat^+\|_F \\
		&\le  \|\I_{nr}  - (\eta/m)   \ \mathrm{Hess} \  \| \ \|(\I - \Ustar \Ustar{}^\top)\U\|_F\\
		&\qquad + (\eta/m) \|(\I - \Ustar \Ustar{}^\top) \nabla f((\Ustar\Ustar{}^\top\U),\Bhat) \|_F
		\end{align*}
		Since $\Uhat^+ \qreq \U^+ \R^+$ and since $\|\M_1 \M_2\|_F \le \|\M_1\|_F \|\M_2\|$, this means that
$$\SEF(\Ustar, \U^+) \le  \|(\I - \Ustar \Ustar{}^\top)\Uhat^+\|_F  \| (\R^+)^{-1}\|.$$
		Since $\| (\R^+)^{-1}\| = 1/\sigma_{\min}(\R^+) = 1/\sigma_{\min}(\Uhat^+)$, using $\Uhat^+ = \U   - (\eta/m) \nabla f(\U, \Bhat)$,
		\begin{align*}
		\| (\R^+)^{-1}\| &= \frac{1}{\sigma_{\min}(\U   - (\eta/m) \nabla f(\U, \Bhat))} \\
		&\le \frac{1}{1 - (\eta/m) \|\nabla f(\U, \Bhat)\|}
		\end{align*}
where we used $\sigma_{\min}(\U   - (\eta/m) \nabla f(\U, \Bhat)) \ge \sigma_{\min}(\U)   - (\eta/m) \|\nabla f(\U, \Bhat)\| = 1- (\eta/m) \|\nabla f(\U, \Bhat)\|$ for the last inequality.
Combining the last three equations above proves our lemma.

\subsection{Proof of GD iterations' lemmas: Proof of Lemma \ref{terms_bnds}} \label{terms_bnds_proof}  

\subsubsection{Upper and Lower bounding the Hessian eigenvalues and hence HessTerm}
	\label{subsec:hessian}
First assume the event that implies that the conclusions of Lemma \ref{B_lemma} hold.

	Recall from \eqref{Hess_compute} that 
$
	 \ \mathrm{Hess} \ :=	\nabla_{\tilde\U_{vec}}^2 f(\tilde\U_{vec}; \Bhat) =   \sum_\ik    (\a_\ik \otimes \bhat_{k}) (\a_\ik \otimes \bhat_k){}^\top.
$
Since $  \ \mathrm{Hess} \ $ is a positive semi-definite matrix,
		$
		 \lambda_{\min}\left(  \ \mathrm{Hess} \  \right) = \min_{\w\in\S_{nr}} \w{}^\top  \ \mathrm{Hess} \  \ \w $ and $\lambda_{\max}\left(  \ \mathrm{Hess} \  \right) = \max_{\w\in\S_{nr}} \w{}^\top   \ \mathrm{Hess} \  \ \w.
		$
For a fixed $\w\in\S_{nr}$,
		\[
		\w{}^\top   \ \mathrm{Hess} \  \ \w =  \sum_\ik   (\a_\ik{}^\top \W \bhat_k)^2
		\]	
where $\W$ is an $n \times r$ matrix with $\|\W\|_F = 1$.
		Clearly $(\a_\ik{}^\top \W \bhat_k)^2 $ are mutually independent sub-exponential random variables (r.v.) with sub-exponential norm $K_\ik\leq \|\W\bhat_{k}\|^2$. Also, $\E[ (\a_\ik{}^\top \W \bhat_k)^2] = \|\W \bhat_k\|^2$ and thus $\E[ \sum_\ik   (\a_\ik{}^\top \W \bhat_k)^2]= m\|\W\Bhat\|_F^2$.
Applying the sub-exponential Bernstein inequality, Theorem 2.8.1 of \cite{versh_book}, for a fixed $\W\in\S_{nr}$ yields
		\begin{align*}
		&\Pr\left\{ \Big|\sum_\ik \big|\a_\ik{}^\top\W\bhat_{k}\big|^2 - m\|\W\Bhat\|_F^2 \Big| \geq t \right\}\\
		&\qquad \leq \exp\left[-c\min\left( \frac{t^2}{\sum_\ik K_\ik^2},~\frac{t}{\max_\ik K_\ik} \right)\right].
		\end{align*}
		We set $t = \epsilon_3 m \sigmin^2$. By Lemma \ref{B_lemma}, $\|\bhat_{k}\|^2 \leq 1.1 \mu^2 \sigmax^2 (r/q)  = 1.1 \kappa^2 \mu^2 \sigmin^2 (r/q)$. Thus,
		\begin{align*}
		\frac{t^2}{\sum_\ik K_\ik^2} &\geq \frac{\epsilon_3^2m^2 \sigmin^4}{\sum_\ik \|\W\bhat_{k}\|^4  } \geq \frac{\epsilon_3^2 m \sigmin^4}{ \max_k \|\bhat_{k}\|^2 \sum_{k}\|\W\bhat_{k}\|^2  } \\
		&\geq \frac{\epsilon_3^2m \sigmin^4}{ \mu^2  \sigmax^2 (r/q) 1.1. \sigmax^2  } =  c\epsilon_3^2mq/ r \mu^2\kappa^4
		\end{align*}
Here we used $\sum_{k}\|\W\bhat_{k}\|^2 = \|\W \Bhat\|_F^2  \le \|\W\|_F \|\Bhat\|_2 \le 1.1. \sigmax$ using the bound on $\|\Bhat\|_2$ from Lemma \ref{B_lemma}.
Also,
		\begin{align*}
		\frac{t}{\max_\ik K_\ik} &\geq \frac{\epsilon_3 m \sigmin^2}{\max_\ik \|\W\bhat_{k}\|^2} \geq \frac{\epsilon_3 m \sigmin^2}{1.1 \mu^2 \sigmax^2 (r/q)} \\
		&=  c\epsilon_3 mq/r\mu^2\kappa^2.
		\end{align*}
Therefore, for a fixed $\W \in \S_{nr}$, w.p. $1-\exp\left[-c\epsilon_3^2 mq/r \mu^2\kappa^4 \right]$ we have
		\begin{align}\label{bnd_term}
		\Big|\sum_\ik \big|\a_\ik{}^\top\W\bhat_{k}\big|^2 - m\|\W\Bhat\|_F^2 \Big| \leq \epsilon_3 m \sigmin^2.
		\end{align}
and hence, by  Lemma \ref{B_lemma}, w.p. $1-\exp\left[-c\epsilon_3^2 mq/r \mu^2\kappa^4 \right]$,
		\begin{align}
\label{fixedW_upperbnd}
		&\sum_\ik \big|\a_\ik{}^\top\W\bhat_{k}\big|^2 \leq m\|\W\Bhat\|_F^2 + \epsilon_3m\sigmin^2 \nonumber \\
		&\qquad\leq m\|\Bhat\|^2 + \epsilon_3 m \sigmin^2
		\leq m(1.1 + \epsilon_3/\kappa^2) \sigmax^2.
		\end{align}
and
\begin{align}
\label{fixedW_lowerbnd}
		&\sum_\ik \big|\a_\ik{}^\top\W\bhat_{k}\big|^2  \ge  m\|\W\Bhat\|_F^2 - \epsilon_3m\sigmin^2 \nonumber \\
		&\qquad \ge  0.9 m \sigmin^2 + \epsilon_3 m\sigmin^2
		\geq m(0.9 - \epsilon_3) \sigmin^2.
		\end{align}

To extend these bounds to all $\W \in \S_{nr}$ we apply Proposition \ref{epsnet_MWsquared} with $b_1 \equiv m(0.9 - \epsilon_3) \sigmin^2$ and $b_2 \equiv  m(1.1 + \epsilon_3/\kappa^2) \sigmax^2$. Applying it we can conclude that, given the event that the claims of Lemma \ref{B_lemma} holds,
 w.p. at least $1 - \exp( nr \log \kappa  - c mq \epsilon_3^2 / r \mu^2 \kappa^4 )$,  
\begin{align*}
m (0.7 - 1.2 \epsilon_3) \sigmin^2 &\le \lambda_{\min}(  \ \mathrm{Hess} \  ) \\&\le \lambda_{\max}(  \ \mathrm{Hess} \  ) \le  m  (1.1 + \epsilon_3) \sigmax^2
\end{align*}
Using the probability from Lemma \ref{B_lemma},
the above bound holds  w.p. at least $1 - \exp( nr \log \kappa  - c mq \epsilon_3^2 / r \mu^2 \kappa^4 )  - \exp(\log q + r - c m)$.

\subsubsection{Bounding the GradU Term}
	\label{subsec:sigmin_R_U}
 We have
		$
		\|\nabla f(\U,\Bhat)\| = \max_{\z\in\S_n, \w\in\S_r} \z{}^\top \nabla f(\U,\Bhat)\w.
		$
		For a fixed $\z\in\S_n, \w\in\S_r$ we have
		\begin{align*}
		&\z{}^\top \left( \nabla f(\U,\Bhat) - \E[\nabla f(\U,\Bhat)] \right) \w \\
		&\qquad= \sum_\ik \left[ \left(\a_\ik{}^\top\U\bhat_{k}-\y_\ik \right)\left(\a_\ik{}^\top\z\right)\left(\w{}^\top\bhat_{k}\right) - \E[.] \right]
		\end{align*}
where $\E[.]$ is the expected value of the first term.
Clearly, the summands are independent sub-exponential r.v.s with norm $K_\ik \leq C \|\xhat_{k}-\xstar_{k}\|\|\bhat_{k}\|$. We apply the sub-exponential Bernstein inequality, Theorem 2.8.1 of \cite{versh_book}, with $t = \epsilon_1 \delta_t m \sigmax^2$. 
To apply this, we use bounds on $\|\bhat_{k}\|$,  $\|\Xstar - \Xhat\|_F$ and $\|\xhat_{k}-\xstar_{k}\|$ from  Lemma \ref{B_lemma} to show that
		\begin{align*}
		\frac{t^2}{\sum_\ik K_\ik^2} &\geq c\frac{\epsilon_1^2\delta_t^2 m^2 \sigmax^4}{ m \max_k \|\bhat_{k}\|^2  \sum_k\|\xhat_{k}-\xstar_{k}\|^2 } \\&\geq c\frac{\epsilon_1^2 \delta_t^2 m \sigmax^4}{C \mu^2 \sigmax^2 (r/q) \|\Xhat - \Xstar\|_F^2 } \\&\geq c \frac{\epsilon_1^2  \delta_t^2 mq \sigmax^4}{C \mu^2 \sigmax^2 r   \delta_t^2 \sigmax^2 } = c \epsilon_1^2 \frac{mq }{ r \mu^2}.
		\end{align*}
and
		\[
		\frac{t}{\max_\ik K_\ik} \geq c \frac{\epsilon_1\delta_tm\sigmax^2}{C\delta_t \sigmax^2 \mu^2 (r/q) } \ge  c \epsilon_1   \frac{mq }{ r \mu^2 }.
		\]
		Therefore, for a fixed $\z\in\S_n, \w\in\S_{r}$ w.p.  $1-\exp (-c\epsilon_1^2mq/r \mu^2 )$,
		\begin{align*}
		\z{}^\top\left( \nabla f(\U,\Bhat) - \E[\nabla f(\U,\Bhat)] \right)\w &\leq  \epsilon_1 \delta_t m \sigmax^2  
		\end{align*}
Since $\nabla f(\U,\Bhat) = \sum_\ik \a_\ik\a_\ik{}^\top(\xhat_{k}-\xstar_{k})\bhat_{k}{}^\top$,
\[
\E [\nabla f(\U,\Bhat) ] = m\sum_k (\xhat_{k}-\xstar_{k})\bhat_{k}{}^\top = m\left(\Xhat -\Xstar \right)\Bhat{}^\top.
\]
Using the bounds on  $\|\Xstar - \Xhat\|_F$ and  $\|\Bhat\|$ from Lemma \ref{B_lemma},
		\begin{align*}
		\|\E [\nabla f(\U,\Bhat)  ] \| &= m \| (\Xhat -\Xstar )\Bhat{}^\top\| \\&\leq m \|\Xhat -\Xstar\|~\|\Bhat\| \\& \le m \|\Xhat -\Xstar\|_F ~\|\Bhat\| \\& \leq 1.1 m\delta_t\sigmax^2
		\end{align*}
Hence, for a fixed $\z\in\S_n, \w\in\S_{r}$ w.p. $1-\exp\left[-c \epsilon_1^2  mq/r \mu^2 \right]$ we have
		\[
	| \z^\top \nabla f(\U,\Bhat) \w | \leq  (1.1 + \epsilon_1)  m\delta_t \sigmax^2.
		\]
Applying Proposition \ref{epsnet_Mwz}, this implies that, w.p. $1-\exp ((n+r) (\log 17) -c\epsilon_1^2mq/r \mu^2 )$, $ \max_{\z\in\S_n, \w\in\S_r} \z{}^\top \nabla f(\U,\Bhat)\w \le 1.4 (1.1 + \epsilon_1)  m\delta_t \sigmax^2.$

\subsubsection{Bounding Term2}
First, since $\mathrm{Term2} = (\I - \Ustar \Ustar{}^\top) \sum_\ik  \a_\ik (\a_\ik{}^\top\Ustar (\Ustar{}^\top\U\bhat_{k} - \tb_k) ) \bhat_{k}{}{}^\top$, and $\E[\a_\ik \a_\ik{}^\top] = \I$,
\[
\E[\mathrm{Term2}] = 0
\]
We have
		\begin{align*}
		&\|(\I - \Ustar \Ustar{}^\top) \nabla f((\Ustar\Ustar{}^\top\U),\Bhat) \|_F \\&\qquad=\max_{\W\in\S_{nr}} \langle (\I - \Ustar \Ustar{}^\top) \nabla f((\Ustar\Ustar{}^\top\U),\Bhat) ,~\W\rangle
		\end{align*}
For a fixed $n \times r$ matrix $\W$ with unit Frobenius norm,
		\begin{align*}
		&\langle (\I - \Ustar \Ustar{}^\top) \nabla f((\Ustar\Ustar{}^\top\U),\Bhat) ,~\W\rangle \\&\qquad= \sum_\ik  \left(\a_\ik{}^\top\Ustar (\Ustar{}^\top\U\bhat_{k} - \tb_k) \right)\left(\a_\ik{}^\top(\I - \Ustar \Ustar{}^\top)\W\bhat_{k}\right)
		\end{align*}
Observe that the summands are independent, zero mean, sub-exponential r.v.s with sub-exponential norm  $K_\ik\leq C \| \Ustar{}^\top\U\bhat_{k} - \tb_k \| \|(\I - \Ustar \Ustar{}^\top)\W\bhat_{k}\| \le
\| \Ustar{}^\top\U\bhat_{k} - \tb_k \| \| \W\bhat_{k}\|$.
We can now apply the sub-exponential Bernstein inequality Theorem 2.8.1 of \cite{versh_book}. 
Let $t=\epsilon_2\delta_t m \sigmax^2$.
Using the bound on $\| \Ustar{}^\top\U\bhat_{k} - \tb_k \|$ from Lemma \ref{B_lemma} followed by Assumption \ref{right_incoh} (right incoherence), and also the bound on $\|\Bhat\|$ from Lemma \ref{B_lemma}, 
		\begin{align*}
		\frac{t^2}{\sum_\ik K^2_\ik} &\geq \frac{\epsilon_2^2\delta_t^2m^2\sigmax^4}{ \delta_t^2 \sigmax^2 \mu^2 (r/q)   \sum_\ik  \|\W\bhat_{k}\|^2 } \\&\geq \frac{\epsilon_2^2 m^2 \sigmax^2}{C    \mu^2 (r/q)   m \|\W\Bhat\|_F^2  }\geq \frac{\epsilon_2^2m^2 \sigmax^2}{ \mu^2 (r/q)   m  \sigmax^2 } \\&\geq c \epsilon_2^2mq/ r\mu^2 ,
		\end{align*}
and
		\[
		\frac{t}{\max_\ik K_\ik}\geq  \frac{\epsilon_2\delta_tm\sigmax^2}{C\delta_t \kappa^2 \mu^2 \sigmax^2 (r/q)}\geq c \epsilon_2 mq/ (r  \kappa^2 \mu^2) .
		\]
		Thus, by the sub-exponential Bernstein inequality,
for a fixed $\W\in\S_{nr}$, w.p. $1-\exp (-c\epsilon_2^2mq/r \kappa^2 \mu^2 )$,
		\[
		\langle (\I - \Ustar \Ustar{}^\top) \nabla f((\Ustar\Ustar{}^\top\U),\Bhat) ,~\W\rangle \leq \epsilon_2 \delta_t m \sigmax^2.
		\]
Applying Proposition \ref{epsnet_MW}, w.p. at least $1-\exp ( nr-c\epsilon_2^2mq/r \kappa^2 \mu^2  )$, $\max_{\W\in\S_{nr}} \langle (\I - \Ustar \Ustar{}^\top) \nabla f((\Ustar\Ustar{}^\top\U),\Bhat), \W \rangle \le 1.2 \epsilon_2 \delta_t m \sigmax^2.$

\newcommand{\tempAk}{\A_k}
\newcommand{\temptopAk}{\A_k{}^\top}

\subsection{Proof of GD iterations' lemmas: Proof of Lemma \ref{B_lemma}, all parts other than the first part} \label{B_lemma_proof}
%
Recall that $\g_k = \U^\top \xstar_k = \U^\top \Ustar \tb_k$, and $\G = \U^\top \Ustar \tB$.

Using the $\SE$ bound and the first part, $\|\g_k - \bhat_k\| \le 0.4  \delta_t \|\tb_k\|$.

Since $\xstar_k  - \xhat_k = \U \g_k + (\I - \U \U^\top) \xstar_k  - \U \bhat_k =  \U (\g_k  - \bhat_k) + (\I - \U \U^\top) \xstar_k $, using \eqref{bhatk_bnd},
\begin{align*}
 \|\xstar_k - \xhat_k\| &  \le 	\|\g_k - \bhat_k\|   + \|(\I - \U \U^\top) \Ustar \tb_k\|
\le 1.4 \delta_t \|\tb_k\|.
\end{align*}

$\| \Ustar{}^\top\U\bhat_{k} - \tb_k \| = \| \Ustar\Ustar{}^\top\U\bhat_{k} - \Ustar\tb_k \|  = \|\U \bhat_k - (\I - \Ustar\Ustar{}^\top)\U\bhat_{k} - \Ustar\tb_k \| =  \|\xhat_k - (\I - \Ustar\Ustar{}^\top)\U\bhat_{k} - \xstar_k \|
\le \|\xhat_{k}-\xstar_{k}\| + \|(\I - \Ustar \Ustar{}^\top) \U \bhat_k\|  \le  2.4 \delta_t \|\tb_k\| $

Bounding $\|\G - \Bhat\|_F$ and $\|\Xstar - \Xhat\|_F$:
Since $\sum_k \|\M \tb_k\|^2 = \|\M \tB\|_F^2 \le \|\M\|_F^2 \|\tB\|^2 =\|\M\|_F^2  \sigmax^2$, we can use the first bound from \eqref{bhatk_bnd} to conclude that
	\begin{align*}
\|\G - \Bhat\|_F^2 & = \sum_k \|\g_k - \bhat_k\|^2  \\
&  \le 0.4^2 \sum_k \|  (\I - \U \U^\top) \Ustar \tb_k\|^2
\\&=  0.4^2  \|  (\I - \U \U^\top) \Ustar \tB \|_F^2
 \le   0.4^2 \delta_t^2 \sigmax^2
	\end{align*}
and, similarly, 
\begin{align*}
\|\Xstar - \Xhat\|_F^2 &\le  \sum_k \|\g_k - \bhat_k\|^2  +  \sum_k \|(\I - \U \U^\top) \Ustar \tb_k\|^2  \\&\le (0.4^2 +1^2)  \delta_t^2 \sigmax^2
\end{align*}

Incoherence of $\bhat_k$'s: Using the bound on $\|\bhat_k - \g_k\|$, and using $\|\g_k\| \le \|\tb_k\|$ and the right incoherence assumption,
\begin{align*}
\|\bhat_k\|  & = \| (\bhat_k - \g_k + \g_k) \| \le (1 + 0.4 \delta_t) \|\tb_k\| \le 1.04 \sigmax \sqrt{r/q}.
	\end{align*}

Lower and Upper Bounds on $\sigma_i(\Bhat)$): Using the bound on $\|\G - \Bhat\|_F$ and using $\SE(\U,\Ustar) \le \delta_t < c/\kappa$,
\begin{align*}
\sigma_{\min}(\Bhat) &\geq \sigma_{\min}(\G) - \|\G - \Bhat\| \\
		&	\geq \sigma_{\min}(\U^\top\Ustar) \sigma_{\min}( \tB ) - \|\G - \Bhat\|_F \\
		& \ge  \sqrt{1- \|\Ustar_\perp{}^\top \U\|^2} \sigmin  - 0.4 \delta_t \sigmax \\
		& \ge  \sqrt{1- \delta_t^2} \sigmin  -0.4 \delta_t \sigmax \ge 0.9 \sigmin
	\end{align*}
since we assumed $\delta_t \le \delta_0 < 0.1/ \kappa$.
Similarly,
\begin{align*}
\|\Bhat\|= \sigma_{\max}(\Bhat) &	\leq \sigma_{\max}(\U^\top\Ustar) \sigma_{\max}( \tB ) + \|\G - \Bhat\|_F \\
		& \le  \sigmax   + 0.4 \delta_t \sigmax \le 1.1 \sigmax
	\end{align*}

\subsection{Proof of GD iterations' lemmas: Proof of Lemma \ref{B_lemma}, first part} \label{proof_B_lemma_part1}
We bound $\|\g_k - \bhat_{k}\|$ here.
Recall that $\g_k = \U^\top \xstar_k$.
Since $\y_k = \tempAk\xstar_{k} = \tempAk\U\U{}^\top\xstar_{k} + \tempAk(\I-\U\U{}^\top)\xstar_{k}$, therefore
		\begin{align*}
		\bhat_{k}  &= \left(\U{}^\top\temptopAk\tempAk\U\right)^{-1}(\U{}^\top\temptopAk)\tempAk\U\U{}^\top\xstar_{k} \\&\qquad+ \left(\U{}^\top\temptopAk\tempAk\U\right)^{-1}(\U{}^\top\temptopAk) \tempAk(\I-\U\U{}^\top)\xstar_{k},\\
		&=\left(\U{}^\top\temptopAk\tempAk\U\right)^{-1}\left(\U{}^\top\temptopAk\tempAk\U\right)\U{}^\top\xstar_{k}  \\&\qquad+ \left(\U{}^\top\temptopAk\tempAk\U\right)^{-1}(\U{}^\top\temptopAk) \tempAk(\I-\U\U{}^\top)\xstar_{k},\\
		&=\g_k + \left(\U{}^\top\temptopAk\tempAk\U\right)^{-1}(\U{}^\top\temptopAk) \tempAk(\I-\U\U{}^\top)\xstar_{k}.
		\end{align*}
		Thus,
		\begin{align}
		\|\bhat_{k} - \g_k\| &\leq \|\left(\U{}^\top\temptopAk\tempAk\U\right)^{-1}\| \nonumber\\& \qquad\times ~\|\U{}^\top\temptopAk \tempAk(\I-\U\U{}^\top)\xstar_{k}\|.
\label{gk_bhatk_bnd}
		\end{align}
Using standard results from \cite{versh_book}, one can show the following:
\ben
\item
W.p. $\ge 1-q\exp\left(r-cm\right)$, for all $k\in[q]$, $\min_{\w \in \S_r} \sum_i \big|\a_\ik{}^\top\U\w\big|^2 \ge 0.7 m$ and so
\begin{align*}
\|\left(\U{}^\top\temptopAk\tempAk\U\right)^{-1}\| &= \frac{1}{\sigma_{\min}\left(\U{}^\top\temptopAk\tempAk\U \right)} \\&= \frac{1}{\min_{\w\in\S_{r}} \sum_i \langle \U^\top \a_\ik , \w \rangle^2 }\\& \le \frac{1}{0.7 m}
\end{align*}		
\item W.p. at least $1-q\exp(r-cm)$, $ \forall k\in[q]$,
\[
\|\U{}^\top\temptopAk \tempAk(\I-\U\U{}^\top)\xstar_{k}\| \leq    0.15 m \| (\I-\U\U{}^\top)\xstar_{k}\|
\]
\een
Combining the above two bounds and \eqref{gk_bhatk_bnd}, w.p. at least $1 - 2\exp(\log q + r - c m)$, $\forall k\in[q]$,
\[
\|\g_k - \bhat_{k}\|    \leq 0.4 \|\left(\I_n-\U\U^\top\right)\Ustar\tb_k\|. 
\]
This completes the proof. We explain next how to get the above two bounds.

The first bound above follows by a restatement of Theorem 4.6.1 of \cite{versh_book}. Or, it follows more directly by using $\E[\sum_i \big|\a_\ik{}^\top\U\w\big|^2] = m$, applying the sub-exponential Bernstein inequality \cite[Theorem 2.8.1]{vershynin} to bound the deviation from this mean,  and then applying Proposition \ref{epsnet_MWsquared} with $n \equiv 1, r \equiv r$ (epsilon net argument).

The second bound is obtained as follows. Notice that
\begin{align*}
	&\|\U{}^\top\temptopAk \tempAk(\I-\U\U{}^\top)\xstar_{k}\| \\
	&\qquad= \max_{\w\in\S_{r}} \w{}^\top\U{}^\top\temptopAk \tempAk(\I-\U\U{}^\top)\xstar_{k} \\&\qquad=  \max_{\w\in\S_{r}} \sum_i (\a_\ik{}^\top\U\w)(\a_\ik{}^\top(\I-\U\U{}^\top)\xstar_{k} )
\end{align*}
Clearly $\E\left[ \U{}^\top\temptopAk \tempAk(\I-\U\U{}^\top)\xstar_{k}\right] = \U{}^\top (\I-\U\U{}^\top)\xstar_{k} = 0$.
Moreover, the summands are products of sub-Gaussian r.v.s and are thus sub-exponential. Also, the different summands are mutually independent and zero mean. 
Applying  sub-exponential Bernstein with $t = \epsilon_0 m \|(\I-\U\U{}^\top)\xstar_{k}\|$ for a fixed $\w\in\S_{r}$,
$$ | \sum_i (\a_\ik{}^\top\U\w)(\a_\ik{}^\top(\I-\U\U{}^\top)\xstar_{k} ) | \leq \epsilon_0 m \| (\I-\U\U{}^\top)\xstar_{k}\|$$
w.p. at least $1-\exp (-c\epsilon_0^2m )$.
Setting $\epsilon_0 = 0.1$, this implies that the above is bounded by $0.1 m  \| (\I-\U\U{}^\top)\xstar_{k}\|$ w.p. at least $1-\exp (-c m)$.  By Proposition \ref{epsnet_MW} with $n \equiv 1, r \equiv r$, the above is bounded by $0.12 m \| (\I-\U\U{}^\top)\xstar_{k}\|$ for all $\w \in \S_r$ w.p. at least $1-\exp(r -c m)$. Using a union bound over all $q$ columns, the bound holds for all $q$ columns w.p.  at least $1- q \exp(r -c m)$.

\subsection{Proof of Initialization lemmas/facts: Proof of Lemma \ref{Wedinlemma}} \label{Wedinlemma_proof}

%

To see why \eqref{X0} holds, it suffices to show that $\E[(\Xhat_0)_k |\alpha] = \xstar_k \beta_k(\alpha)$ for each $k$. The easiest way to see this is to express $\xstar_k = \|\xstar_k\| \Q_k \e_1$ where $\Q_k$ is an $n \times n$ unitary matrix with first column $\xstar_k/\|\xstar_k\|$; and to use the fact that $\tilde\a_\ik:=\Q_k^\top \a_\ik$ has the same distribution as $\a_\ik$, both are $\n(0,\I_n)$.
Using $\Q_k \Q_k^\top = \I$, $(\Xhat_0)_k = (1/m) \sum_i \Q_k \Q_k^\top  \a_\ik  \a_\ik^\top \|\xstar_k\| \Q_k \e_1 \indic_{ \|\xstar_k\| |\a_\ik^\top  \Q_k \e_1 | \le \sqrt{\alpha} } = (1/m) \sum_i \Q_k \|\xstar_k\| \tilde\a_\ik \tilde\a_\ik (1) \indic_{  |\tilde\a_\ik(1)| \le \sqrt{\alpha}/\|\xstar_k\\|}$. Thus $\E[((\Xhat_0)_k] =(1/m) m \Q_k \|\xstar_k\| \e_1 \E[ \zeta^2 \indic_{|\zeta| < \sqrt{\alpha}/\|\xstar_k\\|} ]$.
This follows because $\E[\a \a(1) \indic_{ |\a(1)| < \beta } = \e_1 \E[\a(1)^2 \indic_{ |\a(1)| < \beta }]$.

Recall that $\tC = 9 \kappa^2 \mu^2$ and $\tc = c/\tC$ for a $c<1$.
Recall also that $\Xstar \svdeq \Ustar \bSigma \Bstar$ and  $\E[\Xhat_{0}|\alpha] \svdeq \Ustar \check\bSigma \Bcheck$. Thus, using \eqref{X0}, $\check\bSigma = \bSigma \Bstar \D \Bcheck{}^\top$. Hence,
\begin{align*}
\sigma_r(\E[\Xhat_{0}|\alpha]) &=  \sigma_{\min}(\check\bSigma) \\&=\sigma_{\min}(\bSigma \Bstar \D \Bcheck{}^\top) \\&\ge \sigma_{\min}(\bSigma)\sigma_{\min}(\Bstar)\sigma_{\min}(\D)\sigma_{\min}(\Bcheck{}^\top) \\&= \sigmin \cdot 1 \cdot (\min_k \beta_k (\alpha)) \cdot 1
\end{align*}
Also, $\sigma_{r+1}(\E[\Xhat_{0}]) = 0$ since it is a rank $r$ matrix. Thus, using Wedin's $\sin \Theta$ theorem for the Frobenius norm subspace distance $\SE$ \cite{wedin,spectral_init_review}[Theorem 2.3.1, second row] (specified in Theorem \ref{Wedin_sintheta} above) applied with $\M \equiv \Xhat_0$, $\M^* \equiv \E[\Xhat_{0}]$ we get \eqref{Wedin_main}.

\subsection{Proof of Initialization lemmas and facts: Proof of Lemma \ref{init_terms_bnd}} \label{init_terms_bnd_proof}


	\begin{proof}[Proof of first part of Lemma \ref{init_terms_bnd}]
The proof involves an application of the sub-Gaussian Hoeffding inequality, Theorem 2.6.2 of \cite{versh_book}, followed by an epsilon-net argument. The application of sub-Gaussian Hoeffding uses conditioning on $\alpha$ for $\alpha \in \ev$.
For $\alpha \in \ev$, $\alpha \leq  \sqrt{\tC(1+\epsilon_1)}\|\Xstar\|_F/\sqrt{q}$ and this helps get a simple probability bound. Since $\alpha$ is independent of all $\a_\ik, \y_\ik$'s used in defining $\Xhat_0$, the conditioning does not change anything else in our proof. For example, the different summands are mutually independent even conditioned on it.

We have,
		\[
		\|\Xhat_{0} -\E[\Xhat_{0}|\alpha]\| = \max_{\z\in\S_n, \w\in\S_q} \langle  \Xhat_{0} -\E[\Xhat_{0}|\alpha], ~\z\w{}^\top\rangle.
		\]
For a fixed $\z\in\S_n, \w\in\S_q$, we have
		\begin{align*}
		&\langle  \Xhat_{0} -\E[\Xhat_{0}|\alpha], ~\z\w{}^\top\rangle \\&\qquad= \frac{1}{m} \sum_\ik \w(k)\y_\ik(\a_\ik{}^\top\z)\indic_{ \{|\y_\ik|^2 \leq \alpha \} } \\&\qquad  - \E\left[\w(k)\y_\ik(\a_\ik{}^\top\z)\indic_{ \{|\y_\ik|^2 \leq \alpha \} } \right] .
		\end{align*}
The summands are mutually independent, zero mean sub-Gaussian r.v.s with sub-Gaussian norm $K_\ik \le C |\w(k)| \sqrt{\alpha} / m$. For $\alpha \in \ev$, $\alpha \leq  \sqrt{\tC(1+\epsilon_1)}\|\Xstar\|_F/m \sqrt{q}$.
Let $t=\epsilon_1 \|\Xstar\|_F$. Then, for any $\alpha \in \ev$,
		\[
		\frac{t^2}{\sum_\ik K_\ik^2} \geq \frac{\epsilon_1^2\|\Xstar\|_F^2}{\sum_\ik \tC(1+\epsilon_1) \w(k)^2 \|\Xstar\|_F^2/m^2 q} \geq \frac{\epsilon_1^2mq}{C\mu^2\kappa^2}
		\]
since $\sum_k \w(k)^2 = \|\w\|^2 = 1$.
		Thus, for a fixed $\z\in \S_n, \w\in \S_q$, by sub-Gaussian Hoeffding, we conclude that, conditioned on $\alpha$, for any $\alpha \in \ev$, w.p. at least $1-\exp\left[-c\epsilon_1^2mq/\mu^2\kappa^2\right]$,
		\[
		\langle  \Xhat_{0} -\E[\Xhat_{0}|\alpha], ~\z\w{}^\top\rangle \leq C \epsilon_1 \|\Xstar\|_F.
		\]
		The rest of the  proof follows by a standard epsilon net argument summarized in Proposition \ref{epsnet_Mwz}. Applying it, conditioned on $\alpha$, for any $\alpha \in \ev$,
 w.p. at least $1-\exp\left[ (n+q) -c\epsilon_1^2mq/\mu^2\kappa^2\right]$, $\max_{\z\in\S_n, \w\in\S_q} \langle  \Xhat_{0} -\E[\Xhat_{0}|\alpha], ~\z\w{}^\top\rangle \leq 1.4  C \epsilon_1 \|\Xstar\|_F.$
	\end{proof}

\begin{proof}[Proof of second part of Lemma \ref{init_terms_bnd}]
We have
\[
		\|\left(\Xhat_{0} - \E[\Xhat_{0}|\alpha]\right){}^\top\Ustar\|_F  =
\max_{\W\in\S_{qr}} \langle \W, \left(\Xhat_{0} - \E[\Xhat_{0}|\alpha]\right){}^\top\Ustar  \rangle
\]
For a fixed $\W \in \S_{qr}$,
		\begin{align*}
		&\langle \W, \left(\Xhat_{0} - \E[\Xhat_{0}|\alpha]\right){}^\top\Ustar  \rangle \\&\qquad= \trace\left(\W{}^\top \left(\Xhat_{0} - \E[\Xhat_{0}|\alpha]\right){}^\top\Ustar\right)
		 \\&\qquad= \frac{1}{m}\sum_\ik \left( \y_\ik(\a_\ik{}^\top\Ustar\w_k)\indic_{\left\{|\y_\ik|^2 \leq \alpha \right\} } - \E[.] \right)
		\end{align*}
Conditioned on $\alpha$, for an $\alpha \in \ev$, the summands are independent zero mean sub-Gaussian r.v.s with subGaussian norm $K_\ik \leq \sqrt{\alpha} \|\w_k\|/m \le \sqrt{\tC(1+\epsilon_1)}\|\Xstar\|_F\|\w_k\|/m\sqrt{q}$. Thus,
\[
\sum_\ik K_\ik^2 \le  m \tC(1+ \epsilon_1) \|\W\|_F^2 \|\Xstar\|_F^2 / m^2q = \tC   \|\Xstar\|_F^2 / mq
\]
Applying the sub-Gaussian Hoeffding inequality Theorem 2.6.2 of \cite{versh_book}, for a fixed $\W \in \S_{qr}$, conditioned on $\alpha$, for an $\alpha \in \ev$, w.p. $1-\exp\left[-\epsilon_1^2mq/C\mu^2\kappa^2\right]$,
		\[
		\trace\left(\W{}^\top\left(\Xhat_{0} - \E[\Xhat_{0}|\alpha]\right){}^\top\Ustar \right) \leq \epsilon_1 \|\Xstar\|_F.
		\]
The rest of the  proof follows by a standard epsilon net argument summarized in Proposition \ref{epsnet_MW}.
Applying Proposition \ref{epsnet_MW}, conditioned on $\alpha$, for an $\alpha \in \ev$, w.p. at least $1-\exp\left[qr -c \epsilon_1^2mq/ \mu^2\kappa^2\right]$, $\max_{\W \in \S_{qr}} \trace\left(\W{}^\top\left(\Xhat_{0} - \E[\Xhat_{0}|\alpha]\right){}^\top\Ustar \right) < 1.2 \epsilon_1 \|\Xstar\|_F$.
	\end{proof}

	\begin{proof}[Proof of third part of Lemma \ref{init_terms_bnd}]
We have
		\[
		\|\left(\Xhat_{0} - \E[\Xhat_{0}|\alpha]\right)\Bcheck{}^\top\|_F = \max_{\W\in\S_{nr}} \langle \left(\Xhat_{0} - \E[\Xhat_{0}|\alpha]\right)\Bcheck{}^\top,~\W\rangle.
		\]
For a fixed $\W\in\S_{nr}$ we have,
		\begin{align*}
		&\langle \left(\Xhat_{0} - \E[\Xhat_{0}|\alpha]\right)\Bcheck{}^\top,~\W\rangle \\&\qquad= \frac{1}{m}\sum_\ik \left( \y_\ik (\a_\ik{}^\top\W\bcheck_{k})\indic_{\left\{  |\y_\ik|^2 \leq \alpha \right\} } - \E[.]\right)
		\end{align*}
where $\E[.]$ is the expected value of the first term. 
Conditioned on $\alpha$, for an $\alpha \in \ev$, the summands are independent, zero mean, sub-Gaussian r.v.s with subGaussian norm   $K_\ik \leq C \sqrt\alpha \|\W\bcheck_k\|  \leq C \sqrt{\tC(1+\epsilon_1)}\|\Xstar\|_F\|\W\bcheck_k\|/m\sqrt{q}$. Thus, by applying the  sub-Gaussian Hoeffding inequality Theorem 2.6.2 of \cite{versh_book},
with $t=\epsilon_1  \|\Xstar\|_F$, and using $\|\W\Bcheck\|_F = 1$ (holds since $\Bcheck$ contains orthormal rows which are right singular vectors of $\E[\X_0|\alpha]$), conditioned on  $\alpha$, for an $\alpha \in \ev$,
we will get that,
		\[
		\frac{t^2}{\sum_\ik K_\ik^2} \geq \frac{m^2\epsilon_1^2 \|\Xstar\|_F^2}{\sum_\ik  \tC(1+\epsilon_1)\|\Xstar\|_F^2\|\W\bcheck_{k}\|^2/q } = \frac{mq\epsilon_1^2}{C\mu^2\kappa^2},
		\]
w.p. $1-\exp\left[- c\epsilon_1^2mq/(\mu^2\kappa^2)\right]$. Here we used the fact that $\Bcheck\Bcheck{}{}^\top=\I$ and thus $\|\W\Bcheck\|_F^2 = 1$.
		$$
		\langle \left(\Xhat_{0} - \E[\Xhat_{0}|\alpha]\right)\Bcheck{}^\top,~\W\rangle \leq C \epsilon_1 \|\Xstar\|_F.
		$$
Applying Proposition \ref{epsnet_MW}, conditioned on $\alpha$, for an $\alpha \in \ev$, w.p. at least $1-\exp\left[nr -c\epsilon_1^2mq/(\mu^2\kappa^2)\right]$, $\max_{\W \in \S_{nr}} \langle \left(\Xhat_{0} - \E[\Xhat_{0}|\alpha]\right)\Bcheck{}^\top,~\W\rangle \le 1.2 C \epsilon_1 \|\Xstar\|_F.$
	\end{proof}

\subsection{Proof of Initialization lemmas and facts: Proof of Facts}\label{facts_proof}

\begin{proof}[Proof of Fact \ref{sumyik_bnd}] Apply sub-exponential Bernstein.  \end{proof}

\begin{proof}[Proof of Fact \ref{betak_bnd}]
		Let $\gamma_k = \frac{\sqrt{\tC(1-\epsilon_1)}\|\Xstar\|_F}{\sqrt{q}\|\xstar_{k}\|}$. 
Since $\tC = 9\mu^2\kappa^2$ and $\|\xstar_{k}\|^2\leq \mu^2\kappa^2\|\Xstar\|_F^2/q$ (Assumption \ref{right_incoh}) thus
\[
\gamma_k \geq 3.
\]
Now,
		\begin{align*}
		 \E\left[ \zeta^2\indic_{\left\{ |\zeta| \leq \gamma_k \right\}} \right]
		=& 1 - \E\left[ \zeta^2\indic_{\left\{ |\zeta| \geq \gamma_k \right\}} \right]\\
		\ge &  1 - \frac{2}{\sqrt{2\pi}}\int_{3}^{\infty}z^2\exp(-z^2/2)dz \\
		\geq & 1 - \frac{2e^{-1/2}}{\sqrt{\pi}}\int_{3}^{\infty}z\exp(-z^2/4)dz
		\\&=  1 - \frac{2e^{-11/4}}{\sqrt{\pi}}  \geq 0.92.
		\end{align*}
		The first inequality used $\gamma_k \ge 3$. The second used the fact that $z\exp(-z^2/4) \leq \sqrt{2e}$ for all $z\in \Re$.
	\end{proof}

In all the proofs above, notice that the only thing we used about $\Bcheck$ is the fact that its rows contain singular vectors and thus $\Bcheck \Bcheck^\top = \I$ and so $\sigma_r(\Bcheck) = \sigma_1(\Bcheck) = 1$. We never required incoherence for it


\begin{algorithm}[t]
\caption{\small{The AltGD-Min-LRPR algorithm.  
}} 
\label{AltGD-Min_lrpr}
\begin{algorithmic}[1]
   \State {\bfseries Input:} $\ym_k, \A_k, k \in [q]$
 \State {\bfseries Parameters:}  GD step size, $\eta$; Number of iterations, $T$ 

\State {\bfseries Sample-split:} Partition the measurements and measurement matrices into $2T+1$ equal-sized disjoint sets: one set for initialization and $2T$ sets for the iterations. Denote these by ${\ym_k}^{(\tau)}, \A_k^{(\tau)}, \tau=0,1,\dots 2T$.

   \State {\bfseries Initialization:}
\State Compute $\U_0$ as the top $r$ singular vectors of $\Y_U : = \frac{1}{mq}\sum_\ik ( \ym_\ik)^2 \a_\ik \a_\ik^\top \indic_{ \left\{ (\ym_\ik)^2 \leq \tC \frac{1}{mq}\sum_\ik (\ym_\ik)^2 \right\}}$.
 with $\ym_\ik \equiv {\ym_\ik}^{(0)}, \a_\ik \equiv \a_\ik^{(0)}$.

\State {\bfseries GDmin Iterations:}

   \For{$t=1$ {\bfseries to} $T$}

   \State  Let $\U \leftarrow \U_{t-1}$.
\State {\bfseries Update $\bhat_k, \xhat_k$: } For each $k \in [q]$, set $(\bhat_k)_{t}  \leftarrow  \mathrm{RWF}( {\ym_k}^{(t)}, (\U^\top \A_k^{(t)}), T_{RWF,t})$. Set $(\xhat_k)_{t}    \leftarrow \U (\bhat_k)_{t}$

\State {\bfseries Estimate gradient w.r.t. $\U$: }  With $\ym_\ik \equiv {\ym_\ik}^{(T+t)},\a_\ik \equiv  \a_\ik^{(T+t)}$,
\bi
\item compute  $\hat\y_\ik:=\ym_\ik \hat \cb_\ik$ with $\hat\cb_\ik = phase(\a_\ik{}^\top \xhat_k)$ and
\item compute $\widehat{\mathrm{GradU}} =  \sum_\ik (\hat\y_\ik - \a_\ik{}^\top \xhat_k)_{t} ) \a_\ik (\bhat_k)_t{}^\top$
\ei

 \State Set $\ds \Uhat^+   \leftarrow \U - (\eta/m) \widehat{\mathrm{GradU}} $

 \State {\bfseries Orthornormalize to get new $\U$: }  Compute $\Uhat^+ \qreq \U^+ \R^+$.  Set $\U_t \leftarrow \U^+$.
     \EndFor
\end{algorithmic}
\end{algorithm}

\section{Extension to Low Rank Phase Retrieval (LRPR)} \label{algo_thm_proof_lrpr}
In LRPR, recall that, we measure $\ym_k = |\A_k \xstar_k|$. 
This problem commonly occurs in dynamic phaseless imaging applications such as Fourier ptychography.
Because of the magnitude-only measurements, we can recover each column only up to a global phase uncertainty.  We use $\dist(\xstar,\xhat): = \min_{\theta \in [-\pi, \pi]} \|\xstar - e^{-j \theta} \xhat\|$ to quantify this phase invariant distance \cite{pr_altmin,rwf}. Also, for a complex number, $z$, we use $\bar{z}$ to denote its conjugate and we use $phase(z):= z /|z|$.%

\subsection{AltGD-Min-LRPR algorithm} 
With three simple changes that we explain next, the AltGD-Min approach also solves LRPR and provides the fastest existing solution for it.
First, observe that because of the magnitude-only measurements, we cannot use $\Xhat_0$ with $\y_\ik$ replaced by $\ym_\ik$ for initialization. The reason is $\E[\a_\ik \ym_\ik] = 0$  and so $\E[ \a_\ik \ym_\ik \indic_{\ym_\ik \le \sqrt{\alpha}}] = 0$ too. In fact, because of this, it is not even possible to define a different matrix $\Xhat$ whose expected value can be shown to be close to $\Xstar$. Instead, we have to use the initialization approach of \cite{lrpr_it}. This is given in line 5 of Algorithm \ref{AltGD-Min_lrpr}. The matrix $\Y_U$ is such that its expected value is close to $\Xstar \Xstar{}^\top + c \I$. This fact is used to argue that its top $r$ singular vectors span a subspace that is close to that spanned by columns of $\Ustar$.

Next, consider the GDmin iterations. We use the following idea to deal with the magnitude-only measurements: $\ym_\ik:=|\y_\ik|$.  Let $\cb_\ik:= \mathrm{phase}(\a_\ik{}^\top \xstar_k)$. Then, clearly,
\[
\y_\ik = \cb_\ik \ym_\ik
\]
and $\ym_\ik = \bar{\cb}_\ik \y_\ik$.
We do not observe $\cb_\ik$, but we can estimate it using $\xhat_k$ which is an estimate of $\xstar_k$. Using the estimated phase, we can get an estimate $\hat\y_\ik$ of $\y_\ik$.
We replace $\nabla_\U f(\U,\B)$ by its estimate which uses $\hat\y_\ik= \ym_\ik \hat\cb_\ik$, with $\hat\cb_\ik = phase(\a_\ik{}^\top \xhat_k)$, to replace $\y_\ik$. See line 10 of Algorithm \ref{AltGD-Min_lrpr}.

Lastly, because of the magnitude-only measurements, the update step for updating $\bhat_k$s is no longer an LS problem. We now need to solve an $r$-dimensional standard PR problem: $\min_{\bhat} \|\ym_k - |\A_k \U \b| \|^2$.
This can be solved using any of the order-optimal algorithms for standard PR, e.g., Truncated Wirtinger Flow (TWF) \cite{twf} or Reshaped WF (RWF) \cite{rwf}. For concreteness, we assume that RWF is used. We should point out here that we only need to run $T_{RWF,t}$ iterations of RWF at outer loop iteration $t$, with $T_{RWF,t}$ set below in our theorem (we set this to ensure that the error level of this step is of order $\delta_t$).
The entire algorithm, AltGD-Min-LRPR, is summarized in Algorithm \ref{AltGD-Min_lrpr}.

\subsection{Main Result}
We can prove the following result with simple changes to the proof of Theorem \ref{gdmin_thm}.

\begin{theorem}
Consider Algorithm \ref{AltGD-Min_lrpr}. Set $\eta = c / \sigmax^2$, $ \tC = 9 \kappa^2 \mu^2$, $T = C \kappa^2 \log(1/\epsilon)$, and $T_{RWF,t} = C(t + c \log r)$.
Assume that Assumption \ref{right_incoh} holds.
If 
\[
mq \ge C \kappa^6 \mu^2  (n + q) r^2 (r + \log(1/\epsilon) \log \kappa )
\]
and  $m \ge C \max(\log q, \log n) \log(1/\epsilon)$,
then,
w.p. $1-  n^{-10}$,
$	\SEF(\Ustar,\U_T) \le \epsilon$, $\dist(  (\xhat_k)_T,\xstar_k) \le \epsilon \|\xstar_k\|$ for all $k \in [q]$, and  $\sum_k \dist^2(  (\xhat_k)_T,\xstar_k) \le \epsilon^2 \sigmax^2.$%
\label{gdmin_lrpr_thm}
\end{theorem}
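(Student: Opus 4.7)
The plan is to follow the same two-stage template used for Theorem \ref{gdmin_thm}: prove a spectral initialization bound giving $\SEF(\U_0,\Ustar)\le \delta_0 = c/\kappa^2$, then prove per-iteration geometric decay $\SEF(\U_{t+1},\Ustar)\le (1-c/\kappa^2)\SEF(\U_t,\Ustar)$. The choice $T=C\kappa^2\log(1/\epsilon)$ then yields $\SEF(\Ustar,\U_T)\le \epsilon$; the column-wise $\dist$ bound and the sum-of-squares bound follow from the RWF guarantee for the inner $\B$-update evaluated at the final $\U_T$.

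For initialization, I would use the spectral initialization analysis already established for $\Y_U$ in \cite{lrpr_it}: since $\E[\Y_U]$ (after truncation) is close to $\Xstar\Xstar{}^\top$ plus a scalar multiple of $\I$, its top-$r$ eigenspace coincides with $\Span(\Ustar)$. Applying Wedin's $\sin\Theta$ theorem to $\Y_U$ and $\E[\Y_U]$ and bounding the perturbation with sub-exponential Bernstein (the summands $(\ym_\ik)^2\a_\ik\a_\ik^\top$ are sub-exponential with norm of order $\mu^2\kappa^2 r/q\cdot\sigmax^2$) followed by an epsilon-net over $\S_{nr}$ yields $\SEF(\U_0,\Ustar)\le c/\kappa^2$ provided $mq\gtrsim \kappa^6\mu^2(n+q)r^3$. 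The extra factor of $r$ compared with Theorem \ref{init_thm} (LRcCS) is the price for losing the nice sub-Gaussian structure that truncation of the linear measurements gave us.

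For the descent step, I would first invoke the standard RWF guarantee \cite{rwf} on the $r$-dimensional PR problem $\min_\b \|\ym_k-|(\A_k\U)\b|\|^2$: warm-started at $(\bhat_k)_{t-1}$ (or at a good spectral initializer when $t=1$), RWF contracts geometrically at rate $(1-c)$, so $T_{\mathrm{RWF},t}=C(t+\log r)$ iterations suffice to guarantee $\dist((\bhat_k)_t,\U^\top\xstar_k)\le 0.4\|(\I-\U\U^\top)\Ustar\tb_k\|$, i.e.\ the exact analogue of Lemma \ref{B_lemma}\eqref{bhatk_bnd}, w.h.p.\ provided $m\gtrsim\max(r,\log q,\log n)$. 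All consequences listed in Lemma \ref{B_lemma} (incoherence of $\bhat_k$, bounds on $\sigma_{\min/\max}(\Bhat)$, column-wise and Frobenius error bounds) then transfer verbatim, with $\xhat_k$ replaced by $e^{j\theta_k}\xhat_k$ for the optimal per-column phase $\theta_k$.

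The main obstacle is the gradient step, since $\widehat{\mathrm{GradU}}$ uses estimated phases $\hat\cb_\ik=\mathrm{phase}(\a_\ik^\top\xhat_k)$ rather than the true (unobserved) $\cb_\ik=\mathrm{phase}(\a_\ik^\top\xstar_k)$. I would apply the fundamental-theorem-of-calculus decomposition of Lemma \ref{algebra} to the \emph{ideal} gradient $\mathrm{GradU}^{\mathrm{ideal}}:=\sum_\ik(\cb_\ik\ym_\ik - \a_\ik^\top\U\bhat_k)\a_\ik\bhat_k^\top$ (which is exactly $\nabla_\U f$ from the LRcCS proof since $\cb_\ik\ym_\ik=\y_\ik$ in that fictitious oracle setting), and then absorb the phase-mismatch error $\mathrm{PhaseErr}:=\sum_\ik(\hat\cb_\ik-\cb_\ik)\ym_\ik\a_\ik\bhat_k^\top$ as an additional additive term in the $\SEF(\U_{t+1},\Ustar)$ bound of Lemma \ref{algebra}. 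The key fact, used in \cite{lrpr_it,rwf}, is the indicator bound $|\hat\cb_\ik-\cb_\ik|\le 2\cdot\indic_{\{|\a_\ik^\top(\xhat_k-\xstar_k)|\ge |\a_\ik^\top\xstar_k|\}}$; combined with $\|\xhat_k-\xstar_k\|\le 1.4\delta_t\|\xstar_k\|$ from the $\B$-step analysis, Cauchy--Schwarz and standard sub-exponential concentration give $\|\mathrm{PhaseErr}\|\le \epsilon_4 m\delta_t\sigmax^2$ w.h.p.\ under the stated sample complexity. Substituting this additional term into the argument of Theorem \ref{iters_thm} (with $\epsilon_4$ small enough to preserve the $1-c/\kappa^2$ contraction) and propagating the failure probabilities over the $T$ iterations via a union bound completes the proof.
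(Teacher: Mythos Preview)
Your proposal follows essentially the same route as the paper: invoke the initialization guarantee from \cite{lrpr_it} (Lemma \ref{bounding_U_lrpr}), replace the LS step for $\B$ by an RWF guarantee (the paper cites \cite[Lemma 3.3]{lrpr_best} for the analogue of Lemma \ref{B_lemma}), and write $\widehat{\mathrm{GradU}}=\mathrm{GradU}+\mathrm{Err}$ with $\mathrm{Err}$ exactly your $\mathrm{PhaseErr}$, then rerun Lemma \ref{algebra} with $\|\mathrm{Err}\|_F$ inserted in both the numerator \emph{and} the denominator of the $\SEF$ bound. One detail you have slightly oversimplified: the paper's bound (Lemma \ref{Err_bnd}, from \cite[Lemma 4]{lrpr_best}) is $\|\mathrm{Err}\|_F\le Cm(\epsilon_2+\sqrt{\delta_t})\delta_t\sigmax^2$, and the $\sqrt{\delta_t}$ part is not driven to zero by increasing $mq$ but only by $\delta_t\le\delta_0=c/\kappa^2$ being small enough, so your ``$\epsilon_4$ small enough via concentration'' description is not quite accurate for that piece; otherwise the argument is the same.
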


We prove this result in Sec. \ref{lrpr_proof}.
Notice the $\log(1/\epsilon)$ in the sample complexity of Theorem \ref{gdmin_thm} is now replaced by $(r+\log(1/\epsilon))$. The reason is because of the different initialization approach which needs $nr^3$ samples instead of $nr^2$.
This is needed because PR is a more difficult problem: we cannot define a matrix $\Xhat_0$ for it for which $\E[\Xhat_0]$ is close to $\Xstar$. 

Observe that AltGD-Min-LRPR has the same sample complexity as that for the AltMin solution from \cite{lrpr_best}. But its time complexity is better by a factor of $\log(1/\eps)$ making it the fastest solution for LRPR. Also, we should mention here that, for solutions to the two related problems -- sparse PR (phaseless but global measurements) and LRMC (linear but non-global measurements) -- that have been extensively studied for nearly a decade,  the best sample complexity guarantees for iterative (and hence fast) algorithms are sub-optimal. The best sparse PR guarantee \cite{cai} requires $m$ to be of order $s^2$ for the initialization step. Here $s$ is the sparsity level. LRPR has both phaseless and non-global measurements. This is why its initialization step needs two extra factors of $r$ compared to the optimal. Once initialized close enough to the true solution, it is well known that a PR problem behaves like a linear one. This is true for AltGD-Min-LRPR too. 

\cred
Consider a comparison with use of a standard PR approach to recover each column of $\Xstar$ individually. If  TWF \cite{twf} or RWF \cite{rwf} were used for this, this would require $m \gtrsim n$. In comparison, ignoring log factors, our solution for LRPR needs $m \gtrsim (n/q) r^3$.  Thus, the use of altGD-min is a better idea when the rank, $r$, of the matrix $\Xstar$ is small enough so that $q \gtrsim r^3$.
\cbl 



\subsection{Proof of Theorem \ref{gdmin_lrpr_thm}} \label{lrpr_proof}

For the initialization, we use the bound from \cite{lrpr_it}.
\begin{lemma}[\cite{lrpr_it}]
	\label{bounding_U_lrpr}
Let $\SE_2(\U_0,\Ustar) = \|(\I - \Ustar \Ustar^\top) \U_0\|$. Pick a $\deltinit < 0.1$. Then, w.p. at least 
$	1 - 2\exp\left( n (\log 17)   -c \frac{\deltinit^2 mq}{ \kappa^4 r^2} \right)  - 2 \exp\left(-c \frac{\deltinit^2 mq}{ \kappa^4  \mu^2 r^2} \right)  $,
\[
	\SE_2(\U_0,\Ustar) \leq \deltinit \text{ and so }  \SEF(\U_0,\Ustar) \le \sqrt{r} \deltinit.
\]
\end{lemma}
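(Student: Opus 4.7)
\textbf{Proof proposal for Lemma \ref{bounding_U_lrpr}.}
The overall plan is to apply Wedin's $\sin\Theta$ theorem (Theorem \ref{Wedin_sintheta}) with $\M \equiv \Y_U$ and $\M^* \equiv \E[\Y_U \mid \alpha]$, where $\alpha := \tC \tfrac{1}{mq}\sum_\ik (\ym_\ik)^2$. To circumvent the statistical dependence between $\alpha$ and the summands of $\Y_U$, I would first adopt a sample-split version (as in Sec.\ \ref{init_proof}) in which $\alpha$ is computed from an independent copy of the measurements; the reuse-of-measurements version can be recovered later by the sandwiching argument sketched in Appendix \ref{init_reuse_proof}. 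The first step is to establish, via sub-exponential Bernstein, that $\alpha$ lies in a deterministic interval $\ev = [\tC(1-\eps_1)\|\Xstar\|_F^2/q,\ \tC(1+\eps_1)\|\Xstar\|_F^2/q]$ w.h.p., analogous to Fact \ref{sumyik_bnd}.

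For step (1), I would compute $\E[\Y_U \mid \alpha]$ conditioned on $\alpha \in \ev$. Using rotation invariance of $\a_\ik \sim \mathcal{N}(0,\I_n)$ as in the proof of Lemma \ref{Wedinlemma}, one shows $\E[(\ym_\ik)^2\a_\ik\a_\ik^\top \indic_{\{(\ym_\ik)^2\le\alpha\}}\mid\alpha] = \gamma_k(\alpha)\, \xstar_k \xstar_k^\top/\|\xstar_k\|^2 + \tau_k(\alpha)\, \I_n$ for explicit truncated Gaussian moments $\gamma_k,\tau_k$. Summing over $i,k$ gives
\[
\E[\Y_U \mid \alpha] = \Ustar \bSigma \bigl(\Bstar D_1(\alpha) \Bstar{}^\top\bigr)\bSigma \Ustar{}^\top + \tau(\alpha)\, \I_n,
\]
for a diagonal $D_1$ and scalar $\tau$. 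The top $r$ eigenvectors of this matrix therefore span exactly $\Span(\Ustar)$, and the eigen-gap between the $r$-th and $(r+1)$-th eigenvalues is at least $c\, \sigmin^2/q$ (using the right incoherence of $\Bstar$ and the fact that $\tC$ is chosen so that $\gamma_k$ are bounded away from zero, as in Fact \ref{betak_bnd}).

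For step (2), I need $\|\Y_U - \E[\Y_U\mid\alpha]\|$. Writing it as $\max_{\w\in\S_n} \w^\top(\Y_U - \E[\Y_U\mid\alpha])\w$ and fixing $\w$, the summands $(\ym_\ik)^2(\a_\ik^\top\w)^2 \indic_{\{\cdot\}} - \E[\cdot]$ are mutually independent (conditional on $\alpha\in\ev$), zero mean, and sub-exponential with norm bounded by $K_\ik \lesssim \alpha/(mq) \lesssim \kappa^2\mu^2 \sigmax^2/(mq^2)$ thanks to truncation. Applying sub-exponential Bernstein with $t = \eps_2 \sigmax^2/q$ yields $|\w^\top(\Y_U - \E[\Y_U\mid\alpha])\w|\le \eps_2 \sigmax^2/q$ w.p. $\ge 1 - 2\exp(-c\eps_2^2 mq/\kappa^4\mu^2)$ for fixed $\w$. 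An $\eps_{net}=1/8$ covering of $\S_n$ of cardinality $17^n$ (Proposition \ref{epsnet_Mwz} specialized to the quadratic form) extends this to all $\w$, losing a factor $\exp(n\log 17)$. Combining with the eigen-gap $\sigmin^2/q$ and Wedin gives $\SE_2(\U_0,\Ustar) \lesssim \eps_2 \kappa^2$; choosing $\eps_2 \asymp \deltinit/\kappa^2$ and requiring $mq \gtrsim \kappa^4\mu^2 r^2 (n)/\deltinit^2$ (the factor $r^2$ entering because the effective signal strength per column scales like $\mu^2 r\sigmax^2/q$, which inflates the sub-exponential norm accordingly) delivers the stated probability bound. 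Finally, $\SEF \le \sqrt{r}\,\SE_2$ is the standard norm comparison.

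The main obstacle is getting the eigen-gap constant right: one must verify that for the chosen $\tC = 9\kappa^2\mu^2$, the truncated moments $\gamma_k(\alpha)$ remain uniformly bounded below (so that the signal part dominates by at least $c\sigmin^2/q$ beyond the $\tau(\alpha)\I$ background). This in turn forces the truncation threshold to be large enough that the sub-exponential norm of each summand is not too small a fraction of its mean, which is what produces the $\kappa^4 r^2 \mu^2$ factor in the sample-complexity exponent — worse than the LRcCS initialization because here the signal enters only through the rank-one outer products $\xstar_k\xstar_k^\top$ rather than linearly in $\xstar_k$.
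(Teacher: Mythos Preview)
The paper does not prove Lemma \ref{bounding_U_lrpr}; it is imported verbatim from \cite{lrpr_it} and simply invoked in Sec.\ \ref{lrpr_proof}. So there is no in-paper proof to compare your proposal against.

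That said, your sketch is a faithful reconstruction of the argument one expects (and that \cite{lrpr_it} carries out): identify $\E[\Y_U\mid\alpha]$ as a rank-$r$ perturbation of a multiple of the identity with top eigenspace $\Span(\Ustar)$ and eigen-gap of order $\sigmin^2/q$; then control $\|\Y_U - \E[\Y_U\mid\alpha]\|$ by sub-exponential Bernstein on the truncated summands plus an $\eps$-net over $\S_n$. Two small points. First, $\Y_U$ is symmetric, so Davis--Kahan is the natural perturbation tool rather than Wedin (Theorem \ref{Wedin_sintheta}); this is cosmetic since Wedin reduces to Davis--Kahan in the symmetric case. Second, your bookkeeping for the $r^2$ factor is a bit loose: one factor of $r$ enters the sub-exponential norm via $\alpha \asymp \tC\|\Xstar\|_F^2/q \le \kappa^2\mu^2 r\sigmax^2/q$, and the second enters when you set $t$ proportional to the eigen-gap $\sigmin^2/q$ and convert $\sigmax^2/\sigmin^2 = \kappa^2$; tracking both gives the exponent $c\,\deltinit^2 mq/(\kappa^4\mu^2 r^2)$ in the stated probability. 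The term without $\mu^2$ in the lemma's probability bound corresponds to the event $\alpha\in\ev$ (Fact \ref{sumyik_bnd} analogue), which does not use incoherence.
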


For the iterations, without loss of generality, as also done in past works on PR, e.g., \cite{pr_altmin,twf,rwf,lrpr_best}, to make things simpler, we assume that, for each $k$, $\xstar_k$ is replaced by $\bar{z} \xstar_k$ where $z = \mathrm{phase}(\langle \xstar_k, \xhat_k \rangle)$. 
With this,  $\dist(\xstar_k, \xhat_k) = \|\xstar_k - \xhat_k\|$.

We modify Lemma \ref{algebra} using the following idea. Let $\U  = \U_t$ and $\B = \B_t$. For LRPR, the GD step uses an approximate gradient w.r.t. the old cost function $f(\U,\B)$. Let
\begin{align*}
\mathrm{Err}: = \widehat{\mathrm{GradU}} - \mathrm{GradU}.  
\end{align*}
Here $\widehat{\mathrm{GradU}} = \sum_\ik (\hat\y_\ik - \a_\ik{}^\top \xhat_k) ) \a_\ik \bhat_k{}^\top$ and  $\mathrm{GradU} = \nabla_\U f(\U,\B) = \sum_\ik (\y_\ik - \a_\ik{}^\top \xhat_k) ) \a_\ik \bhat_k{}^\top$ is the same as earlier. Thus,%
\begin{align*}
\mathrm{Err} 
& =  \sum_\ik (\hat\y_\ik - \y_\ik)  \a_\ik \bhat_k{}^\top \\
& = \sum_\ik (\hat\cb_\ik - \cb_\ik ) |\a_\ik^\top \xstar_k| \a_\ik \bhat_k{}^\top \\
& = \sum_\ik (\hat\cb_\ik \bar{\cb}_\ik - 1 ) (\a_\ik^\top \xstar_k) \a_\ik \bhat_k{}^\top
\end{align*}

Proceeding as in the proof of Lemma \ref{algebra}, and using  $\| (\I - \Ustar \Ustar^\top) \mathrm{Err}\|_F \le \|\mathrm{Err}\|_F$ and $\|\mathrm{Err}\| \le \|\mathrm{Err}\|_F$,  we can conclude the following
{\footnotesize
	\begin{align*}
& \SE(\Ustar, \U^+) \le  \\
&  \frac{ \|(I - (\eta/m) \mathrm{Hess}\| \cdot \SE(\Ustar,\U) + (\eta/m) \|\mathrm{Term2}\|_F + (\eta/m)\|\mathrm{Err}\|_F}{1 - (\eta/m) \|\mathrm{GradU}\| - (\eta/m) \|\mathrm{Err}\|_F}
\end{align*}
}
where the expressions for $\mathrm{GradU}, \mathrm{Term2}, \mathrm{Hess}$ are the same as before with one change: $\bhat_k$ is now obtained by solving a noisy $r$-dimensional PR problem (instead of a LS problem) using RWF \cite{rwf}. Thus, to complete the proof, (i) we need to bound
\[
\|\mathrm{Err}\|_F = \max_{\W \in \S_{nr} } \sum_\ik (\hat\cb_\ik \bar{\cb}_\ik - 1 ) (\a_\ik^\top \xstar_k)  (\a_\ik^\top \W  \bhat_k)
\]
and (ii) we need  bounds on the three other terms that were also bounded earlier for the linear case.

The  term $\|\mathrm{Err}\|_F$, is bounded in  Lemma 4 of \cite{lrpr_best} . We repeat the lemma below.
\newcommand{\deltatfrob}{\delta_t}
\begin{lemma}
Assume that $\SEF(\U_t, \Ustar) \le \deltatfrob$ with $\deltatfrob  < c/\kappa^2$.
Then,
 w.p. at least  $1-  2 \exp\left( nr \log(17)   - c \frac{m q\epsilon_2^2}{ \mu^2 \kappa  r} \right) -   \exp(\log q + r - c m )$,
\[
\|\mathrm{Err}\|_F  \le   C m ( \epsilon_2  +   \sqrt{\deltatfrob} )  \deltatfrob \sigmax^2
\]
\label{Err_bnd}
\end{lemma}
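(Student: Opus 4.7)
The plan is to bound $\|\mathrm{Err}\|_F = \max_{\W \in \S_{nr}} \langle \mathrm{Err}, \W\rangle$ by conditioning on $(\xhat_k, \bhat_k)$, which are independent of the $\A_k$ used in this step thanks to sample-splitting, and then decomposing into a conditional-mean piece plus a concentration piece. First, rewrite
\[
\mathrm{Err} = \sum_{i,k} (\hat c_\ik - c_\ik)\, |\a_\ik{}^\top \xstar_k|\, \a_\ik \bhat_k{}^\top,
\]
where $c_\ik := \mathrm{phase}(\a_\ik{}^\top \xstar_k)$ and $\hat c_\ik := \mathrm{phase}(\a_\ik{}^\top \xhat_k)$. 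The key PR dichotomy is that whenever $\hat c_\ik \ne c_\ik$ one has $|\a_\ik{}^\top \xstar_k| \le |\a_\ik{}^\top (\xhat_k - \xstar_k)|$, so each scalar summand of $\langle \mathrm{Err}, \W\rangle$ is dominated in absolute value by $2|\a_\ik{}^\top (\xhat_k - \xstar_k)|\,|\a_\ik{}^\top \W \bhat_k|\, \indic_{\hat c_\ik \ne c_\ik}$. Conditioning on $(\xhat_k, \bhat_k)$ I may treat the column-wise guarantees of Lemma \ref{B_lemma} as deterministic inputs: $\|\xhat_k - \xstar_k\| \le 1.4 \delta_t \|\xstar_k\|$, $\|\bhat_k\| \le 1.1 \mu \sigmax \sqrt{r/q}$, $\|\Xhat - \Xstar\|_F \le C \delta_t \sigmax$, and $\|\Bhat\| \le 1.1 \sigmax$. (These hold provided RWF is run for $T_{RWF,t}$ iterations so that the min-step meets the same column-wise accuracy as in the linear case.)

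For the concentration piece, fix $\W \in \S_{nr}$. The summands of $\langle \mathrm{Err}, \W\rangle - \E[\langle \mathrm{Err}, \W\rangle \mid \xhat, \bhat]$ are mutually independent mean-zero sub-exponentials with norm $K_\ik \le C \|\xhat_k - \xstar_k\|\, \|\W \bhat_k\|$. The bounds above yield
\[
\sum_{i,k} K_\ik^2 \le C m \max_k \|\W \bhat_k\|^2 \sum_k \|\xhat_k - \xstar_k\|^2 \le C m \mu^2 (r/q) \sigmax^2 \cdot \delta_t^2 \sigmax^2
\]
and $\max_\ik K_\ik \le C \delta_t \mu^2 \sigmax^2 (r/q)$. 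The sub-exponential Bernstein inequality with $t = \epsilon_2 m \delta_t \sigmax^2$ then produces a deviation at most $\epsilon_2 m \delta_t \sigmax^2$ with failure probability $\exp(-c \epsilon_2^2 mq/(\mu^2 r))$, and Proposition \ref{epsnet_MW} extends this to all $\W \in \S_{nr}$ at a cost of a factor $\exp(nr \log 17)$.

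The hard part will be producing the extra $\sqrt{\delta_t}$ factor in the bias $\|\E[\mathrm{Err} \mid \xhat, \bhat]\|_F$, which is what distinguishes this bound from the $\mathrm{Term2}$ bound of Lemma \ref{terms_bnds}. I plan to exploit the rotation-angle estimate $\Pr(\hat c_\ik \ne c_\ik \mid \xhat_k) \le C \|\xhat_k - \xstar_k\| / \|\xstar_k\| \le C \delta_t$ (the standard sign-flip probability for a Gaussian measurement) together with Cauchy--Schwarz and the identity $\E[(\a^\top u)^2 (\a^\top v)^2] \le 3 \|u\|^2 \|v\|^2$, to obtain per summand
\[
\bigl|\E[(\hat c_\ik - c_\ik) |\a_\ik{}^\top \xstar_k| (\a_\ik{}^\top \W \bhat_k)]\bigr| \le 2 \sqrt{3 \|\xhat_k - \xstar_k\|^2 \|\W \bhat_k\|^2 \cdot C \delta_t}.
\]
Summing over $k$ via Cauchy--Schwarz then gives $\sum_k \|\xhat_k - \xstar_k\|\,\|\W \bhat_k\| \le \|\Xhat - \Xstar\|_F \|\Bhat\| \le C \delta_t \sigmax^2$, and the sum over $i$ contributes a factor of $m$, yielding $\|\E[\mathrm{Err} \mid \xhat, \bhat]\|_F \le C m \sqrt{\delta_t}\, \delta_t \sigmax^2$. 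Combining this bias bound with the concentration bound produces the stated $C m (\epsilon_2 + \sqrt{\delta_t}) \delta_t \sigmax^2$ inequality; the stated failure probability arises from a union bound that also accounts for the $\exp(\log q + r - cm)$ event on which Lemma \ref{B_lemma} (and hence the deterministic inputs used in the conditioning step) holds.
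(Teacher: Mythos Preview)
The paper does not prove this lemma itself; it simply quotes it as Lemma 4 of \cite{lrpr_best} and restates it. So there is no ``paper's own proof'' to compare against here.

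Your proposal is sound and lines up with the standard argument used for this kind of phase-error term in the PR literature. The two key ingredients you invoke are exactly the right ones: (i) the almost-sure domination $|(\hat c_\ik - c_\ik)\,(\a_\ik^\top \xstar_k)| \le 2\,|\a_\ik^\top(\xhat_k-\xstar_k)|$ (valid in the real sign-flip setting, which is the relevant one after the WLOG phase alignment the paper performs), which gives the sub-exponential norm $K_\ik \le C\|\xhat_k - \xstar_k\|\,\|\W\bhat_k\|$ for the concentration piece; and (ii) the sign-mismatch probability bound $\Pr(\hat c_\ik \ne c_\ik \mid \xhat_k) \le C\|\xhat_k-\xstar_k\|/\|\xstar_k\| \le C\delta_t$, which together with Cauchy--Schwarz on the indicator and the fourth-moment identity yields the extra $\sqrt{\delta_t}$ in the bias. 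Summing the per-summand bias bound via $\sum_k \|\xhat_k-\xstar_k\|\,\|\W\bhat_k\| \le \|\Xhat-\Xstar\|_F\,\|\W\Bhat\|_F \le C\delta_t\sigmax^2$ and adding the Bernstein deviation gives $Cm(\epsilon_2 + \sqrt{\delta_t})\delta_t\sigmax^2$, as stated.

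Two small remarks. First, your Bernstein computation yields failure probability $\exp(nr\log 17 - c\,\epsilon_2^2 mq /(\mu^2 r))$, i.e., without the $\kappa$ that appears in the lemma's stated exponent; this is harmless (your bound is tighter) and likely reflects slightly different bookkeeping in \cite{lrpr_best}. Second, your parenthetical about RWF accuracy is on point: the paper explicitly says that the statement of Lemma~\ref{B_lemma} carries over to the LRPR setting (with $\|\cdot\|$ replaced by $\dist$), so the column-wise inputs you condition on are indeed available on the complement of the $\exp(\log q + r - cm)$ event.
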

%
Consider the other three terms: $\mathrm{GradU}, \mathrm{Term2}, \mathrm{Hess}$. These  were bounded in Lemma \ref{terms_bnds} for the linear case. The statement and proof of this lemma remain the same as earlier because of the following reason. Its proof uses the bounds on $\bhat_k$, $\xhat_k$ from Lemma \ref{B_lemma}. The statement of this lemma also remains the same with one change: we replace $\|\xstar - \x\|$ by  $\dist(\xstar, \x)$ and $\|\X^* - \X\|_F^2$ by $\sum_{k=1}^q \dist^2(\xstar_k,\x_k)$, and the same for $\tb_k, \g_k$.  The first part of Lemma \ref{B_lemma} now follows by the first part of \cite[Lemma 3.3]{lrpr_best}.
All the subparts of the second part of Lemma \ref{B_lemma} follow exactly as given in its proof in Sec. \ref{B_lemma_proof}.

\section{Limitations of our results} \label{limitations}

Our results have three limitations: (i) the algorithm that is analyzed needs sample-splitting, even though, in numerical experiments this is not needed; (ii) our bound holds w.h.p. for a single matrix $\Xstar$ satisfying Assumption \ref{right_incoh} (and not for all such matrices); and (iii) for obtaining exactly zero error, we need an infinite number of samples. We explain here the reasons why we are unable to address these issues.
We should mention here that, since all computers are finite precision, (iii) is entirely a theoretical curiosity.  Also, many other results in the LR recovery literature, e.g., \cite{lowrank_altmin,fastmc,rmc_gd}, also have all these limitations.

\subsection{Need for sample-splitting} \label{samplesplit}
In Algorithm \ref{gdmin}, sample-splitting (line 3) helps ensure that the measurement matrices in each iteration for updating each of $\U$ and $\B$ are independent of all previous iterates: we split our sample set into $2T+1$ subsets, we use one subset for initialization of $\U$ and one subset each for $T$ iterations of updating $\B$ and updating $\U$.
This helps prove the desired error decay bound by applying the sub-exponential Bernstein inequality \cite{versh_book} which requires the summands to be mutually independent. This becomes true in our case because, conditioned on past measurement matrices, the current set of $\a_\ik$'s are independent of the last updated values of $\U,\B$; and the $\a_\ik$s for different $(i,k)$  are mutually independent by definition. Thus, under the conditioning, the summands are mutually independent. 
Since we prove convergence in order $\log(1/\epsilon)$ iterations, this only adds a multiplicative factor of $\log(1/\epsilon)$ in the sample complexity.
Sample-splitting and the above overall idea is a standard approach used in many older works; in fact it is assumed for most of the LRMC guarantees for solutions that do not solve a convex relaxation (are iterative algorithms) \cite{lowrank_altmin, fastmc, rmc_gd}. An exception is \cite{rpca_gd}. 

There are a few commonly used approaches to avoid sample splitting. (1) One is using the leave-one-out strategy as done in \cite{pr_mc_reuse_meas}. But this means that the sample complexity dependence on $r$ worsens: the LRMC sample complexity with this approach is $(n+q)r^3$ times log factors. Also, it is not clear how to develop this approach for alternating $\U, \B$ updates.
(2) The second is to try to prove error decay for all matrices that are close enough to the true $\Xstar$ and that satisfy the other assumptions of the guarantee.  There are at least two different approaches to doing this. (2a) The first, which was used in \cite{rpca_gd}, works for LRMC since its measurements are bounded and symmetric: the authors are able to utilize i.i.d. Bernoulli sampling and left and right singular vectors' incoherence to prove key probabilistic bounds for all matrices of the form $\U \V$ with $\U,\V$ both being incoherent. This does not work in our case because our measurements are asymmetric and unbounded (which means for example that $\y_\ik$ times its estimate is heavier-tailed than $\y_\ik$).

(2b) An alternative approach is the following overall idea, which has been successfully used for analyzing standard PR algorithms, e.g., see \cite{twf,rwf}, but does not always work for other problems. In our setting, this means the following: At iteration $t+1$, suppose that the previous estimate  $\U_t$ satisfies $\SE(\U_t,\Ustar) \le \delta_t$. 
We need to try to show that, for all $\U$ that are a subspace distance $\delta_t$ away from the true subspace, the next iterate (which is a function of $\U$ and of the current $\A_k,\y_k$ for all $k$) is a distance $c \delta_t$ away with a $c<1$. To be precise, for all $\U \in \T:=\{\U: \U^\top \U = \I \text{ and }  \SE(\U,\Ustar) \le \delta_t\}$, we need $\U^+(\U) = orth(\U - \eta \nabla_U f(\U,\B))$ to satisfy $\SE(\U^+, \Ustar) \le c \delta_t$ for a $c<1$. Here  $orth(\M)$ is a matrix with orthonormal columns spanning the same subspace as those of $\M$. Also recall that the columns of $\B$ are $\b_k:= (\A_k \U)^\dag \y_k$ for all $k \in [q]$.
One can show this for all $\U \in \T$ by covering $\T$ by a net containing a finite number of points that are such that any point in $\T$ is with a subspace distance $0.25\delta_t$ of some point in the net, and first proving that this bound holds for all $\U$ in the net. The first step for proving such a bound is to bound the error in the estimates $\b_k$ for all $\U$ in this net. Because of the decoupled column-wise recovery of the $\b_k$'s,  for {\em one} $\U$ in this net, the bound on $\|\b_k(\U) - \U^\top \xstar_k\|$ holds w.p. $\ge 1 - q\exp(r - c m)$. This is proved in Lemma \ref{B_lemma}. If we want this bound to hold for all $\U$'s in the  net covering $\T$, we will need a union bound over all points in the net. The smallest sized net to cover $\T$ with accuracy $\eps_{net}=0.25 \delta_t$ has size upper bounded by $C^{nr}$ \cite{versh_book}.
With using this, the probability lower bound becomes $1- \exp(nr + \log q  + r - c m)$. For this to even just be non-negative, we need  $m > C nr$ which is too large and makes our guarantee useless.

\subsection{Why we cannot prove our result for all $X^*$}
The inability to obtain a useful union bound over a net of size $C^{nr}$ explained above is also why we cannot do this. 

\subsection{Why sample complexity depends on the desired final accuracy $\eps$}
Observe from our result that the number of samples required to achieve a certain accuracy $\eps$ grows as $\log(1/\eps)$. This means that, for the algorithm to achieve zero error, we need an infinite number of samples. We should mention that this problem is not unique to our result. It is often seen for results that use sample-splitting, e.g., \cite{lowrank_altmin, rmc_gd}. An exception is \cite{fastmc} for LRMC, where the following basic idea is used: one tries to show that after enough iterations, e.g., when the recovery error is $\eps_0 = 1/n$ or smaller, one can start reusing the same samples and still prove error decay. This is also the idea used in \cite{pr_mc_reuse_meas}. 
Briefly, the reason we are unable to circumvent this problem using a similar idea to that of \cite{fastmc} is that our algorithm is not a regular GD or projected GD method.

To use a similar idea in our setting, we would need to proceed as follows. We use independent samples until the error is below an $\eps_0$ that is small enough. Pick $\eps_0 = 1/(\kappa^2n^2)$.
This happens after $T(\eps_0) = C\kappa^2 \log(n)\log(\kappa)$ iterations. Consider $t=T+1$. At this time, $\delta_{t} = \eps_0 = 1/(\kappa^2n^2)$. Thus, by Lemma \ref{B_lemma}, $\|\bhat_k - \U^\top \xstar_k\| \lesssim   (1/(\kappa^2n^2) ) \|\xstar_k\|$ and all the other bounds also hold with $\delta_t$ replaced by $\eps_0$.
We try to show error decay by applying Lemma \ref{algebra}. For this to work, we need to be able to show all of the following without using independence between $\U,\B$ and the $\A_k$s: (i) upper and lower bound the eigenvalues of $\mathrm{Hess} = \sum_\ik (\a_\ik \kron \b_k) (.)^\top$ as those proved earlier, (ii) bound $\|\nabla_\U f(\U,\B)\|/m$ by $c_0 \sigmin^2$ for a small constant $c_0< 1$ (in fact even in our main proof, such a bound is sufficient since this term only appears in the denominator), and (iii) bound $\|\mathrm{Term2}\|_F/m$ by $(c_2/\kappa^2) \delta_t \sigmax^2$ with a $c_2$ sufficiently less than one.

As we explain next, (i) and (ii) can be obtained easily, but (iii) cannot. We can obtain (i) by showing that $\mathrm{Hess}$ is close to $\mathrm{Hess}^* = \sum_\ik (\a_\ik \kron (\U^\top \Ustar \tb_k)) (.)^\top$; and $\mathrm{Hess}^*$ can be bounded almost exactly as done in our proof earlier since $\A_k$s are independent of $\xstar_k$s. The $\U$ in the expression for $\mathrm{Hess}^*$ does not matter because $\U^\top \Ustar$ is an $r \times r$ rotation matrix and one can take a maximum over all rotation matrices.  Using the loose bounds $\|\a_\ik\|\le 5 \sqrt{n}$ w.h.p., one can show that $\|\mathrm{Hess}^* - \mathrm{Hess}\| \le mq \max_\ik  [  \max_{\W \in \S_{nr}} |\a_\ik^\top \W \g_k| \cdot \max_{\W \in \S_{nr}}  |\a_\ik^\top \W (\g_k - \b_k)| ] \lesssim mq \sqrt{n} \mu \sqrt{r/q} \sigmax \cdot \sqrt{n} \eps_0    \mu \sqrt{r/q} \sigmax \le m \mu^2 (r/n) \sigmin^2 $.  Similarly, for (ii), $\sum_\ik \|\a_\ik \a_\ik^\top (\xstar_k - \x_k) \b_k^\top\| \lesssim mq \cdot  \sqrt{n} \cdot \sqrt{n} \cdot \eps_0   \cdot (\mu^2 r/q)\sigmax^2 = m (\mu^2 r/n) \sigmin^2$. Using $(\mu^2 r/n) \ll 1$, claims (i) and (ii) follow.
However, proving (iii) seems to be impossible without using the fact that $\E[\mathrm{Term2}] = 0$. But this expected value is zero only when $\A_k$s are independent of $\U,\B$. 

\bfpara{Possible ways to prove (iii)}
For bounding $\mathrm{Term2}$ for times $t > T(\eps_0)$, we can try one of the following ideas. (1) Try to use Cauchy-Schwarz in a way that the projection orthogonal to $\Ustar$ is used. There does not seem to be a way to make this work. (2) Try to use the leave-one-out strategy of \cite{pr_mc_reuse_meas} only for $t > T(\eps_0)$.



\begin{figure}[t]
%
\begin{center}
%
{
\begin{subfigure}{0.32\textwidth}
		{\includegraphics[width = 0.99\textwidth]{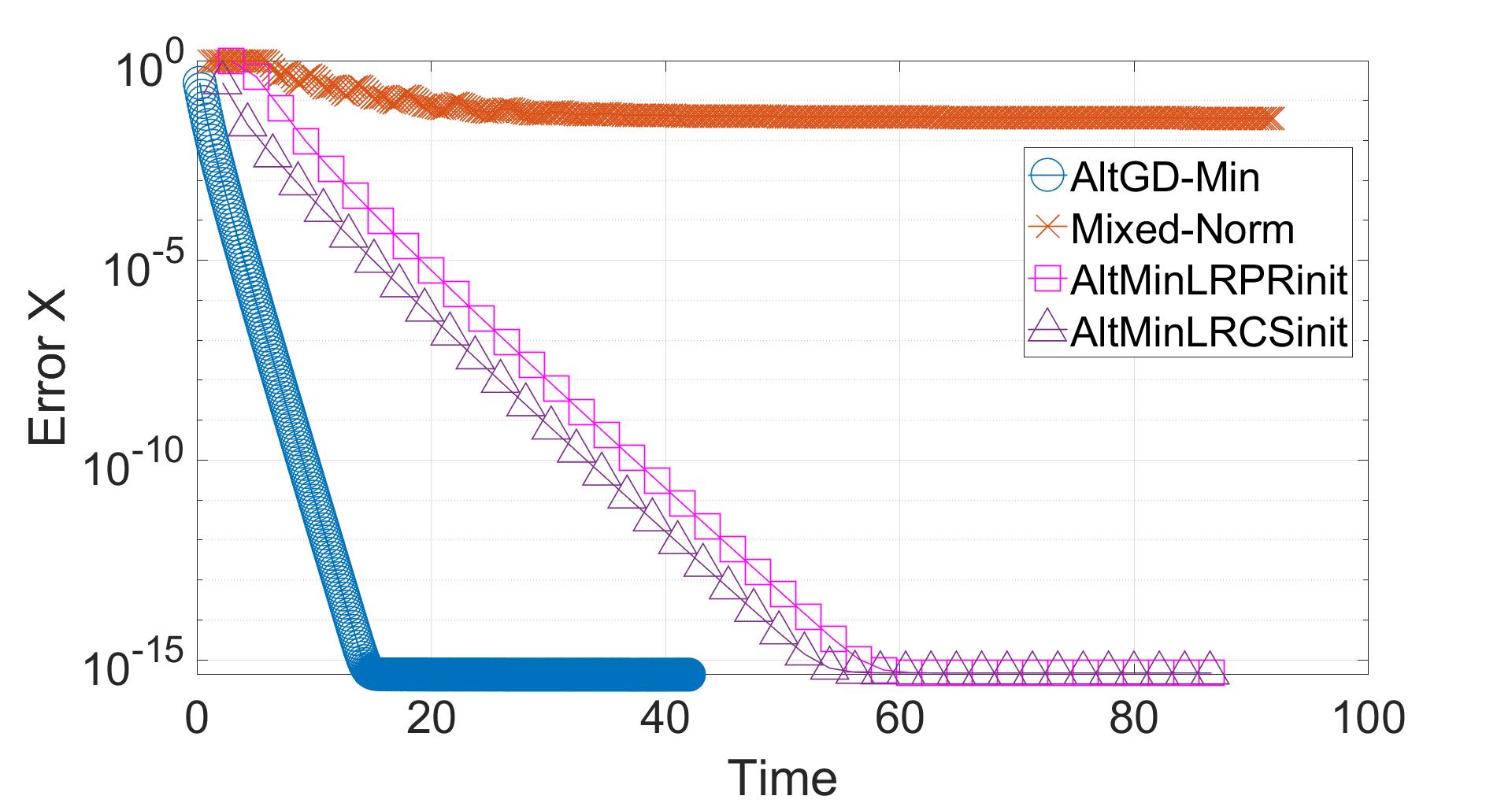}}
\caption{\small{$m=80$, $n=q=600,r=4$}}
\label{m80_gdmin}
\end{subfigure}
\begin{subfigure}{0.32\textwidth}
		{\includegraphics[width = 0.99\textwidth]{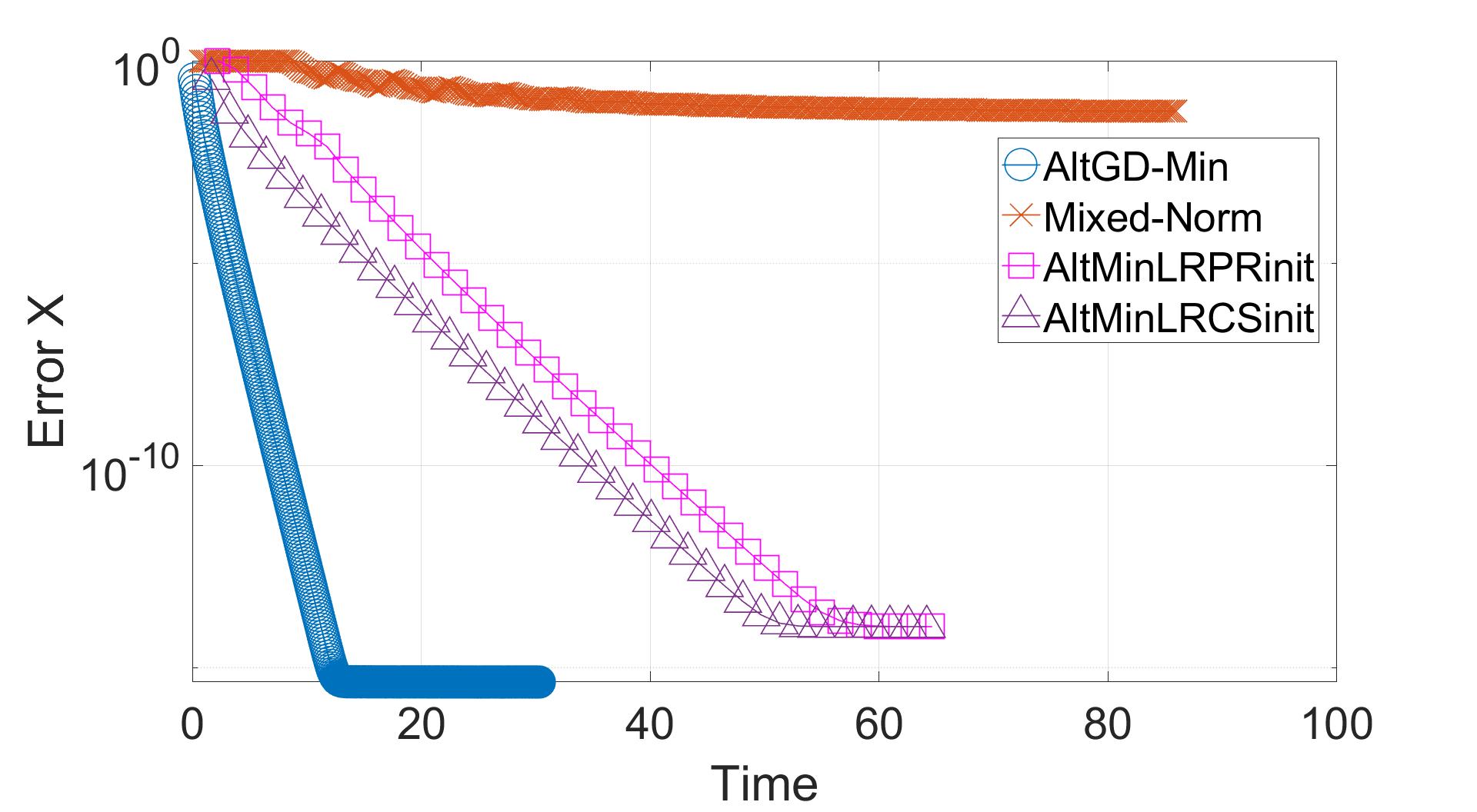}}
\caption{\small{$m=50$, $n=q=600,r=4$}}
\label{m50_gdmin}
\end{subfigure}
\begin{subfigure}{0.32\textwidth}
		{\includegraphics[width = 0.99\textwidth]{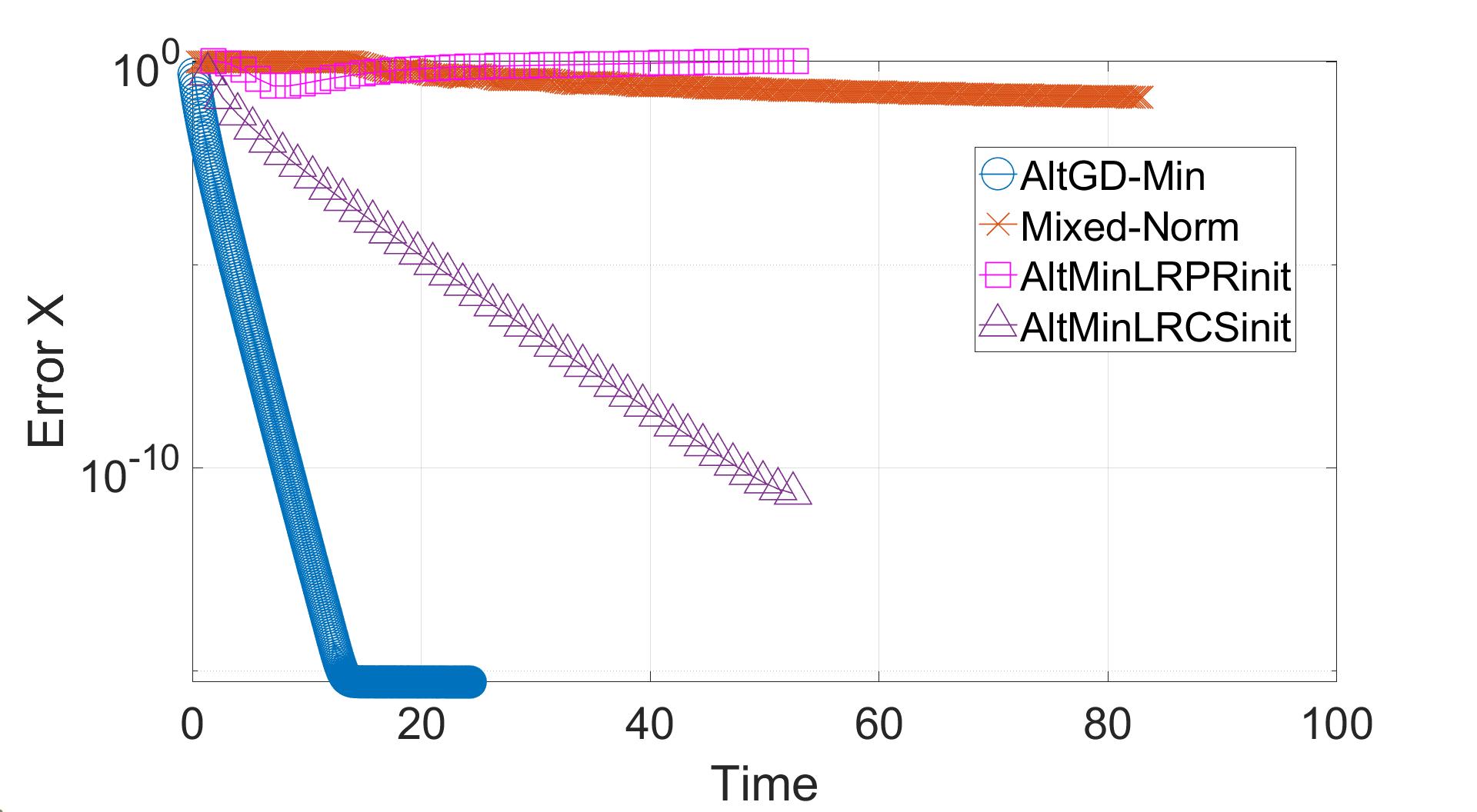}}
\caption{\small{$m=30$, $n=q=600,r=4$}}
\label{m30_gdmin}
\end{subfigure}%
}
{
\begin{subfigure}{0.32\textwidth}
		{\includegraphics[width = 0.99\textwidth]{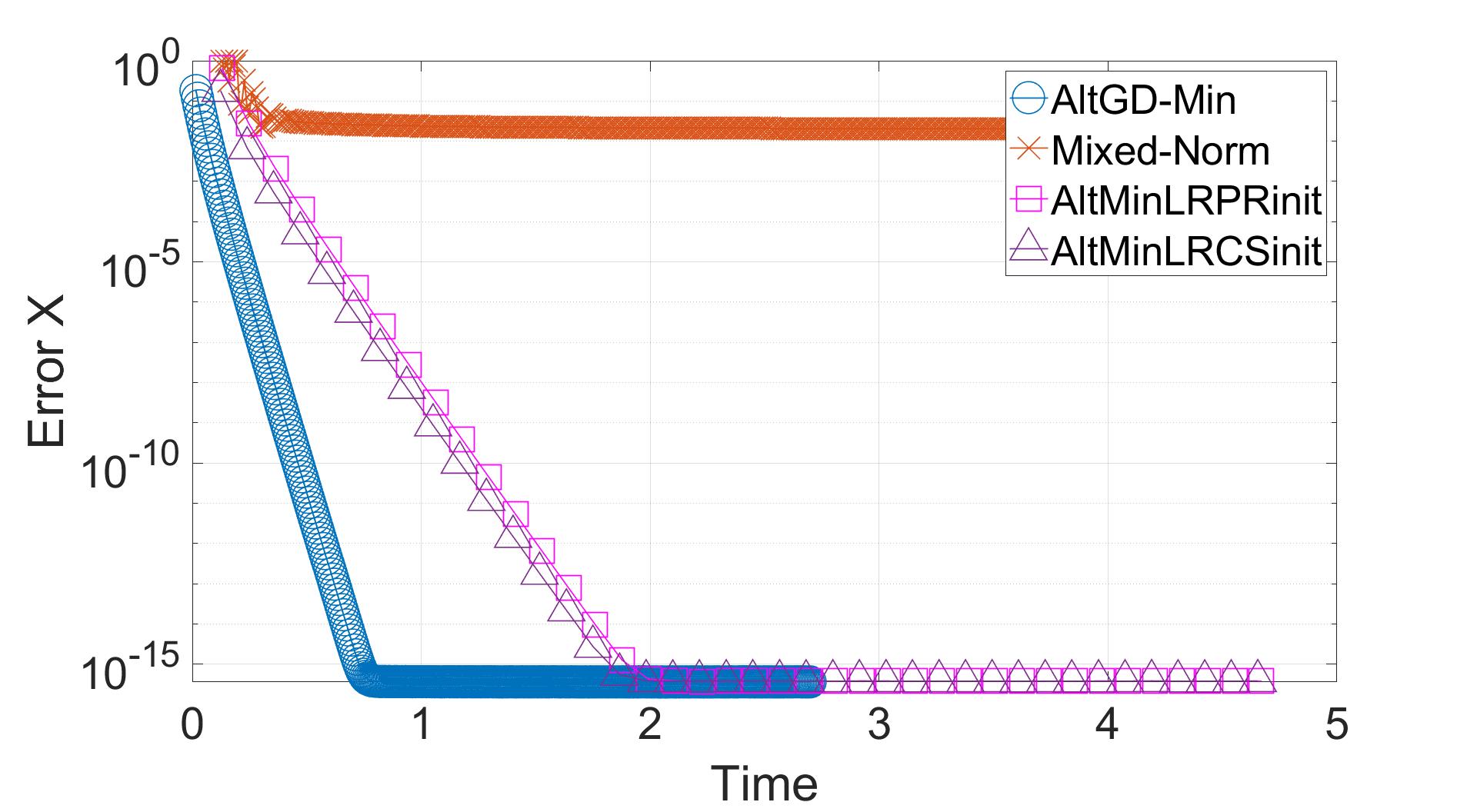}}
\caption{\small{$m=90$, $n=100,q=120,r=2$}}
\label{m90_n100_gdmin}
\end{subfigure}%
}
\end{center}
\caption{\sl\small{Comparing the proposed algorithm with existing approaches for solving LRcCS. 
}}
\label{gdmin_compare}
\vspace{-0.1in}
\end{figure}

\begin{figure}[t!]
\begin{center}
	{\includegraphics[width = 0.32\textwidth]{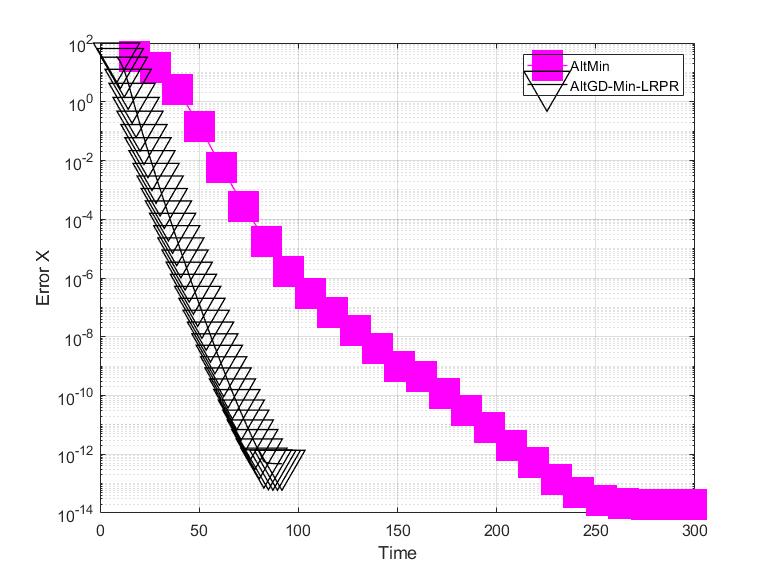}}
\end{center}
\caption{\sl\small{Comparing the proposed algorithm with existing approach for solving LRPR. We used $n=600, q=1000, r=4$ and $m=250$.}}
\label{m250_lrpr}
\vspace{-0.1in}
\end{figure}

\section{Numerical Experiments}\label{sims}

Our first experiment compares AltGD-Min with the mixed norm minimization solution from  \cite{lee2019neurips} (mixed-norm-min) and with
the AltMin algorithm \cite{lrpr_icml,lrpr_it,lrpr_best} modified for the linear LRcCS problem (replace the PR step for updating $\b_k$'s by a simple LS step). We implement this with using two possible initializations: the initialization developed in \cite{lrpr_icml,lrpr_it,lrpr_best} for LRPR (AltMinLin-LRPRinit), and with the initialization approach developed in this work (AltMinLin-LRCSinit). 
For mixed norm min, we used the code downloaded from \url{https://www.dropbox.com/sh/lywtzc0y9awpvgz/AABbjuiuLWPy_8y7C3GQKo8pa?dl=0}, which is provided by the authors. For AltMin, we used the code from \url{https://github.com/praneethmurthy/}. We implemented AltGD-Min with $\eta = 0.4/\|\Xhat_0\|^2$ and $\tC = 9$. Also, we used {\em one} set of measurements for all its iterations. 


For chosen values of $n,q,r$ and $m$, we simulated the data as follows. We simulated $\Ustar$ by orthogonalizing an $n \times r$ standard Gaussian matrix; and $\tb_k$s were generated i.i.d. from  $\n(0,\I_r)$. These were generated once. For each of 100 Monte Carlo runs, the measurement matrices $\A_k$ contained i.i.d. standard Gaussian entries. We obtained $\y_k = \A_k \Ustar \tb_k$, $k \in [q]$. For the LRPR experiment, we used $\ym_k = |\y_k|$ as the measurements.
We plot the empirical average of $\|\Xhat - \Xstar\|_F/\|\Xstar\|_F$ at each iteration $t$ on the y-axis (labeled ``Error-X'' in the plots) and the time taken by the algorithm until iteration $t$ on the x-axis. 


For our first experiment, shown in Fig. \ref{m80_gdmin}, we used $n=600,q=600,r=4$ and $m=80$. In this case, mixed-norm-min error decays to about 2-5\% but does not reduce any further. But, for our algorithm, AltGD-Min, and for both versions of AltMin, the error decays to $10^{-15}$. Notice also that AltGD-Min is much faster than all the other approaches.
Fig. \ref{m50_gdmin} reduced $m$ to $m=50$. Here a similar trend is observed, except that the error decays to only around $10^{-13}$ for AltGD-Min and $10^{-11}$ for the two AltMin approaches. Finally, for Fig. \ref{m30_gdmin}, we reduced $m$ to $m=30$. In this case, only AltGDmin and AltMin-LRCSinit work, while mixed-norm-min and AltMin-LRPRinit errors do not decrease at all. The reason is both these need a higher sample complexity (see Table \ref{compare_lrccs}).
%
\cred
Finally, we also tried an experiment with very large $m$: $n=100,q=120,r=2$ and $m=0.9n = 90$, see Fig. \ref{m90_n100_gdmin}. Even for such a large value of $m$ (compared to $n$), observe that the mixed-norm-min error saturates at around 1-2\%. The likely reason for this that, in the guarantee for mixed-norm-min \cite{lee2019neurips} (summarized for the noiseless case in Proposition \ref{convex_sol} given earlier), even for $m = n$, the error is bounded by a multiplier (more than 1) times $\sqrt{r/q}$.%
\cbl

For the comparisons for the LRPR problem shown in Fig. \ref{m250_lrpr}, we need  a much larger $q$ and $m$ since LRPR requires $mq$ to scale as $nr^3$ both for initialization and for the GDmin iterations and the multiplying constants are also much larger for LRPR. We used $n=600,q=1000,r=4$ and $m=250$. 
Notice that altGD-Min-LRPR is faster than AltMin-LRPR.
We implemented altGD-Min-LRPR with $\eta = 0.9/\|\Xhat_0\|^2$, $\tC = 9$, and $T_{RWF,t}= \max(5+t,40)$ in the RWF code (code for \cite{rwf}, downloaded from the specified site). Also, here again, we used {\em one} set of measurements for all its iterations.

\section{Conclusions}\label{conclude}
This work developed a sample-efficient and fast gradient descent (GD) solution, called AltGD-Min, for provably recovering a low-rank (LR) matrix from mutually independent column-wise linear projections. This problem, which we refer to as ``Low Rank column-wise Compressive Sensing (LRcCS)'', frequently occurs in LR-based accelerated low rank dynamic MRI and in federated sketching. If used in a federated setting, AltGD-Min is also communication-efficient. The LRcCS problem has not received little attention in the theoretical literature unlike the other well-studied LR recovery problems (matrix completion, sensing, or multivariate regression).

\appendices \renewcommand\thetheorem{\Alph{section}.\arabic{theorem}}

\section{Understanding why LRMC-style GD approaches cannot be easily analyzed for LRcCS} \label{algo_understand} 

\begin{table*}[t]
\caption{\small{
Understanding why LRMC style projected-GD on $\X$ does not work in our case.
	}}
\label{LRMC_diff}
\vspace{-0.1in}	
\begin{center}
\renewcommand*{\arraystretch}{1.01}
{
\begin{tabular}{l l l} \toprule
  & LRMC &     Our Problem, LRcCS   \\  \midrule
$\tilde{f}(\Xhat)$  & $\ds \sum_{k=1}^q \sum_{j=1}^n (\y_{jk} - \delta_{jk} \Xhat_{jk})^2 $ & $\ds  \sum_{k=1}^q \sum_{i=1}^m (\y_\ik - \a_\ik^\top \xhat_k)^2 $ \\
& $\delta_{jk} \iidsim Bernoulli(p)$ & $\a_\ik \iidsim \n(0, \I_n)$ \\
\hline
$\nabla_X \tilde{f}(\Xhat)$  & $\ds  \sum_{k=1}^q \sum_{j=1}^n \delta_{jk}(\y_{jk} - \delta_{jk} \Xhat_{jk}) \e_j \e_k^\top$ & $\ds  \sum_{k=1}^q \sum_{i=1}^m (\y_\ik - \a_\ik^\top \xhat_k) \a_\ik \e_k^\top  $ \\
   & $\ds = \sum_{k=1}^q \sum_{j=1}^n \delta_{jk}( \Xstar_{jk} - \Xhat_{jk}) \e_j \e_k^\top$ & $\ds = \sum_{k=1}^q \sum_{i=1}^m \a_\ik^\top (\xstar_k - \xhat_k) \a_\ik \e_k^\top  $ \\
%
$\tilde\H:= \H - \eta \nabla f(\Xhat)$   & $\ds \sum_{k=1}^q \sum_{j=1}^n (1 - \frac{\delta_{jk}}{p}) \H_{jk} \e_j \e_k{}^\top$  & $\ds \frac{1}{m}  \sum_{k=1}^q \sum_{i=1}^m  (\I -    \a_\ik \a_\ik{}^\top ) \h_k   \e_k{}^\top$    \\
\bottomrule
			\end{tabular}
		}	
	\end{center}
\vspace{-0.15in}	
\end{table*}

\subsection{Gradient Descent} \label{lrmc_compare_iters}
%


\cred
The iterates of a gradient descent (GD) algorithm converge when the gradient approaches zero.
Thus, in order to show its convergence, one needs to be able to bound the norm of the gradient and show that it goes to zero with iterations. In order to show fast enough convergence (reach $\eps$ error in order $\log(1/\eps)$ iterations), one further needs to show that this bound on the gradient norm decreases sufficiently with each iteration.
\cbl
Consider projGD-X which was studied in \cite{rmc_gd} for solving LRMC. ProjGD-X iterations involve computing $\Xhat^+ \leftarrow \proj_r( \Xhat - \nabla_\X \tilde{f}(\Xhat) )$, here $\proj_r(\M)$ projects its argument onto the space of rank-$r$ matrices. 
To bound $\|\nabla_\X \tilde{f}(\X)\|$, we need to bound $|\w^\top \nabla_\X \tilde{f}(\X) \z|$ for any unit norm vectors $\w,\z$.
We show the cost function $\tilde{f}(\Xhat)$ and its gradient for both LRMC and LRcCS in Table \ref{LRMC_diff}. Observe that, for LRcCS, $\w^\top \nabla_\X \tilde{f}(\X) \z$ is a sum of sub-exponential r.v.s with sub-exponential norms bounded by $K_e = \max_k \|\w\| \cdot \|\xstar_k - \xhat_k\| \cdot |\z_k|  \le  \max_k \|\xstar_k - \xhat_k\|$. Thus, in order to get a small enough bound on $|\w^\top \nabla_\X \tilde{f}(\X) \z|$ by applying the sub-exponential Bernstein inequality \cite{versh_book}, we need a small enough bound on $\max_k \|\xstar_k - \xhat_k\|$ (column-wise error bound). 
%
It is not clear how to get this because the projection step introduces coupling between the different columns of the estimated matrix $\X$ \footnote{
Let $\H := \X - \Xstar$, $\tilde\H:= (\Xhat - \eta \nabla f(\Xhat)) - \Xstar = \H - \eta \nabla f(\Xhat)$, and $\H^+ = \Xhat^+ - \Xstar = \proj_r(\Xhat - \nabla f(\Xhat)) - \Xstar = \proj_r(\Xstar + \tilde\H) - \Xstar $.
To bound the LRMC projGD-X errors, one needs an entry-wise bound of the form $\|\H^+\|_{\max} \le \delta_t \|\Xstar\|_{\max}$ with $\delta_t$ decaying exponentially. 
We show the expressions for $\tilde\H$ in the table. For LRMC, notice that different summands of $\tilde\H$ are mutually independent and each depends on only one entry of $\H$. This fact is carefully exploited in \cite[Lemma 1]{rmc_gd} and \cite[Lemma 1]{fastmc}.  By borrowing ideas from the literature on spectral statistics of Erdos-Renyi graphs \cite{ekyy}, the authors are able to obtain expressions for higher powers of $(\tilde\H \tilde\H^\top)$. These expressions help them get the desired bound under the desired sample complexity. 
For LRcCS, using the gradient expression, we need a  bound on $\max_k \|\h^+_k\|$ in terms of $\|\h_k\|$ in order to show its exponential decay. Since the different entries of $\tilde\H$ are not mutually independent and not bounded, the LRMC proof approach cannot be borrowed.
}. 
Moreover, even if we could somehow get such a bound, in the best case, it would be proportional to $\delta_t \max_k \|\xstar_k\|$ with $\delta_t<1$ and decaying exponentially with $t$. Using Assumption \ref{right_incoh},
this would then imply that  $K_e \le \delta_t \max_k \|\xstar_k\| \le \delta_t \mu  \sqrt{r/q} \sigmax$. But, this is not small enough. We need it to be proportional to $\delta_t (r/q)$ in order to be able to bound the gradient norm under the desired sample complexity.

Consider altGDnormbal studied in \cite{lafferty_lrmc,rpca_gd} for LRMC. In this case again, the desired column-wise error bound cannot be obtained because the update step for $\B$ involves GD w.r.t. $f(\U,\B) + f_2(\U,\B)$. The gradient w.r.t $f_2$ (norm-balancing term) introduces coupling  between the different columns of $\B$, and hence, also between columns of $\X = \U \B$.
Thus, once again, it is not clear how to get a tight bound on $\max_k \|\xstar_k - \xhat_k\|$. 

For AltGD-Min, because the min step for updating $\B$ is a decoupled LS problem, it is possible to get the desired column-wise error bound. Secondly, because we use GD w.r.t $\U$, there is an extra $\b_k^\top$ term in the gradient summands. This makes the gradient (and its deviation from its expected value), a sum of {\em nice-enough} sub-exponential r.v.s as explained in Sec. \ref{outline_iters}.%

\begin{table*}[t]
\caption{\small{
Why the LRMC initialization approach cannot be directly borrowed?
}}
\label{LRMC_init_diff}
\vspace{-0.1in}	
\begin{center}
\renewcommand*{\arraystretch}{1.01}
{
\begin{tabular}{lll} \toprule
  & LRMC &     Our Problem, LRcCS   \\  \midrule
$\Xhat_{0,full}=$ & $\ds \sum_k \sum_j \frac{\delta_{jk}}{p} \y_{jk} \e_j \e_k{}^\top$    & $\ds \frac{1}{m}  \sum_k  \sum_i   \a_\ik \y_\ik \e_k{}^\top$  \\
&  $\delta_{jk} \iidsim Bernoulli(p)$ & $\a_\ik \iidsim \n(0, \I_n)$  \\
$\H_0 =\Xhat_{0,full} - \Xstar$   & $\ds \sum_{k=1}^q \sum_{j=1}^n (1 - \frac{\delta_{jk}}{p}) \Xstar_{jk} \e_j \e_k{}^\top$  & $\ds \frac{1}{m}  \sum_{k=1}^q \sum_{i=1}^m  (\I -    \a_\ik \a_\ik{}^\top ) \xstar_k   \e_k{}^\top$    \\
%
 Each summand is  & nicely bounded by  &   unbounded \& sub-expo. norm$^{**}$ is \\
 & $\mu^2 \sigmax (r/\sqrt{nq}) $  &  $\mu \sigmax \sqrt{r/q}$ (too large, need $r/q$)   \\
Concen. ineq. & Matrix Bernstein \cite{tail_bound}   & Sub-expo Bernstein \cite{versh_book}   \\
& gives desired sample comp.  & does not give desired sample comp. \\
\bottomrule
\end{tabular}
		}	
	\end{center}
{\footnotesize ${**}$: ``max sub-expo. norm": max sub-exponential norm of $(\a_\ik{}^\top\w) (\a_\ik{}^\top \xstar_k) (\e_k^\top \z)$ for any unit vectors  $\w,\z$.}
\vspace{-0.15in}	
\end{table*}

\subsection{Initialization} \label{lrmc_compare_init}

The standard approach used for initializing iterative algorithms for LRMC (as well as other linear LRR problems) is to compute the top $r$ left singular vectors of the matrix $\X_{0,full}$ that satisfies $(\X_{0,full})_{vec}= \mathcal{A}^\top (\y_{all})$, where $\y_{all}$ is the $mq$-length vector of all measurements and $\mathcal{A}$ denotes the linear mapping from $(\Xstar)_{vec}$ to $\y_{all}$. In case of LRMC and LRcCS, this is computed is as given in Table \ref{LRMC_init_diff}. It is not hard to see that, in both cases, $\E[\Xhat_{0,full}]= \Xstar$.
To show that this approach works, one typically uses a $\sin \Theta$ theorem, e.g., Davis-Kahan or Wedin, to bound $\SE(\Ustar, \U_0)$ as a function of terms that depend on  $\H_0:= \Xhat_{0,full} - \Xstar$. Thus a first requirement is to bound $\|\H_0\|$. For LRMC, this can be done easily since $\H_0$ is a sum of the independent one-sparse random matrices shown in the table with each matrix containing an i.i.d. Bernoulli r.v. times $\Xstar_{jk}$ ($jk$-th entry of $\Xstar$) as its nonzero entry. Using the left and right singular vectors' incoherence (assumed in all LRMC guarantees), and $\Xstar_{jk} = \e_j^\top \Xstar \e_k$, one can argue that, for unit vectors $\w,\z$, each summand of $|\w^\top \H_0 \z|$ is of order at most $(1/p) \sigmax r/\sqrt{nq}$. This bound, along with a bound on the ``variance parameter" needed for applying matrix Bernstein \cite{tail_bound},\cite[Chap 5]{versh_book} helps show that  $\|\H_0\| \le c \sigmax$ w.h.p., under the desired sample complexity bound. 
%
For LRcCS,  the summands of $\Xhat_{0,full}$, and hence of $\H_0$, are sub-exponential r.v.s. These can be bounded using the sub-exponential Bernstein inequality \cite[Chap 2]{versh_book}. This requires a bound on the maximum sub-exponential norm of any summand. Denote this bound by $K_e$. In order to show that $\|\H_0\| \le c \sigmax$ w.h.p, under the desired sample complexity, we need $K_e$ to be of order $(r/q)$ or smaller. However, for our summands, we can only guarantee $K_e \le (1/m) \max_k \|\xstar_k\| \le (1/m) \mu \sqrt{r/q} \sigmax$. This is not small enough, i.e., the summands are not {\em nice-enough} subexponentials. It will require $mq \gtrsim (n+q) r \cdot \sqrt{q}$ which is too large.%

\section{Proof of Initialization Theorem \ref{init_thm} without sample-splitting}
\label{init_reuse_proof}
Consider the initialization using $\Xhat_0$ defined in \eqref{newinit}. We we want to bound the initialization error without sample-splitting. This means that the threshold $\alpha$ is not independent of the $\a_\ik, \y_\ik$ used in the expression for $\Xhat_0$ and thus, it is not clear how to compute its expected value even if we condition on $\alpha$. However, the following slightly more complicated approach can be used.
Using Fact \ref{sumyik_bnd} and Assumption \ref{right_incoh}, it is possible to show that $\Xhat_0$ is close to a matrix, $\X_+(\epsilon_1)$ given next for which $\E[\X_+]$ is easily computed:
Let
\[
\alpha_+:= \tC (1+ \epsilon_1) \frac{\|\Xstar\|_F^2}{q}
\]
and define
\begin{align}
 \X_+ (\epsilon_1) & := \frac{1}{m} \sum_\ik  \a_\ik \y_\ik \e_k{}^\top \indic_{ \small \{ \y_\ik^2 \le \alpha_+ \} }. \text{ Then, } \nonumber \\      
 \E[\X_+] & = \Xstar \D(\epsilon_1), \nonumber\\  \D &:= diagonal(\beta_k(\epsilon_1)), \nonumber\\ \beta_k(\epsilon_1)&:= \E\left[\zeta^2 \indic_{\small \left\{ \zeta^2 \le \frac{\alpha_+}{\|\xstar_k\|^2} \right\} } \right]
\label{Xplus}
\end{align}
with $\zeta$  being a scalar standard Gaussian.
Thus $\Xhat_+$ is $\Xhat_0$ with the threshold $\alpha$ replaced by $\alpha_+$ which is deterministic. Consequently $ \E[\X_+]$ has a similar form too and is obtained as explained in the proof of Lemma \ref{Wedinlemma} given in Sec. \ref{Wedinlemma_proof}.

Next, recall that $\Xstar \svdeq \Ustar \bSigma \Bstar$ and   $\tC = 9 \kappa^2 \mu^2$. Let $\tc = c/\tC$ for a $c<1$. Clearly, the span of the top $r$ singular vectors of $\E[\X_+] = \Xstar \D$ equals $\Span(\Ustar)$ and it is rank $r$ matrix. Let,
\[
\E[\X_+]= \Xstar \D  \svdeq \Ustar \check\bSigma \Bcheck
\]
be its $r$-SVD (here $\Bcheck$ is an $r \times q$ matrix with its rows containing the $r$ right singular vectors). We thus have
\begin{align*}
\sigma_r(\E[\X_+]) &=  \sigma_{\min}(\check\bSigma) =\sigma_{\min}(\bSigma \Bstar \D \Bcheck{}^\top) \\&\ge \sigma_{\min}(\bSigma)\sigma_{\min}(\Bstar)\sigma_{\min}(\D)\sigma_{\min}(\Bcheck{}^\top) \\&= \sigmin \cdot 1 \cdot (\min_k \beta_k) \cdot 1
\end{align*}
Fact \ref{betak_bnd} given earlier shows that $(\min_k \beta_k) \ge 0.9$ and thus,
\[
\sigma_r(\E[\X_+]) \ge  0.9 \sigmin
\]
Also, $\sigma_{r+1}(\E[\X_+]) = 0$ since it is a rank $r$ matrix. Thus, using Wedin's $\sin \Theta$ theorem for $\SE$ (summarized in Theorem \ref{Wedin_sintheta}) applied with $\M \equiv \Xhat_0$, $\M^* \equiv \E[\X_+]$ gives
\begin{align}\label{Wedin}
&\SEF(\U_0,\Ustar)\nonumber \\&  \le  \dfrac{\sqrt{2} \max\left( \| (\Xhat_0 - \E[\X_+])^\top \Ustar \|_F , \| (\Xhat_0 - \E[\X_+])  \Bcheck{}^\top \|_F \right)}{0.9 \sigmin - \|\Xhat_0 - \E[\X_+]\|}
\end{align}

In the next three subsections, we prove a set of six lemmas that help bound the three terms in the expression above. {\em The main new ideas over the proof given earlier in Sec \ref{init_proof}, are in the proof of the first lemma, Lemma \ref{Xhat0_2} given below, and in the proof of Claim \ref{claim:expect_init} that is used in this proof.}

\begin{claim}
		\label{claim:expect_init}
\label{EXhat0_Xplus}
Let $\xstar \in \Re^n$, $\z \in \Re^n$ be two deterministic vectors and let $\alpha$ be a deterministic scalar. Let $\a \sim \n(0,\I_n)$ be a standard Gaussian vector and define $\y := \a^\top\xstar$.
For an $0 < \epsilon < 1$,
\[
\E\left[ |\y (\a{}^\top\z) | \indic_{\{ \y^2 \in [1\pm\epsilon ]\alpha  \}}   \right] \leq C \epsilon \|\z\| \sqrt{\alpha}.
\]
\end{claim}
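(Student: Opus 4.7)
}

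The plan is to reduce to a canonical coordinate setup, split the bilinear form $\y(\a^\top\z)$ by the triangle inequality, and then reduce to two explicit one-dimensional Gaussian truncated moment integrals that can be bounded uniformly in $s := \|\xstar\|$ using a rescaling argument.

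First I would use rotational invariance of $\a \sim \mathcal{N}(0,\I_n)$: choose an orthogonal $Q$ with $Q e_1 = \xstar/s$ and work with $\tilde\a := Q^\top \a \sim \mathcal{N}(0,\I_n)$, $\tilde\z := Q^\top \z$. This lets me assume without loss of generality that $\xstar = s e_1$, so that $\y = s a_1$ and the indicator event $\{\y^2 \in [1\pm\epsilon]\alpha\}$ becomes $\{s^2 a_1^2 \in [1\pm\epsilon]\alpha\}$, depending only on $a_1$. Splitting $\z = z_1 e_1 + \z_\perp$ with $\z_\perp \perp e_1$, the triangle inequality gives
\[
|\y(\a^\top \z)| \le s|z_1| a_1^2 + s|a_1|\,|\a_\perp^\top \z_\perp|.
\]
Next, since $a_1$ and $\a_\perp$ are independent and $\E[|\a_\perp^\top\z_\perp|] = \sqrt{2/\pi}\,\|\z_\perp\| \le \|\z_\perp\|$, taking expectations reduces the claim to showing
\[
M_1 := s\,\E\bigl[|a_1|\indic_{\{s^2 a_1^2 \in [1\pm\epsilon]\alpha\}}\bigr] \le C\epsilon\sqrt\alpha, \qquad M_2 := s\,\E\bigl[a_1^2 \indic_{\{s^2 a_1^2 \in [1\pm\epsilon]\alpha\}}\bigr] \le C\epsilon\sqrt\alpha.
\]
Combined with $|z_1| + \|\z_\perp\| \le \sqrt{2}\|\z\|$, this yields the claim.

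For the moment bounds, the trick is to parameterize via $r := \alpha/(2s^2)$. A direct antiderivative computation gives $M_1 = \tfrac{2s}{\sqrt{2\pi}}\bigl(e^{-(1-\epsilon)r} - e^{-(1+\epsilon)r}\bigr) \le \tfrac{2s}{\sqrt{2\pi}} \cdot 2\epsilon r\cdot e^{-(1-\epsilon)r} = \tfrac{2\sqrt{2\alpha}}{\sqrt{2\pi}}\, \epsilon\cdot\sqrt r\, e^{-(1-\epsilon)r}$. Similarly, using the chi-square density $\tfrac{1}{\sqrt{2\pi t}}e^{-t/2}$ of $a_1^2$ and bounding $\sqrt t$ above by the interval's right endpoint, one obtains $M_2 \le \tfrac{2\sqrt{(1+\epsilon)\alpha}}{\sqrt{2\pi}}\, \epsilon\cdot (2r)\, e^{-(1-\epsilon)r}$.

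The main obstacle, and the reason this is not immediate, is the tension between two regimes. When $s \gtrsim \sqrt\alpha$ (so $r \lesssim 1$), the indicator restricts $a_1^2$ to a thin slice of Lebesgue length $\asymp \epsilon\alpha/s^2$ and the bound follows from the slice width; but when $s \ll \sqrt\alpha$ (so $r \gg 1$), the indicator forces $a_1$ deep into the Gaussian tail, the prefactors $\sqrt r$ and $r$ blow up, and only the $e^{-(1-\epsilon)r}$ factor prevents divergence. Both regimes are handled uniformly by the elementary facts $\sup_{r>0}\sqrt r\,e^{-(1-\epsilon)r} < \infty$ and $\sup_{r>0} r\,e^{-(1-\epsilon)r} < \infty$ (valid for any fixed $\epsilon < 1$), yielding $M_1, M_2 \le C\epsilon\sqrt\alpha$ and completing the proof.
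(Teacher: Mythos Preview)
Your proof is correct and follows essentially the same architecture as the paper's: rotate to canonical coordinates so that $\xstar = s e_1$, split $\a^\top\z$ into the $e_1$-component and its orthogonal complement, use independence of $a_1$ and $\a_\perp$, and reduce to bounding the two one-dimensional moments $s\,\E[|a_1|\indic]$ and $s\,\E[a_1^2\indic]$.

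The only difference is in how those 1D integrals are handled. You parameterize by $r=\alpha/(2s^2)$, compute (or bound) antiderivatives, and then invoke $\sup_{r>0} r^a e^{-(1-\epsilon)r}<\infty$; this works but makes the two-regime tension you describe look more delicate than it is. The paper's argument is more direct: with $\beta=\sqrt{\alpha}/s$ it writes
\[
\E[a_1^2\indic] = \frac{2}{\sqrt{2\pi}}\int_{\sqrt{1-\epsilon}\beta}^{\sqrt{1+\epsilon}\beta} z^2 e^{-z^2/2}\,dz,
\]
uses the uniform bound $z^2 e^{-z^2/2}\le e^{-1/2}$ (and $z e^{-z^2/2}\le e^{-1/2}$) on the integrand, and multiplies by the interval width $(\sqrt{1+\epsilon}-\sqrt{1-\epsilon})\beta\le \epsilon\beta$. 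This handles both regimes in one line without ever introducing $r$, and the factor of $s$ cancels against $\beta=\sqrt{\alpha}/s$ immediately. Your route is fine, but the paper's shortcut is worth knowing.
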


Combining Lemmas \ref{Xhat0_1} and \ref{Xhat0_2} and using Fact \ref{sumyik_bnd}, and setting $\epsilon_1 = c\delta_0 / \sqrt{r} \kappa $, we conclude that, w.p. at least
 \\ $1-2\exp( (n+q)-\tc\epsilon_1^2 mq   ) -  \exp(-\tc mq \epsilon_1^2 ) \ge 1-2\exp( (n+q)- \tc mq \delta_0^2 / r \kappa^2 ) -  \exp(-\tc mq \delta_0^2 /r \kappa^2)$,
\[
\|\Xhat_0 - \E[\X_+]\| \lesssim \epsilon_1  \|\Xstar\|_F  \lesssim   c \delta_0 \sigmin
\]
By combining Lemmas  \ref{Xhat0_Bstar_2},  \ref{Xhat0_Bstar_1}, \ref{Xhat0_Ustar_2}, and \ref{Xhat0_Ustar_1} and using Fact \ref{sumyik_bnd},  and setting $\epsilon_1 = c \delta_0 / \sqrt{r} \kappa $, we conclude that, w.p. at least
\\ $1-2\exp( n r - \tc mq \delta_0^2 / r \kappa^2 ) - 2\exp( q r - \tc mq \delta_0^2 / r \kappa^2 ) -  \exp(- \tc mq \delta_0^2 /r \kappa^2 )$,
\[
\max\left( \| (\Xhat_0 - \E[\X_+])^\top \Ustar \|_F , \| (\Xhat_0 - \E[\X_+])  \Bcheck^\top \|_F \right) \lesssim c \delta_0 \sigmin
\]
Plugging these into \eqref{Wedin} proves Theorem \ref{init_thm}

\subsection{Bounding the denominator term}\label{denom_bnd}
By triangle inequality, $\|\Xhat_0 - \E[\X_+]\| \le \|\X_+ - \E[\X_+]\| + \|\Xhat_0 - \X_+\|.$ The next two lemmas bound these two terms. The lemmas assume the claim of Fact \ref{sumyik_bnd} holds, i.e., that $\frac{1}{mq}\sum_\ik \y_\ik^2 \in [1\pm \epsilon_1] \tC \|\Xstar\|_F^2/q$ where $\tC = 9\mu^2\kappa^2$.
	\begin{lemma}
		\label{lem:init_denom_term}
\label{Xhat0_2}
Assume that $\frac{1}{mq}\sum_\ik \y_\ik^2 \in [1\pm \epsilon_1] \tC \|\Xstar\|_F^2/q$ (claim of Fact \ref{sumyik_bnd} holds). Then, w.p. $1-\exp(C(n+q)-\epsilon_1^2mq/\mu^2\kappa^2)$,
		\[
		\|\Xhat_0 - \X_+\|  \leq C\epsilon_1 \mu\kappa\|\Xstar\|_F.
		\]
	\end{lemma}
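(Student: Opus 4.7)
\textbf{Proof plan for Lemma \ref{Xhat0_2}.}
The plan is to exploit the fact that $\Xhat_0$ and $\X_+$ differ only in the indicator threshold used: $\alpha$ versus $\alpha_+$. Under the hypothesis from Fact \ref{sumyik_bnd}, $\alpha \in [\alpha_-, \alpha_+]$ where $\alpha_\pm := \tC(1\pm \epsilon_1)\|\Xstar\|_F^2/q$. Consequently
\[
\Xhat_0 - \X_+ = -\frac{1}{m}\sum_\ik \a_\ik \y_\ik \e_k^\top \indic_{\{\alpha < \y_\ik^2 \le \alpha_+\}},
\]
and, upon passing to absolute values inside the variational form $\|\cdot\| = \max_{\w\in\S_n,\z\in\S_q}\w^\top(\cdot)\z$, the random indicator can be replaced by the deterministic sandwich $\indic_{\{\alpha_- \le \y_\ik^2 \le \alpha_+\}}$. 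This step is the crucial move: after it, the summands are mutually independent (they no longer depend on the data-dependent $\alpha$), at the price of losing zero-mean structure.

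Thus, for fixed $\w\in\S_n,\z\in\S_q$, I would define
\[
S(\w,\z) := \frac{1}{m}\sum_\ik |\a_\ik^\top\w|\,|\y_\ik|\,|\z_k|\,\indic_{\{\alpha_- \le \y_\ik^2 \le \alpha_+\}},
\]
which upper bounds $|\w^\top(\Xhat_0 - \X_+)\z|$, and then bound $S(\w,\z)$ via a mean plus deviation decomposition. For the mean, I would apply Claim \ref{EXhat0_Xplus} (with the single-scalar $\a\equiv \a_\ik$, $\xstar\equiv \xstar_k$, $\z\equiv \w$, and $\alpha \equiv \tC\|\Xstar\|_F^2/q$) summand-wise to get $\E[|\y_\ik(\a_\ik^\top\w)|\indic_{\{\y_\ik^2\in[1\pm\epsilon_1]\tC\|\Xstar\|_F^2/q\}}] \le C\epsilon_1\sqrt{\tC/q}\,\|\Xstar\|_F$. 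Summing $|\z_k|$ over $k$ using $\sum_k|\z_k|\le \sqrt{q}\|\z\|=\sqrt{q}$ and recalling $\tC = 9\mu^2\kappa^2$ produces the target bound $\E[S(\w,\z)] \lesssim \epsilon_1\mu\kappa\|\Xstar\|_F$.

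For the deviation, I would apply the sub-exponential Bernstein inequality. Each summand is a product of a sub-Gaussian $|\a_\ik^\top\w|$ and a (conditionally) bounded factor of size at most $\sqrt{\alpha_+}\,|\z_k|/m$, so its sub-exponential norm is $K_\ik \le C|\z_k|\sqrt{\alpha_+}/m$. Since $\sum_\ik K_\ik^2 \le C\tC\|\Xstar\|_F^2/(mq)$, choosing $t = \epsilon_1\mu\kappa\|\Xstar\|_F$ makes $t^2/\sum K_\ik^2 \gtrsim \epsilon_1^2 mq$ (the $\mu^2\kappa^2$ in numerator and denominator cancel), giving probability $1-\exp(-c\epsilon_1^2 mq/\mu^2\kappa^2)$ for a fixed pair $(\w,\z)$ (I keep the $\mu^2\kappa^2$ to be conservative about the maximum-size term in Bernstein). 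A standard $\tfrac{1}{8}$-net argument over $\S_n \times \S_q$ (Proposition \ref{epsnet_Mwz}) upgrades this to a bound over all unit vectors at the cost of a factor $\exp((n+q)\log 17)$ in the failure probability, yielding the stated claim.

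The main obstacle, and the reason this is nontrivial, is that the natural expression for $\Xhat_0 - \X_+$ has summands that are not mutually independent because the threshold $\alpha$ couples them all. The sandwich trick resolves independence but introduces a nonzero mean; Claim \ref{EXhat0_Xplus} is precisely what is needed to control that mean. A secondary subtlety is that the sandwich step uses $|\a_\ik^\top \w|$ rather than $\a_\ik^\top\w$, so one loses the mean-zero cancellation that makes the standard centered Bernstein argument for $\|\X_+ - \E[\X_+]\|$ (Lemma \ref{Xhat0_1}) straightforward; carrying along the expectation bound throughout the epsilon-net argument is what makes the proof slightly longer than the companion concentration statements.
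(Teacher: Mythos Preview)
Your proposal is correct and follows essentially the same route as the paper: you use the sandwich $\alpha\in[\alpha_-,\alpha_+]$ to replace the data-dependent indicator by the deterministic one $\indic_{\{\y_\ik^2\in[1\pm\epsilon_1]\tC\|\Xstar\|_F^2/q\}}$ after passing to absolute values, then bound the mean via Claim \ref{EXhat0_Xplus} and Cauchy--Schwarz on $\sum_k|\z_k|/\sqrt{q}$, and the deviation by a concentration inequality plus the $\eps$-net of Proposition \ref{epsnet_Mwz}. The only cosmetic difference is that the paper invokes sub-Gaussian Hoeffding (the summands are bounded-times-Gaussian, hence sub-Gaussian with norm $K_\ik \le C|\z_k|\sqrt{\alpha_+}/m$), whereas you name it sub-exponential Bernstein; since sub-Gaussians are sub-exponential with the same norm, both give the same $t^2/\sum_\ik K_\ik^2 \gtrsim \epsilon_1^2 mq/\mu^2\kappa^2$ exponent and the argument goes through unchanged.
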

	\begin{proof}[Proof of Lemma \ref{lem:init_denom_term}]
		We have
\begin{align*}
	\|\X_+ - \Xhat_0 \| &= \max_{\z\in \S^n,~\w\in \S^q} \z{}^\top\left(\X_+ - \Xhat_0 \right)\w \\&=   \max_{\z\in \S^n,~\w\in \S^q}  \frac{1}{m} \sum_\ik  \w(k)\y_\ik(\a_\ik{}^\top\z)\\&\qquad\times\indic_{\left\{  \frac{\tC}{mq}\sum_\ik \y_\ik^2\leq \y_\ik^2 \leq \frac{\tC(1+\epsilon_1)}{q}\|\Xstar\|_F^2 \right\} }.
\end{align*}
For the last expression above, we have used the assumption $\sum_\ik \y_\ik^2/m \le \tC(1+\epsilon_1)\|\Xstar\|_F^2$.
Consider the RHS for a fixed unit norm $\z$ and $\w$. The lower threshold of the indicator function is itself a r.v.. To convert it into a deterministic bound, we need the following sequence of bounding steps: To use our assumption that $\sum_\ik \y_\ik^2 / m \geq (1-\epsilon_1)\tC  \|\Xstar\|_F^2$, we first need to bound the summands by their absolute values. This is done as follows:
		\begin{align*}
		|\z{}^\top\left(\X_+ - \Xhat_0 \right)\w |
		&\leq \frac{1}{m} \sum_\ik  \big|\w(k)\y_\ik(\a_\ik{}^\top\z)\big|\\&\qquad\times\indic_{\left\{  \frac{\tC}{mq}\sum_\ik \y_\ik^2\leq |\y_\ik|^2 \leq \frac{\tC(1+\epsilon_1)}{q}\|\Xstar\|_F^2 \right\} },\\
		&\leq \frac{1}{m} \sum_\ik  \big|\w(k)\y_\ik(\a_\ik{}^\top\z)\big|\\&\qquad\times\indic_{\left\{   |\y_\ik|^2 \in [1\pm\epsilon_1]\frac{\tC}{q}\|\Xstar\|_F^2 \right\} },
		\end{align*}
		where in the last line we used our assumption that $\sum_\ik \y_\ik^2 / m \geq (1-\epsilon_1)\tC  \|\Xstar\|_F^2$.
This final expression is a sum of mutually independent sub-Gaussian r.v.s with subGaussian norm  $K_\ik \leq C |\w(k)| \sqrt{\tC(1+\epsilon_1)}\|\Xstar\|_F/\sqrt{q}  \le \sqrt{\tC}|\w(k)|  \|\Xstar\|_F/\sqrt{q}$. Thus, by applying the sub-Gaussian Hoeffding inequality, Theorem 2.6.2 of \cite{versh_book},
		\begin{align*}
		&\Pr\left\{ \Big| \sum_\ik  \big|\w(k)\y_\ik(\a_\ik{}^\top\z)\big|\indic_{\left\{   |\y_\ik|^2 \in [1\pm\epsilon_1]\frac{\tC}{q}\|\Xstar\|_F^2 \right\} } \right.\\&- \left.\E\left[\sum_\ik  \big|\w(k)\y_\ik(\a_\ik{}^\top\z)\big|\indic_{\left\{   |\y_\ik|^2 \in [1\pm\epsilon_1]\frac{\tC}{q}\|\Xstar\|_F^2 \right\} }\right]   \Big|\geq t  \right\}\\
		& \qquad\leq 2\exp\left[-c\frac{t^2}{\sum_\ik K_\ik^2}\right].
		\end{align*}
By setting $t = \epsilon_1 m \|\Xstar\|_F$,
		\[
		\frac{t^2}{\sum_\ik K_\ik^2} \geq \frac{m^2 q \epsilon_1^2 \|\Xstar\|_F^2}{\sum_\ik \tC \|\Xstar\|_F^2 |\w(k)|^2} = \frac{\epsilon_1^2mq}{\tC}.
		\]
Since $\tC = 9\mu^2\kappa^2$, thus, w.p. $1-\exp(-c\epsilon_1^2mq/\mu^2\kappa^2)$, for a fixed $\z$ and $\w$,
		\[
		\z{}^\top\left(\Xhat_0 - \X_+\right)\w \leq \epsilon_1 \|\Xstar\|_F + \E\left[\frac{1}{m}\sum_\ik  \big|\w(k)\y_\ik(\a_\ik{}^\top\z)\big|\indic_{\left\{   |\y_\ik|^2 \in [1\pm\epsilon_1]\frac{\tC}{q}\|\Xstar\|_F^2 \right\} }\right] .
		\]
By using Claim \ref{claim:expect_init} and $|\w(k)| \|\z\| = |\w(k)|$ we have
		\begin{align*}
		&\E\left[\frac{1}{m}\sum_\ik  \big|\y_\ik(\a_\ik{}^\top\z) \w(k) \big|\indic_{\left\{   |\y_\ik|^2 \in [1\pm\epsilon_1]\frac{\tC}{q}\|\Xstar\|_F^2 \right\} }\right] \\&\leq \sqrt{\tC(1+\epsilon_1)}\epsilon_1\|\Xstar\|_F \sum_{k} \big|\w(k)\big|/\sqrt{q} \leq C\epsilon_1\mu\kappa \|\Xstar\|_F,
		\end{align*}
		where in the last inequality we used Cauchy-Schwarz to show that $\sum_{k} \big|\w(k)\big|/\sqrt{q} \le \sqrt{ \sum_k \big|\w(k)\big|^2 \sum_k (1/q) } = 1$. Or this also follows by $\|\w\|_1/\sqrt{q} \le \|\w\| = 1$.  Also, we used $\sqrt{\tC} = C \kappa \mu$.

Thus, w.p. $1-\exp(-c\epsilon_1^2mq/\mu^2\kappa^2)$, for a fixed $\z$ and $\w$, $\z{}^\top\left(\Xhat_0 - \X_+\right)\w \leq C\epsilon_1\mu\kappa \|\Xstar\|_F$.

By Proposition \ref{epsnet_MW}, $\max_{\z\in \S^n,~\w\in \S^q}  \z{}^\top\left(\Xhat_0 - \X_+\right)\w  \le 1.4 C \epsilon_1 \mu \kappa \|\Xstar\|_F$ w.p. at least $1-\exp( (n+q)\log(17)-c\epsilon_1^2mq/\mu^2\kappa^2)$.
\end{proof}

	\begin{lemma}
		\label{lem:init_denom_term2}
\label{Xhat0_1}
		Consider $\X_+$. Fix $1 < \epsilon_1 < 1$. Then, w.p. $1-\exp\left[C(n+q)-c\epsilon_1^2mq/\mu^2\kappa^2\right]$
		\[
		\|\X_+ -\E[\X_+]\| \leq C\epsilon_1 \|\Xstar\|_F.
		\]
	\end{lemma}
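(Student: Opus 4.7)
\medskip

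\noindent\textbf{Proof proposal for Lemma \ref{lem:init_denom_term2}.}

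The plan is to bound the spectral norm via a bilinear form argument combined with the epsilon-net extension of Proposition \ref{epsnet_Mwz}. The key simplifying feature, compared with Lemma \ref{lem:init_denom_term}, is that the truncation threshold $\alpha_+ = \tC(1+\epsilon_1)\|\Xstar\|_F^2/q$ is \emph{deterministic}; hence the summands that define $\X_+ - \E[\X_+]$ are mutually independent, zero-mean random variables, and no sandwiching trick is needed.

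First I would fix unit vectors $\z \in \S_n$ and $\w \in \S_q$ and write
\[
\z^\top(\X_+ - \E[\X_+])\w = \frac{1}{m}\sum_{\ik}\Big(\w(k)\,\y_\ik(\a_\ik^\top \z)\,\indic_{\{\y_\ik^2 \le \alpha_+\}} - \E[\cdot]\Big).
\]
The summands are independent across $(i,k)$ and zero-mean. Thanks to the indicator, the factor $\y_\ik \indic_{\{\y_\ik^2 \le \alpha_+\}}$ is \emph{bounded} in magnitude by $\sqrt{\alpha_+}$; multiplied by the standard-Gaussian linear form $\a_\ik^\top \z$, it yields a sub-Gaussian random variable with sub-Gaussian norm $K_\ik \le C |\w(k)| \sqrt{\alpha_+}/m$. (This is a cleaner situation than for $\nabla f$-type sums in the iterations, where we needed sub-exponential Bernstein.) Summing,
\[
\sum_\ik K_\ik^2 \le C \frac{\alpha_+}{m^2}\sum_\ik \w(k)^2 = C \frac{\alpha_+ \|\w\|^2}{m} = C\,\frac{\tC(1+\epsilon_1)\|\Xstar\|_F^2}{mq}.
\]

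Next I would apply sub-Gaussian Hoeffding (Theorem 2.6.2 of \cite{versh_book}) with $t = \epsilon_1 \|\Xstar\|_F$. Since $\tC = 9\mu^2\kappa^2$,
\[
\frac{t^2}{\sum_\ik K_\ik^2} \ge \frac{\epsilon_1^2 \|\Xstar\|_F^2 \cdot mq}{C \tC (1+\epsilon_1) \|\Xstar\|_F^2} \ge c\,\frac{\epsilon_1^2 mq}{\mu^2 \kappa^2},
\]
so that for this fixed $(\z,\w)$,
\[
|\z^\top(\X_+ - \E[\X_+])\w| \le \epsilon_1 \|\Xstar\|_F \quad\text{w.p. }\ge 1 - 2\exp(-c\,\epsilon_1^2 mq/(\mu^2\kappa^2)).
\]
Finally, I would extend this uniformly over $\z \in \S_n,\, \w \in \S_q$ via the standard $1/8$-net argument packaged in Proposition \ref{epsnet_Mwz}, which pays a $17^{n+q}$ factor in the failure probability and inflates the bound by at most $1.4$. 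The net result is that $\|\X_+ - \E[\X_+]\| \le C\epsilon_1 \|\Xstar\|_F$ w.p. at least $1 - \exp(C(n+q) - c\epsilon_1^2 mq/(\mu^2\kappa^2))$, absorbing the constants into $C$ as in the statement.

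I do not anticipate a real obstacle here: all the structural work (using incoherence to control $\alpha_+$, and arranging things so the summands are independent) has already been done by the definition of $\X_+$. The only point that requires a little care is the observation that multiplying the sub-Gaussian $\a_\ik^\top \z$ by the now-bounded factor $\y_\ik \indic_{\{\y_\ik^2 \le \alpha_+\}}$ keeps us in the sub-Gaussian regime (with norm controlled by $\sqrt{\alpha_+}$, not by the heavier-tailed quantity $\|\xstar_k\|$); this is exactly what lets the sub-Gaussian Hoeffding bound succeed under the target sample complexity $mq \gtrsim \mu^2\kappa^2 (n+q)/\epsilon_1^2$.
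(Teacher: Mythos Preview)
Your proposal is correct and follows essentially the same approach as the paper: fix $\z,\w$, recognize the summands as independent zero-mean sub-Gaussians with norm controlled by $|\w(k)|\sqrt{\alpha_+}$ (thanks to the deterministic truncation), apply sub-Gaussian Hoeffding with $t$ proportional to $\epsilon_1\|\Xstar\|_F$, then epsilon-net via Proposition~\ref{epsnet_Mwz}. The only cosmetic difference is that the paper absorbs the $1/m$ into the deviation level $t=\epsilon_1 m\|\Xstar\|_F$ rather than into $K_\ik$; the resulting ratio $t^2/\sum K_\ik^2 \gtrsim \epsilon_1^2 mq/(\mu^2\kappa^2)$ and final bound are identical.
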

	\begin{proof}[Proof of Lemma \ref{lem:init_denom_term2}]
		The proof involves an application of the sub-Gaussian Hoeffding inequality followed by an epsilon-net argument, both almost the same as those used in the proof of Lemma \ref{lem:init_denom_term} given above. We have,
		\[
		\|\X_+ -\E[\X_+]\| = \max_{\z\in\mathcal{S}_n, \w\in\mathcal{S}_q} \langle  \X_+ -\E[\X_+], ~\z\w{}^\top\rangle.
		\]
For a fixed $\z\in\mathcal{S}_n, \w\in\mathcal{S}_q$, we have
		\begin{align*}
		&\langle  \X_+ -\E[\X_+], ~\z\w{}^\top\rangle \\&= \frac{1}{m} \sum_\ik \left(\w(k)\y_\ik(\a_\ik{}^\top\z)\indic_{\left\{|\y_\ik|^2 \leq \frac{\tC(1+\epsilon_1)}{q}\|\Xstar\|_F^2 \right\} }  \right.\\&\qquad-\left. \E\left[\w(k)\y_\ik(\a_\ik{}^\top\z)\indic_{\left\{|\y_\ik|^2 \leq \frac{\tC(1+\epsilon_1)}{q}\|\Xstar\|_F^2 \right\} } \right]\right) .
		\end{align*}
The summands are mutually independent, zero mean sub-Gaussian r.v.s with norm $K_\ik\leq C |\w(k)| \sqrt{\tC(1+\epsilon_1)}\|\Xstar\|_F/\sqrt{q}$. We will again apply the sub-Gaussian Hoeffding inequality Theorem 2.6.2 of \cite{versh_book}.
Let $t=\epsilon_1m\|\Xstar\|_F$. Then
		\[
		\frac{t^2}{\sum_\ik K_\ik^2} \geq \frac{\epsilon_1^2m^2\|\Xstar\|_F^2}{\sum_\ik \tC(1+\epsilon_1)\|\Xstar\|_F^2/q} \geq \frac{\epsilon_1^2mq}{C\mu^2\kappa^2}
		\]
		Thus, for a fixed $\z\in\mathcal{S}_n, \w\in\mathcal{S}_q$, by sub-Gaussian Hoeffding, we conclude that, w.p. at least $1-\exp\left[-c\epsilon_1^2mq/\mu^2\kappa^2\right]$,
		\[
		\langle  \X_+ -\E[\X_+], ~\z\w{}^\top\rangle \leq C \epsilon_1 \|\Xstar\|_F.
		\]
By Proposition \ref{epsnet_Mwz}, the above bound holds w.p. at least $1-\exp\left[ (n+q) -c\epsilon_1^2mq/\mu^2\kappa^2\right]$.
	\end{proof}

\subsection{Bounding the $\Bcheck$ numerator term}
We bound $ \| (\Xhat_0 - \E[\X_+])  \Bcheck^\top \|_F $ in this section. By triangle inequality. it is bounded by $\|\left(\Xhat_0 - \X_+\right)\Bcheck{}^\top\|_F + \|\left(\Xhat_+ -\E[ \X_+]\right)\Bcheck{}^\top\|_F$.


	\begin{lemma}
		\label{lem:init_nom_B_term1}
\label{Xhat0_Bstar_2}
Assume that $\frac{1}{m}\sum_\ik \y_\ik^2\in[1\pm \epsilon_1]\|\Xstar\|_F^2$. Then, w.p. $1-\exp\left[ nr-c\epsilon_1^2mq/\mu^2\kappa^2\right]$,
		\[
		\|\left(\Xhat_0 - \X_+\right)\Bcheck{}^\top\|_F \leq C\epsilon_1\mu\kappa\|\Xstar\|_F.
		\]
	\end{lemma}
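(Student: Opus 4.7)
The plan is to mimic the structure of the proof of Lemma \ref{lem:init_denom_term} (Lemma \ref{Xhat0_2}), but with two main adjustments: (i) we work with Frobenius norm of a matrix rather than spectral norm, so the variational characterization uses the trace inner product against an $n\times r$ unit-Frobenius-norm matrix $\W$, which later forces an $\exp(nr)$-sized $\eps$-net instead of $\exp(n+q)$; (ii) the vector $\bcheck_k$ appears in the summands, and we will exploit the fact that $\Bcheck$ has orthonormal rows ($\Bcheck\Bcheck^\top=\I_r$), so $\|\W\Bcheck\|_F\le 1$ and $\sum_k\|\W\bcheck_k\|^2\le 1$.

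First I would write
\[
\|(\Xhat_0-\X_+)\Bcheck^\top\|_F=\max_{\W\in\S_{nr}}\,\Big\langle (\Xhat_0-\X_+)\Bcheck^\top,\W\Big\rangle
=\max_{\W}\tfrac1m\sum_\ik \y_\ik(\a_\ik^\top \W\bcheck_k)\,\indic_{E_\ik},
\]
where $E_\ik$ is the event that $\y_\ik^2$ lies between the random threshold $\alpha=\tC(mq)^{-1}\sum\y_\ik^2$ and the deterministic threshold $\alpha_+=\tC(1+\eps_1)\|\Xstar\|_F^2/q$. Fix $\W$. Exactly as in the proof of Lemma \ref{Xhat0_2}, bound the summand by its absolute value and then use the hypothesis $\frac1{mq}\sum\y_\ik^2\ge(1-\eps_1)\tC\|\Xstar\|_F^2/q$ to deterministically replace the random lower threshold, yielding
\[
\Big|\langle(\Xhat_0-\X_+)\Bcheck^\top,\W\rangle\Big|
\le \tfrac1m\sum_\ik\bigl|\y_\ik(\a_\ik^\top\W\bcheck_k)\bigr|\,\indic_{\{\y_\ik^2\in[1\pm\eps_1]\tC\|\Xstar\|_F^2/q\}}.
\]
These summands are now mutually independent sub-Gaussian r.v.s (the indicator is no longer random) with sub-Gaussian norms $K_\ik\le C\|\W\bcheck_k\|\sqrt{\alpha_+}$. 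Hence $\sum_\ik K_\ik^2\le C m\tC\|\Xstar\|_F^2\|\W\Bcheck\|_F^2/q\le Cm\tC\|\Xstar\|_F^2/q$ using $\|\W\Bcheck\|_F\le\|\W\|_F\cdot\|\Bcheck\|=1$.

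Next I would apply sub-Gaussian Hoeffding (Thm.~2.6.2 of \cite{versh_book}) with $t=\eps_1 m\|\Xstar\|_F$, which gives $t^2/\sum K_\ik^2\ge c\eps_1^2 mq/(\mu^2\kappa^2)$ since $\tC=9\mu^2\kappa^2$. Thus for the fixed $\W$,
\[
\tfrac1m\sum_\ik\bigl|\y_\ik(\a_\ik^\top\W\bcheck_k)\bigr|\,\indic_{E_\ik}
\le \eps_1\|\Xstar\|_F + \tfrac1m\sum_\ik\E\!\left[\bigl|\y_\ik(\a_\ik^\top\W\bcheck_k)\bigr|\indic_{\{\y_\ik^2\in[1\pm\eps_1]\tC\|\Xstar\|_F^2/q\}}\right]
\]
with probability at least $1-\exp(-c\eps_1^2 mq/\mu^2\kappa^2)$. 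To bound the expectation term, apply Claim \ref{claim:expect_init} pointwise with $\xstar\equiv\xstar_k$, $\z\equiv\W\bcheck_k$, $\alpha\equiv\alpha_+$ to get a summand bound of $C\eps_1\|\W\bcheck_k\|\sqrt{\alpha_+}$. Summing in $i,k$ and using Cauchy--Schwarz, $\sum_k\|\W\bcheck_k\|\le\sqrt{q}\|\W\Bcheck\|_F\le\sqrt{q}$, the expectation contribution is at most $C\eps_1\sqrt{\tC}\|\Xstar\|_F=C\eps_1\mu\kappa\|\Xstar\|_F$. Finally extend the bound from one $\W$ to all $\W\in\S_{nr}$ via Proposition \ref{epsnet_MW}, which inflates the failure probability by the factor $17^{nr}=\exp(nr\log 17)$.

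The main obstacle (which is the same novelty that appears in Lemma \ref{Xhat0_2}) is handling the random lower threshold $\alpha$ inside the indicator: absorbing it into a deterministic sandwich $\{\y_\ik^2\in[1\pm\eps_1]\tC\|\Xstar\|_F^2/q\}$ forces us to pass to absolute values, so the centered-sub-Gaussian concentration must be combined with a separate expectation estimate, which is precisely where Claim \ref{claim:expect_init} delivers the crucial $\eps_1$-factor. The other mildly subtle point is keeping track of the $\bcheck_k$'s through both Cauchy--Schwarz steps, using only orthonormality of the rows of $\Bcheck$ (no incoherence of $\Bcheck$ is needed, matching the remark at the end of Sec.~\ref{facts_proof}).
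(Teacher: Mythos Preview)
Your proposal is correct and follows essentially the same route as the paper's proof: variational characterization over $\W\in\S_{nr}$, the absolute-value/sandwich trick to replace the random threshold by the deterministic window $[1\pm\eps_1]\tC\|\Xstar\|_F^2/q$, sub-Gaussian Hoeffding with $t=\eps_1 m\|\Xstar\|_F$, Claim \ref{claim:expect_init} for the expectation, Cauchy--Schwarz on $\sum_k\|\W\bcheck_k\|$, and Proposition \ref{epsnet_MW} for the $\eps$-net. The only cosmetic difference is that the paper uses the equality $\|\W\Bcheck\|_F=1$ (from $\Bcheck\Bcheck^\top=\I_r$) whereas you use the inequality $\|\W\Bcheck\|_F\le\|\W\|_F\|\Bcheck\|=1$; both suffice.
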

	\begin{proof}[Proof of Lemma \ref{lem:init_nom_B_term1}]
The initial part of the proof is very similar to the that of the proof of Lemma \ref{Xhat0_2}.
We have,
		$
		\|\left(\Xhat_0 - \X_+\right)\Bcheck{}^\top\|_F = \max_{\W\in \S_{nr} } \langle \W,~ \left(\Xhat - \X_+\right)\Bcheck{}^\top\rangle.
		$
		For a fixed $\W\in \S_{nr}$,
		\begin{align*}
		&\langle \W,~ \left(\Xhat_0 - \X_+\right)\Bcheck{}^\top\rangle \\&= \frac{1}{m} \sum_\ik  \y_\ik (\a_\ik{}^\top\W\bcheck_{k}) \indic_{\left\{  \frac{\tC}{mq}\sum_\ik \y_\ik^2\leq |\y_\ik|^2 \leq \frac{\tC(1+\epsilon_1)}{q}\|\Xstar\|_F^2 \right\} }
		\end{align*}
Proceeding as in the proof of Lemma \ref{Xhat0_2},
		\begin{align*}
		&\frac{1}{m} \sum_\ik  \y_\ik (\a_\ik{}^\top\W\bcheck_{k})\indic_{\left\{  \frac{\tC}{mq}\sum_\ik \y_\ik^2\leq |\y_\ik|^2 \leq \frac{\tC(1+\epsilon_1)}{q}\|\Xstar\|_F^2 \right\} }\\
		&\leq  \frac{1}{m} \sum_\ik  \big|\y_\ik (\a_\ik{}^\top\W\bcheck_{k})\big|\indic_{\left\{  \frac{\tC}{mq}\sum_\ik \y_\ik^2\leq |\y_\ik|^2 \leq \frac{\tC(1+\epsilon_1)}{q}\|\Xstar\|_F^2 \right\} } ,\\
		&\leq \frac{1}{m} \sum_\ik  |\y_\ik| |(\a_\ik{}^\top\W\bcheck_{k})| \indic_{\left\{   |\y_\ik|^2 \in  [1\pm \epsilon_1] \frac{\tC}{q}\|\Xstar\|_F^2 \right\} } .
		\end{align*}
The summands are mutually independent sub-Gaussian r.v.s with norm $K_\ik \leq C \sqrt{\tC(1+\epsilon_1)} \|\W\bcheck_{k}\| \|\Xstar\|_F/\sqrt{q}$.
		Thus, we can apply the sub-Gaussian Hoeffding inequality Theorem 2.6.2 of \cite{versh_book}.
Set $t=\epsilon_1m\|\Xstar\|_F$. Then we have
		\[
		\frac{t^2}{\sum_\ik K^2_\ik} \geq \frac{\epsilon_1^2m^2\|\Xstar\|_F^2}{ (\sum_\ik \|\W\bcheck_{k}\|^2 ) \tC(1+\epsilon_1)\|\Xstar\|_F^2/q} \geq \frac{\epsilon_1^2mq}{C\mu^2\kappa^2},
		\]
		where we used the fact that  $\Bcheck\Bcheck{}{}^\top=\I$ ($\Bcheck^\top$ contains right singular vectors of a matrix) and thus $\|\W\Bcheck\|_F = 1$. Applying sub-Gaussian Hoeffding, we can conclude that, w.p., $1-\exp\left[-c\epsilon_1^2mq/\mu^2\kappa^2\right]$
		\begin{align*}
		&\frac{1}{m} \sum_\ik \big|\y_\ik (\a_\ik{}^\top\W\bcheck_{k})\big|\indic_{\left\{   |\y_\ik|^2 \in  [1\pm \epsilon_1] \frac{\tC}{q}\|\Xstar\|_F^2 \right\} } \\&\leq \epsilon_1 \|\Xstar\|_F \\&\qquad+ \frac{1}{m}\sum_\ik  \E\left[\big|\y_\ik (\a_\ik{}^\top\W\bcheck_{k})\big|\indic_{\left\{   |\y_\ik|^2 \in  [1\pm \epsilon_1] \frac{\tC}{q}\|\Xstar\|_F^2 \right\} }\right] .
		\end{align*}
We use Claim \ref{claim:expect_init} to bound the expectation term. Using this lemma with $\alpha^2 \equiv \tC (1 +\epsilon_1) \|\Xstar\|_F^2/q$,  $\z \equiv \W \bcheck_k$
		\begin{align*}
		&\frac{1}{m}\sum_\ik  \E\left[\big|\y_\ik (\a_\ik{}^\top\W\bcheck_{k})\big|\indic_{\left\{   |\y_\ik|^2 \in  [1\pm \epsilon_1] \frac{\tC}{q}\|\Xstar\|_F^2 \right\} }\right] \\&\leq \frac{1}{m}\sum_\ik  \sqrt{\tC(1+\epsilon_1)}\epsilon_1 \|\Xstar\|_F \|\W\bcheck_{k}\|/\sqrt{q} \leq C\epsilon_1\mu\kappa\|\Xstar\|_F.
		\end{align*}
where the last inequality used Cauchy-Schwarz on $\sum_k \|\W \bcheck_k\|/\sqrt{q}$ to conclude that $\sum_k \|\W \bcheck_k\| (1/\sqrt{q}) \le \sqrt{ \sum_k \|\W \bcheck_k\|^2  \sum_k (1/q) } =\sqrt{ \|\W \Bcheck\|_F^2 \cdot 1}  = 1$ since $ \|\W \Bcheck\|_F=1$.

By Proposition \ref{epsnet_MW}, the above bound holds for all $\W \in \S_{nr}$, w.p. at least $1-\exp\left[nr\log(1+2/\epsilon_{net})-c\epsilon_1^2mq/\mu^2\kappa^2\right]$.
	\end{proof}
	
	\begin{lemma}
		\label{lem:init_nom_B_term2}
\label{Xhat0_Bstar_1}
Consider $0 < \epsilon_1 < 1$. Then, w.p. $1-\exp\left[nr -\epsilon_1^2mq/\mu^2\kappa^2\right]$
		\[
		\|\left(\X_+ - \E[\X_+]\right)\Bcheck{}^\top\|_F \leq C\epsilon_1\|\Xstar\|_F.
		\]
	\end{lemma}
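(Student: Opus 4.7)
The plan is to mimic the proof of Lemma \ref{lem:init_nom_B_term1}, but to exploit the fact that $\X_+$ uses the deterministic threshold $\alpha_+ = \tC(1+\epsilon_1)\|\Xstar\|_F^2/q$ (not the random $\alpha$). This makes the summands mutually independent zero-mean sub-Gaussian r.v.s, so we do not need the absolute-value trick or Claim \ref{claim:expect_init} used in Lemma \ref{lem:init_nom_B_term1}; a direct application of sub-Gaussian Hoeffding followed by an epsilon-net argument will suffice.

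First I would rewrite
\[
\|(\X_+ - \E[\X_+])\Bcheck^\top\|_F = \max_{\W\in\S_{nr}} \langle \W, (\X_+ - \E[\X_+])\Bcheck^\top\rangle,
\]
and for fixed $\W\in\S_{nr}$ expand the inner product as
\[
\langle \W, (\X_+ - \E[\X_+])\Bcheck^\top\rangle = \frac{1}{m}\sum_\ik \Big(\y_\ik(\a_\ik{}^\top \W \bcheck_k)\indic_{\{\y_\ik^2 \le \alpha_+\}} - \E[\cdot]\Big).
\]
Because $\alpha_+$ is deterministic, the summands are mutually independent and mean-zero. Each $\y_\ik \indic_{\{\y_\ik^2\le\alpha_+\}}$ is bounded in magnitude by $\sqrt{\alpha_+}$, so the summand has sub-Gaussian norm bounded by $K_\ik \le C\sqrt{\alpha_+}\,\|\W\bcheck_k\|/m \le C\sqrt{\tC(1+\epsilon_1)}\,\|\Xstar\|_F\|\W\bcheck_k\|/(m\sqrt{q})$.

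Next I would apply sub-Gaussian Hoeffding (Theorem 2.6.2 of \cite{versh_book}) with $t=\epsilon_1\|\Xstar\|_F$. Using $\Bcheck\Bcheck^\top=\I$ and $\|\W\|_F=1$, we get $\sum_k \|\W\bcheck_k\|^2 = \|\W\Bcheck\|_F^2 \le 1$, so
\[
\sum_\ik K_\ik^2 \le C\tC(1+\epsilon_1)\|\Xstar\|_F^2/(mq),
\]
which gives $t^2/\sum_\ik K_\ik^2 \ge c\epsilon_1^2 mq/(\mu^2\kappa^2)$ since $\tC = 9\mu^2\kappa^2$. Hence, for a fixed $\W\in\S_{nr}$, w.p. at least $1-\exp(-c\epsilon_1^2 mq/\mu^2\kappa^2)$,
\[
\langle \W, (\X_+ - \E[\X_+])\Bcheck^\top\rangle \le \epsilon_1\|\Xstar\|_F.
\]

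Finally I would extend to all $\W\in\S_{nr}$ via the standard epsilon-net bound in Proposition \ref{epsnet_MW}, yielding $\|(\X_+ - \E[\X_+])\Bcheck^\top\|_F \le 1.2\epsilon_1 \|\Xstar\|_F$ w.p. at least $1 - \exp(nr - c\epsilon_1^2 mq/\mu^2\kappa^2)$, which is the required bound up to the absolute constant $C$. There is no hard step here since we avoid the non-independence issue that made Lemma \ref{lem:init_nom_B_term1} involve Claim \ref{claim:expect_init}; the only mild care needed is to recognize that with the deterministic threshold the summands are genuinely mutually independent and mean-zero, so Hoeffding applies directly without any absolute-value truncation.
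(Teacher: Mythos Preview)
Your proposal is correct and follows essentially the same approach as the paper: expand the Frobenius norm as a maximum over $\W\in\S_{nr}$, observe that the deterministic threshold $\alpha_+$ makes the summands independent zero-mean sub-Gaussians with norm $K_\ik \lesssim \sqrt{\tC}\,\|\Xstar\|_F\|\W\bcheck_k\|/(m\sqrt{q})$, apply sub-Gaussian Hoeffding using $\sum_k\|\W\bcheck_k\|^2=\|\W\Bcheck\|_F^2=1$, and finish with the epsilon-net bound of Proposition~\ref{epsnet_MW}. The only cosmetic difference is that the paper sets $t=\epsilon_1 m\|\Xstar\|_F$ (absorbing the $1/m$ into the sum) whereas you keep the $1/m$ in $K_\ik$ and set $t=\epsilon_1\|\Xstar\|_F$; this is the same computation.
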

	
	\begin{proof}[Proof of Lemma \ref{lem:init_nom_B_term2}]
The proof is quite similar to the previous one.
For a fixed $\W\in\S_{nr}$ we have,
		\begin{align*}
		&\langle \left(\X_+ - \E[\X_+]\right)\Bcheck{}^\top,~\W\rangle \\& = \frac{1}{m}\sum_\ik \left( \y_\ik (\a_\ik{}^\top\W\bcheck_{k})\indic_{\left\{  |\y_\ik|^2 \leq \frac{\tC(1+\epsilon_1)}{q}\|\Xstar\|_F^2 \right\} } - \E[.]\right)
		\end{align*}
where $\E[.]$ is the expected value of the first term. 
The summands are independent, zero mean, sub-Gaussian r.v.s with subGaussian norm less than $K_\ik \leq C \sqrt{\tC(1+\epsilon_1)}\|\Xstar\|_F\|\W\b_k\|/\sqrt{q}$. Thus, by applying the  sub-Gaussian Hoeffding inequality Theorem 2.6.2 of \cite{versh_book},
with $t=\epsilon_1 m \|\Xstar\|_F$, and using $\|\W\Bcheck\|_F = 1$, we can conclude that,
w.p. $1-\exp\left[-\epsilon_1^2mq/(C\mu^2\kappa^2)\right]$,
		$$
		\langle \left(\X_+ - \E[\X_+]\right)\Bcheck{}^\top,~\W\rangle\leq C \epsilon_1 \|\Xstar\|_F.
		$$
By Proposition \ref{epsnet_MW}, the above bound holds for all $\W \in \S_{nr}$ w.p. $1-\exp\left[nr -\epsilon_1^2mq/(C\mu^2\kappa^2)\right]$.
%
	\end{proof}

\subsection{Bounding the U* numerator term}
We bound  $ \| (\Xhat_0 - \E[\X_+])^\top \Ustar \|_F $ here. By triangle inequality, it is bounded by $\|\left(\Xhat_0 - \X_+ \right){}^\top\Ustar \|_F + \|\left(\Xhat_+ - \E[\X_+] \right){}^\top\Ustar \|_F$.

	\begin{lemma}
		\label{lem:init_term1}
\label{Xhat0_Ustar_2}
Assume that $\frac{1}{mq}\sum_\ik \y_\ik^2 \in [1\pm \epsilon_1]\|\Xstar\|_F^2/q$. Then, w.p. $1-\exp\left[qr-c\epsilon_1^2mq/\mu^2\kappa^2\right]$
		\[
		\|\left(\Xhat_0 - \X_+ \right){}^\top\Ustar \|_F \leq C\epsilon_1 \mu\kappa\|\Xstar\|_F.
		\]
	\end{lemma}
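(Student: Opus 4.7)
The plan is to mimic the proof of Lemma \ref{lem:init_nom_B_term1} almost verbatim, with $\Bcheck^\top$ replaced by $\Ustar$ and the role of the net dimension accordingly switched from $n r$ to $q r$. Since $(\Xhat_0 - \X_+)^\top \Ustar$ is a $q \times r$ matrix, I would begin by writing
\[
\|(\Xhat_0 - \X_+)^\top \Ustar\|_F = \max_{\W \in \S_{qr}} \langle \W , (\Xhat_0 - \X_+)^\top \Ustar \rangle,
\]
and, for a fixed $\W \in \S_{qr}$ with columns $\w_k \in \Re^q$ -- sorry, let me be careful: here $\W$ is $q \times r$, so write $\W = [\w_1, \ldots, \w_r]$ with $\w_j \in \Re^q$, and let $\w_{k,:}$ denote its $k$-th row (an $r$-vector). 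Then
\[
\langle \W , (\Xhat_0 - \X_+)^\top \Ustar \rangle = \tfrac{1}{m}\sum_\ik \y_\ik (\a_\ik^\top \Ustar \w_{k,:}) \indic_{\{\tfrac{\tC}{mq}\sum_\ik \y_\ik^2 \le \y_\ik^2 \le \tC(1+\epsilon_1)\|\Xstar\|_F^2/q\}}.
\]

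Exactly as in the proof of Lemma \ref{lem:init_nom_B_term1}, I would first bound the summands by their absolute values, and then use the hypothesis $\frac{1}{mq}\sum_\ik \y_\ik^2 \in [1\pm \epsilon_1]\|\Xstar\|_F^2/q$ (multiplied by $\tC$) to replace the random lower threshold by the deterministic one, obtaining
\[
|\langle \W , (\Xhat_0 - \X_+)^\top \Ustar \rangle| \le \tfrac{1}{m}\sum_\ik |\y_\ik (\a_\ik^\top \Ustar \w_{k,:})|\, \indic_{\{\y_\ik^2 \in [1\pm \epsilon_1]\tC \|\Xstar\|_F^2/q\}}.
\]
The summands are now mutually independent sub-Gaussian r.v.s with norms $K_\ik \le C \sqrt{\tC(1+\epsilon_1)} \|\Xstar\|_F \|\Ustar \w_{k,:}\|/\sqrt{q} = C \sqrt{\tC}\|\Xstar\|_F \|\w_{k,:}\|/\sqrt{q}$ (using $\|\Ustar \w_{k,:}\| = \|\w_{k,:}\|$). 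Since $\sum_k \|\w_{k,:}\|^2 = \|\W\|_F^2 = 1$, we get $\sum_\ik K_\ik^2 \le C \tC m \|\Xstar\|_F^2/q = C \mu^2 \kappa^2 m \|\Xstar\|_F^2 / q$.

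Applying sub-Gaussian Hoeffding with $t = \epsilon_1 m \|\Xstar\|_F$ yields $t^2/\sum K_\ik^2 \ge c \epsilon_1^2 mq/(\mu^2\kappa^2)$, so for fixed $\W$, w.p. at least $1 - \exp(-c \epsilon_1^2 mq/\mu^2\kappa^2)$, the deviation from the mean is bounded by $\epsilon_1 \|\Xstar\|_F$. For the mean term, I would invoke Claim \ref{claim:expect_init} with $\z \equiv \Ustar \w_{k,:}$ and $\alpha \equiv \tC(1+\epsilon_1)\|\Xstar\|_F^2/q$ to obtain
\[
\tfrac{1}{m}\sum_\ik \E[\,|\y_\ik(\a_\ik^\top \Ustar \w_{k,:})|\,\indic_{\{\y_\ik^2 \in [1\pm \epsilon_1]\tC\|\Xstar\|_F^2/q\}}\,] \le \sqrt{\tC} \epsilon_1 \|\Xstar\|_F \tfrac{1}{\sqrt{q}} \sum_k \|\w_{k,:}\|,
\]
and Cauchy--Schwarz gives $\sum_k \|\w_{k,:}\|/\sqrt{q} \le \|\W\|_F = 1$; combined with $\sqrt{\tC} = 3 \mu \kappa$, the mean is $\le C \epsilon_1 \mu \kappa \|\Xstar\|_F$.

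Finally, I would apply the epsilon-net argument of Proposition \ref{epsnet_MW} over $\S_{qr}$ (this is where the factor $q r$ enters the probability, in contrast to $n r$ in Lemma \ref{lem:init_nom_B_term1}) to conclude that w.p. at least $1 - \exp(qr - c \epsilon_1^2 mq/\mu^2 \kappa^2)$, the bound $\|(\Xhat_0 - \X_+)^\top \Ustar\|_F \le C \epsilon_1 \mu \kappa \|\Xstar\|_F$ holds. No step requires any real new idea beyond that of Lemma \ref{lem:init_nom_B_term1}; the only mild obstacle is keeping the vectorization convention straight so that the role of $\Ustar$ (acting on the row index of $\W$ rather than column-wise) does not accidentally change which Frobenius norm is bounded --- but this is handled transparently by writing $\W = [\w_1 \cdots \w_r]$ as a $q \times r$ matrix and reading off its rows.
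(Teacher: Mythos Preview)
Your proposal is correct and follows essentially the same approach as the paper's own proof: both reduce to bounding $\langle \W, (\Xhat_0 - \X_+)^\top \Ustar \rangle$ for fixed $\W \in \S_{qr}$, take absolute values to replace the random lower threshold by the deterministic one, apply sub-Gaussian Hoeffding with the same $K_\ik$ bounds, use Claim \ref{claim:expect_init} plus Cauchy--Schwarz on $\sum_k \|\w_{k,:}\|/\sqrt{q}$ for the mean, and finish with an epsilon-net over $\S_{qr}$. The only difference is notational (the paper writes $\w_k$ for the $k$-th column of $\W^\top$ where you write $\w_{k,:}$ for the $k$-th row of $\W$), and your extra care about the vectorization convention is well placed but not strictly needed.
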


	\begin{proof}[Proof of Lemma \ref{lem:init_term1}]
The proof is similar to that of Lemmas \ref{Xhat0_2} and \ref{Xhat0_Bstar_2}.
		We have,
		$
		\|\left(\Xhat_0 - \X_+ \right){}^\top\Ustar\|_F = \max_{\W\in\S_{qr}} \langle \W,~ \left(\Xhat - \X_+ \right){}^\top\Ustar \rangle.
		$
		For a fixed $\W \in \S_{qr}$, using the same approach as in Lemma \ref{Xhat0_2}, and letting $\w_k$ be the $k$-th column of the $r \times q$ matrix $\W$,
		\begin{align*}
		&\langle \W,~ \left(\Xhat_0 - \X_+ \right){}^\top\Ustar \rangle \\&\qquad\leq \frac{1}{m}\sum_\ik \big|\y_\ik(\a_\ik{}^\top\Ustar\w_k) \big|\indic_{  \left\{ \frac{\tC}{mq}\sum_\ik  |\y_\ik|^2 \leq|\y_\ik|^2 \leq \frac{\tC(1+\epsilon_1)}{q}\|\Xstar\|_F^2 \right\} },\\
		&\qquad\leq \frac{1}{m}\sum_\ik \big|\y_\ik(\a_\ik{}^\top\Ustar\w_k) \big|\indic_{  \left\{ |\y_\ik|^2 \in [1\pm\epsilon_1]\frac{\tC}{q}\|\Xstar\|_F^2 \right\} }.
		\end{align*}
The summands are now mutually independent sub-Gaussian r.v.s with norm $K_\ik \leq \sqrt{\tC(1+\epsilon_1)}\|\w_k\| \|\Xstar\|_F/\sqrt{q}$. Thus, we can apply the  sub-Gaussian Hoeffding inequality Theorem 2.6.2 of \cite{versh_book}, to conclude that,
for a fixed $\W \in \S_{qr}$, w.p. $1-\exp\left[-c\epsilon_1^2mq/\mu^2\kappa^2\right]$,
		\begin{align*}
		&\frac{1}{m}\sum_\ik \big|\y_\ik(\a_\ik{}^\top\Ustar\w_k) \big|\indic_{  \left\{ |\y_\ik|^2 \in [1\pm\epsilon_1]\frac{\tC}{q}\|\Xstar\|_F^2 \right\} } \\&\leq \epsilon_1 \|\Xstar\|_F + \frac{1}{m}\sum_{k}\E\left[\big|\y_\ik(\a_\ik{}^\top\Ustar\w_k) \big|\indic_{  \left\{ |\y_\ik|^2 \in [1\pm\epsilon_1]\frac{\tC}{q}\|\Xstar\|_F^2 \right\} }\right]
		\end{align*}
By Claim \ref{claim:expect_init}, and using  $\sum_{k} \|\w_k \|/\sqrt{q} \leq \sqrt{\sum_{k}\|\w_k \|^2}~\sqrt{\sum_k 1/q} = 1$,
		\begin{align*}
		&\frac{1}{m}\sum_{k}\E\left[\big|\y_\ik(\a_\ik{}^\top\Ustar\w_k) \big|\indic_{  \left\{ |\y_\ik|^2 \in [1\pm\epsilon_1]\frac{\tC}{q}\|\Xstar\|_F^2 \right\} }\right] \\&\leq \frac{1}{m}\sum_\ik \epsilon_1 \|\w_k\| \sqrt{\tC(1+\epsilon_1)/q}\|\Xstar\|_F,\\
		&\leq C\epsilon_1\mu\kappa \|\Xstar\|_F,
		\end{align*}
By Proposition \ref{epsnet_MW} (epsilon net argument), the bound holds for all unit norm $\W$ w.p. $1-\exp\left[ qr-c\epsilon_1^2mq/\mu^2\kappa^2\right]$.
	\end{proof}
	
	\begin{lemma}
		\label{lem:init_term1_2}
\label{Xhat0_Ustar_1}
	Consider $0 < \epsilon_1 < 1$. Then, w.p. $1-\exp\left[ qr -\epsilon_1^2/mq\mu^2\kappa^2\right]$
		\[
		\|\left(\X_+ - \E[\X_+]\right){}^\top\Ustar\|_F \leq C\epsilon_1\|\Xstar\|_F.
		\]
	\end{lemma}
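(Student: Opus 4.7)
The proof will closely mirror that of Lemma \ref{lem:init_nom_B_term2}, since in both cases we are bounding the Frobenius norm of the product of $(\X_+-\E[\X_+])$ (after appropriate transpose) with a matrix that has orthonormal rows/columns. The key structural difference from Lemma \ref{lem:init_term1} is that here the summands are already zero-mean (no truncation-threshold replacement is needed), so no Claim \ref{claim:expect_init}-style expectation bound is required.

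First, the plan is to rewrite the Frobenius norm as a supremum over a unit Frobenius-norm net:
\[
\|(\X_+-\E[\X_+])^\top\Ustar\|_F = \max_{\W\in\S_{qr}}\;\langle\W,\;(\X_+-\E[\X_+])^\top\Ustar\rangle.
\]
Letting $\w_k\in\Re^r$ denote the $k$-th column of the $r\times q$ matrix $\W$ (so $\sum_k\|\w_k\|^2=1$), a fixed choice of $\W$ makes the inner product a sum
\[
\frac{1}{m}\sum_{i,k}\Big(\y_\ik(\a_\ik{}^\top\Ustar\w_k)\indic_{\{|\y_\ik|^2\le \tC(1+\epsilon_1)\|\Xstar\|_F^2/q\}} - \E[\,\cdot\,]\Big)
\]
of mutually independent, zero-mean sub-Gaussian random variables. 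Because of the truncation, each summand satisfies $|\y_\ik|\le \sqrt{\tC(1+\epsilon_1)}\,\|\Xstar\|_F/\sqrt{q}$ deterministically, while $(\a_\ik{}^\top\Ustar\w_k)$ is standard Gaussian with variance $\|\Ustar\w_k\|^2=\|\w_k\|^2$. Hence the sub-Gaussian norm of the $(i,k)$-th summand is bounded by $K_\ik\le C\sqrt{\tC(1+\epsilon_1)}\,\|\w_k\|\,\|\Xstar\|_F/(m\sqrt{q})$.

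Next, I would apply the sub-Gaussian Hoeffding inequality (Theorem 2.6.2 of \cite{versh_book}) with $t=\epsilon_1\|\Xstar\|_F$. Using $\sum_k\|\w_k\|^2=1$,
\[
\sum_{i,k} K_\ik^2 \;\le\; C\,\frac{\tC\,\|\Xstar\|_F^2}{m\,q},
\]
so that
\[
\frac{t^2}{\sum_{i,k}K_\ik^2}\;\ge\;\frac{\epsilon_1^2 m q}{C\,\tC}\;=\;\frac{\epsilon_1^2 m q}{C\mu^2\kappa^2},
\]
since $\tC=9\mu^2\kappa^2$. This yields, for each fixed $\W\in\S_{qr}$, a bound $\langle\W,(\X_+-\E[\X_+])^\top\Ustar\rangle\le C\epsilon_1\|\Xstar\|_F$ with probability at least $1-\exp(-c\epsilon_1^2 mq/\mu^2\kappa^2)$.

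Finally, to extend to the supremum over $\W\in\S_{qr}$, I would apply the epsilon-net argument of Proposition \ref{epsnet_MW} (with dimension $qr$), which incurs a covering factor of $17^{qr}=\exp((\log 17)qr)$. This gives the claimed bound $\|(\X_+-\E[\X_+])^\top\Ustar\|_F\le 1.2\,C\,\epsilon_1\|\Xstar\|_F$ with probability at least $1-\exp(qr-c\epsilon_1^2 mq/\mu^2\kappa^2)$. There is no serious obstacle here: everything is a direct adaptation of the computations already carried out for Lemmas \ref{lem:init_nom_B_term2} and \ref{lem:init_term1}, the only mild point being to keep track of the fact that the columns of $\Ustar$ are orthonormal so $\|\Ustar\w_k\|=\|\w_k\|$ (analogous to the role played by $\|\W\Bcheck\|_F=1$ in Lemma \ref{lem:init_nom_B_term2}).
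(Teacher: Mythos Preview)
Your proposal is correct and follows essentially the same approach as the paper: express the Frobenius norm as a maximum over $\W\in\S_{qr}$, bound the resulting centered sum via sub-Gaussian Hoeffding with $K_\ik\lesssim\sqrt{\tC}\,\|\w_k\|\,\|\Xstar\|_F/\sqrt{q}$, and finish with the epsilon-net Proposition~\ref{epsnet_MW}. The only cosmetic difference is that the paper absorbs the $1/m$ into $t$ rather than into $K_\ik$, and it does not explicitly remark on $\|\Ustar\w_k\|=\|\w_k\|$, but the computations are identical.
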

	
	\begin{proof}[Proof of Lemma \ref{lem:init_term1_2}]
For fixed $\W \in \mathcal{S}_{qr}$,
		\begin{align*}
		&\trace\left(\W{}^\top \left(\X_+ - \E[\X_+]\right){}^\top\Ustar\right) \\
		&\qquad= \frac{1}{m}\sum_\ik \left( \y_\ik(\a_\ik{}^\top\Ustar\w_k)\indic_{\left\{|\y_\ik|^2 \leq \frac{\tC(1+\epsilon_1)}{q}\|\Xstar\|_F^2 \right\} } \right.\\&\qquad-\left. \E\left[\y_\ik(\a_\ik{}^\top\Ustar\w_k)\indic_{\left\{|\y_\ik|^2 \leq \frac{\tC(1+\epsilon_1)}{q}\|\Xstar\|_F^2 \right\} }\right] \right)
		\end{align*}
The summands are independent zero mean sub-Gaussian r.v.s with norm less than $K_\ik \leq \sqrt{\tC(1+\epsilon_1)}\|\Xstar\|_F\|\w_k\|/\sqrt{q}$. Thus, by applying the sub-Gaussian Hoeffding inequality Theorem 2.6.2 of \cite{versh_book},
with $t=\epsilon_1 m \|\Xstar\|_F$, we can conclude that,
for a fixed $\W \in \mathcal{S}_{qr}$, w.p. $1-\exp\left[-\epsilon_1^2mq/C\mu^2\kappa^2\right]$,
		\[
		\trace\left(\W{}^\top\left(\X_+ - \E[\X_+]\right){}^\top\Ustar \right) \leq \epsilon_1 \|\Xstar\|_F.
		\]
By Proposition \ref{epsnet_MW} (epsilon net argument), the bound holds for all unit norm $\W$ w.p. $1-\exp\left[qr -\epsilon_1^2mq/C\mu^2\kappa^2\right]$.
	\end{proof}


\subsection{Proof of Claim \ref{EXhat0_Xplus}}

\begin{proof}
We can write $\xstar = \|\xstar\| \Q  \e_1$ where $\Q$ is a unitary matrix with first column proportional to $\xstar_k$. We need to bound
\begin{align*}
&\E[ \|\xstar\| \cdot | (\a^\top \Q \e_1) (\a^\top\Q \Q^\top \z)  | \indic_{\{ \|\xstar\|^2 | \a^\top \Q e_1|^2 \in [1\pm \epsilon] \alpha } \}  ]
\\&= \|\xstar\|  \cdot \|\z\|  \cdot \E[  | \tilde\a(1) \tilde\a^\top \bar\z_Q  | \indic_{ \{ |\tilde\a(1)|^2 \in [1\pm \epsilon] \beta^2 } \} ]
\end{align*}
where $\bar\z_Q:= \Q^\top \z / \|\z\|$, $\tilde\a:= \Q^\top\a$ and $\beta: = \sqrt{\alpha}/\|\xstar\|$. Since $\Q$ is unitary and $\a$ Gaussian, thus $\tilde\a$ has the same distribution as $\a$.
Let $\tilde\a(1)$ be its first entry and $\tilde\a(\mathrm{rest})$ be the $(n-1)$-length vector with the rest of the $n-1$ entries and similarly for $\bar\z_Q$. Then, $ \tilde\a^\top \bar\z_Q =\tilde\a(1)\cdot\bar\z_Q(1) + \tilde\a(\mathrm{rest})^\top \bar\z_Q(\mathrm{rest})$.
Since $\tilde\a(1)$ and $\tilde\a(\mathrm{rest})$ are independent, 
\begin{align*}
&\E[  | \tilde\a(1) \tilde\a^\top \bar\z_Q  | \indic_{|\tilde\a(1)|^2 \in [1\pm \epsilon] \beta^2 } ]
\\&\le  |\bar\z_Q(1)| \E[ | \tilde\a(1)^2  | \indic_{|\tilde\a(1)|^2 \in [1\pm \epsilon] \beta^2 } ] \\&\qquad+  \E[ | \tilde\a (\mathrm{rest})^\top \bar\z_Q(\mathrm{rest})  |  ] \ \E[ | \tilde\a(1)| \indic_{|\tilde\a(1)|^2 \in [1\pm \epsilon] \beta^2 }]  \\
\le &  \E[ | \tilde\a(1)^2   | \indic_{|\tilde\a(1)|^2 \in [1\pm \epsilon] \beta^2 } ] + 2 \E[ | \tilde\a(1)| \indic_{|\tilde\a(1)|^2 \in [1\pm \epsilon] \beta^2 }] \\
\le & \epsilon \beta  + 2 \epsilon \beta  = 3\epsilon \beta = C \epsilon \frac{\sqrt{\alpha}}{  \|\xstar\|}.
\end{align*}
The second inequality used the facts that (i) $|\bar\z_Q(1)| \le \|\bar\z_Q\|=1$ by definition and (ii) $\zeta:=\tilde\a (\mathrm{rest})^\top \bar\z_Q(\mathrm{rest})$ is a scalar standard Gaussian r.v. and so $\E[|\zeta|] \le 2$.
The third one relies on the following two bounds:
\ben
\item
\begin{align*}
		&\E\left[ |\a(1)|^2\indic_{ \left\{ |\a(1) |^2 \in [1\pm\epsilon]\beta^2  \right\}}   \right] \\&= \frac{2}{\sqrt{2\pi}}\int_{\sqrt{1-\epsilon}\beta}^{\sqrt{1+\epsilon}\beta} z^2\exp(-z^2/2)dz,\\
		&\leq \frac{2e^{-1/2}}{\sqrt{2\pi}}\int_{\sqrt{1-\epsilon}\beta}^{\sqrt{1+\epsilon}\beta} dz \leq  \frac{2e^{-1/2}}{\sqrt{2\pi}} \epsilon\beta\leq \epsilon\beta/3
		\end{align*}
		where we used the facts that $z^2\exp(-z^2/2) \leq \exp(-1/2)$ for all $z \in\Re$; $\sqrt{1-\epsilon} \geq 1-\epsilon/2$ and $\sqrt{1+\epsilon} \leq 1+\epsilon/2$ for $0 < \epsilon <1$.
\item Similarly, we can show that
		\begin{align*}
		&\E\left[ |\a(1)|\indic_{\{ |\a(1)|^2 \in [1\pm\epsilon]\beta^2  \}}   \right] \\&= \frac{2}{\sqrt{2\pi}}\int_{\sqrt{1-\epsilon}\beta}^{\sqrt{1+\epsilon}\beta} z\exp(-z^2/2)dz,\\
		&\leq \frac{2e^{-1/2}}{\sqrt{2\pi}}\int_{\sqrt{1-\epsilon}\beta}^{\sqrt{1+\epsilon}\beta} dz = \frac{2e^{-1/2}}{\sqrt{2\pi}} \epsilon\beta\leq \epsilon\beta/3
		\end{align*}
\een
The claim follows by combining the two equations given above.
\end{proof}

\bibliographystyle{IEEEtran}
\bibliography{./tipnewpfmt_kfcsfullpap}

\section*{Author Biographies}

{\bf Seyedehsara Nayer (Email: sarana@iastate.edu)} recently completed her Ph.D. in ECE at Iowa State University. She has an M.S. from Sharif University in Iran.  She works as a Senior Engineer at ASML in Santa Clara, CA.
Her research interests are around various aspects of information science and focuses on Signal Processing, and Statistical Machine Learning.


{\bf Namrata Vaswani (Email: namrata@iastate.edu)} received a B.Tech from IIT-Delhi in India in 1999 and a Ph.D. from the University of Maryland, College Park in 2004, both in Electrical Engineering. Since Fall 2005, she has been with the Iowa State University where she is currently the Anderlik Professor of Electrical and Computer Engineering. Her research interests lie in a data science, with a particular focus on Statistical Machine Learning and Signal Processing. She has served two terms as an Associate Editor for the IEEE Transactions on Signal Processing; as a lead guest-editor for a 2018 Proceedings of the IEEE Special Issue (Rethinking PCA for modern datasets); and as an Area Editor for the IEEE Signal Processing Magazine (2018-2020). Vaswani is a recipient of the Iowa State Early Career Engineering Faculty Research Award (2014), the Iowa State University Mid-Career Achievement in Research Award (2019) and University of Maryland’s ECE Distinguished Alumni Award (2019). She also received the 2014 IEEE Signal Processing Society Best Paper Award for her 2010 IEEE Transactions on Signal Processing paper co-authored with her student Wei Lu on “Modified-CS: Modifying compressive sensing for problems with partially known support”. She is a Fellow of the IEEE Fellow (class of 2019).

%


\end{document}